\documentclass[11pt, reqno, oneside]{amsart}

\usepackage{amssymb, amsfonts, mathrsfs, verbatim,graphicx,graphics,epsfig,psfrag,mathtools, bm, bbm}


\usepackage{mathabx}
\usepackage{microtype} 

\usepackage{color, xcolor}

\usepackage{accents}
\newcommand{\ubar}[1]{{\underaccent{\bar}{\smash{#1}}}}


\usepackage{abstract}
\usepackage[all]{xy}

\usepackage[left= 3.4 cm, right= 3.4 cm, top= 2.8 cm, bottom=2.7 cm, foot= 1.2 cm, marginparwidth=2.7 cm, marginparsep=0.5 cm]{geometry}

\usepackage[inline]{enumitem}

\usepackage{hyperref}
\hypersetup{colorlinks}
\definecolor{darkred}{rgb}{0.5,0,0}
\definecolor{darkgreen}{rgb}{0,0.5,0}
\definecolor{darkblue}{rgb}{0,0,0.5}
\hypersetup{colorlinks, linkcolor=darkblue, filecolor=darkgreen, urlcolor=darkred, citecolor=darkblue}
\makeatletter 
\@addtoreset{equation}{section}
\makeatother  

\numberwithin{equation}{section}


\usepackage{tikz-cd}
\usepackage{tikz}
\usetikzlibrary{arrows}
\usepackage{float}

\usetikzlibrary{decorations.pathreplacing}



\setcounter{tocdepth}{1}

\newtheorem{thm}{Theorem}[section]
\newtheorem{cor}[thm]{Corollary}

\newtheorem{prop}[thm]{Proposition}
\newtheorem{lemma}[thm]{Lemma}

\theoremstyle{definition}
\newtheorem{defn}[thm]{Definition}
\theoremstyle{remark}
\newtheorem{rem}[thm]{Remark}
\newtheorem{hyp}[thm]{Hypothesis}

\newtheorem*{claim}{Claim}

\newcounter{notes}
{\end{list}}

\makeatletter
\@addtoreset{thm}{section}
\makeatother



\newcommand\qu{/\kern-.7ex/} 
\renewcommand{\setminus}{\smallsetminus}


\newcommand{\beq}{\begin{equation}}
\newcommand{\eeq}{\end{equation}}
\newcommand{\beqn}{\begin{equation*}}
\newcommand{\eeqn}{\end{equation*}}
\newcommand{\ov}{\overline}
\newcommand{\mb}{\mathbb}
\renewcommand{\tt}{\texttt}
\newcommand{\mc}{\mathcal}
\newcommand{\mf}{\mathfrak}

\newcommand{\V}{{\rm V}}
\newcommand{\T}{{\rm T}}
\newcommand{\E}{{\rm E}}

\newcommand{\preq}{\preccurlyeq}

\newcommand{\wt}{\widetilde}
\newcommand{\wh}{\widehat}

\renewcommand{\i}{{\bf i}}

\newcommand{\bt}{\blacktriangle}

\newcommand{\bbox}{{\scalebox{0.5}{$\blacksquare$}}}

\newcommand{\ev}{{\rm ev}}
\newcommand{\eev}{\mf{ev}}

\newcommand{\vt}{{\rm vert}}

\newcommand{\Gammait}{{\mathit{\Gamma}}}

\newcommand{\Thetait}{{\mathit{\Theta}}}

\newcommand{\Piit}{{\mathit{\Pi}}}

\newcommand{\uds}[1]{\underline{\smash{#1}}}


\title{Gauged Linear Sigma Model in Geometric Phase. II. The Virtual Cycle}

\begin{document}

\author[Tian]{Gang Tian}
\address{
Beijing International Center for Mathematical Research\\
Beijing University\\
Beijing, 100871, China
}
\email{gtian@math.pku.edu.cn}

\author[Xu]{Guangbo Xu}
\address{
Department of Mathematics\\
Rutgers University--New Brunswick\\
110 Frelinghuysen Road, Piscataway, NJ 08854, USA
}
\email{guangbo.xu@rutgers.edu}

\date{\today}

\thanks{The second named author has been supported by the Simons Foundation through the Homological Mirror Symmetry Collaboration grant, AMS-Simons Travel Grant, and NSF grant DMS-2345030. The majority of this paper was finished when the second named author was in Princeton University.}

\maketitle

\begin{abstract}
We provide the detailed construction of the virtual cycles needed for defining the cohomological field theory associated to a gauged linear sigma model in geometric phase. 
\end{abstract}



\tableofcontents

\section{Introduction}

The {\bf gauged linear sigma model (GLSM)} is a two-dimensional supersymmetric quantum field theory introduced by Witten \cite{Witten_LGCY} in 1993. It has stimulated many important developments in both the mathematical and the physical studies of string theory and mirror symmetry. For example, it plays a fundamental role in physicists argument of mirror symmetry \cite{Hori_Vafa}\cite{Gu_Sharpe_2018}. Its idea is also of crucial importance in the verification of genus zero mirror symmetry for quintic threefold \cite{Givental_96}\cite{LLY_1}. 

Since 2012 the authors have initiated a project aiming at constructing a mathematically rigorous theory for the GLSM, using mainly the method from symplectic and differential geometry. The mathematical theory of the GLSM is the generalization of other known theories including the Gromov--Witten theory for symplectic manifolds, the FJRW theory for orbifold Landau--Ginzburg models (see \cite{FJR_annals}), and the theory of vortices (see \cite{Cieliebak_Gaio_Salamon_2000}\cite{Mundet_thesis, Mundet_2003}). The construction is based on the intersection theory on the moduli space of the {\bf gauged Witten equation}, a first-order equation generalizing the Witten equation in FJRW theory and the vortex equation. In our previous works \cite{Tian_Xu, Tian_Xu_2, Tian_Xu_2021}, we have constructed certain correlation function (i.e. Gromov--Witten type invariants) under certain special conditions: the gauge group is $U(1)$ and the superpotential is a Lagrange multiplier. This correlation function is though rather restricted, as for example, we do not know if they satisfy splitting axioms or not. 

A major difficulty in the study of the (gauged) Witten equation lies in the so-called {\it broad} case, in which the Fredholm property of the equation is problematic. This issue also causes difficulties in various algebraic approaches. Now we have realized that if we restrict to the so-called {\bf geometric phase}, then the issue about broad case disappears in our symplectic setting. Our construction in geometric phase has been outlined in \cite{Tian_Xu_2017}. It ends up at constructing a {\bf cohomological field theory (CohFT)}, namely a collection of correlation functions which satisfy the splitting axioms. The detailed construction in this scenario is the main objective of the current paper.

\subsection{The main theorem}

In this paper we complete the virtual cycle construction required for the definition of the cohomological field theory given in \cite{Tian_Xu_geometric}. Meanwhile we provide the proofs of a few technical results about the gauged Witten equations. To give a short introduction to these results, we first recall a few notations and setups. 

A {\bf GLSM space} (see \cite[Definition 3.1]{Tian_Xu_geometric}) is a quadruple $(V, G, W, \mu)$ where $V$ is a noncompact K\"ahler manifold (usually a vector space), $G$ is a reductive Lie group acting on $V$, $W$ is a $G$-invariant holomorphic function on $V$, and $\mu$ is a moment map of the $G$-action restricted to the maximal compact subgroup $K \subset G$. We also impose other conditions (see details in Section \ref{section3}), such as the existence of an R-symmetry.
The gauged Witten equation is defined over $r$-spin curves (see Definition \ref{defn_r_spin}). An {\it $r$-spin curve} is a quadruple ${\mc C} = (\Sigma, {\bf z}, L, \varphi)$ where $\Sigma$ is a smooth or nodal orbifold Riemann surface, ${\bf z}$ is the set of orbifold marked points, $L$ is an orbifold line bundle over $\Sigma$, and $\varphi$ is an isomorphism from $L^{\otimes r}$ to the log-canonical bundle of $(\Sigma, {\bf z})$.

The variables of the gauged Witten equation are gauged maps. A {\it gauged map} from an $r$-spin curve ${\mc C}$ to the GLSM space $(V, G, W, \mu)$ is a triple $(P, A, u)$, where $P$ is a principal $K$-bundle over $\Sigma$, $A$ is a connection on $P$, and $u$ is a section of the fibre bundle $P(V)$ associated to $P$. For such a triple, the gauged Witten reads (see Section \ref{section3} for details)
\begin{align*}
&\ \ov\partial_A u + \nabla W (u) = 0,\ &\ * F_A + \mu(u) = 0.
\end{align*}

In the companion paper \cite{Tian_Xu_geometric} we proved the compactness result of the moduli space of the gauged Witten equation. Indeed, for each pair of nonnegative integers $(g, n)$ with $2g + n \geq 3$ we defined the moduli space 
\beqn
\ov{\mc M}{}_{g, n}^r (V, G, W, \mu)
\eeqn
of {\it stable} solutions to the gauged Witten equation over genus $g$, $n$-marked $r$-spin curves. Moreover, for a certain combinatorial type ${\sf\Gamma}$ describing possible degenerations of the solutions, there is a compact subset  
\beqn
\ov{\mc M}{}_{\sf\Gamma}(V, G, W, \mu) \subset \ov{\mc M}{}_{g,n}^r(V, G, W, \mu)
\eeqn
consisting of solutions of type $\sf\Gamma$. 

The correlation functions is defined via certain evaluation map, similar to the case of Gromov--Witten invariants. Recall that the {\bf classical vacuum} of a GLSM space $(V, G, W, \mu)$ is defined as 
\beqn
X:= {\rm Crit}W \qu G:= ( {\rm Crit} W \cap \mu^{-1}(0)) / K.
\eeqn
We assume that this quotient is free, hence $X$ is a manifold. There are evaluation maps
\beqn
\ov{\mc M}_{\sf\Gamma} ( V, G, W, \mu; B) \to \ov{\mc M}_\Gamma \times X^n.
\eeqn
Here $\Gamma$  is the underlying curve type and $\ov{\mc M}_\Gamma \subset \ov{\mc M}{}_{g, n}$ is the corresponding stratum in the Deligne--Mumford space. The main theorem of this paper is the following. 

\begin{thm}[Main Theorem]\cite[Theorem 6.1]{Tian_Xu_geometric}\label{thm11}
For each decorated dual graph $\sf\Gamma$, there is a virtual fundamental class $[\ov{\mc M}_{\sf\Gamma}]^{\rm vir} \in H_*(X^n \times \ov{\mc M}_\Gamma; {\mb Q})$ satisfying the following. 

\begin{enumerate}

\item {\bf (Dimension)}	If $\sf\Gamma$ is connected, then the degree of the virtual cycle is 
\beqn
(2-2g) {\rm dim}_{\mb C} X + 2 {\rm deg} {\sf\Gamma} + {\rm dim}_{\mb R} \ov{\mc M}_\Gamma
\eeqn
where ${\rm deg} \sf\Gamma \in {\mb Z}$ is the equivariant first Chern number of the curve class represented by solutions of type $\sf\Gamma$.

\item {\bf (Disconnected Graphs)} Let $\sf\Gamma_1, \ldots, \sf\Gamma_k$ be connected decorated dual graphs and $\sf\Gamma$ be their disjoint union. Then 
\beqn
[\ov{\mc M}_{\sf\Gamma}]^{\rm vir} = \bigoplus_{\alpha=1}^k  [\ov{\mc M}_{\sf\Gamma_\alpha}]^{\rm vir}.
\eeqn

\item {\bf (Cutting Edge)} Let $\sf\Gamma$ be a stable decorated dual graph and let $\sf\Pi$ be the decorated dual graph obtained from $\sf\Gamma$ by shrinking a loop. Then one has
\beqn
(\iota_{\Pi})_* [\ov{\mc M}_{\sf\Pi}]^{\rm vir} = [\ov{\mc M}_{\sf\Gamma}]^{\rm vir} \cap [\ov{\mc M}_\Pi].
\eeqn
Here $\iota_\Pi: \ov{\mc M}_\Pi \hookrightarrow \ov{\mc M}_\Gamma$ is the natural inclusion. 

\item {\bf (Composition)} Let $\sf\Pi$ be the decorated dual graph with one distinguished edge and let $\tilde{\sf\Pi}$ be the decorated dual graph obtained from $\sf\Pi$ by normalization. Let $\Delta \subset X \times X$ be the diagonal. Notice that there is a natural isomorphism $\ov{\mc M}_\Pi \cong \ov{\mc M}{}_{\tilde \Pi}$. Then 
\beqn
[\ov{\mc M}_{\sf\Pi}]^{\rm vir} = [\ov{\mc M}{}_{\tilde{\sf\Pi}}]^{\rm vir} \setminus {\rm PD}[\Delta].
\eeqn
Here we use the slant product
\beqn
\setminus: H_* ( X^{n+2}) \otimes H^*( X \times X) \to H_* (X^n).
\eeqn
\end{enumerate}
\end{thm}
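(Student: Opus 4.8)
The plan is to construct the virtual fundamental classes by first equipping each moduli space $\ov{\mc M}_{\sf\Gamma}(V,G,W,\mu;B)$ with the structure of a compact topological space with a (stratified, oriented) Kuranishi structure, then applying the standard machinery of virtual cycles for Kuranishi spaces to extract $[\ov{\mc M}_{\sf\Gamma}]^{\rm vir} \in H_*(X^n \times \ov{\mc M}_\Gamma;{\mb Q})$ via the evaluation map. First I would recall from \cite{Tian_Xu_geometric} that the moduli space is compact (in the geometric phase the broad case is absent, so the Fredholm theory is well-behaved), and that near each solution the gauged Witten equation defines a Fredholm section whose linearization has the expected index. The key analytic input is the construction of local Kuranishi charts: around each stable solution $(P,A,u)$ one builds a finite-dimensional obstruction space by adding sufficiently many smooth sections to cover the cokernel of the linearized operator, deforms the complex structure and the $r$-spin data, and solves the resulting finite-dimensional reduced equation. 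One then patches these charts using a good coordinate system, checks the cocycle conditions, and verifies that the charts are compatible with the stratification by combinatorial type $\sf\Gamma$.

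The dimension formula in item (a) follows from an index computation: the virtual dimension of $\ov{\mc M}_{\sf\Gamma}$ is the index of the linearized gauged Witten operator coupled with the deformations of the $r$-spin curve. By a Riemann--Roch / splitting argument for the relevant twisted Cauchy--Riemann operator on $P(V)$, together with the fact that the classical vacuum $X$ has complex dimension $\dim_{\mb C} V - \dim_{\mb C} G$ appearing through the fibrewise index, one gets the leading term $(2-2g)\dim_{\mb C}X$; the equivariant first Chern number of the curve class $B$ contributes $2\deg{\sf\Gamma}$, and the deformations of the underlying curve contribute $\dim_{\mb R}\ov{\mc M}_\Gamma$. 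Orientation of the Kuranishi structure comes from the complex-linear nature of the leading operator plus a determinant line argument, so the virtual cycle is defined over ${\mb Q}$ (indeed ${\mb Z}$, but ${\mb Q}$ suffices for the CohFT). Item (b), the disjoint union property, is essentially tautological: the moduli space of a disconnected graph is the product of the pieces, the Kuranishi structure is the product Kuranishi structure, and the virtual cycle of a product is the exterior product, which under the evaluation maps becomes the direct sum as stated.

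For item (c), the cutting edge axiom, I would use the fact that shrinking a loop in $\sf\Gamma$ produces $\sf\Pi$ with $\ov{\mc M}_\Pi \subset \ov{\mc M}_\Gamma$ a codimension-two (real) stratum in Deligne--Mumford space, and that the restriction of the Kuranishi structure on $\ov{\mc M}_{\sf\Gamma}$ to the preimage of $\ov{\mc M}_\Pi$ agrees with the Kuranishi structure on $\ov{\mc M}_{\sf\Pi}$; the identity $(\iota_\Pi)_*[\ov{\mc M}_{\sf\Pi}]^{\rm vir} = [\ov{\mc M}_{\sf\Gamma}]^{\rm vir}\cap[\ov{\mc M}_\Pi]$ is then a compatibility statement between the virtual cycle and the intersection with a boundary divisor, which one proves by choosing multisections transverse to the stratum. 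For item (d), the composition axiom, the crucial geometric input is the gluing/normalization isomorphism $\ov{\mc M}_\Pi \cong \ov{\mc M}{}_{\tilde\Pi}$ of moduli spaces, compatible with evaluation maps, where the two new marked points of the normalization are constrained to map to the same point of $X$; the slant product with ${\rm PD}[\Delta]$ precisely implements this diagonal constraint, and the identity follows once one checks that the Kuranishi structures match under the isomorphism and that the evaluation maps at the node are equal. The main obstacle I anticipate is the construction of the stratum-compatible Kuranishi structure near the deepest strata and the verification that gluing is compatible with the Kuranishi data — in particular ensuring that the obstruction bundles and the gluing maps interact correctly with the $r$-spin structure and the gauge symmetry, since the presence of the connection $A$ and the moment map equation $*F_A + \mu(u)=0$ makes the analysis of the linearized operator and its gluing substantially more delicate than in ordinary Gromov--Witten theory.
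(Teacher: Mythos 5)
Your overall architecture --- finite-dimensional local reductions with explicit obstruction spaces, a finite good coordinate system, multisection perturbations, orientations from complex linearity of the linearization, the dimension by an index/splitting computation, and the axioms (c), (d) by compatibility of the local structures with the boundary stratum and with the diagonal constraint at the node --- is essentially the architecture of the paper. The main structural difference is that the paper works deliberately in the topological category: it constructs the good coordinate system directly (bypassing a global Kuranishi structure), and since one cannot readily triangulate the perturbed zero locus of merely topological charts, it does not push forward a fundamental class of the zero set; instead $[\ov{\mc M}_{\sf\Gamma}]^{\rm vir}$ is defined through Poincar\'e duality on $X^n\times\ov{\mc M}_\Gamma$, by pairing the weakly submersive strongly continuous evaluation map against piecewise smooth collared cubical cycles, which reduces everything to perturbations in virtual dimension at most one. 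If you appeal to ``the standard machinery of virtual cycles for Kuranishi spaces,'' you are implicitly assuming smooth charts and smooth coordinate changes (or some substitute for triangulating zero sets), which your gluing construction does not deliver; you must either prove that smoothness or route around it as the paper does.

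The genuine gap is your treatment of item (b), which you call ``essentially tautological.'' The moduli space of a disconnected graph is indeed a product and admits product charts, but there is no canonical product of good coordinate systems: the product index set with product footprints violates the overlapping condition, and even once that is repaired, a product perturbation need not have compact perturbed zero locus inside the product virtual neighborhood (the perturbed zero set is only identified with the product of the two perturbed zero sets after a careful choice of shrinkings and of normally thickened, sufficiently small perturbations). The paper devotes most of Section \ref{section11} to exactly this point: an asymmetric inductive re-shrinking of footprints producing compatible good coordinate systems ${\mc A}_P$, ${\mc A}_Q$, ${\mc A}_Z$ with charts of ${\mc A}_Z$ openly embedded in products of charts (Proposition \ref{prop111}), followed by a compactness argument (Proposition \ref{prop112}) showing $|\tilde{\mf s}_Z^{-1}(0)| = |\tilde{\mf s}_P^{-1}(0)|\times|\tilde{\mf s}_Q^{-1}(0)|$, after which a K\"unneth/intersection-number argument yields the stated identity. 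As written, your proof of (b) would fail at this step. A smaller slip: $\dim_{\mb C}X$ is not $\dim_{\mb C}V-\dim_{\mb C}G$; the classical vacuum $X = {\rm Crit}W\qu G$ is cut out by $W$ inside $V\qu G$, and the reason the index sees $\dim_{\mb C}X$ rather than $\dim_{\mb C}(V\qu G)$ is the splitting argument of Theorem \ref{thm88}: the normal-to-${\rm Crit}W$ part of the linearized operator, with the nondegenerate holomorphic Hessian of $W$ governing the asymptotics, contributes only twice the degree of the normal bundle and no Euler-characteristic term.
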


Besides the virtual cycle construction, in this paper we also provide the complete proofs of two important facts used in the construction, namely \cite[Theorem 4.1, Theorem 4.2]{Tian_Xu_geometric} (see Section \ref{section4} for details). 

\subsection{Topological construction of the virtual cycle}\label{subsection12}

The construction of the correlation functions in the CohFT is based on the theory of virtual cycles. In symplectic geometry, the virtual cycle theory arose in mathematicians' efforts in defining Gromov--Witten invariants for general symplectic manifolds. Jun Li and the first named author have their approach in both the algebraic case \cite{Li_Tian_2} and the symplectic case \cite{Li_Tian}. Meanwhile there were also other people's approach such as the method of Kuranishi structure of Fukaya--Ono \cite{Fukaya_Ono}, further developed by Fukaya--Oh--Ohta--Ono \cite{FOOO_Kuranishi}. Until recently, there have been various new developments in the virtual cycle theory, such as the polyfold method of Hofer--Wysocki--Zehnder \cite{HWZ1, HWZ2, HWZ3} and the algebraic topological approach of Pardon \cite{Pardon_virtual} and Abouzaid \cite{Abouzaid_axiomatic}.

Our approach follows the topological nature of Li--Tian's method, which is based on the fact that transversality can also be achieved in the topological category. The concrete treatment, though, looks similar to the Kuranishi approach. The main difference from the Kuranishi approach, besides that we are in the topological rather than the smooth category, is that we directly construct a finite good coordinate system while bypassing the Kuranishi structure.

\subsection{Outline}

In Section \ref{section2} we review basic facts about the moduli space of $r$-spin curves, especially the notion of resolution data. In Section \ref{section3} we review the setup of the gauged Witten equation. In Section \ref{section4} we prove two technical theorems about properties of the gauged Witten equation which were asserted in \cite{Tian_Xu_geometric}. In Section \ref{section5} we describe the compactified moduli spaces. In Section \ref{section6} we recall the abstract framework of the virtual cycle theory. In Section \ref{section7} we state our main theorem in a rigorous form. In Section \ref{section8}---\ref{section10} we construct virtual orbifold atlases on the moduli space of the gauged Witten equation. In Section \ref{section11} we prove the properties of the virtual cycle and hence our main theorem (Theorem \ref{thm11}) follows. 

\subsection{Acknowledgement}

We thank Wei Gu, Melissa Liu, Mauricio Romo, Jake Solomon for stimulating discussions. We thank Alexander Kupers for kindly answering questions about topological transversality.

\section{Local Models of Moduli Spaces of $r$-Spin Curves}\label{section2}

In this section we recall the notions of $r$-spin curves and their moduli spaces. For the purpose of the gluing construction we also set up notations for the local models, especially the notion of resolution data.

\subsection{Marked nodal curves}

Suppose $g, n\geq 0$ and $2g + n \geq 3$. The moduli space of smooth genus $g$ Riemann surfaces with $n$ marked points has a well-known Deligne--Mumford compactification, denoted by $\ov{\mc M}_{g, n}$. It is a compact complex orbifold, which is an effective orbifold except for $(g, n) = (1, 1)$ and $(2, 0)$ (see \cite{Robbin_Salamon_2006} for related facts from an analytical point of view). 

We denote a representative of a point of $\ov{\mc M}_{g, n}$ by $(\Sigma, \vec{\bf z} )$ where $\Sigma$ is a compact smooth or nodal curve of genus $g$, and $\vec{\bf z} = (z_1, \ldots, z_n)$ is an ordered set of marked points which are distinct and disjoint from the nodes. Given such $\Sigma$, let $\V_\Sigma$ be the set of irreducible components, whose elements are denoted by $v$. Let $\E_\Sigma$ be the set of its nodal points, whose elements are denoted by $w_1, \ldots, w_m$. Let $\pi: \tilde  \Sigma \to \Sigma$ be the normalization, which is a possibly disconnected smooth Riemann surface with components $\{ \Sigma_v \ | v \in \V_\Sigma \}$. $\pi$ is generically one-to-one and the preimages of the marked points $z_a$ are still denoted by $z_a$. On the other hand, each node of $\Sigma$ has two preimages and we denote by $\tilde w_a$ to be a preimage of some node. The notion of canonical bundles extends to nodal curves. Indeed, if $\Sigma$ is nodal, then there is an isomorphism 
\beqn
\pi^* K_\Sigma \cong K_{\tilde \Sigma} \otimes \bigotimes_{\tilde w_a} {\mc O}(\tilde w_a).
\eeqn

\subsubsection{Universal unfoldings}

The Deligne--Mumford spaces (as well as the moduli spaces of stable $r$-spin curves) have purely algebraic geometric descriptions. However, to fit in the differential geometric construction, we would like to use a differential geometric description. Here we follow the approach of Robbin--Salamon \cite{Robbin_Salamon_2006} for the Deligne--Mumford space $\ov{\mc M}_{g, n}$ and then modify it for the case of $r$-spin curves.

A {\it holomorphic family} of complex curves consists of open complex manifolds ${\mc U}$, ${\mc V}$ and a proper holomorphic map $\pi: {\mc U} \to {\mc V}$ with relative complex dimension one. It is called a {\it nodal} family if for 
every critical point $p \in {\mc U}$ of $\pi$, there exist holomorphic coordinates $(w_0, w_1, \ldots, w_s)$ around $p$ and holomorphic coordinates around $\pi(p)$ such that locally 
\beqn
\pi(w_0, w_1, \ldots, w_s) = (w_0 w_1, w_2, \ldots, w_s). 
\eeqn
Given $n \geq 0$, an $n$-{\it marked nodal family} consists of a nodal family $\pi: {\mc U} \to {\mc V}$ together with holomorphic sections $Z_1, \ldots, Z_n: {\mc V} \to {\mc U} \setminus {\rm Crit} \pi$ whose images are mutually disjoint. Then for each $b \in {\mc V}$, $(\pi^{-1}(b), Z_1(b), \ldots, Z_n(b))$ is a marked nodal curve. 

Given two $n$-marked nodal families $(\pi_i: {\mc U}_i \to {\mc V}_i, Z_{i,1}, \ldots, Z_{i, n})$, $(i=1,2)$, we abbreviate the data as ${\mc U}_1$ and ${\mc U}_2$. A {\it morphism} from ${\mc U}_1$ to ${\mc U}_2$ is a commutative diagram
\beqn
\vcenter{ \xymatrix{ {\mc U}_1 \ar[r]^{\tilde \varphi} \ar[d]^{\pi_1} & {\mc U}_2 \ar[d]_{\pi_2} \\ 
           {\mc V}_1 \ar[r]^{\varphi} \ar@/^/[u]^{Z_{1,i}}        & {\mc V}_2 \ar@/_/[u]_{{Z_{2,i}} } }},\ i = 1, \ldots, n.
\eeqn

\begin{defn}\label{defn21}
Let $(\Sigma, \vec{\bf z})$ be a marked nodal curve. An {\it unfolding} of $(\Sigma, \vec{\bf z})$ consists of a nodal family $\pi: {\mc U} \to {\mc V}$ with markings $Z_1, \ldots, Z_n$, a base point $b \in {\mc V}$ and an isomorphism 
\beqn
(\Sigma, z_1, \ldots, z_n) \cong ({\mc U}_b, Z_1(b), \ldots, Z_n(b) ).
\eeqn
We can define morphisms of unfoldings as morphisms of marked nodal families that respect the central fibre identifications. We can also define germs of unfoldings and germs of morphisms of unfoldings. An unfolding is {\it universal} if for every other unfolding consisting of a nodal family $\pi': {\mc U}' \to {\mc V}'$, markings $Z_1', \ldots, Z_n'$, a base point $b ' \in {\mc V}'$, and for every isomorphism between central fibres $f: {\mc U}_{b'}' \cong {\mc U}_b$, there is a unique germ of morphism from $({\mc U}', b')$ to $({\mc U}, b)$ whose restriction to ${\mc U}_{b'}'$ coincides with $f$.
\end{defn}

\begin{thm}\cite{Robbin_Salamon_2006} A marked nodal curve $(\Sigma, \vec{\bf z})$ has a unique germ of universal unfolding if and only if it is stable. 
\end{thm}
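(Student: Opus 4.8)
The plan is to prove both directions of the equivalence by constructing, respectively obstructing, the germ of universal unfolding via the deformation theory of nodal curves, following the analytic approach of Robbin--Salamon. First I would set up the infinitesimal picture: for a marked nodal curve $(\Sigma, \vec{\bf z})$, deformations are controlled by the sheaf cohomology of the tangent complex twisted down by the marked points and nodes, or in the nodal case more concretely by $H^1$ of the sheaf of holomorphic vector fields on the normalization $\tilde\Sigma$ vanishing at the preimages of $\vec{\bf z}$, together with a smoothing parameter for each node. I would realize an unfolding concretely as follows: choose a Riemannian metric and identify a neighborhood of each smooth component with a fixed model, use a Kuranishi-type local chart where the base ${\mc V}$ has coordinates splitting as "moduli of the components" times "node-smoothing parameters" $w_0^{(j)}w_1^{(j)}$, and build the total space ${\mc U}$ by the standard plumbing/grafting construction at each node. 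This produces an unfolding whose base has dimension $3g-3+n$ (when that is nonnegative) plus corrections for automorphisms.

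The substance is the universal property, which I would establish by verifying the lifting condition in Definition~\ref{defn21}. Given any other unfolding $\pi': {\mc U}' \to {\mc V}'$ with central fibre identification $f$, I want a unique germ of morphism $({\mc U}', b') \to ({\mc U}, b)$ restricting to $f$. Existence: extend $f$ fibrewise over a neighborhood of $b'$ by solving, on each component, a $\ov\partial$-type problem for the map into the model family — the obstruction lives in $H^1$ of the relevant twisted tangent sheaf, and stability is exactly the condition that, after accounting for node-smoothing, the construction can be carried out (more precisely, the model family is versal, so a classifying map exists; the node parameters are matched by reading off the plumbing coordinates of ${\mc U}'$ near its nodes). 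Uniqueness: two such morphisms differ by an automorphism of the family ${\mc U}$ fixing the central fibre pointwise, and such an automorphism is trivial precisely when $(\Sigma,\vec{\bf z})$ has no infinitesimal automorphisms, i.e. $H^0$ of the twisted tangent sheaf vanishes — again stability. This is where I expect the main obstacle: carefully handling the interaction between the automorphism group and the node-smoothing directions, because an infinitesimal automorphism of $\Sigma$ that does not extend over a smoothing still obstructs uniqueness on the nodal fibre, so one must check $H^0$ of the vector fields on $\tilde\Sigma$ vanishing at marked points and at \emph{all} nodal preimages, which is the correct stability count.

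For the converse — unstable implies no universal unfolding — I would exhibit a nontrivial infinitesimal automorphism (for the low-genus unstable types $(g,n)$ with $2g+n<3$, or an unstable rational/elliptic component) and use it to produce two distinct morphisms from some unfolding to any candidate universal unfolding, contradicting uniqueness; alternatively, one shows directly that the would-be classifying map fails to be unique because precomposing with the flow of the vector field gives a different morphism inducing the same central-fibre identification. The cleanest formulation is: a germ of universal unfolding exists iff the group of automorphisms of $(\Sigma,\vec{\bf z})$ fixing the markings is finite and acts so that the deformation functor is representable; by the standard deformation theory this is equivalent to stability. Since this theorem is quoted verbatim from Robbin--Salamon \cite{Robbin_Salamon_2006}, for the paper's purposes I would simply cite their Theorem and sketch only the infinitesimal heuristic above, flagging that the full analytic proof (existence of plumbing coordinates, smoothness of ${\mc V}$, the implicit function theorem argument for the lift) is carried out there.
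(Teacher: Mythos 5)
The paper offers no proof of this statement: it is imported verbatim from Robbin--Salamon, so the ``proof'' in context is the citation, which is exactly where your proposal ends up. Given that, your sketch cannot be compared line-by-line with an argument of the authors, only with the cited source, and there the routes genuinely differ: Robbin--Salamon's proof is hard analysis (Hardy decompositions of the curve near the nodes, nonlinear Hardy spaces, and an implicit-function-theorem argument establishing both existence of the unfolding and the existence/uniqueness of the classifying germ required by Definition~\ref{defn21}), whereas you phrase everything in sheaf-cohomological deformation theory, with the base built by plumbing and the universal property reduced to $H^1$-obstructions and $H^0$-vanishing. Your infinitesimal criterion is the right one --- stability is equivalent to the absence of nonzero holomorphic vector fields on the normalization $\tilde\Sigma$ vanishing at the marked points and at all nodal preimages --- and your converse (an infinitesimal automorphism yields two distinct lifts inducing the same central-fibre identification, so no unfolding can be universal) is the standard and correct argument for the ``only if'' direction. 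But be aware that the forward direction as you present it is only a skeleton: the statement that the plumbing family is \emph{universal} (not merely versal), the smoothness of the base, and the uniqueness of the lift are precisely the content of Robbin--Salamon's analytic work and are not supplied by the words ``solve a $\bar\partial$-type problem, the obstruction lives in $H^1$.'' Since the paper itself defers entirely to the reference, ending with the citation is appropriate; just don't present the preceding heuristic as if it closed those steps.
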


For any universal unfolding $\pi: {\mc U}\to {\mc V}$, for any automorphism $\gamma$ of $(\Sigma, \vec{\bf z})$, $\gamma$ induces an action on ${\mc U}$ and ${\mc V}$ that preserves the markings. Denote the actions by 
\begin{align*}
&\ \gamma^{\mc U}: {\mc U} \to {\mc U},\ &\ \gamma^{\mc V}: {\mc V} \to {\mc V}.
\end{align*}
Hence $\pi: {\mc U} \to {\mc V}$ becomes an ${\rm Aut}(\Sigma, \vec{\bf z})$-equivariant family. Hence from now on, for all universal unfoldings ${\mc U} \to {\mc V}$ of a stable marked curve $(\Sigma, \vec{\bf z})$, we always assume it is ${\rm Aut}(\Sigma, \vec{\bf z})$-equivariant. 

On the other hand, there is a natural continuous injection 
\beqn
{\mc V}/ {\rm Aut}(\Sigma, \vec{\bf z}) \hookrightarrow \ov{\mc M}{}_{g, n}
\eeqn
hence a local universal unfolding provides a local orbifold chart of the Deligne--Mumford space. It also implies the following facts. Suppose there is an isomorphism
\beqn
\varphi: ({\mc U}_{b_1}, Z_1(b_1), \ldots, Z_n(b_1)) \cong ({\mc U}_{b_2}, Z_1(b_2), \ldots, Z_n(b_2))
\eeqn
where $b_1, b_2 \in {\mc V}$. Then there must be a unique automorphism $\gamma \in {\rm Aut}(\Sigma, \vec{\bf z})$ such that $\gamma^{\mc V}(b_1) = b_2 $ and $\varphi = \gamma^{\mc U}|_{{\mc U}_{b_1}}$.

\subsubsection{Generalized marked curves}

In the study of pseudoholomorphic curves, one usually needs to add marked points to unstable components to obtain a stable domain curve. The newly added marked points are treated as  unordered. It is then convenient to generalize the notion of marked curves. 

\begin{defn}\label{defn23}
A {\it generalized marked curve} is a triple $(\Sigma, \vec{\bf z}, {\bf y})$ where $(\Sigma, \vec{\bf z})$ is marked curve and ${\bf y}$ is a set of points which does not intersect with $\vec{\bf z}$ or nodes. Two generalized marked curves $(\Sigma_1, \vec{\bf z}_1, {\bf y}_1)$ and $(\Sigma_2, \vec{\bf z}_2, {\bf y}_2)$ are {\it isomorphic} if there is an isomorphisms between $(\Sigma_1, \vec{\bf z}_1)$ and $(\Sigma_2, \vec{\bf z}_2)$ that maps ${\bf y}_1$ bijectively onto ${\bf y}_2$. A generalized marked curve is {\it stable} if it becomes a stable marked curve after arbitrarily ordering $\vec{\bf z} \cup {\bf y}$. If the genus of $\Sigma$ is $g$, $\# \vec{\bf z} = n$ and $\# {\bf y} = l$, then we say it is of type $(g, n, l)$.
\end{defn}

Let the moduli space of generalized marked curves of type $(g, n, l)$ be $\ov{\mc M}_{g, n, l}$. It can be viewed as the quotient 
\beqn
\ov{\mc M}_{g, n, l}:= \ov{\mc M}_{g, n+l}/ S_l.
\eeqn
More generally, we can consider the moduli 
\beqn
\ov{\mc M}_{g, n, m_1, \ldots, m_r}:= \ov{\mc M}_{g, n+m_1 + \cdots + m_r}/ S_{m_1} \times \cdots \times S_{m_r}.
\eeqn

From universal unfoldings of stable marked curves one can easily obtain universal unfoldings of stable generalized marked curves. Indeed, let $(\Sigma, \vec{\bf z}, {\bf y})$ be a stable generalized marked curve. Give an arbitrary order of ${\bf y}$, this produces a stable marked curve $(\Sigma, \vec{\bf z} \cup \vec {\bf y})$, hence admits a universal unfolding ${\mc U} \to {\mc V}$ with sections $Z_1, \ldots, Z_n, Y_1, \ldots, Y_l$. We also have the relations between the automorphism groups
\beqn
{\rm Aut}(\Sigma, \vec{\bf z} \cup \vec{\bf y})  \subset {\rm Aut}(\Sigma, \vec{\bf z},  {\bf y})
\eeqn
where the former contains biholomorphic maps that preserves the ordering on ${\bf y}$.

Now for each $h \in S_l$, one can choose a universal unfolding ${\mc U}_h \to {\mc V}_h$ of $(\Sigma, \vec{\bf z} \cup h \vec{\bf y})$. Moreover, we can choose ${\mc U}_h$'s in such a way that the following conditions are satisfied.
\begin{enumerate}
\item If $(\Sigma, \vec{\bf z} \cup h_1 \vec{\bf y})$ and $(\Sigma, \vec{\bf z} \cup h_2 \vec{\bf y})$ are not isomorphic as stable marked curves, then the underlying sets $[{\mc V}_{h_1}]$ and $[{\mc V}_{h_2}]$ are disjoint in $\ov{\mc M}_{g, n+l}$. 

\item If $\gamma: (\Sigma, \vec{\bf z} \cup h_1\vec {\bf y}) \cong (\Sigma, \vec{\bf z} \cup h_2 \vec{\bf y})$ is an isomorphism, then there is an isomorphism of nodal families whose restriction to the central fibre is $\gamma$
\beqn
\vcenter{ \xymatrix{  {\mc U}_{h_1} \ar[r] \ar[d] & {\mc U}_{h_2} \ar[d] \\
                      {\mc V}_{h_1} \ar[r]        &  {\mc V}_{h_2}    }    }.
\eeqn
(The existence of the above commutative diagram on the germ level is guaranteed by the universality of the unfoldings). 
\end{enumerate} 
Then the original ${\mc U} \to {\mc V}$ can be viewed as a universal unfolding of the generalized marked curve, which is ${\rm Aut}(\Sigma, \vec{\bf z}, {\bf y})$-equivariant.

\subsection{Resolution data and gluing parameters}\label{subsection22}

(See \cite[Remark 5.2.3]{Tehrani_Fukaya}) Consider a (possibly) nodal stable generalized marked $ (\Sigma, \vec{\bf z}, {\bf y})$ of type $(g, n, l)$. Its normalization is another generalized marked $(\tilde \Sigma, \vec{\bf z}, {\bf w}, {\bf y})$ of type $(g, n, m, l)$ where $m/2$ is the number of nodes of ${\mc C}$. Notice that the preimages of the nodes are unordered. $\tilde \Sigma$ is not necessarily connected. For every vertex $v \in \V_\Sigma$, denote the smooth generalized marked curve by 
\beqn
(\tilde \Sigma_v, \vec{\bf z}_v, {\bf w}_v, {\bf y}_v).
\eeqn
which is of type $(n_v, m_v, l_v)$ (see Definition \ref{defn23} and discussion afterwards). Then there exist universal unfoldings $\pi_v: {\mc U}_v \to  {\mc V}_v$ where we suppressed the data of an identification of ${\mc C}_v$ with the central fibre. The product of these unfoldings parametrizes deformations of ${\mc C}$ that do not resolve the nodes. Moreover, the automorphism group $\Gammait_{\mc C}$ of $(\Sigma_{\mc C}, \vec{\bf z}, {\bf y})$ acts on the normalization 
\beqn
\bigsqcup_{v \in \V_{\mc C}} ( \tilde \Sigma_v, \vec{\bf z}_v, {\bf w}_v, {\bf y}_v).
\eeqn
Hence can shrink ${\mc V}_v$ so that $\Gammait_{\mc C}$ also acts on the disjoint union of the universal unfoldings 
\beq\label{eqn21}
\bigsqcup_{v\in \V_{\mc C}} {\mc U}_v  \to \bigsqcup_{v \in \V_{\mc C}} {\mc V}_v.
\eeq

For the purpose of gluing, we need to have more explicit description of how to resolve the nodes. 

\begin{lemma}\label{lemma24} \cite[Lemma 5.24]{Tehrani_Fukaya} 
For all components $v \in \V_{\Sigma}$, there exist a collection of universal unfoldings $\pi_v: {\mc U}_v \to {\mc V}_v$, a collection of open subsets ${\mc N}_v \subset {\mc U}_v$, which we call {\it nodal neighborhoods}, and a collection of holomorphic functions $\theta_v: {\mc N}_v \to {\mb C}$ for all $v \in \V_\Sigma$ satisfying the following conditions. 
 
\begin{enumerate}
\item ${\mc N}_v$ is the disjoint union of open neighborhoods of points in ${\bf w}_v$, denoted by 
\beqn
{\mc N}_v = \bigcup_{\tilde w \in {\bf w}_v} {\mc N}_{v, \tilde w}.
\eeqn

\item For any automorphism $\gamma$ of ${\mc C}$\footnote{Remember $\gamma$ acts on the set of irreducible components.} there is an isomorphism
\beqn
\varphi_\gamma: {\mc U}_v \cong {\mc U}_{\gamma v}
\eeqn
such that $\varphi_{\gamma} \circ \varphi_{\gamma'} = \varphi_{\gamma \gamma'}$.

\item For any automorphism $\gamma$ of ${\mc C}$ and $\tilde w \in {\bf w}_v$ \footnote{Remember that $\gamma$ acts on the set of preimages of nodes.} we have 
\beqn
\varphi_\gamma ( {\mc N}_{v, \tilde w}) = {\mc N}_{\gamma v, \gamma \tilde w}.
\eeqn

\item Denote the restriction of $\theta_v$ to ${\mc N}_{v, \tilde w}$ by $\theta_{v, \tilde w}$. Then the restriction of $\theta_{v, \tilde w}$ to each fibre is one-to-one and maps $\tilde w$ to $0\in {\mb C}$. Moreover, 
\beq\label{eqn22}
\theta_{\gamma v, \gamma \tilde w} \circ \varphi_\gamma = \theta_{v, \tilde w}.
\eeq

\item For any node $w$ of ${\mc C}$ with preimages $\tilde w_-, \tilde w_+$ belonging to components $v_-, v_+$ (which could be equal if $w$ is non-separating), for every automorphism $\gamma$ of ${\mc C}$ of order $m_\gamma$, there is an $m_\gamma$-th root of unity $\mu_{\gamma, w}$ such that 
\beqn
\left( \theta_{\gamma v_-, \gamma \tilde w_-} \circ \varphi_\gamma\right) \left( \theta_{\gamma v_+, \gamma \tilde w_+} \circ \varphi_\gamma \right) = \mu_{\gamma, w} \theta_{v_-, \tilde w_-} \theta_{v_+, \tilde w_+}.
\eeqn
\end{enumerate}
\end{lemma}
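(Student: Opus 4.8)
The plan is to construct the universal unfoldings $\pi_v: {\mc U}_v \to {\mc V}_v$, the nodal neighborhoods ${\mc N}_v$, and the node coordinates $\theta_v$ simultaneously, in a way that is manifestly compatible with the $\Gammait_{\mc C}$-action, and then to check the five itemized conditions. First I would pick, for each $\Gammait_{\mc C}$-orbit of components in $\V_\Sigma$, a representative $v_0$, choose an arbitrary universal unfolding ${\mc U}_{v_0} \to {\mc V}_{v_0}$ of the smooth generalized marked curve $(\tilde\Sigma_{v_0}, \vec{\bf z}_{v_0}, {\bf w}_{v_0}, {\bf y}_{v_0})$ (which exists since this curve is stable, by the Robbin--Salamon theorem recalled above), and then \emph{define} ${\mc U}_{\gamma v_0}:= {\mc U}_{v_0}$ with the section labels permuted by $\gamma$, setting $\varphi_\gamma$ to be the resulting tautological identification for $\gamma$ in the stabilizer of $v_0$ and the canonical transport otherwise; this forces the cocycle condition $\varphi_\gamma \circ \varphi_{\gamma'} = \varphi_{\gamma\gamma'}$ in (b) by construction. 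Near each preimage $\tilde w \in {\bf w}_v$ of a node, the section $W_{\tilde w}: {\mc V}_v \to {\mc U}_v$ is disjoint from the critical locus and from the other sections, so a holomorphic tubular-neighborhood / implicit-function argument produces a holomorphic fibrewise coordinate; concretely one trivializes the relative tangent line along $W_{\tilde w}$ and uses the fibrewise exponential (or a relative version of the holomorphic coordinate straightening), giving ${\mc N}_{v,\tilde w}$ and $\theta_{v,\tilde w}$ with $\theta_{v,\tilde w}(W_{\tilde w}) \equiv 0$ and fibrewise injectivity after shrinking, which is condition (d)'s first half and condition (a).

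The content is then in \emph{equivariance}: I must arrange (c), the second half of (d), i.e. \eqref{eqn22}, and the root-of-unity relation in (e). For a $\gamma$ in the stabilizer $\Gammait_v$ of a component, $\varphi_\gamma$ is a biholomorphism of ${\mc U}_v$ covering a biholomorphism of ${\mc V}_v$ and permuting the node-preimages; the pullback $\theta_{\gamma v,\gamma\tilde w}\circ\varphi_\gamma$ is then another fibrewise coordinate centered at $\tilde w$. To make it equal $\theta_{v,\tilde w}$ I would first \emph{average over the finite group}: replace the initially chosen ${\mc N}_v$ by $\bigcap_{\gamma\in\Gammait_{\mc C}}\varphi_\gamma^{-1}({\mc N}_{\gamma v})$ so that (c) holds, and then I cannot simply average the coordinates $\theta$ multiplicatively, but I can examine the $1$-jet: write $\theta_{\gamma v,\gamma\tilde w}\circ\varphi_\gamma = c_\gamma(\tilde w)\,\theta_{v,\tilde w}(1 + O(\theta_{v,\tilde w}))$ along the central fibre. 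The map $\gamma \mapsto$ (leading coefficient) is, up to the higher-order terms, a cocycle; for $\gamma$ of order $m_\gamma$ the $m_\gamma$-fold composite is the identity on ${\mc U}_v$, which pins the product of the leading coefficients around the cycle to be an $m_\gamma$-th root of unity --- this is exactly the $\mu_{\gamma,w}$ of (e), and it is forced to survive because for a node the two branches $v_-,v_+$ are interchanged or fixed by $\gamma$ and only the \emph{product} $\theta_{v_-,\tilde w_-}\theta_{v_+,\tilde w_+}$ (the smoothing parameter) is intrinsic, whereas each factor individually can be rescaled.

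To upgrade "leading-order" to "on the nose" I would then renormalize: after choosing $\theta_{v_0,\tilde w}$ freely for orbit representatives, define $\theta_{\gamma v_0,\gamma\tilde w}$ on ${\mc N}_{\gamma v_0,\gamma\tilde w}$ by \emph{transport}, $\theta_{\gamma v_0,\gamma\tilde w}:= \theta_{v_0,\tilde w}\circ\varphi_\gamma^{-1}$, for a chosen coset representative $\gamma$; this makes \eqref{eqn22} hold exactly for those $\gamma$, and for a general group element one gets \eqref{eqn22} up to an element of the stabilizer $\Gammait_v$ acting on ${\mc N}_{v,\tilde w}$. The remaining ambiguity is thus a genuine finite-group issue on a single fibrewise-injective coordinate: the stabilizer $\Gammait_v$ acts on the disk bundle ${\mc N}_{v,\tilde w}$ fixing the zero section, and I need to choose $\theta_{v,\tilde w}$ so that this action becomes multiplication by roots of unity fibrewise --- equivalently, linearize the (finite, hence linearizable by Cartan/Bochner averaging) action of $\Gammait_v$ on each fibre disk and on the base, simultaneously and holomorphically, which is standard. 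Finally, conditions (c), (d), (e) are inherited; (e) in particular comes out because $\varphi_\gamma^{m_\gamma}=\mathrm{id}$ forces the product of the fibrewise scaling factors over the $\gamma$-orbit of the node to be an $m_\gamma$-th root of unity, and these factors are the same on both branches since the smoothing parameter $\theta_{v_-,\tilde w_-}\theta_{v_+,\tilde w_+}$ transforms by that single scalar. I expect the main obstacle to be the simultaneous holomorphic linearization of the stabilizer actions on all nodal neighborhoods while keeping the already-achieved equivariance of the families \eqref{eqn21} intact; the subtlety is that shrinking ${\mc V}_v$ and modifying $\theta_v$ must be done $\Gammait_{\mc C}$-equivariantly and compatibly across all components sharing a node, so one really has to do the averaging/linearization once and for all on the disjoint union, not component by component.
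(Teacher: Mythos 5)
Your route is genuinely different from the paper's: the paper's entire proof is a citation of \cite[Lemma 5.24]{Tehrani_Fukaya}, with the single added remark that \eqref{eqn22} is a consequence of the construction given there, whereas you attempt a self-contained equivariant construction (choose orbit representatives of components, transport the unfoldings and the node coordinates along coset representatives, then holomorphically linearize the finite stabilizer actions on the nodal neighborhoods). That strategy is reasonable in outline, and the points you treat briefly --- strictness of the cocycle condition in (b) via uniqueness of germs of morphisms of universal unfoldings, and the equivariant shrinking needed for (c) --- are standard and repairable.

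The genuine gap is at \eqref{eqn22}, which is exactly the one assertion the paper says requires something beyond the citation. Your renormalization makes \eqref{eqn22} hold only for the chosen coset representatives; for an element $\gamma$ of the stabilizer of a pair $(v, \tilde w)$, the linearization step you invoke produces $\theta_{v, \tilde w} \circ \varphi_\gamma = c_\gamma\, \theta_{v, \tilde w}$ with $c_\gamma$ a root of unity, and you cannot in general improve $c_\gamma$ to $1$: if $\gamma$ fixes $\tilde w$ but acts near it on the central fibre by a nontrivial rotation (this happens already for symmetric two-component nodal curves), then $\theta_{v,\tilde w} \circ \varphi_\gamma = \theta_{v, \tilde w}$ on the central fibre would contradict the fibrewise injectivity you yourself arrange in (d). So your closing claim that ``(c), (d), (e) are inherited'' is unjustified precisely for \eqref{eqn22}; what your construction actually delivers is the root-of-unity discrepancy, i.e.\ the scalars $\mu_{\gamma, w}$ of item (e), which is also what is used later to define the $\Gammait_{\mc C}$-action on ${\mc V}_{\rm res}$. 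To close the argument you must either show that in your normalization the linearized stabilizer action on the node coordinate is trivial (it is not in general), or restrict/derive \eqref{eqn22} only for the transports where it can hold and track the discrepancy separately, or else do what the paper does and extract the statement from the explicit construction in \cite{Tehrani_Fukaya}.
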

\begin{proof}
Only \eqref{eqn22} is not stated in \cite[Lemma 5.24]{Tehrani_Fukaya}, but it is a consequence of the construction in their proof. 
\end{proof}

The functions $\theta_{v, w}$ are viewed as fibrewise coordinates around the nodes. Choosing the data $\pi_v: {\mc U}_v \to {\mc V}_v$, $\theta_v: {\mc N}_v \to {\mb C}$ and $\mu_{\gamma, w}$ that satisfy conditions in Lemma \ref{lemma24}, we can produce a special type of universal unfolding of ${\mc C}$. Define 
\begin{align*}
&\ {\mc V}_{\rm def} = \prod_{v\in \V_\Sigma } {\mc V}_v,\ &\ {\mc V}_{\rm res} = \prod_{w \in \E(\Sigma)} {\mb C}.
\end{align*}
Then the automorphism group $\Gammait_{\mc C}$ acts on ${\mc V}_{\rm def}$. It also acts on ${\mc V}_{\rm res}$ in the following way. The coordinates of any $\zeta \in {\mc V}_{\rm res}$ are denoted by $\zeta_w$. For each $\gamma \in \Gammait_{\mc C}$ and $\zeta \in {\mc V}_{\rm res}$, define 
\beqn
(\gamma \zeta)_w = \mu_{\gamma, w} \zeta_{\gamma w}.
\eeqn
Denote variables in ${\mc V}_{\rm def}$ by $\eta$ and variables in ${\mc V}_{\rm res}$ by $\zeta$, call them {\it deformation parameters} and {\it gluing parameters}. We construct a universal unfolding of $(\Sigma, \vec{\bf z}, {\bf y})$ as follows. 

For each deformation parameter $\eta$, one has the corresponding fibre $(\Sigma_\eta, \vec{\bf z}_\eta, {\bf y}_\eta)$ obtained by identifying the preimages of the nodes. We may regard $\Sigma_\eta$ as a fiber of \eqref{eqn21}. Denote
\beqn
N_{v, \eta}:= {\mc N}_v \cap \Sigma_\eta,\ N_{v, \tilde w, \eta}:= {\mc N}_{v, \tilde w} \cap \Sigma_\eta.
\eeqn
Now for each (small) gluing parameter $\zeta$, define an object $(\Sigma_{\eta, \zeta}, \vec{\bf z}_{\eta, \zeta}, {\bf y}_{\eta, \zeta})$ as follows. Consider a node $w$, whose preimages in the normalization, $\tilde w_-, \tilde w_+$, belong to components $v_-$, $v_+$ respectively. If the corresponding gluing parameter $\zeta_w$ is nonzero, then replace the union 
\beqn
N_{v_-, \tilde w_-, \eta} \cup N_{v_+, \tilde w_+, \eta}
\eeqn
by the annulus 
\beqn
N_{w, \eta, \zeta}:= \Big\{ (\xi_-, \xi_+) \in N_{v_-, \tilde w_-, \eta} \times N_{v_+, \tilde w_+, \eta} \ |\ \theta_{v_-, \tilde w_-}(\xi_-) \theta_{v_+, \tilde w_+}(\xi_+) = \zeta_w \Big\}.
\eeqn
This provides a nodal (or smooth) curve 
\beqn
\Sigma_{\eta, \zeta}:= \Big( \Sigma_\eta \setminus \bigcup_{\zeta_w \neq 0} N_{v_-, \tilde w_-, \eta} \cup N_{v_+, \tilde w_+, \eta} \Big) \cup \bigcup_{\zeta_w \neq 0} N_{w, \eta, \zeta}.
\eeqn
The positions of $\vec{\bf z}_{\eta, \zeta}$ and ${\bf y}_{\eta, \zeta}$ are the same as $\vec{\bf z}_\eta$ and ${\bf y}_\eta$. 

It is a well-known fact that the union of all $\Sigma_{\eta, \zeta}$ forms a universal unfolding ${\mc U} \to {\mc V}$ of $(\Sigma, \vec{\bf z}, {\bf y})$. We then define the union of all ${\mc N}_{v, \tilde w}$ and the neck region $N_{w, \eta, \zeta}$ as the {\it thin} part, denoted by 
\beqn
{\mc U}^{\rm thin} \subset {\mc U};
\eeqn
the closure of the complement of ${\mc U}^{\rm thin}$ is called the {\it thick} part, denoted by ${\mc U}^{\rm thick}$. Then 
\beqn
{\mc U} = {\mc U}^{\rm thin} \cup {\mc U}^{\rm thick}.
\eeqn
From the construction we know that ${\mc V}$ is contractible. Hence there exists a smooth trivialization 
\beqn
{\mc U}^{\rm thick} \cong {\mc V} \times \Sigma^{\rm thick},\ {\rm where}\ \Sigma^{\rm thick}:= \Sigma \cap {\mc U}^{\rm thick}.
\eeqn

\begin{defn}\label{defn25} {\rm (Resolution data for stable curves)} Let $(\Sigma, \vec{\bf z}, {\bf y})$ be a stable generalized marked curve. A {\rm resolution data} of $(\Sigma, \vec{\bf z}, {\bf y})$ consists of the following objects.
\begin{enumerate}

\item A universal unfolding $\tilde {\mc U} \to \tilde {\mc V}$ of its normalization of the form \eqref{eqn21}.

\item A collection of holomorphic functions $\zeta_v: {\mc N}_v \to {\mb C}$ and $\mu_{\gamma, w}$ as in Lemma \ref{lemma24}

\item A smooth trivialization 
\beqn
\tilde {\mc U} \setminus \tilde {\mc N} \cong \tilde {\mc V}  \times\big( \Sigma \setminus (\tilde {\mc U}_b \cap \tilde {\mc N} ) \big).
\eeqn
which is holomorphic near the boundary of $\tilde {\mc N}$. 

\end{enumerate}
\end{defn}

A resolution datum is a much more enhanced object than a universal unfolding. We actually have shown the existence of resolution data of stable generalized marked curves. As we have seen, a resolution datum provides the following objects.
\begin{enumerate}

\item A universal unfolding ${\mc U} \to {\mc V}$ of $(\Sigma, \vec{\bf z}, {\bf y})$. 

\item Fibrewise holomorphic coordinates near nodes and markings that are invariant under automorphisms.

\item Smooth trivializations of the thick part of ${\mc U}$. 

\end{enumerate}

\subsection{$r$-spin curves}\label{subsection23}

\subsubsection{Orbifold curves and orbifold bundles}

A one-dimensional complex orbifold is called an orbifold Riemann surface or an orbifold curve. In this paper, we impose the following conditions and conventions for orbifold curves. 
\begin{enumerate}

\item Orbifold curves are effective orbifolds with finitely many orbifold points. 

\item We always assume that an orbifold curve is marked, and denoted by $(\Sigma, \vec{\bf z})$. Moreover, the set of orbifold points are contained in the set of markings $\vec{\bf z}$, however, each $z_a \in \vec{\bf z}$ may not be a strict orbifold point.
\end{enumerate}

Near each $z_a \in \vec{\bf z}$ there exists an orbifold chart of the form 
\beqn
(U_a, \Gammait_a) \cong ({\mb D}_\epsilon, {\mb Z}_{r_a}),\ r_a \geq 1.
\eeqn
The ${\mb Z}_{r_a}$ action on ${\mb D}_\epsilon$ is the standard action, by viewing ${\mb Z}_{r_a}$ as a subgroup of $U(1)$. 

The notion of nodal orbifold curves is more complicated. At a node $w$ the nodal orbifold curve has a chart of the form
\beqn
(U_w, \Gammait_w) \cong \big( \{ (\xi_-, \xi_+) \in {\mb D}_\epsilon^2\ |\ \xi_- \xi_+ = 0 \big\}, {\mb Z}_{r_w} \big),\ r_w \geq 1,
\eeqn
where the ${\mb Z}_{r_w}$-action on $U_w$ is given by 
\beqn
\gamma (\xi_-, \xi_+) = (\gamma^{-1} \xi_-, \gamma \xi_+).
\eeqn

Given a smooth orbifold curve $(\Sigma, \vec{\bf z})$, an orbifold line bundle is $L$ is a complex orbifold with a holomorphic map $\pi: L \to \Sigma$ which, over non-orbifold points has local trivializations as an ordinary holomorphic line bundle, while over an orbifold chart $(U_a, \Gammait_a)$ there is a chart of the form 
\beqn
(\tilde U_a, \Gammait_a) \cong ( U_a \times {\mb C}, {\mb Z}_{r_a})
\eeqn
where the ${\mb Z}_{r_a}$-action on the ${\mb C}$-factor is given by a weight $n_a$, i.e. 
\beqn
\gamma( z, t) = (\gamma z, \gamma^{n_a} t).
\eeqn
We call the element $e^{2\pi q {\bf i}} \in {\mb Z}_r$, where $q = \frac{ n_a}{r_a}$ the {\it monodromy} of the line bundle at $z_a$. When $\Sigma$ is nodal, we require that an orbifold line bundle $L \to \Sigma$ has local charts at a node $w$ of the form 
\beq\label{eqn23}
(\tilde U_w, \Gammait_w) \cong \big( \{ (\xi_-, \xi_+, t) \in {\mb D}_\epsilon \times {\mb D}_\epsilon \times {\mb C}\ |\ \xi_- \xi_+ = 0\}, {\mb Z}_{r_w} \big)
\eeq
where the ${\mb C}$-factor still has the linear action by the local group ${\mb Z}_{r_w}$. Then with respect to the normalization map $\pi: \tilde \Sigma \to \Sigma$, the pull-back bundle
\beqn
\tilde L:= \pi^* L \to \tilde \Sigma
\eeqn
is an orbifold line bundle whose monodromies at two preimages of a node are opposite. 

\begin{defn}{\rm (Log-canonical bundle)}
Let $(\Sigma, \vec{\bf z})$ be a marked smooth or nodal curve with $n$ markings $z_1, \ldots, z_n$. Its {\it log-canonical bundle} is the holomorphic line bundle
\beqn
K_{\Sigma, {\rm log}} = K_\Sigma \otimes \bigotimes_{z_a \in \Sigma_v} {\mc O}( z_a).
\eeqn

For an orbifold curve $(\Sigma, \vec{\bf z})$, its log-canonical bundle $K_{\Sigma, {\rm log}} \to \Sigma$ is defined to be the pullback of the log-canonical bundle of the underlying marked coarse curve.
\end{defn}

\subsubsection{$r$-spin curves}

We recall the definition of $r$-spin curves.

\begin{defn}\label{defn_r_spin} {\rm ($r$-spin curve)} Let $r$ be a positive integer.
\begin{enumerate}
\item A smooth or nodal $r$-spin curve of type $(g, n)$ is denoted by 
\beqn
{\mc C} = (\Sigma_{\mc C}, \vec{\bf z}_{\mc C}, L_{\mc C}, \varphi_{\mc C})
\eeqn
where $(\Sigma_{\mc C}, \vec{\bf z}_{\mc C})$ is an $n$-marked smooth or nodal orbifold curve, $L_{\mc C} \to \Sigma_{\mc C}$ is a holomorphic orbifold line bundle, and $\varphi_{\mc C}$ is an isomorphism 
\beqn
\varphi_{\mc C}: L_{\mc C}^{\otimes r} \cong K_{{\mc C}, {\rm log}}:= K_{\Sigma_{{\mc C}}, {\rm log}}.
\eeqn

\item Let ${\mc C}_i = (\Sigma_{{\mc C}_i}, \vec{\bf z}_{{\mc C}_i}, L_{{\mc C}_i}, \varphi_{{\mc C}_i})$, $i = 1, 2$ be two $r$-spin curves. An {\it isomorphism} from ${\mc C}_1$ to ${\mc C}_2$ consists of an isomorphism of orbifold bundles (represented by the following commutative diagram)
\beqn
\vcenter{ \xymatrix{ L_{ {\mc C}_1} \ar[r]^{\tilde\rho} \ar[d]_{\pi_1} & L_{{\mc C}_2} \ar[d]^{\pi_2} \\
             \Sigma_{{\mc C}_1}   \ar[r]^\rho    & \Sigma_{{\mc C}_2}     } }
\eeqn
such that the following induced diagram commutes
\beq\label{eqn24}
\vcenter{ \xymatrix{ L_{{\mc C}_1}^{\otimes r} \ar[r]^{(\tilde\rho)^{\otimes r}}  \ar[d]_{\varphi_{{\mc C}_1}} & L_{{\mc C}_2}^{\otimes r} \ar[d]^{\varphi_{{\mc C}_2}   } \\ 
             K_{{\mc C}_1, {\rm log}}\ar[r]^{\rho} & K_{{\mc C}_2, {\rm log}}} }.
\eeq
\end{enumerate}

\end{defn}

\begin{rem} {\rm (Minimality of the local group)}
We require that the orbifold structures at the markings or nodes of $\Sigma_{\mc C}$ are {\it minimal} in the following sense. Take a local chart of $L_{\mc C}$ at a marking $z_a$
\beqn
({\mb D}_\epsilon \times {\mb C}, {\mb Z}_{r_a})
\eeqn
where the action of ${\mb Z}_{r_a}$ on the ${\mb C}$-factor has weight $n_a$. Then the isomorphism $\varphi_{\mc C}: L_{\mc C}^{\otimes r} \cong K_{{\mc C}, {\rm log}}$ implies that $rm_a/ r_a$ is an integer, hence there exists an integer $m_a \in \{0, 1, \ldots, r-1\}$ such that
\beqn
\frac{ n_a}{r_a} = \frac{m_a}{r}.
\eeqn
We require that $n_a$ and $r_a$ are coprime. Similar requirement is imposed for nodes. In other words, the group generated by the monodromies of $L_{\mc C}$ at a marking or a node is the same as the local group of the orbifold curve at that marking or node. In particular, when the monodromy of $L_{\mc C}$ is trivial, the marking or node is not a strict orbifold point. 
\end{rem}

\subsubsection{Infinite cylinders}

Denote the infinite cylinder by 
\beqn
\Thetait:= {\mb R} \times S^1.
\eeqn
Let the standard cylindrical coordinates be $s + {\bf i}t$ where we also regard $t \in [0, \pi]$. Hence there is a complex coordinate $z = e^{s + {\bf i} t}$ which identifies $\Thetait$ with ${\mb C}^*$. For any cyclic group ${\mb Z}_k$, it acts on $\Thetait$ by rotating the $t$-coordinate. The orbifold $\Thetait/ {\mb Z}_k$ is still isomorphic to $\Thetait$, hence we may either regard the infinite cylinder as a smooth object, or regard it as an orbifold with local group ${\mb Z}_k$ at $\pm \infty$ for certain $k \geq 1$. When we are in the latter perspective, we call it an orbifold cylinder.

The isomorphism classes of $r$-spin structures on an infinite are classified by their monodromies. Indeed, for any $m \in \{0, 1, \ldots, r-1\}$, there is an orbifold line bundle $L_m \to \Thetait$ whose monodromy at $-\infty$ is $e^{2\pi q {\bf i} }$ where $q = \frac{m}{r}$, together with a holomorphic isomorphism $\varphi: L_m^{\otimes r} \cong K_\Thetait$. If we regard $z$ as the local coordinate near $-\infty$, then over each contractible open subset of $\Thetait$, there is a holomorphic section $e$ of $L_m$ such that
\beqn
\varphi(e^{\otimes r}) = z^m \frac{dz}{z}.
\eeqn
The monodromy at $+\infty$ of $L_m$ is $e^{- 2\pi q {\bf i}}$.

\subsubsection{Automorphisms}

Given an $r$-spin curve ${\mc C}$, its automorphisms form a group, denoted by ${\rm Aut}{\mc C}$. ${\mc C}$ is called {\it stable} if ${\rm Aut}{\mc C}$ is finite. Every automorphism of ${\mc C}$ induces an automorphism of the underlying orbifold nodal curve $\Sigma_{\mc C}$. Let ${\rm Aut}_0 (L_{\mc C}) \subset {\rm Aut} {\mc C}$ be the subgroup of automorphisms which induce the identity of $(\Sigma_{\mc C}, \vec{\bf z})$, namely the set of bundle automorphisms. Then there is a tautological exact sequence of groups
\beqn
\xymatrix{ 1 \ar[r] & {\rm Aut}_0  L_{\mc C}  \ar[r] & {\rm Aut}{\mc C} \ar[r] & {\rm Aut} (\Sigma_{\mc C}, \vec{\bf z}).}
\eeqn
It is easy to see that when ${\mc C}$ is smooth, ${\rm Aut}_0 L_{\mc C} \cong {\mb Z}_r$ (see Definition \ref{defn_r_spin} and \eqref{eqn24}). More generally, suppose ${\mc C}$ is not smooth. Then for every irreducible component $v \in \V_{\mc C}$ and every preimage $\tilde w \in \tilde  \Sigma_v$ of a node, there is a restriction map 
\beqn
r_{\tilde w}: {\rm Aut}_0 L_{{\mc C}_v}  \to {\mb Z}_r.
\eeqn
as the automorphism of the fibre of $L_{{\mc C}_v}$ at $\tilde w$. For each node $w \in \E(\Sigma_{\mc C})$, let their preimages in the normalization be $\tilde w_-$ and $ \tilde w_+$ which belong to components $v_-$ and $v_+$ respectively. Then define 
\begin{align*}
&\ r: \prod_{v\in \V_{\mc C}} {\rm Aut}_0 L_{\tilde {\mc C}_v} \to \prod_{w \in \E(\Sigma_{\mc C})} {\mb Z}_r / {\mb Z}_{r_w},\ &\ \big( \gamma_v \big)_v \mapsto \big( r_{\tilde w_+}(\gamma_{v_-} ) r_{\tilde w_-}(\gamma_{v_-})^{-1}\big)_w
\end{align*}
where ${\mb Z}_{r_w}$ is the local group at the node $w$. Then there is an exact sequence
\beq\label{eqn25}
\xymatrix{ 1 \ar[r] & {\rm Aut}_0 L_{\mc C} \ar[r] & \displaystyle \prod_{v \in \V_{\mc C}} {\rm Aut}_0 L_{{\mc C}_v} \ar[r] & \displaystyle \prod_{w \in \E_{\mc C}}  {\mb Z}_r}/ {\mb Z}_{r_w}.
\eeq


\subsubsection{Generalized stable $r$-spin curves}

As in Definition \ref{defn23}, a {\it generalized $r$-spin curve} is an object 
\beqn
{\mc C} = (\Sigma_{\mc C}, L_{\mc C}, \varphi_{\mc C}, \vec{\bf z}, {\bf y})
\eeqn
where the object without ${\bf y}$ is an $r$-spin curve and ${\bf y}$ is an unordered set of points on $\Sigma_{\mc C}$ that are disjoint from special points. If the genus of $\Sigma_{\mc C}$ is $g$, $\# \vec{\bf z} = n$ and $\# {\bf y} = l$, then we call it a generalized $r$-spin curve of type $(g, n, l)$. 

\subsection{Unfoldings of $r$-spin curves}\label{subsection24}

We do not define the notion of unfoldings of (generalized) $r$-spin curves in the algebraic-geometric fashion. Instead, we construct universal unfoldings of stable $r$-spin curves from universal unfoldings of their underlying coarse curves. Let ${\mc C} = (\Sigma_{\mc C}, L_{\mc C}, \varphi_{\mc C}, \vec{\bf z}, {\bf y})$ be a stable generalized $r$-spin curve. Consider the underlying coarse curve, denoted by $(\Sigma{}_{\mc C},\vec{\bf z}, {\bf y})$, which is a stable generalized marked curve. Given a resolution datum ${\bm r}$ (see Definition \ref{defn25}), which induces a universal unfolding $\ubar {\mc U} \to {\mc V} \cong {\mc V}_{\rm def}\times {\mc V}_{\rm res}$. We define an orbifold ${\mc U}$ out of $\ubar {\mc U}$ as follows. Indeed, we can decompose the thin part $\ubar{\mc U}^{\rm thin}$ as 
\beqn
\ubar{\mc U}^{\rm thin} = \bigsqcup_{a = 1}^n \ubar {\mc N}_{z_a} \sqcup \bigsqcup_{b=1}^m \ubar {\mc N}_{w_b}.
\eeqn
The resolution data provides holomorphic identifications
\begin{align*}
&\ \ubar {\mc N}_{z_a} \cong {\mb D}_r \times {\mc V},\ &\ \ubar {\mc N}_{w_b} \cong \big\{ (\xi_-, \xi_+) \in {\mb D}_r^2\ |\ |\xi_-||\xi_+| \leq \epsilon \big\} \times {\mc V}_{\rm def}.
\end{align*}
We glue in orbifold charts ${\mc N}_{z_a}$ and ${\mc N}_{w_b}$ as follows. Suppose the local group of $z_a$ is ${\mb Z}_{r_a}$. Then ${\mc N}_{z_a}$ is just the $r_a$-fold cover of ${\mb D}_r$ multiplying ${\mc V}$. Suppose the local group of $w_b$ is ${\mb Z}_{r_b}$. Then define 
\beqn
{\mc N}_{w_b} = \big\{ ( \tilde \xi_-, \tilde \xi_+) \in {\mb D}_{r}^2\ |\ |\tilde \xi_-||\tilde \xi_+| \leq \epsilon \big\} \times {\mc V}_{\rm def}.
\eeqn
Define the ${\mb Z}_{r_b}$-action on ${\mc N}_{w_b}$ by 
\beqn
\gamma( \tilde \xi_-, \tilde \xi_+, \eta) = ( \gamma^{-1} \tilde \xi_-, \gamma \tilde \xi_+, \eta).
\eeqn
Hence we obtained a complex orbifold ${\mc U}$ together with a holomorphic map ${\mc U} \to {\mc V}$. The thick-thin decomposition on $\ubar {\mc U}$ induce a thick-thin decomposition 
\beq\label{eqn26}
{\mc U} = {\mc U}^{\rm thick} \cup {\mc U}^{\rm thin}.
\eeq

Now we construct a holomorphic orbifold line bundle ${\mc L}_{\mc U} \to {\mc U}$ and an isomorphism 
\beqn
\varphi_{\mc C}: {\mc L}_{\mc U}^{\otimes r} \to K_{{\mc U} / {\mc V}, {\rm log}}.
\eeqn
Indeed, the central fibre is already equipped with the line bundle $L_{\mc C}$ which is an $r$-th root of the log-canonical bundle of the central fibre. Choose a collection of coordinate charts $\{(W_\beta, w_\beta)\}$ on $\Sigma_{\mc C}$ which cover the thick part $\Sigma_{\mc C}^{\rm thick}$. Moreover, assume that for each two charts in this collection, the overlap is contractible. The transition function of $\omega_{\mc C}$ are holomorphic functions 
\beqn
g_{\beta \beta'}: W_\beta \cap W_{\beta'} \to {\mb C}^*
\eeqn
while the transition functions of $L_{\mc C}$ are holomorphic functions 
\beqn
h_{\beta \beta'}: W_\beta \cap W_{\beta'} \to {\mb C}^*
\eeqn
such that $g_{\beta \beta'} = h_{\beta \beta'}^r$. The smooth trivialization \eqref{eqn26} makes $w_\beta$ smooth coordinates on each fibre, but not necessarily holomorphic. But still there are unique sections of the canonical bundle of each fibre that over each $W_\beta$ of the form 
\beqn
dw_\beta + \epsilon d\bar w_\beta
\eeqn
where $\epsilon$ is a smooth function defined on $W_\beta \times {\mc V}$. Hence $g_{\beta \beta'}$ extends to a smooth function on $(W_\beta \cap W_{\beta'}) \times {\mc V}$ that is fibrewise holomorphic. 

The bundle $L_{\mc C}$ also extends to a smooth complex line bundle over ${\mc U}^{\rm thick}$ via \eqref{eqn26}. Over $W_{\beta} \cap W_{\beta'}$, we can take $r$-th root of the transition function $g_{\beta \beta'}$ which is fibrewise holomorphic. Since the overlap is contractible, there is a unique $r$-th root that continuously extends the value on the central fibre. Therefore we obtain a fibrewise holomorphic line bundle ${\mc L}$ over the thick part. It is also naturally a holomorphic bundle since the canonical bundle is holomorphic over the family. 

On the other hand, one can define the bundle over the thin part. Indeed, take a local chart of $L_{\mc C}$ near a node $w$ by 
\beqn
\big\{ (\xi_-, \xi_+, t) \in {\mb D}_r \times {\mb D}_r \times {\mb C}\ |\ \xi_- \xi_+ = 0 \big\}
\eeqn
where the local group is ${\mb Z}_{r_w}$. Then gluing a bundle chart over ${\mc U}$ by 
\beqn
\big\{ (\xi_- , \xi_+, t) \in {\mb D}_r \times {\mb D}_r \times {\mb C}\ |\ |\xi_-| |\xi_+| < \epsilon \big\}
\eeqn
where the action of ${\mb Z}_{r_w}$ on the three coordinates has the same weights as before. This defines an orbifold line bundle ${\mc L}_{\mc U} \to {\mc U}$. The isomorphism $\varphi_{{\mc U}}: {\mc L}_{\mc U}^{\otimes r} \to K_{{\mc U}/ {\mc V}, {\rm log}}$ is immediate.

\begin{rem}
As in the case of universal unfoldings of stable marked curves, after appropriate shrinkings of the family ${\mc U} \to {\mc V}$ the automorphism group ${\rm Aut}{\mc C}$ also acts on this family. More precisely, there is a $\Gamma_{\mc C}$-action on the line bundle ${\mc L}_{\mc U}$ that descends to actions on ${\mc U}$ and ${\mc V}$. We only remark on the somewhat unusual way of the action of ${\rm Aut}_0 L_{\mc C} \subset {\rm Aut}{\mc C}$ on this family. Given an element $\gamma \in {\rm Aut}_0 L_{\mc C}$ which over each irreducible component $v \in \V_{\mc C}$ is an element $\gamma_v \in {\mb Z}_r$. For each node $w \in \E_{\mc C}$ where a local chart of ${\mc L}_{\mc U}$ has coordinates $(\xi_-, \xi_+, t)$ and the ${\mb Z}_{r_w}$-action has weights $-1, 1, n_w$. Then we define
\beqn
\gamma( \xi_-, \xi_+, t) = ( \xi_-, \gamma_{v_+} \gamma_{v_-}^{-1} \xi_+, \gamma_{v_-} t ) = \gamma_{v_+} \gamma_{v_-}^{-1} ( \gamma_{v_+} \gamma_{v_-}^{-1} \xi_-, \xi_+, \gamma_{v_+} t). 
\eeqn
Notice that by the exact sequence \eqref{eqn25}, $\gamma_w:= \gamma_{v_+} \gamma_{v_-}^{-1} \in {\mb Z}_{r_w}$. This is a well-defined map on the total space of ${\mc L}_{\mc U}$. Notice that this action will move the fibre over the gluing parameter $\zeta_w = \xi_- \xi_+$ to the fibre over another gluing parameter $\gamma_{v_+} \gamma_{v_-}^{-1}  \zeta_w$.

\end{rem}

Because of the above construction, a {\it resolution datum} of a stable generalized $r$-spin curve is identified with a resolution datum of its underlying coarse curve. 

The above construction also provides a way to define the topology on the moduli space of stable $r$-spin curves.

Now given a stable generalized $r$-spin curve ${\mc C}$ and a resolution data ${\bm r}_{\mc C}$ which contains the fibration ${\mc U} \to {\mc V}$, there is a smooth part 
\beqn
{\mc U}^* \subset {\mc U}
\eeqn
which is the complement of the orbifold markings and the nodes.

\subsubsection{Bundles over $r$-spin curves}

We would like to include other principal bundles over $r$-spin curves. Let $K$ be a connected compact Lie group. In this paper, a principal $K$-bundle over an $r$-spin curve ${\mc C}$ always means a smooth principal $K$-bundle over the punctured smooth Riemann surface $\Sigma_{\mc C}^*$. However we still denote it as $P \to {\mc C}$.

\begin{defn} ({\it Resolution data})
Let ${\mc C}$ be a generalized $r$-spin curve and $P \to {\mc C}$ be a smooth $K$-bundle. A {\it  resolution data} of $({\mc C}, P)$, denoted by ${\bm r} = ({\bm r}_{\mc C}, {\bm r}_{\mc P})$, consists of a resolution data ${\bm r}_{\mc C}$ of ${\mc C}$ and a resolution data ${\bm r}_P$, where the latter means smooth $K$-bundle ${\mc P} \to {\mc U}^*$ and an isomorphism of $K$-bundles
\beqn
\vcenter{ \xymatrix{ P \ar[r] \ar[d] & {\mc P}_b \ar[d] \\
           \Sigma_{\mc C}^* \ar[r]  & {\mc U}_b^*} },
\eeqn
a smooth trivialization over the thin part,
\beqn
t_P^{\rm thin}: {\mc P}^{\rm thin} = {\mc P}|_{{\mc U}^{\rm thin}} \cong {\mc U}^{\rm thin} \times K,
\eeqn
and a trivialization over the thick part
\beqn
t_P^{\rm thick}: {\mc P}^{\rm thick} = {\mc P}|_{{\mc U}^{\rm thick}} \cong {\mc V} \times P|_{\Sigma_{\mc C}^{\rm thick}}.
\eeqn
Moreover, they satisfy the following conditions. For any automorphism $(\gamma, \gamma_P) \in \Gammait = {\rm Aut}({\mc C}, P)$ where $\gamma: {\mc C} \to {\mc C}$ is an automorphism of the generalized $r$-spin curve ${\mc C}$ inducing an action on the family ${\mc U}$, and $\gamma_P: {\mc P} \to {\mc P}$ is a smooth $K$-bundle isomorphism covering $\gamma: {\mc U} \to {\mc U}$. Then we require that
\begin{align}\label{eqn27}
&\ \vcenter{ \xymatrix{ {\mc P}^{\rm thin} \ar[r]^-{\gamma_P} \ar[d]_{t_P^{\rm thin}} & {\mc P}^{\rm thin}\ar[d]^{t_P^{\rm thin}} \\
           {\mc U}^{\rm thin} \times K \ar[r]^-\gamma &   {\mc U}^{\rm thin} \times K}   },\ &\ \vcenter{ \xymatrix{  {\mc P}^{\rm thick} \ar[r]^-{\gamma_P} \ar[d]_{t_P^{\rm thick}} &  {\mc P}^{\rm thick} \ar[d]^{ t_P^{\rm thick} } \\
                       {\mc V} \times P|_{\Sigma_{\mc C}^{\rm thick}} \ar[r]^\gamma & {\mc V} \times P|_{\Sigma_{\mc C}^{\rm thick}}   }    }.
                       \end{align}
\end{defn}

\section{Vortex Equation and Gauged Witten Equation}\label{section3}

In this section we briefly recall the setup of the gauged Witten equation which was already used in the companion paper \cite{Tian_Xu_geometric}.

\subsection{The GLSM space}

We recall the definition of GLSM space used in \cite{Tian_Xu_geometric}.

\begin{defn}
A {\bf GLSM space} is a quadruple $(V, G, W, \mu)$ where
\begin{enumerate}

\item $V$ is a K\"ahler manifold with a holomorphic ${\mb C}^*$-action (called the {\it R-symmetry}). 

\item $G$ is a connected complex reductive Lie group acting holomorphically on $V$ such that: a) the $G$-action commutes with the R-symmetry; b) the restriction of the $G$-action to its maximal compact subgroup $K\subset G$ is Hamiltonian. 

\item $W: V \to {\mb C}$ is a $G$-invariant holomorphic function and is homogeneous of degree $r \geq 1$ with respect to the R-symmetry, namely
\beqn
W(\xi x) = \xi^r W(x),\ \forall \xi \in {\mb C}^*,\ x\in V.
\eeqn

\item $\mu: V \to {\mf k}^*$ is a moment map for the $K$-action with $0$ being a regular value.
\end{enumerate}
\end{defn}

Since the $G$-action commutes with the ${\mb C}^*$-action, they induce an action by $\ubar G = G \times {\mb C}^*$ whose maximal compact subgroup is $\ubar K:= K \times U(1)$. For any vector $\xi$ in ${\mf k}$, $\mf{u}(1)$, or the Lie algebra $\ubar {\mf k}$ of $\ubar K$, let 
\beqn
{\mc X}_\xi \in \Gammait(TV)
\eeqn
denote the infinitesimal action of $\xi$. 

Given a GLSM space $(V, G, W, \mu)$, define the semi-stable locus of $V$ as 
\beqn
V^{\rm ss}:= G \mu^{-1}(0) \subset V.
\eeqn
Denote by ${\rm Crit}W\subset V$ the critical locus of $W$ and denote
\beqn
({\rm Crit} W)^{\rm ss}:= {\rm Crit} W \cap V^{\rm ss}.
\eeqn

\begin{defn}\label{defn32}
A GLSM space $(V, G, W, \mu)$ is in the {\bf geometric phase} if $W$ is Morse--Bott in $V^{\rm ss}$, i.e., if $dW|_{V^{\rm ss}}$ intersects cleanly with the zero section of $T^*V $.
\end{defn}

We also make the following technical assumptions.
\begin{hyp}\label{hyp33}
\begin{enumerate}

\item There is a homomorphism $\iota_W: {\mb C}^* \to G$ such that
\beqn
\xi \cdot x = \iota_W (\xi) \cdot x,\ \forall x\in \ov{({\rm Crit} W)^{\rm ss}}.
\eeqn

\item The action of $e^{\frac{2\pi \i}{r}} \iota_W( e^{-\frac{2\pi\i}{r}}) \in U(1) \times K$ on $V$ is the identity.

\item The restriction $\mu$ to $\ov{({\rm Crit} W)^{\rm ss}}$ is proper. 

\item The $K$-action on $\mu^{-1}(0)$ is free. 

\item $V$ is symplectically aspherical: for any smooth map $f: S^2 \to V$ there holds 
\beqn
\int_{S^2} f^* \omega_V = 0.
\eeqn 

\end{enumerate}
\end{hyp}

All above hypothesis allow us to consider the symplectic reductions
\begin{align*}
&\ V \qu G:= \mu^{-1}(0)/K \cong V^{\rm ss}/G, &\ X:= ({\rm Crit} W \cap \mu^{-1}(0))/K \cong ( {\rm Crit} W)^{\rm ss} /G.
\end{align*}
Our hypothesis implies that $V \qu G$ is a K\"ahler manifold and $X \subset V \qu G$ is a closed submanifold. The manifold $X$ is called the {\bf classical vacuum} of $(V, G, W, \mu)$. The third item of Hypothesis \ref{hyp33} implies that $X$ is compact.

We also have the following assumption on the geometry at infinity. 

\begin{hyp}
There exists $\xi_W$ in the center $Z({\mf k})\subset {\mf k}$, and a continuous function $\tau \mapsto c_W (\tau)$ (for $\tau \in Z({\mf k})$) satisfying the following condition. If we define ${\mc F}_W:= \mu\cdot \xi_W$, then ${\mc F}_W|_{\ov{({\rm Crit} W)^{\rm ss}}}: \ov{({\rm Crit} W)^{\rm ss}} \to {\mb R}$ is proper, bounded from below, and
\beq\label{eqn31}
\begin{array}{c} x \in \ov{({\rm Crit} W)^{\rm ss}}\\
\xi \in T_x V\\
  {\mc F}_W(x) \geq c_W(\tau)
\end{array} \Longrightarrow \left\{ \begin{array}{c} \langle \nabla_\xi \nabla {\mc F}_W(x), \xi\rangle + \langle \nabla_{J \xi} \nabla {\mc F}_W(x), J \xi \rangle \geq 0, \\
\langle \nabla {\mc F}_W(x), J {\mc X}_{\mu(x) - \tau}(x) \rangle \geq 0. \end{array}\right. 
\eeq
\end{hyp}

As shown in \cite[Subsection 3.2]{Tian_Xu_geometric} there is a class of GLSM spaces satisfying these assumptions with classical vacuum being smooth projective hypersurfaces $ X\subset \mb{CP}^n$.

\subsection{Gauged maps and vortices}\label{subsection32}

We have given a detailed review of gauged maps and vortices in the companion paper \cite[Section 3.3]{Tian_Xu_geometric}. For the convenience of the reader we briefly recap the basic notions. 

\begin{defn} 
A {\it gauged map} from a surface $\Sigma$ to $V$ is a triple ${\bm v} = (P, A, u)$ where $P \to \Sigma$ is a principal $K$-bundle, $A \in {\mc A}(P)$ is a connection (or called a {\it gauge field}) and $u\in {\mc S}(Y)$ is a section of the associated bundle $P(V) = (P\times V)/K$. When $P$ is fixed in the context, we also call the pair ${\bm v} = (A, u)$ a gauged map from $P$ to $V$. 
\end{defn}

To any gauged map ${\bm v} = (P, A, u)$ there are the following objects associated.
\begin{enumerate}

\item The {\it covariant derivative} of $u$ with respect to $A$
\beqn
d_A u  \in \Gamma( \Sigma, \Lambda^1 \otimes u^* P(TV)) = \Gamma( \Sigma, {\rm Hom}( T\Sigma, u^* P(TV)) ),
\eeqn

\item the $(0, 1)$-part of the covariant derivative
\beqn
\ov\partial_A u  \in \Omega^{0,1} ( u^* P(TV) ) \cong \Gamma(\Sigma, {\rm Hom}^{0,1}(T\Sigma, u^* P(TV))),
\eeqn

\item the curvature of $A$ 
\beqn
F_A \in \Omega^2 (\Sigma, {\rm ad} P),
\eeqn

\item and the composition of $u$ with the moment map $\mu$
\beqn
\mu(u) \in \Omega^0(\Sigma, {\rm ad} P^*).
\eeqn
\end{enumerate}

Now we recall the vortex equation. Choose an adjoint-invariant metric on ${\mf k}$ so that we can identify ${\mf k} \cong {\mf k}^*$. Choose an area form on the surface $\Sigma$ so we can identify a zero-form with a two-form by the Hodge star operator $*$. A {\it vortex} is a gauged map $(P, A ,u)$ satisfying the following equation 
\begin{align*}
&\ \ov\partial_A u = 0,\ &\ * F_A + \mu(u) = 0.
\end{align*}
This equation is gauge-invariant. Moreover, vortices are minimizers of the {\it Yang--Mills--Higgs functional}, defined by
\beqn
E(A, u) = \frac{1}{2} \Big[ \| d_A u\|_{L^2}^2 + \| F_A \|_{L^2}^2 + \| \mu(u)\|_{L^2}^2 \Big].
\eeqn
Here the energy density is computed using the metric on $X$ induced from the symplectic form and the almost complex structure, and the metric on $\Sigma$ induced from the complex structure and the area form.

\subsection{Gauged Witten equation}

We recall the formulation of the gauged Witten equation given in \cite[Section 3.4]{Tian_Xu_geometric}. Let ${\mc C}= (\Sigma_{\mc C}, L_{\mc C}, \varphi_{\mc C}, \vec{\bf z})$ be a smooth $r$-spin curve. Let $\Sigma_{\mc C}^* \subset \Sigma_{\mc C}$ be the complement of the markings and nodes, which is a smooth open Riemann surface. Recall that in \cite[Subsection 2.3]{Tian_Xu_geometric} we have chosen a Hermitian metric on $L_{\mc C}$ which only depends on the isomorphism class of ${\mc C}$. Then $P_{\mc C} \to \Sigma_{\mc C}^*$ is the associated $U(1)$-bundle and $A_{\mc C}\in {\mc A}(P_{\mc C})$ is the Chern connection. 

In order to define gauged maps from $r$-spin curves, we need to combine a principal $K$-bundle and the principal $U(1)$-bundle induced from the $r$-spin structure. If $P \to \Sigma_{\mc C}^*$ is a principal $K$-bundle, denote by 
\beqn
\ubar P_{\mc C} \to \Sigma_{\mc C}^*
\eeqn
the principal $\ubar K = K \times U(1)$-bundle constructed from $P_{\mc C}$ and $P$. When ${\mc C}$ is fixed from the context, we abbreviate $\ubar P_{\mc C}$ by $\ubar P$. As $\ubar K$ acts on the target space $V$, one can formulate the associated bundle $\ubar P (V) = (\ubar P \times V)/ \ubar K$.

\begin{defn}
Let ${\mc C} = (\Sigma_{\mc C}, L_{\mc C}, \varphi_{\mc C}, \vec{\bf z})$ be a smooth or nodal $r$-spin curve. A {\it gauged map} from ${\mc C}$ to $V$ is a triple ${\bm v} = (P, A, u)$ where $P \to \Sigma_{\mc C}^*$ is a smooth principal $K$-bundle, $A \in {\mc A}(P)$ is a connection, and $u \in {\mc S}( \ubar P (V))$ is a section of $\ubar P (V)$.
\end{defn}

Compared to the vortex equation, the gauged Witten equation has one more term coming from the superpotential. Once the bundle $P \to \Sigma_{\mc C}^*$ is fixed, one can lift $W : V \to {\mb C}$ to a section 
\beqn
{\mc W}_{\mc C}\in \Gamma( \ubar P(V), \pi^* K_{{\mc C}, {\rm log}}) \cong \Gamma( \ubar P(V), \pi^* K_{\Sigma_{\mc C}^*})
\eeqn
(see details in \cite[Section 3.4]{Tian_Xu_geometric}). Using the Hermitian metric on $ \ubar P(TV) \to \ubar P(V)$ induced from the $\uds K$-invariant K\"ahler metric on $V$, one can dualize the differential $d{\mc W}_{\mc C} \in \Gamma( \ubar P(V), \pi^*K_{\Sigma_{\mc C}^*} \otimes \ubar P(T^*V) )$, obtaining the gradient
\beqn
\nabla {\mc W}_{\mc C} \in \Gamma( \ubar P(V), \pi^* \ov{ K_{\Sigma_{\mc C}^*}} \otimes \ubar P(TV)).
\eeqn

Now we write down the gauged Witten equation. For each $A \in {\mc A}(P)$ and $u \in {\mc S}( \ubar P(V))$, one can take the covariant derivative of $u$ with respect to $\ubar A$, and take its $(0,1)$ part. We regard this as an operation on the pair $(A, u)$ and hence denote the covariant derivative by $d_A u$ and its $(0,1)$ part as 
\beqn
\ov\partial_A u \in \Gamma( \Sigma_{\mc C}^*, K_{\Sigma_{\mc C}^*} \otimes u^*  \ubar P(TV))=: \Omega^{0,1}(\Sigma_{\mc C}^*, u^* \ubar P(TV)  ).
\eeqn
It lies in the same vector space as $u^* \nabla {\mc W}_{\mc C} =: \nabla {\mc W}_{\mc C}(u)$. We have the {\it Witten equation}
\beq\label{eqn32}
\ov\partial_A u + \nabla {\mc W}_{\mc C} (u) = 0.
\eeq

We need to fixed the complex gauge by imposing a curvature condition. Let ${\rm ad} P$ be the adjoint bundle. Then the curvature form $F_A$ of $A$ is a section of ${\rm ad} P $ with two form coefficients. On the other hand, since the R-symmetry commutes with the $K$-action, for any section $u$ of $\ubar P(V)$, $\mu (u)$ is a well-defined section of ${\rm ad} P $. The {\it vortex equation} is 
\beq
* F_A + \mu (u) = 0.
\eeq
The {\it gauged Witten equation} over ${\mc C}$ is the system on $(P, A, u)$
\begin{align}\label{eqn34}
&\ \ov\partial_A u + \nabla {\mc W}_{\mc C}(u) = 0,\ &\ * F_A + \mu (u) = 0.
\end{align}

Basic local properties of solutions to the gauged Witten equation still hold as for vortices. For example, there is a gauge symmetry of \eqref{eqn34} for gauge transformations $g: P \to K$. Moreover, for any weak solution, there exists a gauge transformation making it a smooth solution. We have an analogue of energy. For a gauged map ${\bm v} = (P, A, u)$ from an $r$-spin curve ${\mc C}$ to $X$, define its {\it energy} as
\beq
E(P, A, u) =\frac{1}{2} \left[ \int_{\Sigma_{{\mc C}}^*} \Big( | d_A u|^2 + |\mu (u)|^2 + |F_A |^2  + 2 | \nabla {\mc W}_{\mc C} (u) |^2 \Big) \sigma_c \right].
\eeq
We say that a gauged map ${\bm v} = (P, A, u)$ is {\it bounded} if it has finite energy and if there is a $\ubar K$-invariant compact subset $Z \subset V$ such that 
\beqn
u(\Sigma_{\mc C}^* ) \subset \ubar P{} (Z) = (\ubar P \times Z)/\ubar K. 
\eeqn

\section{Analytical Properties of Solutions}\label{section4}

We prove several properties of solutions to the gauged Witten equation.

\subsection{Holomorphicity}

First we show that solutions are all holomorphic and are contained in the critical locus of the superpotential. 

\begin{thm}(cf. \cite[Theorem 4.1]{Tian_Xu_geometric})\label{thm41}
Let ${\bm v} = (P, A, u)$ be a bounded solution over a smooth $r$-spin curve ${\mc C}$. Then 
\begin{align*}
&\ \ov\partial_A u \equiv 0,\ &\ \nabla {\mc W} (u) \equiv 0
\end{align*}
\end{thm}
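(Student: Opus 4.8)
The plan is to reduce the statement to proving the single integral identity $\int_{\Sigma_{\mc C}^*} |\nabla {\mc W}_{\mc C}(u)|^2\, \sigma_c = 0$, which forces $\nabla {\mc W}_{\mc C}(u) \equiv 0$, after which $\ov\partial_A u \equiv 0$ follows at once from the Witten equation \eqref{eqn32}. The first ingredient is a pointwise identity: since ${\bm v}$ is bounded it has finite energy and (after a gauge transformation) may be taken smooth, so $u^* {\mc W}_{\mc C}$ is a smooth complex $1$-form (a section of $K_{\Sigma_{\mc C}^*}$) on $\Sigma_{\mc C}^*$. On a curve the $(2,0)$-component of $d(u^* {\mc W}_{\mc C})$ vanishes, and, using that $W$ is fibrewise holomorphic and that $\ubar A$ is compatible with the holomorphic structure on $\ubar P(V)$, the Leibniz rule yields, up to a universal constant, $d(u^*{\mc W}_{\mc C}) = \langle \ov\partial_A u,\ u^*\nabla {\mc W}_{\mc C}\rangle$ as $(1,1)$-forms, the ``horizontal'' base-direction contributions being cancelled by the connection terms exactly as in the ungauged Witten equation of FJRW theory. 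Substituting $\ov\partial_A u = - u^*\nabla {\mc W}_{\mc C}$ turns the right-hand side into $-|\nabla {\mc W}_{\mc C}(u)|^2$ times a positive multiple of $\sigma_c$; hence
\[
\bigl|\nabla {\mc W}_{\mc C}(u)\bigr|^2\, \sigma_c = d\beta, \qquad \beta := -\frac{1}{2}\, \mathrm{Im}\bigl( u^* {\mc W}_{\mc C}\bigr),
\]
a total derivative of a real $1$-form with $|\beta| \leq \frac{1}{2}|{\mc W}_{\mc C}(u)|$.

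Next I would integrate and control the punctures. Finite energy gives $|\nabla {\mc W}_{\mc C}(u)|^2 \in L^1(\Sigma_{\mc C}^*, \sigma_c)$, so exhausting $\Sigma_{\mc C}^*$ by the complements of the cylindrical neighborhoods $\{ s_a > S\}$ of the punctures, applying Stokes' theorem for each finite $S$, and letting $S \to \infty$ (dominated convergence on the left) gives
\[
\int_{\Sigma_{\mc C}^*} \bigl|\nabla {\mc W}_{\mc C}(u)\bigr|^2\, \sigma_c = -\lim_{S\to\infty}\sum_a \oint_{\{ s_a = S\}} \beta .
\]
Because $\{ s_a = S\}$ has bounded length in the asymptotically cylindrical metric, $\bigl|\oint_{\{ s_a = S\}} \beta\bigr| \leq C \sup_{\{ s_a = S\}} |{\mc W}_{\mc C}(u)|$, so it is enough to show $\sup_{\{ s_a = S\}} |{\mc W}_{\mc C}(u)| \to 0$ as $S \to \infty$. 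If not, there are $S_k \to \infty$ and $t_k$ with $|{\mc W}_{\mc C}(u)(S_k,t_k)|$ bounded below. Finite energy gives $\int_{\{ s_a > S_k - 1\}} e({\bm v})\, \sigma_c \to 0$, and the interior mean value inequality for solutions of \eqref{eqn34} (applicable since the metric is asymptotically cylindrical near the punctures and, by boundedness, $u$ remains in a fixed $\ubar K$-invariant compact set $\ubar P(Z)$) upgrades this to $\sup_{\{ |s_a - S_k| \leq 1/2\}} e({\bm v}) \to 0$; in particular $|\mu(u)|$ and $|\nabla {\mc W}_{\mc C}(u)|$ go to $0$ uniformly there. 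Passing to a subsequence, $u(S_k,t_k)$ converges in $\ubar P(Z)$ to a point over some $x_\infty$ with $\mu(x_\infty) = 0$ and $dW_{x_\infty} = 0$, i.e. $x_\infty \in {\rm Crit} W \cap \mu^{-1}(0) \subset ({\rm Crit} W)^{\rm ss}$.

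The remaining input is the algebraic fact that $W \equiv 0$ on $\ov{({\rm Crit} W)^{\rm ss}}$. This follows from Hypothesis \ref{hyp33}(a): for $x \in \ov{({\rm Crit} W)^{\rm ss}}$ we have $\xi \cdot x = \iota_W(\xi)\cdot x$ with $\iota_W(\xi) \in G$, so the $G$-invariance of $W$ and its $R$-homogeneity of degree $r \geq 1$ yield $\xi^r W(x) = W(\xi \cdot x) = W(\iota_W(\xi)\cdot x) = W(x)$ for all $\xi \in {\mb C}^*$, whence $W(x) = 0$. Thus $|{\mc W}_{\mc C}(u)(S_k,t_k)| \to |{\mc W}_{\mc C}|$ evaluated at the limit point, which is $0$ since that point lies over ${\rm Crit} W \cap \mu^{-1}(0)$; this contradicts the lower bound. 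Hence the boundary terms vanish, $\int_{\Sigma_{\mc C}^*} |\nabla {\mc W}_{\mc C}(u)|^2\, \sigma_c = 0$, so $\nabla {\mc W}_{\mc C}(u) \equiv 0$, and then $\ov\partial_A u = -\nabla {\mc W}_{\mc C}(u) \equiv 0$ by \eqref{eqn32}.

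The step I expect to be the main obstacle is the pointwise identity of the first paragraph: making it precise requires careful bookkeeping of the bundle data entering ${\mc W}_{\mc C}$ — the $r$-spin line bundle, the $R$-symmetry factor in $\ubar P = \ubar P_{\mc C}$, and the induced connection $\ubar A$ — together with a verification that no curvature term obstructs the exactness of $|\nabla {\mc W}_{\mc C}(u)|^2\,\sigma_c$. The analytic estimate used in the second paragraph (the interior gradient bound) is standard for first-order elliptic systems of this type, but it relies essentially on the boundedness hypothesis to keep $u$ in a fixed compact region.
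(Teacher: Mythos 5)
Your argument is essentially the paper's proof: the same Stokes identity converting $\| \ov\partial_A u \|_{L^2}^2 + \| \nabla {\mc W}(u)\|_{L^2}^2$ into limits of circle integrals of ${\mc W}(u)$ around the punctures, combined with asymptotic decay at the punctures (from finite energy, boundedness, and a mean value estimate) and the vanishing of $W$ on the semistable critical locus (which the paper cites from the companion's Lemma 3.4 and you rederive from Hypothesis \ref{hyp33}). The one analytic step you defer as ``standard'' --- the pointwise decay of the derivative terms and of $\nabla {\mc W}(u)$ near the punctures --- is precisely what the paper establishes through the computation of Lemma \ref{lemma42} and the ensuing differential inequality and mean value argument, so your outline matches the paper's route in substance.
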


Now we start to prove Theorem \ref{thm41}. For each puncture $z_a \in \Sigma_{\mc C}$, there exists a local coordinate $w$ such that with respect to the cylindrical metric, near $z_a$, $|d \log w| = |\frac{dw}{w}| = 1$. Denote $z = s + \i t = - \log w$, the cylindrical coordinate on a cylindrical end $U_a^*$. By the definition of $r$-spin structure, there exists a holomorphic section $e$ of $L_{\mc C} |_{U_a^*}$ such that 
\beqn
\varphi_{\mc C} ( e^{\otimes r}) = w^{m_a} dz.
\eeqn
Temporarily omit the index $a$ and denote $q = m_a /r$. If the metric of $L_{\mc C}$ is 
\beqn
\| e \| = |w|^q e^{h},
\eeqn
then by our choice of the metric, $h$ is bounded from below and we have estimates
\beq\label{eqn41}
\sup_{U_a^*} | \nabla^l h | \leq C_l,\ l=0, 1, 2, \ldots.
\eeq

On the other hand, choose an arbitrary trivialization of $P$ over the cylindrical end. Then we obtain trivializations of $\ubar P$ and $\ubar P(V)$. Under these trivializations, we write $A = d + \phi ds + \psi dt$. Then 
\beqn
\ubar A = d + \ubar \phi ds + ({\bf i} q + \ubar \phi) dt = d + ( {\bf i} \partial_t h + \phi ) + ( - {\bf i} q - {\bf i} \partial_s h + \psi ) dt.
\eeqn
Moreover, we have
\begin{align*}
&\ {\mc W}(z, x) =  e^{ - {\bf i} rq t + r h} W(x)  dz,\ & \ \nabla {\mc W}(z, x) = e^{ {\bf i} rq t + r h} \nabla W(x) d \bar{z}. 
\end{align*}

Define the covariant derivatives $\nabla_s^A, \nabla_t^A, \nabla_z^A, \nabla_{\bar z}^A: \Gamma(U_a^*, u^* TV) \to \Gamma(U_a^*, u^* TV)$ by
\begin{align*}
&\ \nabla_s^A \xi = \nabla_s \xi + \nabla_\xi {\mc X}_{\ubar \phi},\ &\ \nabla_t^A \xi = \nabla_t \xi + \nabla_\xi {\mc X}_{\ubar \phi}.
\end{align*}
\begin{align*}
&\ \nabla_z^A \xi  = \frac{1}{2} \Big( \nabla_s^A \xi - J \nabla_t^A \xi\Big),\ &\ \nabla_{\bar z}^A \xi = \frac{1}{2} \Big( \nabla_s^A \xi + J \nabla_t^A \xi \Big).
\end{align*}
Here ${\mc X}$ is the infinitesimal action. Also introduce 
\beqn
{\bm v}_s = \partial_s u + {\mc X}_{\ubar \phi} (u),\ {\bm v}_t = \partial_t u + {\mc X}_{\i q + \ubar \psi} (u),\ {\bm v}_z = \frac{1}{2} \big( {\bm v}_s + J {\bm v}_t \big),\ {\bm v}_{\bar z} = \frac{1}{2} \big( {\bm v}_s - J {\bm v}_t \big).
\eeqn
On the other hand, using the trivializations, we can view sections of ${\rm ad} P$ as maps to ${\mf k}$. We define the covariant differentials of a map $\eta: U_a^* \to {\mf k}$ to be 
\begin{align}\label{eqn42}
&\ \nabla_s^A \eta = \partial_s \eta + [\phi, \eta],\ &\ \nabla_t^A \eta = \partial_t \eta + [\psi, \eta].
\end{align}
The curvature of $\ubar A$ is written as 
\beqn
F_{\ubar A} = \ubar \kappa ds dt,\ {\rm where}\ \ubar \kappa = \partial_s \ubar \psi - \partial_t \ubar \phi + [\ubar \phi, \ubar \psi].
\eeqn
Let $\kappa_{\mc C}$ be the curvature of the connection $A_{\mc C}$ which is independent of $A$.

\begin{lemma}\label{lemma42}
 In the above notations,
\begin{align*}
&\ \nabla_z^A \nabla {\mc W}(u) = \nabla_{{\bm v}_{\bar z}} \nabla {\mc W}(u),\ &\ \nabla_{\bar z}^A \nabla {\mc W}(u) = \nabla_{{\bm v}_z} \nabla {\mc W} (u) + 2r \frac{\partial h}{\partial \bar z}\nabla {\mc W}.
\end{align*}
\end{lemma}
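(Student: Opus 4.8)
The plan is to compute everything in the fixed trivialization over the cylindrical end $U_a^*$ introduced above, in which
\[
\nabla {\mc W}(u) = e^{\i r q t + r h}\,\nabla W(u)\otimes d\bar z ,
\]
viewed as a section of $u^*\ubar P(TV)\otimes \overline{K_{\Sigma_{\mc C}^*}}$, and the combined connection is $\ubar A = d + (\i\partial_t h + \phi)\,ds + (-\i q - \i\partial_s h + \psi)\,dt$. I would first split each of $\nabla_z^A\nabla{\mc W}(u)$ and $\nabla_{\bar z}^A\nabla{\mc W}(u)$, by the Leibniz rule, into the part where the covariant derivative acts on the $u^*\ubar P(TV)$-factor $\nabla W(u)$ and the part where it acts on the $\overline{K_{\Sigma_{\mc C}^*}}$-factor $e^{\i r q t+rh}\,d\bar z$; the two contributions are then handled separately.

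For the $\ubar P(TV)$-part, the chain rule for pull-back connections turns $\nabla_s^A$ and $\nabla_t^A$ applied to $\nabla W(u)$ into $\nabla_{{\bm v}_s}\nabla W$ and $\nabla_{{\bm v}_t}\nabla W$ (covariant derivatives of the vector field $\nabla W$ on $V$), modulo terms proportional to $\nabla W(u)$ coming from the R-symmetry weight of $\nabla W$, which I would collect with the $\overline{K}$-part below. Here I would use that $\nabla W$ is $K$-invariant and, as the key structural input, that $W$ is holomorphic, so that the fibrewise covariant Hessian of $\nabla{\mc W}$ is complex linear in the direction of differentiation; this is precisely what makes $\tfrac12(\nabla_s^A - J\nabla_t^A)$, resp. $\tfrac12(\nabla_s^A + J\nabla_t^A)$, collapse to $\nabla_{{\bm v}_{\bar z}}$, resp. $\nabla_{{\bm v}_z}$, when acting on the target factor. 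For the $\overline{K_{\Sigma_{\mc C}^*}}$-part, I would compute the Chern connection of the Hermitian metric induced on $K_{{\mc C},{\rm log}} = L_{\mc C}^{\otimes r}$ by the chosen metric $\| e\| = |w|^q e^{h}$ on $L_{\mc C}$ together with $\varphi_{\mc C}$; this gives $\|dz\|^2_{K_{{\mc C},{\rm log}}} = e^{2rh}$, so on $\overline{K_{{\mc C},{\rm log}}}$ in the frame $d\bar z$ the connection $1$-form is $2r\,(\partial h/\partial\bar z)\,d\bar z$. Consequently the $\nabla_z^A$-direction contributes no extra term while the $\nabla_{\bar z}^A$-direction contributes exactly $2r\,(\partial h/\partial\bar z)\,\nabla{\mc W}(u)$, once the monodromy terms $\pm\i q$ in $\ubar A$, the twist $e^{\i r q t}$, and the R-symmetry weight of $\nabla W$ have been checked to cancel among themselves --- a cancellation forced by the degree-$r$ homogeneity of $W$, i.e.\ by the very consistency of the definition of ${\mc W}_{\mc C}$, and by the reality of $h$ (which recombines $\partial_s h$, $\partial_t h$ into $\partial h/\partial\bar z$).

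Assembling the two parts yields the two asserted identities. I expect the main obstacle to be the bookkeeping in this last cancellation: tracking, with consistent sign and orientation conventions ($J$ versus $\i$, the sign of the monodromy, $\partial_z$ versus $\partial_{\bar z}$), how the R-symmetry and monodromy contributions from the connection, from the prefactor $e^{\i r q t}$, and from the non-invariance of $\nabla W$ under the R-symmetry conspire to vanish in the holomorphic combination and to leave precisely the coefficient $2r$ in the anti-holomorphic one; the remainder is a routine application of the Leibniz and chain rules. Alternatively one could carry out the whole computation in local holomorphic coordinates on $V$ and on $\Sigma_{\mc C}^*$, which avoids invoking the intrinsic holomorphic structure but is considerably longer.
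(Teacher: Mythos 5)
Your proposal is correct and follows essentially the same route as the paper: a direct computation of $\nabla_s^A\nabla{\mc W}$ and $\nabla_t^A\nabla{\mc W}$ in the fixed trivialization $\nabla{\mc W} = e^{\i rqt+rh}\nabla W\, d\bar z$, using the $K$-invariance of $W$, its degree-$r$ homogeneity under the R-symmetry (which is precisely what cancels the $\pm \i r q$ monodromy/twist contributions and turns the $h$-part of the connection into a multiple of $\nabla W$), and the holomorphicity of $W$ to collapse the $\tfrac12(\nabla_s^A \mp J\nabla_t^A)$ combinations into $\nabla_{{\bm v}_{\bar z}}$ and $\nabla_{{\bm v}_z}$. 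The only cosmetic difference is that the paper produces the $2r\,\partial h/\partial\bar z$ term by differentiating the scalar prefactor $e^{\i rqt+rh}$ directly and combining it with the bracket $[\nabla W,{\mc X}_{\i\partial_t h}]$, rather than packaging it as the Chern connection of the induced metric on $K_{{\mc C},{\rm log}}$ as you do; the cancellation bookkeeping you defer is exactly what the paper's two displayed computations carry out.
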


\begin{proof}
We have the following straightforward calculations. 
\beqn
\begin{split}
\nabla_s^A \nabla {\mc W} = &\  \nabla_s \Big( e^{  {\bf i} rq t + r h} \nabla W \Big) + e^{  {\bf i} r qt + r h} \nabla_{\nabla W} {\mc X}_{\ubar \phi} \\
= &\  e^{  {\bf i} r q t + r h} \Big( r  \partial_s h \nabla W + \nabla_{{\bm v}_s} \nabla W + [\nabla W, {\mc X}_{\ubar \phi} ] \Big)\\
= &\  e^{  {\bf i} r q t + r h} \Big( r  \partial_s h \nabla W + \nabla_{{\bm v}_s} \nabla W + [\nabla W, {\mc X}_{{\bf i} \partial_t h} ] \Big)\\
= &\  e^{  {\bf i} r q t + r h} \Big( r  \partial_s h \nabla W + \nabla_{{\bm v}_s} \nabla W + ( {\bf i} r \partial_t h) \nabla W \Big) \\
 = &\ \nabla_{{\bm v}_s}  e^{ {\bf i} r qt  + r h} \nabla W + 2 r \frac{\partial h}{\partial \bar z} e^{ {\bf i} r q + t r h} \nabla W.
 \end{split}
 \eeqn
For the third equality above we used the fact that $W$ is $K$-invariant. Similarly,
\beqn
\begin{split}
\nabla_t^A \nabla {\mc W} = &\  \nabla_t \Big( e^{ {\bf i} rq t + r h} \nabla W \Big) + e^{ {\bf i} r qt + r h} \nabla_{\nabla W} {\mc X}_{\ubar \psi} \\
= &\  e^{ {\bf i} r q t + r h} \Big( ( {\bf i} r q + r  \partial_t h) \nabla W + \nabla_{{\bm v}_t} \nabla W + [\nabla W, {\mc X}_{\ubar \psi} ] \Big)\\
= &\  e^{  {\bf i} r q t + r h} \Big( (  {\bf i} rq + r  \partial_t h) \nabla W + \nabla_{{\bm v}_t} \nabla W + [\nabla W, {\mc X}_{ - {\bf i} q - {\bf i}\partial_s h} ] \Big)\\
= &\  e^{  {\bf i} r q t + r h} \Big( (  {\bf i} rq +  r  \partial_t h) \nabla W + \nabla_{{\bm v}_s} \nabla W + ( - {\bf i} rq - {\bf i} r \partial_s h) \nabla W \Big) \\
 = &\ \nabla_{{\bm v}_s}  e^{ {\bf i} r qt  + r h} \nabla W + 2 r \frac{\partial h}{\partial \bar z} e^{ {\bf i} r qt + r h} \nabla W.
 \end{split}
 \eeqn
Since $W$ is holomorphic, we have
\beqn
\nabla_z^A \nabla {\mc W} = \frac{1}{2} \Big( \nabla_{{\bm v}_s} e^{{\bf i} rq t + r h} \nabla W - J \nabla_{{\bm v}_t} e^{{\bf i} r qt + rh} \nabla W \Big) = \nabla_{{\bm v}_{\bar z}} \nabla {\mc W},
\eeqn
and
\beqn
\nabla_{\bar z}^A \nabla {\mc W} = \nabla_{ {\bm v}_z } \nabla {\mc W} + 2r \frac{\partial h}{\partial \bar z} \nabla {\mc W}.
\eeqn
This finishes the proof of Lemma \ref{lemma42}.
\end{proof}

The last lemma is used in proving the following result.

\begin{lemma}{\rm (}cf. \cite[Proposition 4.5]{Tian_Xu}{\rm )}
Using the above notations, one has
\beq\label{eqn43}
\lim_{s \to +\infty} | {\bm v}_s(s, t) | = \lim_{s \to +\infty} | {\bm v}_t(s, t) | = \lim_{s \to +\infty} |\nabla {\mc W} (u(s, 	t))| = 0.
\eeq
\end{lemma}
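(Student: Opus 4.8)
## Proof proposal

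\textbf{Overall strategy.} The goal is to show that near each puncture the three quantities ${\bm v}_s$, ${\bm v}_t$ and $\nabla{\mc W}(u)$ decay to zero. The natural route is to first extract the asymptotic limit of the solution on the cylindrical end, then show that the limiting object is a \emph{constant} (critical) loop, and finally upgrade the $C^0$-convergence of the solution to convergence of the first derivatives. Since the solution $(P,A,u)$ is assumed bounded, on the end $U_a^*\cong\Thetait_0^\infty$ the energy is finite and the image lies in a fixed compact subset of $V$; thus we are in the hypotheses of an asymptotic-behaviour result for the gauged Witten equation on a semi-infinite cylinder (the analogue of Theorem~\ref{thm47}, applied to the full equation \eqref{eqn34} rather than the pure vortex equation). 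First I would invoke that asymptotic theory: after a gauge transformation, $\phi\to 0$, $\psi\to\lambda$ for some $\lambda\in{\mf k}$, and $e^{-\lambda t}u(s,t)$ converges (exponentially) to some $x\in\mu^{-1}(0)$. This already gives $|{\bm v}_s|,|{\bm v}_t|\to 0$ once one checks that the superpotential term does not destroy the convergence.

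\textbf{Key steps.} (1) Finite energy on the end forces the loops ${\bm x}_s=(u(s,\cdot),\psi(s,\cdot))$ to converge in $d_{\mc L}$ to the critical set $\mc{CA}$; combined with the annulus lemma (Lemma~\ref{lemma46}) this yields exponential decay of the energy density $e(s,t)$ as $s\to+\infty$. (2) In particular $\int_{\Thetait_0^\infty}e({\bm v})<\infty$ and $e(s,t)\to 0$, so $|{\bm v}_s|(s,t)\to 0$, and since $|{\bm v}_t|\le |{\bm v}_s|+2|\nabla{\mc W}(u)|$ by the Witten equation \eqref{eqn32}, it remains to control $\nabla{\mc W}(u)$. (3) For $\nabla{\mc W}(u)$: the curve class of $u$ has bounded image, so $|\nabla W(u)|$ is \emph{bounded}; but $\nabla{\mc W}(u)=e^{{\bf i}rqt+rh}\nabla W(u)\,d\bar z$ and $h$ is bounded (by \eqref{eqn41}), so $|\nabla{\mc W}(u)|\le C|\nabla W(u)|$. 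The limit point $x$ lies in $\ov{({\rm Crit}W)^{\rm ss}}$: indeed, because $u$ solves the Witten equation and $e^{-\lambda t}u(s,t)\to x$, one uses that $W$ is R-symmetry homogeneous of degree $r$ together with the decay of ${\bm v}_s,{\bm v}_t$ to conclude $\nabla W(x)=0$. Then, since $\nabla W$ vanishes at the limit and is continuous, and the convergence $e^{-\lambda t}u(s,t)\to x$ is uniform in $t$, one gets $|\nabla W(u(s,t))|\to 0$ uniformly, hence $|\nabla{\mc W}(u(s,t))|\to 0$. Combining (2) and (3) gives \eqref{eqn43}.

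\textbf{Where the technical heart lies.} The subtle point is justifying step~(3): showing that the asymptotic limit $x$ actually lies in ${\rm Crit}W$, rather than merely in $\mu^{-1}(0)$. A priori the asymptotic theory for vortices gives a loop in $\mu^{-1}(0)$ which need not see the superpotential. The mechanism is that $\nabla{\mc W}_{\mc C}(u)$ enters the first-order equation as $\ov\partial_A u=-\nabla{\mc W}_{\mc C}(u)$, so exponential decay of $\ov\partial_A u$ (a consequence of energy decay via Lemma~\ref{lemma46}) forces the decay of $\nabla{\mc W}_{\mc C}(u)$ \emph{along the sequence $s\to+\infty$}, and then continuity of $\nabla W$ plus the convergence $e^{-\lambda t}u\to x$ identifies the limit as a critical point. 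So in fact the logic can be streamlined: once the annulus lemma gives $e(s,t)\to 0$, both $|\ov\partial_A u|=|{\bm v}_{\bar z}|$-type quantities and $|F_A|$ decay, and since $\ov\partial_A u + \nabla{\mc W}(u)=0$ pointwise and these are the two orthogonal-in-type pieces of $d_A u$ only after squaring — one must be slightly careful here, because ${\bm v}_{\bar z}$ and $\nabla{\mc W}(u)$ live in the same space and \eqref{eqn32} says ${\bm v}_{\bar z}=-\nabla{\mc W}(u)$, not that each is separately small. The resolution: the energy density controls $|{\bm v}_s|^2=|{\bm v}_z+{\bm v}_{\bar z}|^2$ and $|{\bm v}_t|^2$, hence $|{\bm v}_z|$ and $|{\bm v}_{\bar z}|$ individually, hence $|\nabla{\mc W}(u)|=|{\bm v}_{\bar z}|$ and then $|{\bm v}_t|=|{\bm v}_s+2J\,{\bm v}_{\bar z}|$-type bound gives $|{\bm v}_t|\to 0$ as well. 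This disentangling of the $(1,0)$ and $(0,1)$ parts of $d_A u$, using that the energy density sees both $|{\bm v}_s|^2$ and $|{\bm v}_t|^2$ separately, is the one place where a careful computation (and the precise form of the energy density, including the $2|\nabla{\mc W}(u)|^2$ term) is genuinely needed.
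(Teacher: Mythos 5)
There is a genuine gap: your argument runs the paper's logic backwards. You begin by invoking the asymptotic-behaviour theory on the half-cylinder (convergence of the loops ${\bm x}_s$ to critical loops, an annulus/isoperimetric lemma, and the vortex asymptotic theorem) to obtain pointwise decay of the energy density. But those results are established in the literature for the symplectic vortex equation without superpotential; in this paper they are applied to a solution of the gauged Witten equation only \emph{after} one knows $\ov\partial_A u\equiv 0$ and that $u$ lands in $\ov{({\rm Crit}W)^{\rm ss}}$ (Theorem \ref{thm41}, then Corollary \ref{cor45} and Theorem \ref{thm46}), and Theorem \ref{thm41} is itself proved using the very lemma you are asked to prove. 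Moreover, the claim that finite energy alone forces $d_{\mc L}({\bm x}_s, \mc{CA})\to 0$ is not automatic: even for genuine vortices that step already rests on an interior derivative estimate, and on the cylinder the gauged Witten equation is not the gradient flow of the vortex action functional because of the $\nabla{\mc W}$ term, whose smallness is exactly what is at stake. So the essential input of your steps (1)--(2) is unavailable at this stage, and your step (3) (identifying the limit point as a critical point of $W$) additionally uses the decay of ${\bm v}_s,{\bm v}_t$ that is being proven, so the proposal is circular as it stands.

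The paper instead proves the lemma by a direct a priori estimate that needs no asymptotic theory: using the Witten equation \eqref{eqn32}, the curvature equation, and the commutation identities of Lemma \ref{lemma42}, one computes $(\nabla_s^A)^2{\bm v}_s+(\nabla_t^A)^2{\bm v}_s$ (and the analogue for ${\bm v}_t$), and, using only boundedness of the solution and the bounds \eqref{eqn41} on $h$, derives the differential inequality $\Delta\big(|{\bm v}_s|^2+|{\bm v}_t|^2\big)\ge -C\big(1+(|{\bm v}_s|^2+|{\bm v}_t|^2)^2\big)$. The mean value inequality for such functions, combined with the finite energy condition (which makes $\int_{B_r}(|{\bm v}_s|^2+|{\bm v}_t|^2)$ small on unit disks far out the cylindrical end), yields the pointwise decay of $|{\bm v}_s|$ and $|{\bm v}_t|$; the decay of $|\nabla{\mc W}(u)|$ then follows from \eqref{eqn32}, exactly as in the last lines of your proposal. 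If you wanted to salvage your route, you would have to establish the annulus-type and asymptotic results directly for the gauged Witten equation, and the first step of such a proof would be precisely this Bochner-type estimate, so the lemma cannot be deduced from that theory without circularity.
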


\begin{proof}
We have the following calculation.
\beqn
\begin{split}
&\ (\nabla_s^A)^2 {\bm v}_s + (\nabla_t^A)^2 {\bm v}_s  \\
= &\ \nabla_s^A \Big( \nabla_s^A {\bm v}_s  + \nabla_t^A {\bm v}_t \Big) - \big[ \nabla_s^A, \nabla_t^A \big] {\bm v}_t - \nabla_t^A \Big( \nabla_s^A {\bm v}_t - \nabla_t^A {\bm v}_s \Big)\\
= &\ \nabla_s^A \Big( \nabla_s^A ( -J {\bm v}_t - 2 \nabla {\mc W} (u) )\Big) + \nabla_s^A \Big( \nabla_t^A ( J {\bm v}_s + 2 J \nabla {\mc W}(u)) \Big) \\
&\ - R({\bm v}_s, {\bm v}_t) {\bm v}_t - \nabla_{{\bm v}_t} {\mc X}_{\ubar \kappa} - \nabla_t^A {\mc X}_{\ubar \kappa}\\[0.1cm]
= &\ - J \nabla_s^A {\mc X}_{\ubar \kappa} - \nabla_t^A {\mc X}_{\ubar \kappa} - \nabla_{{\bm v}_t} {\mc X}_{\ubar \kappa} - R({\bm v}_s, {\bm v}_t) {\bm v}_t  - 4 \nabla_s^A \nabla_{\bar z}^A \nabla {\mc W} (u) \\
= &\ J \nabla_z^A \Big( - {\mc X}_{\kappa_{\mc C}} + {\mc X}_{\mu (u) } \Big) + \nabla_{{\bm v}_t} \Big( - {\mc X}_{\kappa_{\mc C}} + {\mc X}_{\mu (u)} \Big) \\
&\ - R({\bm v}_s, {\bm v}_t) {\bm v}_t - 4 \nabla_s^A \Big( \nabla_{{\bm v}_z} \nabla {\mc W} (u) + 2 r \frac{\partial h}{\partial \bar z} \nabla {\mc W} \Big)\\
= &\ J \nabla_z^A \Big( J {\mc X}_{\Delta h} + {\mc X}_{\mu (u) } \Big) + \nabla_{{\bm v}_t} \Big( J {\mc X}_{\Delta h} + {\mc X}_{\mu (u)} \Big) \\
&\ - R({\bm v}_s, {\bm v}_t) {\bm v}_t - 4 \nabla_s^A \Big( \nabla_{ {\bm v}_z} \nabla {\mc W} (u) + 2 r \frac{\partial h}{\partial \bar z} \nabla {\mc W} \Big).
\end{split}
\eeqn
Using the fact that the solution is bounded and \eqref{eqn41}, one has the following estimates of each terms of the last express. First, since the infinitesimal actions are holomorphic vector fields and behave equivariantly, we have 
\beqn
\nabla_z^A {\mc X}_{\Delta h} = \nabla_{{\bm v}_z} {\mc X}_{\Delta h} + {\mc X}_{ \frac{\partial \Delta h}{\partial z}} = O(|{\bm v}_s|) + O(|{\bm v}_t|) + O(1);
\eeqn
\beqn
\nabla_z^A {\mc X}_{\mu (u)} = \nabla_{{\bm v}_z} {\mc X}_{\mu (u)} + {\mc X}_{\nabla_z^A \mu (u)} = O(|{\bm v}_s|) + O(|{\bm v}_t|) + O(1). 
\eeqn
We also have
\beqn
\nabla_{{\bm v}_t}\Big( J {\mc X}_{\Delta h} + {\mc X}_{\mu (u)} \Big) = O(|{\bm v}_t|) + O(1);
\eeqn
and 
\beqn
R({\bm v}_s, {\bm v}_t) {\bm v}_t = O(|{\bm v}_s| |{\bm v}_t|^2);
\eeqn
and
\beqn
\nabla_s^A \nabla_{{\bm v}_z} \nabla {\mc W}(u) = O( |\nabla_s^A {\bm v}_s|) + O(|\nabla_s^A {\bm v}_t|) + O(|{\bm v}_s|^2) + O(|{\bm v}_s||{\bm v}_t|) + O(1);
\eeqn
and
\beqn
\nabla_s^A \frac{\partial h}{\partial \bar z}\nabla {\mc W}(u) = O(|{\bm v}_s|) + O(1). 
\eeqn

Therefore, since $\nabla^A$ preserves the metric, one has
\beqn
\begin{split}
\frac{1}{2} \Delta | {\bm v}_s|^2 &\ = \Big\langle  (\nabla_s^A)^2 {\bm v}_s + (\nabla_t^A)^2 {\bm v}_s, {\bm v}_s \Big\rangle + \big| \nabla_s^A {\bm v}_s \big|^2 + \big| \nabla_t^A {\bm v}_s \big|^2 \\
                           &\ \geq -C \Big( 1 + (|{\bm v}_s|^2 + |{\bm v}_t|^2 )^2 \Big) + |\nabla_s^A {\bm v}_s|^2 + |\nabla_t^A {\bm v}_s|^2 - C \Big( |\nabla_s^A {\bm v}_s| + |\nabla_s^A {\bm v}_t|\Big).
\end{split}
\eeqn
Changing $s$ to $t$, one has a similar estimate
\beqn
\frac{1}{2} \Delta |{\bm v}_t|^2 \geq - C \Big( 1 + ( |{\bm v}_s|^2 + |{\bm v}_t|^2)^2 \Big)  + |\nabla_s^A {\bm v}_t|^2 + |\nabla_t^A {\bm v}_t|^2 - C \Big( |\nabla_t^A {\bm v}_t| + |\nabla_t^A {\bm v}_s|\Big).
\eeqn
Then adding the above two inequalities together and changing the value of $C$, one obtains
\beqn
\Delta \Big( |{\bm v}_s|^2 + |{\bm v}_t|^2 \Big) \geq - C \Big( 1 + (|{\bm v}_s|^2 + |{\bm v}_t|^2)^2 \Big).
\eeqn

By the mean value estimate, there exists $\epsilon > 0$ (depending on the above $C$) such that for $r < 1$ and $(s_0, t_0) \in U_j^*$, denoting the radius $r$ disk centered at $(s_0, t_0)$ by $B_r$, 
\beqn
\int_{B_r} \Big( |{\bm v}_s|^2 + |{\bm v}_t|^2 \Big) ds dt \leq \epsilon \Longrightarrow {\bm v}_s(s_0, t_0) \leq \frac{8}{\pi r^2} \int_{B_r} \Big( |{\bm v}_s|^2 + |{\bm v}_t|^2 \Big) ds dt + \frac{ C r^2}{4}.
\eeqn
One can then derive that $|{\bm v}_s|$ and $|{\bm v}_t|$ converge to $0$ as $s \to +\infty$. The last equality of \eqref{eqn43} follows from the Witten equation \eqref{eqn32}.
\end{proof}
Since ${\bm v}$ is bounded, as a consequence, we know that 
\beq\label{eqn44}
\lim_{s \to +\infty} {\rm dist} (u(s, t), {\rm Crit} W) = 0.
\eeq
Then by the holomorphicity of ${\mc W}$, we have
\begin{multline*}
 \| \ov\partial_A u \|_{L^2}^2 + \| \nabla {\mc W} (u) \|_{L^2}^2 = \| \ov\partial_A u + \nabla {\mc W} (u)\|_{L^2}^2 - 2 \int_{\Sigma_{\mc C}^*} \langle \ov\partial_A u, \nabla {\mc W} (u) \rangle \sigma_c \\
                = -\i \int_{\Sigma_{\mc C}^* } d{\mc W} (u) \cdot \ov\partial_A u  = \i \int_{\Sigma_{\mc C}^*} \ov\partial \big[ {\mc W}(u) \big] =  \i \int_{\Sigma_{\mc C}^*} d \big( {\mc W} (u) \big) = \i \sum_{j=1}^n \lim_{s \to +\infty}  \oint_{C_s(z_j)} {\mc W} (u).
\end{multline*}
As the only critical value of $W$ is zero (see \cite[Lemma 3.4]{Tian_Xu_geometric}), by \eqref{eqn44}, the integrals of ${\mc W} (u)$ along the loops appeared above all converge to zero. It follows that $\| \ov\partial_A u \|_{L^2} = \| \nabla {\mc W} (u) \|_{L^2} = 0$. This finishes the proof of Theorem \ref{thm41}.

\subsection{Asymptotic behavior}

Another part of the $C^0$ asymptotic behavior for a solution ${\bm v} = (A, u)$ is prove that $\mu(u)$ converges to zero at punctures. 

\begin{lemma}\label{lemma44}
One has $\displaystyle \lim_{s \to +\infty} \mu (u(s, t)) = 0$. 
\end{lemma}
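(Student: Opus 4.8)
The plan is to combine the decay of the covariant derivatives at the punctures (from the preceding lemma) with finiteness of the energy and a local compactness argument for vortices on cylinders. Since the statement concerns the finitely many punctures, I would fix one puncture $z_a$ and establish $\mu(u(s,t))\to 0$ on the cylindrical end $U_a^*\cong(0,+\infty)\times S^1$. By Theorem \ref{thm41} we have $\nabla{\mc W}_{\mc C}(u)\equiv 0$, so on $U_a^*$ the pair $(A,u)$ solves the symplectic vortex equation $\ov\partial_A u=0$, $* F_A+\mu(u)=0$, twisted by the fixed background $r$-spin connection, whose curvature and all its derivatives are controlled by \eqref{eqn41}. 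Since ${\bm v}$ is bounded, $u$ takes values in a fixed compact subset of $V$, so $|\mu(u)|$ and hence $|F_A|=|\mu(u)|$ are bounded on $U_a^*$; together with the \emph{uniform} limits $|{\bm v}_s|,|{\bm v}_t|\to 0$ from \eqref{eqn43} this makes the energy density of ${\bm v}$ bounded on $U_a^*$. Finiteness of the energy also gives $\int_{U_a^*}|\mu(u)|^2<+\infty$, hence
\beqn
\lim_{s\to+\infty}\int_{(s-1,\,s+1)\times S^1}|\mu(u)|^2 = 0.
\eeqn

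I would then argue by contradiction: if $\mu(u(s,t))\not\to 0$, there exist $s_k\to+\infty$ and $t_k\in S^1$ with $|\mu(u(s_k,t_k))|\ge\delta$ for some $\delta>0$. Consider the translates ${\bm v}_k:={\bm v}(\,\cdot+s_k,\cdot\,)$ on $(-2,2)\times S^1$, together with the translated background connections, which subconverge in $C^\infty_{\rm loc}$ by \eqref{eqn41}. For $k$ large the energy density of ${\bm v}_k$ is uniformly bounded, so no energy concentrates, and the standard local compactness theorem for the gauged Witten equation on a compact cylinder (cf.\ \cite{Tian_Xu_geometric}) produces, after passing to a subsequence, a limit ${\bm v}_\infty=(A_\infty,u_\infty)$ with ${\bm v}_k\to{\bm v}_\infty$ in $C^\infty_{\rm loc}$ modulo gauge. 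Since $|\mu(\,\cdot\,)|$ is invariant under gauge transformations and under the ${\rm Ad}$-invariant identification ${\mf k}\cong{\mf k}^*$, this convergence gives
\beqn
\int_{[-1,1]\times S^1}|\mu(u_\infty)|^2 \;=\; \lim_{k\to+\infty}\int_{[s_k-1,\,s_k+1]\times S^1}|\mu(u)|^2 \;=\; 0,
\eeqn
so $\mu(u_\infty)\equiv 0$ on $[-1,1]\times S^1$. The same convergence then forces $|\mu(u(s_k,t))|\to|\mu(u_\infty(0,t))|=0$ uniformly in $t\in S^1$, contradicting $|\mu(u(s_k,t_k))|\ge\delta$. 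Hence $\mu(u(s,t))\to 0$ on $U_a^*$, which is what we want.

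The step that will need the most care is the local compactness input: one has to check that the twisting by the non-constant $r$-spin background connection (carrying the factors $w^{m_a}$ and $e^h$ above) does not interfere with Uhlenbeck-type compactness modulo gauge, and that the decay in \eqref{eqn43} really rules out bubbling --- both reducing to the uniform bounds \eqref{eqn41}. Alternatively, one can avoid local compactness altogether: differentiating the vortex equation and using Lemma \ref{lemma42}, the curvature identity, and the decay of ${\bm v}_s,{\bm v}_t$, one would derive a differential inequality $\Delta|\mu(u)|^2\ge -C|\mu(u)|^2-\epsilon(s)$ on $U_a^*$ with $\epsilon(s)\to 0$, and then finish with the mean value inequality together with $\int_{(s-1,s+1)\times S^1}|\mu(u)|^2\to 0$. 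In that approach the delicate point is to absorb the terms produced by the R-symmetry part of the connection and by $\kappa_{\mc C}$: near $\ov{({\rm Crit} W)^{\rm ss}}$ these coincide, up to an error of order ${\rm dist}(u,\ov{({\rm Crit} W)^{\rm ss}})$, with genuine $K$-action terms that interact with $\mu(u)$ only through a commutator and hence are $O(|\mu(u)|)$ --- using Hypothesis \ref{hyp33} and \eqref{eqn44}.
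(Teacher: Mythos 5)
Your main argument is correct, but it is not the route the paper takes. The paper's proof stays entirely within the elliptic-estimate framework: using holomorphicity of $u$ (Theorem \ref{thm41}) and the vortex equation it computes $(\nabla^A_s)^2\mu(u)+(\nabla^A_t)^2\mu(u) = d\mu\cdot\bigl(J{\mc X}_{\mu(u)}-J{\mc X}_{\kappa_{\mc C}}\bigr)-2\rho_{\ubar K}({\bm v}_s,{\bm v}_t)$, deduces the differential inequality $\tfrac12\Delta|\mu(u)|^2\geq -C\bigl(1+|\mu(u)|^4\bigr)$, and concludes directly from the mean value estimate together with $\int|\mu(u)|^2<\infty$ --- exactly the scheme already used for \eqref{eqn43}, with no compactness theorem invoked. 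Your primary route instead runs a translation/contradiction argument through local compactness modulo gauge (Uhlenbeck-type compactness plus exclusion of bubbling via the bounded energy density you get from \eqref{eqn43} and boundedness of ${\bm v}$), exploiting gauge-invariance of $|\mu|$ and the vanishing $L^2$ tail; this is heavier machinery but sound, and it buys you independence from the precise form of the nonlinearity, at the cost of having to verify that the translated background data ($h$, $\kappa_{\mc C}$) subconverge --- which \eqref{eqn41} does provide. Your ``alternative'' sketch is essentially the paper's argument, but you make it harder than necessary: there is no need for a decaying error $\epsilon(s)$ or for absorbing the R-symmetry and $\kappa_{\mc C}$ contributions to first order in $|\mu(u)|$; since the solution is bounded and \eqref{eqn41} holds, those terms are simply bounded by a constant, the inequality $\Delta|\mu(u)|^2\geq -C(1+|\mu(u)|^4)$ suffices, and the mean value inequality handles the quartic nonlinearity because the tail of $\int|\mu(u)|^2$ tends to zero.
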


\begin{proof}
Recall the definition of the covariant differential of maps into ${\mf k}$ given by \eqref{eqn42}. By the holomorphicity of $u$, one has the following calculation.
\beqn
\begin{split}
(\nabla^A_s)^2 \mu(u) +  (\nabla^A_t)^2 \mu(u) &\ = \nabla^A_s (d\mu \cdot {\bm v}_s) +  \nabla^A_t (d\mu \cdot {\bm v}_t)  \\
&\ = \nabla^A_t ( d\mu \cdot J {\bm v}_s ) - \nabla^A_s ( d\mu \cdot J {\bm v}_t )\\
&\ = -2 \rho_{\ubar K} ( {\bm v}_s, {\bm v}_t) - d\mu \cdot J {\mc X}_{\ubar \kappa}\\
&\ = d\mu\cdot \Big[ J {\mc X}_{\mu(u)} - J {\mc X}_{\kappa_{\mc C}} \Big] - 2\rho_{\ubar K} ( {\bm v}_s, {\bm v}_t).
\end{split}
\eeqn
Here $\rho_{\ubar K}$ is a 2-form taking value in the Lie algebra of $\ubar K$. Therefore, for $C>0$ depending on the solution ${\bm v}$, 
\begin{multline*}
\frac{1}{2} \Delta \big| \mu(u) \big|^2 = \big\langle (\nabla^A_s)^2 \mu(u) + (\nabla^A_t)^2  \mu(u), \mu(u) \big\rangle + \big| \nabla^A_s \mu(u) \big|^2 + \big| \nabla^A_t \mu(u) \big|^2 \\
                                \geq \big\langle d\mu \cdot \big( J {\mc X}_{\mu(u)} - J {\mc X}_{\kappa_{\mc C}} \big) - 2 \rho_{\ubar K} ( {\bm v}_s, {\bm v}_t), \mu(u) \big\rangle \geq - C \big( 1 + |\mu(u)|^4 \big).
                                \end{multline*}
Again, this lemma follows from a mean-value estimate and the finite energy condition. 
\end{proof}

Another corollary of above results is the following result on removal of singularity. 

\begin{cor}\label{cor45}
Let ${\bm v} = (P, A, u)$ be a bounded solution of the gauged Witten equation over a smooth $r$-spin curve ${\mc C}$. Let $z_a$ be a puncture of ${\mc C}$ at which the monodromy of $L_{\mc C}$ is $e^{2\pi q_a {\bf i}}$. Then there exist $x_a \in {\rm Crit} W \cap \mu^{-1}(0)$, $\eta_a \in {\mf k}$, and a smooth trivialization of $\ubar P$ over the cylindrical end at $z_a$ such that, with respect to this trivialization, we can write $A = d + \phi ds + \psi dt$ and regard $u$ as a map $u_a: U_a^* \to X$, satisfying 
\beqn
\lim_{s \to +\infty} \phi = 0,\ \lim_{s \to +\infty} \psi = \eta_a,\ \lim_{s \to +\infty} e^{- ( {\bf i} q_a + \eta_a) t} u(s, t) = x_a. 
\eeqn
\end{cor}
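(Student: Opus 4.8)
The plan is to show that over the cylindrical end $U_a^*$ at $z_a$ the solution reduces, after a gauge transformation, to a finite-energy symplectic vortex over the half-cylinder with values in a compact K\"ahler manifold, and then to invoke the standard asymptotic analysis of such vortices.

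First I would use Theorem \ref{thm41}: since $\ov\partial_A u \equiv 0$ and $\nabla {\mc W}(u) \equiv 0$, on $U_a^*$ the section $u$ takes values in ${\rm Crit} W$ and $(A,u)$ solves only the vortex equation $* F_A + \mu(u) = 0$. Writing $z = s + \i t$ for the cylindrical coordinate and taking the trivialization of $\ubar P$ as in the paragraphs preceding Lemma \ref{lemma42}, one has $\ubar A = d + (\i\partial_t h + \phi)\,ds + (-\i q_a - \i\partial_s h + \psi)\,dt$. The $h$-dependent terms are bounded together with all their derivatives by \eqref{eqn41}, so I would remove them by a gauge transformation of the form $\exp(\xi)$ with $\xi$ bounded together with its derivatives; this changes neither the vortex equation nor the boundedness hypothesis. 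The remaining constant term $-\i q_a\,dt$ is the holonomy contribution of the fixed $r$-spin connection; by Hypothesis \ref{hyp33}(a) the R-symmetry acts on $\ov{({\rm Crit} W)^{\rm ss}}$ through $\iota_W$, so along the image of $u$ this $U(1)$-component of $\ubar A$ can be traded for a $G$-component, leaving a genuine solution of the symplectic vortex equation for the $K$-action.

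Next I would check that this vortex is bounded in the required sense: its energy over $U_a^*$ is at most $E(P,A,u) < \infty$, and by Lemma \ref{lemma44} together with the boundedness of ${\bm v}$ the values of $u|_{U_a^*}$ for large $s$ lie in a fixed compact subset of $V$ --- using the properness of $\mu|_{\ov{({\rm Crit} W)^{\rm ss}}}$ (Hypothesis \ref{hyp33}(c)) --- which after reduction by the free $K$-action on $\mu^{-1}(0)$ (Hypothesis \ref{hyp33}(d)) maps near the compact manifold $X$. Then I would apply the asymptotic behavior of finite-energy vortices over the half-cylinder (in the spirit of \cite{Chen_Wang_Wang}, through the isoperimetric inequality and the resulting annulus estimate): after a further smooth gauge transformation on $U_a^*$ there is a constant $\eta_a \in {\mf k}$ with $\phi \to 0$, $\psi \to \eta_a$, and the $V$-valued representative $e^{-\eta_a t}u(s,t)$ converges as $s \to +\infty$ to a point of $\mu^{-1}(0)$ fixed by $e^{2\pi\eta_a}$; this point lies in ${\rm Crit} W$ because $u$ does. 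Finally, reinstating the factor $e^{-\i q_a t}$ removed in the first step converts this into the stated convergence $e^{-(\i q_a + \eta_a)t}u(s,t) \to x_a$ with $x_a \in {\rm Crit} W \cap \mu^{-1}(0)$; that the limit is well defined --- equivalently, that $e^{2\pi(\i q_a + \eta_a)} \in \ubar K$ fixes $x_a$ --- is precisely the compatibility between the R-symmetry and $\iota_W$ in Hypothesis \ref{hyp33}(a),(b).

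I expect the main obstacle to be this first reduction step: making rigorous that the fixed $r$-spin twist (the monodromy $e^{2\pi q_a\i}$ and the metric factor $h$) can be disposed of so that the genuine half-cylinder vortex asymptotics apply, and that the combined $\ubar K$-holonomy at the puncture genuinely produces a point of $X$ rather than merely of $V$. The remaining ingredients --- the isoperimetric/annulus estimate and the mean-value bootstrap behind the half-cylinder asymptotics --- are routine once holomorphicity (Theorem \ref{thm41}), the gradient decay \eqref{eqn43}, and Lemma \ref{lemma44} are available.
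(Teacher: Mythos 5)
Your route is correct in substance, but it is not the route the paper takes for this corollary. The paper's own proof is lighter: it uses Lemma \ref{lemma44} to see that $u$ stays near $\mu^{-1}(0)$ on the end, composes $u$ with the holomorphic projection from a neighborhood of $\mu^{-1}(0)$ onto the quotient to get a finite-energy holomorphic map $\bar u$ into $X$, removes the singularity by Gromov's theorem, and only then chooses a gauge realizing the stated trivialization and limits; no isoperimetric/annulus estimates or half-cylinder vortex asymptotics are needed at this stage. What you propose --- reducing to a symplectic vortex on the end via $\iota_W$ and quoting the asymptotics of \cite{Chen_Wang_Wang}, \cite{Venugopalan_quasi} --- is exactly the mechanism the paper reserves for the stronger exponential-decay statement, Theorem \ref{thm46}, so your argument proves more than the corollary asks but imports heavier machinery. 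What each buys: the paper's projection argument gets the existence of the limit point $x_a$ cheaply and pushes all gauge-theoretic bookkeeping into the final ``choose a suitable gauge'' step, while your argument produces $\eta_a$ and the twisted convergence $e^{-(\i q_a+\eta_a)t}u\to x_a$ directly from the vortex asymptotics, at the cost of justifying the reduction.

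One point in your reduction should be repaired. The terms $\i\partial_t h\,ds-\i\partial_s h\,dt$ in $\ubar A$ are a co-exact $1$-form, not an exact one, so they cannot be removed by a gauge transformation $\exp(\xi)$ (an abelian gauge change only shifts $\ubar A$ by $df$-type terms); and they are only bounded by \eqref{eqn41}, not zero. The correct statement is the one recorded in \eqref{induced_vortex}: after absorbing $\iota_W(A_{\mc C})$ into the $K$-connection one obtains a vortex equation with the inhomogeneous term $*F_{\iota_W(A_{\mc C})}$, which decays exponentially along the cylindrical end, and the asymptotic results you cite do tolerate such decaying perturbations (cylindrical-type area forms). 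With that substitution your argument goes through; as written, the claim that one obtains a ``genuine'' unperturbed vortex after a bounded gauge transformation is false.
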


\begin{proof}
Lemma \ref{lemma44} imply that over a punctured neighborhood $U_a^*$ of $z_a$, the value of $u$ is very close to the level set $\mu^{-1}(0)$. As there is a holomorphic projection from a neighborhood of $\mu^{-1}(0)$ to the GIT quotient $\mu^{-1}(0)/K$ which annihilates infinitesimal $G$-actions, near the puncture $u$ projects to a holomorphic map $\bar{u}$ into $\mu^{-1}(0)/K$. It is easy to see that the $L^2$-energy of $\ov{u}$ is finite. Hence by Gromov's theorem on removable singularities, $\bar{u}$ extends to a holomorphic map over the puncture. Then by choosing suitable gauge on the cylindrical end, it is easy to find a smooth trivialization of $\ubar P|_{U_a^*}$ which satisfies the prescribed properties. 
\end{proof}

Then over each cylindrical end of $\Sigma_{\mc C}^*$ we can regard a solution to the gauged Witten equation over ${\mc C}$ as a special solution to the symplectic vortex equation with target $({\rm Crit} W)^{\rm ss}$. We can view the gauge group as either $K$ or $\ubar K$, because Hypothesis \ref{hyp33} says that over $({\rm Crit} W)^{\rm ss}$ the $K$-action and the R-symmetry merge together. Then we can use the results about asymptotic behavior of vortices (see for example \cite{Chen_Wang_Wang}\cite{Venugopalan_quasi}) to obtain the following corollary.

\begin{thm}\label{thm46}
There exists $\delta>0$ satisfying the following conditions. Let ${\bm v} = (P, A, u)$ be a bounded solution to the gauged Witten equation over a smooth $r$-spin curve ${\mc C}$. Let $z_a$ be a puncture and $U_a^*$ be a cylindrical end around $z_a$ with cylindrical metric $s + {\bf i} t$. 
\begin{enumerate}

\item Let $\tt{e}: \Sigma_{\mc C}^*  \to [0, +\infty)$ be the energy density function. Then
\beqn
\limsup_{s \to +\infty} e^{\delta s} \tt{e}(s, t) < +\infty.
\eeqn

\item There exist a trivialization of $\ubar P|_{U_a^*}$ and a point $x_a \in {\rm Crit} W \cap \mu^{-1}(0)$ satisfying the conditions of Corollary \ref{cor45}. Moreover, if we write $\ubar A = d + \ubar \phi ds + \ubar\psi dt$ and $u$ as a map $u_a: U_a^* \to X$, then
\beqn
\limsup_{s \to +\infty}  \Big( |\ubar\phi | + |\nabla \ubar\phi | + | \ubar \psi | + |\nabla \ubar\psi | \Big) e^{\delta s} < +\infty;
\eeqn
if we write $u_a (s, t) = \exp_{x_a} \xi_a(s, t)$, then 
\beqn
\limsup_{s \to +\infty }  \Big(  \| \xi_a \| + \| \nabla \xi_a \| \Big) e^{\delta s} < +\infty.
\eeqn

\end{enumerate}
\end{thm}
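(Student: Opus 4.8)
The plan is to localize the statement to a single cylindrical end $U_a^*$, reduce it there to the known asymptotic analysis of the symplectic vortex equation, extract exponential decay from an isoperimetric/annulus argument, and then bootstrap to the connection and the map. First I would assemble the results already in hand: by Theorem \ref{thm41}, $\ov\partial_A u \equiv 0$ and $u$ takes values in ${\rm Crit}\,W$, so in particular the superpotential term drops out of the energy density; by Lemma \ref{lemma44}, $\mu(u(s,t)) \to 0$; and by Corollary \ref{cor45}, after a smooth gauge transformation on $U_a^*$ one has $A = d + \phi\, ds + \psi\, dt$ with $\phi \to 0$, $\psi \to \eta_a$, and $e^{-({\bf i} q_a + \eta_a)t} u(s,t) \to x_a \in {\rm Crit}\,W \cap \mu^{-1}(0)$. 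Since over $\ov{({\rm Crit} W)^{\rm ss}}$ the R-symmetry merges with the $K$-action (Hypothesis \ref{hyp33}), the restriction of ${\bm v}$ to $U_a^*$ becomes, after this gauge change and after the constant-speed gauge transformation $e^{-({\bf i} q_a + \eta_a)t}$ on the orbifold cylinder, a finite-energy solution of the symplectic vortex equation into $X$ asymptotic to a constant, for an area form of cylindrical type --- the only departure from the flat cylinder being the $r$-spin metric factor $e^{rh}$, with $h$ and all its derivatives bounded by \eqref{eqn41}. Its limiting loop has isotropy of some order $m$ dividing $r$.

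Next I would run the isoperimetric argument on this cylinder. The $C^0$ convergence just recalled places the loops ${\bm x}_s = (u(s,\cdot),\psi(s,\cdot))$ in an arbitrarily small neighborhood of the set of critical loops once $s$ is large; combined with the isoperimetric inequality for the action functional and the annulus lemma it produces (as in \cite{Chen_Wang_Wang, Venugopalan_quasi}), this yields some $\delta > 0$ depending only on an upper bound for $m$, hence only on $r$, together with a solution-dependent constant $C_{\bm v}$ so that $E\big({\bm v};(s,+\infty)\times S^1\big) \le C_{\bm v}\, e^{-\delta s}$ for $s$ large. This is the one place in the proof where uniformity over all solutions is needed.

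To convert this $L^2$ decay of the energy into the pointwise bound $e^{\delta s}\,\tt{e}(s,t)=O(1)$ of Item (1), I would apply the $\epsilon$-regularity (mean-value) estimate for the vortex equation on unit sub-cylinders: once $E\big({\bm v};[s-1,s+1]\times S^1\big)$ is below the threshold, $\sup_{\{s\}\times S^1}\tt{e} \le C\, E\big({\bm v};[s-1,s+1]\times S^1\big) \le C\,C_{\bm v}\,e^{-\delta(s-1)}$, where I have used Theorem \ref{thm41} to see that the superpotential contribution to the energy density vanishes and \eqref{eqn41} to control the $r$-spin curvature. For Item (2), the vortex equations control $\partial_s\ubar\phi$, $\partial_s\ubar\psi$, and ${\bm v}_s$ pointwise by $\tt{e}^{1/2}$, so integrating from $s$ to $+\infty$ in the trivialization constructed above gives the exponential decay of $\ubar\phi$, $\ubar\psi$, and of $\xi_a$ where $u_a(s,t) = \exp_{x_a}\xi_a(s,t)$; interior elliptic estimates for the gauge-fixed vortex equations on $[s-1,s+1]\times S^1$ then promote all of these to the same bounds for the first covariant derivatives.

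The hard part will be the uniformity of the rate $\delta$ over all bounded solutions ${\bm v}$ --- note that every other estimate in the statement is allowed to depend on ${\bm v}$. This forces the use of a version of the annulus lemma with solution-independent constants, which rests on two uniform inputs: a uniform radius for the neighborhood of the critical-loop set on which the isoperimetric inequality holds (available from compactness of $X$ and properness of $\mu$ on $\ov{({\rm Crit} W)^{\rm ss}}$, Hypothesis \ref{hyp33}), and a uniform bound on the isotropy order $m$ of the asymptotic loop. The latter holds because $m$ divides $r$: the monodromy lives in $\ubar K$, its R-symmetry part $e^{2\pi{\bf i} q_a}$ has order dividing $r$, and the $e^{2\pi\eta_a}\in K$ part is pinned down by freeness of the $K$-action on $\mu^{-1}(0)$ (Hypothesis \ref{hyp33}). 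Once these uniform constants are in place, the rest is routine bookkeeping with the standard cylindrical vortex estimates.
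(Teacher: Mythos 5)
Your proposal is correct and follows essentially the same route as the paper: the paper reduces the statement over each cylindrical end to the asymptotic theory of the symplectic vortex equation with target $({\rm Crit}\,W)^{\rm ss}$, using Theorem \ref{thm41}, Lemma \ref{lemma44}, Corollary \ref{cor45} and the merging of the R-symmetry with the $K$-action from Hypothesis \ref{hyp33}, and then invokes the known exponential decay results of \cite{Chen_Wang_Wang} and \cite{Venugopalan_quasi}. The only difference is that you spell out the internals of those cited results (isoperimetric inequality, annulus lemma, $\epsilon$-regularity, bootstrapping, and the uniformity of $\delta$ via the asymptotic isotropy order dividing $r$), which the paper leaves to the references.
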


As we have explained in \cite[Section 4]{Tian_Xu_geometric}, Theorem \ref{thm41} and Corollary \ref{cor45} imply that a bounded solution represents an equivariant homology class in $V$ for the action by $\ubar K = K \times U(1)$. Given $g\geq 0$, $n \geq 0$ with $2g+n \geq 3$ and a class $\ubar B \in H_2^{\ubar K} ( V; {\mb Z})$, let 
\beqn
\tilde {\mc M}_{g, n}^r ( V, G, W, \mu; \ubar B)
\eeqn
be the set of pairs $({\mc C}, {\bm v})$ where ${\mc C}$ is a smooth $r$-spin curve of type $(g, n)$ and ${\bm v} = (P, A, u)$ is a smooth bounded solution to the gauged Witten equation over ${\mc C}$ which represents $\ubar B$. 

Moreover, as explained in \cite[Remark 4.6]{Tian_Xu_geometric}, a solution to the gauged Witten equation corresponds naturally to a solution to the vortex equation (with a small perturbation). Using the group homomorphism $\iota_W: U(1) \to K$, the bundle $\ubar P \to {\mc C}$ induces a $K$-bundle $P' \to {\mc C}$ and the connection $\ubar A$ induces a connection $A'$ on $P'$. The section $u$ becomes a section of $P'(V)$ and the triple $(P', A', u')$ solves the equation 
\begin{align}\label{induced_vortex}
&\ \ov\partial_{A'} u' = 0,\ &\ * F_{A'} + \mu(u') = * F_{\iota_W(A_{\mc C})}
\end{align}
where the term $* F_{\iota_W(A_{\mc C})}$ decays exponentially along cylindrical ends. 

\section{Moduli Space of Stable Solutions}\label{section5}

In the companion paper \cite{Tian_Xu_geometric} we defined the topology of the moduli space of solutions to the $r$-spin equation. We defined a natural compactification of the moduli space by allowing the so-called soliton components and proved the corresponding compactness theorem. In this section we quickly recall the related notions before we state the main theorem on the virtual cycle.


\subsection{Solitons}

In our context, solitons are bounded solutions to the gauged Witten equation over the infinite cylinder 
\beqn
\Thetait:= (-\infty, +\infty) \times S^1.
\eeqn
This notion was defined in \cite[Definition 5.1]{Tian_Xu_geometric} so we only briefly recall here. The set of isomorphism classes of $r$-spin structures over $\Thetait$ is a torsor of ${\mb Z}_r$. By our convention (see Subsection \ref{subsection23}) the line bundle $L_{\mc C}$ which is trivial has a flat Hermitian metric with Chern connection written as 
\beqn
A_{\mc C} = d + {\bf i} q dt,\ q \in \frac{1}{r} \{ 0, 1, \ldots, r-1\}.
\eeqn
A $q$-{\it soliton} is a bounded solution to gauged Witten equation over the infinite cylinder $\Thetait = {\mb R} \times S^1$ equipped with translation invariant metric. More precisely, a soliton is a gauged map ${\bm v}= (u, \phi, \psi)$ from $\Thetait$ to $V$ which solves the equation
\begin{align}\label{eqn51}
&\ \left\{ \begin{array}{r} \partial_s u + {\mc X}_{\phi} + J  (\partial_t u + {\mc X}_{{\bf i} q + \psi}) = 0,\\
\partial_s \psi - \partial_t \phi + [\phi, \psi ] + \mu(u) = 0.
\end{array}
\right. \ &\ u(\Thetait) \subset \ov{{\rm Crit}W \cap V^{\rm ss}}.
\end{align}
By Corollary \ref{cor45} and Theorem \ref{thm46} one can gauge transform a soliton so that
\begin{enumerate}

\item as $s\to \pm\infty$, $\phi$ and $\iota_W({\bf i} q) + \psi$ are asymptotic to zero (exponentially fast).

\item There exists $x_\pm \in {\rm Crit} W \cap \mu^{-1}(0)$ such that
\beq\label{eqn52}
\lim_{s \to \pm \infty} u(s, t) = x_\pm.
\eeq
\end{enumerate}
So as before a soliton represents an equivariant curve class $\ubar B \in H_2^{\ubar K} (V; {\mb Z})$, and we have the following energy identity
\beqn
E (u, \phi, \psi) = \langle [\omega_V^{\ubar K}, \ubar B] \rangle.
\eeqn
On the other hand, using a compactness argument one can prove that the energy of nontrivial solitons are bounded from below. We omit the details.

\subsection{Decorated dual graphs}

We briefly recall the notion of dual graphs modelling $r$-spin curves. The details can be found in \cite[Section 5.2]{Tian_Xu_geometric}. Recall that a marked smooth or nodal curve can be described by a dual graph $\uds \Gamma$ which contains a set of vertices $\V(\uds\Gamma)$ (corresponding to irreducible components), a set of edges $\E(\uds\Gamma)$ (corresponding to nodes), a set of tails $\T(\uds \Gamma)$ (corresponding to markings), and a genus function on $\V(\uds\Gamma)$. The stability condition can be described in terms of this dual graph. To model $r$-spin curves, we need certain extra structures. Given a dual graph $\uds \Gamma$, let $\uds {\tilde \Gamma}$ be the dual graph obtained by cutting off all edges (including loops), i.e., the graph of the normalization. So $\E( \uds{\tilde \Gamma}) = \emptyset$ and there is a natural inclusion $\T (\uds \Gamma) \hookrightarrow \T( \uds{\tilde \Gamma})$. Denote 
\beqn
\tilde \E(\uds\Gamma) = \T (\uds{\tilde \Gamma}) \setminus \T(\uds \Gamma).
\eeqn
An {\it $r$-spin dual graph} (see \cite[Definition 5.4]{Tian_Xu_geometric}), denoted by $\Gamma$, consists of an underlying dual graph $\uds \Gamma$ and a map
\beqn
\tt{m}: \T(\uds{\tilde \Gamma}) \to {\mb Z}_r
\eeqn
satisfying the following conditions.
\begin{enumerate}

\item If $\tilde e_-, \tilde e_+ \in \tilde \E(\uds\Gamma) \subset \T(\uds{\tilde \Gamma})$ are obtained by cutting off $e \in \E( \uds\Gamma)$, then $\tt{m}(\tilde e_-) \tt{m}( \tilde e_+) = 1$. 

\item If the genus of a vertex $v \in \V(\uds\Gamma)$ is zero, then its valence is at least two.

\item If the genus of a vertex $v\in \V(\uds\Gamma)$ is zero and the valence is two, the two edges connected to $v$ do not form a loop. Moreover, for the two tails $t, t' \in \T(
\uds{\tilde \Gamma})$ that are attached to $v$, we have $\tt{m}(t) \tt{m}(t') = 1$. 
\end{enumerate}
We also had the notion of maps between $r$-spin dual graphs (see \cite[Definition 5.5]{Tian_Xu_geometric}. We do not repeat it here. 

Given an $r$-spin dual graph $\Gamma$, there is a notion of {\it stabilization}, denoted by $\Gamma^{\rm st}$. 

The combinatorial type of an $r$-spin curve can be described by an $r$-spin dual graph. For a stable $r$-spin dual graph $\Gamma$, let ${\mc M}{}_{\Gamma}^r \subset \ov{\mc M}{}_{g, n}^r$ be the subset of points corresponding to stable $r$-spin curves that have $r$-spin dual graph $\Gamma$. There is a partial order $\preq$ among $r$-spin dual graphs and the closure of ${\mc M}{}_{\Gamma}^r$ can be described as 
\beqn
\ov{\mc M}{}_{\Gamma}^r = \bigsqcup_{\Pi \preq \Gamma, \Pi = \Pi^{\rm st} } {\mc M}{}_{\Pi}^r. 
\eeqn

The combinatorial type of solutions to the gauged Witten equation over smooth or nodal $r$-spin curves can be described by {\it decorated dual graphs}. Such an object is typically denoted by $\sf\Gamma$, which consists of an $r$-spin dual graph $\Gamma$ and a curve class $\ubar B_v\in H_2^{\ubar K}( V; {\mb Z})$ for each vertex $v$ (see \cite[Definition 5.7]{Tian_Xu_geometric}). The partial order of $r$-spin dual graphs can be lifted to a partial order among decorated dual graphs, which is still denoted by $\preq$.

\subsection{Stable solutions}\label{subsection53}

We recall the definition of stable solutions.

\begin{defn}\cite[Definition 5.8]{Tian_Xu_geometric}
Let $\sf \Gamma$ be a decorated dual graph. A {\it solution to the gauged Witten equation of combinatorial type $\sf\Gamma$} consists of a smooth or nodal $r$-spin curve ${\mc C}$ of combinatorial type $\Gamma$, and a collection of objects
\beqn
{\bm v}:= \Big[ ({\bm v}_v)_{ v \in \V(\Gamma)} \Big]
\eeqn
Here for each stable vertex $v \in \V (\Gamma)$, ${\bm v}_v = (P_v, A_v, u_v)$ is a solution to the gauged Witten equation over the smooth $r$-spin curve ${\mc C}_v$; for each unstable vertex $v \in \V (\Gamma)$, ${\bm v}_v = (P_v, A_v, u_v)$ is a soliton. They satisfy the following conditions.
\begin{enumerate}
\item For each vertex $v \in \V (\Gamma)$, the equivariant curve class represented by ${\bm v}_v$ coincides with $\ubar B_v$ (which is contained in the data $\sf \Gamma$).


\item For each edge $e \in \E(\Gamma)$ corresponding to a node in ${\mc C}$, let $\tilde e_-$ and $\tilde e_+$ be the two tails of $\tilde \Gamma$ obtained by cutting off $e$, which are attached to vertices $v_-$ and $v_+$ (which could be equal), corresponding to preimages $\tilde w_-$ and $\tilde w_+$ of $w$ in $\tilde {\mc C}$. 
Then the evaluations of ${\bm v}_{v_-}$ at $w_-$ and ${\bm v}_{v_+}$ at $w_+$ (which exist by Corollary \ref{cor45}) are the same point of the classical vacuum $X$. 
\end{enumerate}

The solution is called {\it stable} if $\sf\Gamma$ is stable. The energy (resp. homology class) of a solution ${\bm v}$ is the sum of energies (resp. homology classes) of each component.
\end{defn}

One can define an equivalence relation among all stable solutions, and define the corresponding moduli spaces of equivalence classes. 

\begin{defn} \cite[Definition 5.9]{Tian_Xu_geometric}
Given a decorated dual graph $\sf\Gamma$, let 
\beqn
{\mc M}_{\sf\Gamma}:= {\mc M}{}_{\sf\Gamma} (V, G, W, \mu ) 
\eeqn
be the set of equivalence classes of stable solutions of combinatorial type $\sf\Gamma$. Given $g, n$ with $2g + n \geq 3$, define 
\beqn
\ov{\mc M}{}_{g, n}^r (V, G, W, \mu):= \bigsqcup_{{\sf \Gamma}} {\mc M}_{\sf \Gamma}(V, G, W, \mu)
\eeqn
where the disjoint union is taken over all stable decorated dual graphs. 
\end{defn}

In \cite[Section 5]{Tian_Xu_geometric} we defined the topology on $\ov{\mc M}{}_{g,n}^r(V, G, W, \mu)$. If we define 
\beqn
\ov{\mc M}{}_{g,n}^r(V, G, W, \mu; \uds B)
\subset \ov{\mc M}{}_{g,n}^r(V, G, W, \mu)
\eeqn
be the subspace of stable solutions in curve class $\uds B$, then we proved (see \cite[Theorem 5.14]{Tian_Xu_geometric}) that it is a compact space, with each $\ov{\mc M}{}_{\sf \Gamma}$ a closed subset.

\section{Topological Virtual Orbifolds and Virtual Cycles}\label{section6}

We recall our framework of constructing virtual fundamental cycles associated to moduli problems. Such constructions, usually called ``virtual technique'', has a long history since it first appeared in algebraic Gromov--Witten theory by \cite{Li_Tian_2}. The current method is based on the topological approach of \cite{Li_Tian}.	Since our approach has been detailed in \cite[Appendix]{Tian_Xu_2021}, we will not repeat the same discussions but frequently refer to corresponding texts.

\subsection{Topological manifolds and transversality}

Recall that an $n$-dimensional topological manifold $M$ is a second countable Hausdorff space $M$ which is locally homeomorphic to ${\mb R}^n$; a topological submanifold of $M$ is a subset $S \subset M$ which is a topological manifold with the subspace topology. A map $f: N \to M$ between topological manifold is an embedding if it is a homeomorphism onto its image; it is called a {\it locally flat embedding} if for any $p \in f(N)$, there is a local coordinate $\varphi_p: U_p \to {\mb R}^m$ where $U_p \subset {\mb R}^n$ is an open neighborhood of $p$ such that 
\beqn
\varphi_p (f(N) \cap U_p) \subset {\mb R}^n \times \{0\}.
\eeqn
In this paper, without further clarification, all embeddings of topological manifolds are assumed to be locally flat (in fact having normal microbundles).

\subsubsection{Microbundles}

The discussion of topological transversality needs the concept of microbundles, which was introduced by Milnor \cite{Milnor_micro_1}.

\begin{defn}{\rm (Microbundles)}
Let $B$ be a topological space. 
\begin{enumerate}

\item A {\it microbundle} over a $B$ is a triple $(E, i, p)$ where $E$ is a topological space, $i: M \to E$ (the zero section map) and $p: E \to M$ (the projection) are continuous maps, satisfying the following conditions.
\begin{enumerate}
\item $p \circ i = {\rm Id}_M$.

\item For each $b \in B$ there exist an open neighborhood $U \subset B$ of $b$ and an open neighborhood $V \subset E$ of $i(b)$ with $i(U) \subset V$, $j(V) \subset U$, such that there is a homeomorphism $V \cong U \times {\mb R}^n$ which makes the following diagram commutes.
\beqn
\vcenter{ \xymatrix{  &  V \ar[dd] \ar[rd]^p & \\
                     U \ar[ru]^i \ar[rd]^i & & U \\
                      & U \times {\mb R}^n \ar[ru]^p & } }.
\eeqn
\end{enumerate}

\item Two microbundles $\xi = (E, i, p)$ and $\xi' = (E', i', p')$ over $B$ are {\it equivalent} if there are open neighborhoods of the zero sections $W \subset E$, $W' \subset E'$ and a homeomorphism $\rho: W \to W'$ which is compatible with the structures of the two microbundles.
\end{enumerate}
\end{defn}


\begin{defn}{\rm (Normal microbundles)}
Let $f: S \to M$ be a topological embedding.

\begin{enumerate}

\item  A {\it normal microbundle} of $f$ is a pair $\xi = (N, \nu)$ where $N \subset M$ is an open neighborhood of $f(S)$ and $\nu: N \to S$ is a continuous map such that together with the natural inclusion $S \hookrightarrow N$ they form a microbundle over $N$. A normal microbundle is also called a {\it tubular neighborhood}.

\item Two normal microbundles $\xi_1 = (N_1, \nu_1)$ and $\xi_2 = (N_2, \nu_2)$ are {\it equivalent} if there is another normal microbundle $(N, \nu)$ with $N \subset N_1 \cap N_2$ and 
\beqn
\nu_1|_N = \nu_2|_N = \nu.
\eeqn
An equivalence class is called a {\it germ}.
\end{enumerate}
\end{defn}

For example, for a smooth submanifold $S \subset M$ in a smooth manifold, there is always a normal microbundle. Its equivalence class is not unique though, as we need to choose the projection map. However, different normal microbundles are always isotopic. As a result, intersection numbers will be independent of the choice of normal microbundles.

\subsubsection{Transversality}

We first recall the notion of microbundle transversality. Let $Y$ be a topological manifold, $X \subset Y$ be a submanifold and $\xi =  (N, \nu)$ be a normal microbundle of $X$. Let $f: M \to Y$ be a continuous map. 

\begin{defn} \label{defn63} {\rm (Microbundle transversality)}
Let $Y$ be a topological manifold, $X \subset Y$ be a submanifold and $\xi_X$ be a normal microbundle of $X$. Let $f: M \to Y$ be a continuous map. We say that $f$ is {\it transverse to $\xi$} if the following conditions are satisfied.
\begin{enumerate}

\item $f^{-1}(X)$ is a submanifold of $M$.

\item $f^{-1}(X)$ has a normal microbundle $\xi' = (N', \nu')$ making the diagram
\beqn
\xymatrix{ N' \ar[d] \ar[r]^f &  N \ar[d] \\
           f^{-1}(X) \ar[r]   &  X }
\eeqn
commute and the inclusion $f: N' \to N$ induces an equivalence of microbundles.
\end{enumerate}
More generally, if $C \subset M$ is any subset, then we say that $f$ is transverse to $X$ {\it near} $C$ if the restriction of $f$ to an open neighborhood of $C$ is transverse to $X$. 
\end{defn}
It is easy to see that the notion of being transverse to $\xi$ only depends on the germ of $\xi$. 


The following theorem, which is of significant importance in our virtual cycle construction, shows that one can achieve transversality by arbitrary small perturbations. The reader can find references in \cite[Remark A.8]{Tian_Xu_2021}.

\begin{thm} \label{thm_transversality} {\rm (Topological transversality theorem)} Let $Y$ be a topological manifold and $X \subset Y$ be a proper submanifold. Let $\xi$ be a normal microbundle of $X$. Let $C \subset D \subset Y$ be closed sets. Suppose $f: M \to Y$ is a continuous map which is microbundle transverse to $\xi$ near $C$. Then there exists a homotopic map $g: M \to Y$ which is transverse to $\xi$ over $D$ such that the homotopy between $f$ and $g$ is supported in an open neighborhood of $f^{-1}( (D \setminus C) \cap X)$. 
\end{thm}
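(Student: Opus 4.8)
The plan is to reduce the statement to the special case of topological transversality for maps into a Euclidean space, and then to establish that case by a chart-by-chart induction over $M$, invoking the Kirby--Siebenmann handle-straightening machinery in the local step. First I would linearize using the normal microbundle $\xi = (N, \nu)$. By the very definition of a microbundle, every point of $X$ has a neighborhood $U$ in $X$ and an open set $V \subset N$ with $U \subset V$ and $\nu(V) \subset U$, together with a homeomorphism $V \cong U \times {\mb R}^n$ carrying $\nu|_V$ to the first projection and the zero section $U \hookrightarrow N$ to $u \mapsto (u,0)$. Over such a chart, setting $g := \operatorname{pr}_{{\mb R}^n} \circ f \colon f^{-1}(V) \to {\mb R}^n$, one has $f^{-1}(X) \cap f^{-1}(V) = g^{-1}(0)$, and unwinding Definition \ref{defn63} shows that $f$ is microbundle transverse to $\xi$ over a closed set exactly when, in every such chart, $g$ is transverse to $0 \in {\mb R}^n$ in the sense that $g^{-1}(0)$ is a locally flat submanifold carrying a normal microbundle on which $g$ restricts to a microbundle equivalence onto the standard one. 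Since transversality to $\xi$ is a local condition on $Y$ and the region near $C$ is to be left untouched, it therefore suffices to prove: a continuous map $g$ from a topological manifold to ${\mb R}^n$ that is transverse to $0$ near a closed set $C'$ admits a homotopy, supported in an arbitrarily small neighborhood of $g^{-1}(0) \setminus (\text{neighborhood of } C')$, to a map transverse to $0$ over a prescribed closed set $D'$.

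Next I would run the induction. Choose a countable, locally finite cover of $M$ by coordinate balls $B_1, B_2, \dots$, each homeomorphic to ${\mb R}^{\dim M}$, together with smaller closed balls $E_i \subset B_i$ still covering $M$. Assume inductively that $g$ has been homotoped to be transverse to $0$ over $C' \cup \big(D' \cap (E_1 \cup \cdots \cup E_{i-1})\big)$, with all homotopies so far supported away from $C'$ and inside a controlled neighborhood of $g^{-1}(0)$. To enlarge the transverse region by $D' \cap E_i$, restrict attention to the Euclidean chart $B_i$ and apply the Euclidean topological transversality theorem there. This is the single genuinely hard ingredient: it has no soft proof and rests on Kirby's torus trick together with the handle-straightening and concordance-implies-isotopy theorems of Kirby--Siebenmann (with Freedman--Quinn supplying the dimension $4$ case, and the classical results covering dimensions $\le 3$); the relevant references are collected in \cite[Remark A.8]{Tian_Xu_2021}. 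The perturbation it produces is supported inside $B_i$ and near $g^{-1}(0)$, so by local finiteness of the cover the union of all these perturbations is a well-defined homotopy; shrinking the $i$-th support so that the sizes tend to $0$ ensures that the infinite composition converges to a continuous map $g_\infty$, still agreeing with $g$ outside a neighborhood of $g^{-1}(0) \setminus (\text{neighborhood of } C')$ and transverse to $0$ over all of $D'$.

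Translating back through the microbundle charts turns $g_\infty$ into a map $g \colon M \to Y$ homotopic to $f$, microbundle transverse to $\xi$ over $D$, with the homotopy supported in an open neighborhood of $f^{-1}\big((D \setminus C) \cap X\big)$, which is the assertion. The main obstacle is exactly the local Euclidean step: there is no elementary argument, and one must import the full strength of topological handle theory and treat dimension $4$ separately. A secondary technical point, easily handled by the local finiteness of the chosen cover and a diameter-shrinking bookkeeping of supports, is to guarantee that the countable composition of homotopies converges and that its support does not accumulate onto $C$.
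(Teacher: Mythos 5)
The paper does not prove this theorem at all: it is imported as a known, deep result of topological manifold theory, with the references deferred to \cite[Remark A.8]{Tian_Xu_2021} (Kirby--Siebenmann handle straightening, with Quinn/Freedman--Quinn covering dimension four). So there is no proof in the paper to compare yours against; what you have written is an outline of how that cited literature proves the result, and, like the paper, it ultimately treats the hard local Euclidean statement as a black box rather than proving it. As an outline it follows the standard pattern (linearize via the microbundle chart so the condition becomes transversality of $\operatorname{pr}_{{\mb R}^n}\circ f$ to $0$, then induct over a locally finite cover with supports shrinking and kept away from $C$), which is faithful to the cited proofs.

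One point in your reduction deserves more care if this were to stand as a proof. You assert that microbundle transversality over a set is \emph{equivalent} to chartwise transversality of the compositions $g=\operatorname{pr}_{{\mb R}^n}\circ f$. But Definition \ref{defn63} demands a single normal microbundle $\xi'=(N',\nu')$ on $f^{-1}(X)$, defined over the whole region in question, together with a global commuting diagram inducing an equivalence with $\xi$; normal microbundles in the topological category are neither unique nor automatically extendable, so the locally produced microbundles from your chart-by-chart step do not glue for free, and your induction must also not destroy the transversality (including the germ of $\xi'$) already achieved near $C$ and on earlier charts. This relative, germ-preserving gluing is precisely what the relative handle-by-handle form of the Kirby--Siebenmann theorem provides, so the gap is filled by the literature you cite --- but as written your argument asserts it rather than deriving it, and it is the one place where the reduction is not purely formal.
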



In most of the situations of this paper, the notion of transversality is about sections of vector bundles. Suppose $f: M \to {\mb R}^n$ is a continuous map. The origin $0 \in {\mb R}^n$ has a canonical normal microbundle. Therefore, one can define the notion of transversality for $f$ as a special case of Definition \ref{defn63}. Now suppose $E \to M$ is a vector bundle and 
\beqn
\varphi_U: E|_U \to U \times {\mb R}^n
\eeqn
is a local trivialization. Each section $s: M \to E$ induces a map $s_U: U \to {\mb R}^n$. Then we say that $s$ is transverse over $U$ if $s_U$ is transverse to the origin of ${\mb R}^n$. This notion is clearly independent of the choice of local trivializations. Then $s$ is said to be transverse if it is transverse over a sufficiently small neighborhood of every point of $M$. 


\begin{thm}\label{thm65}\cite[Theorem A.9]{Tian_Xu_2021}
Let $M$ be a topological manifold and $E \to M$ be an ${\mb R}^n$-bundle. Let $C\subset D \subset M$ be closed subsets. Let $s: M \to E$ be a continuous section which is transverse near $C$. Then there exists another continuous section which is transverse near $D$ and which agrees with $s$ over a small neighborhood of $C$. 
\end{thm}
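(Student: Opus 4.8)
The plan is to deduce Theorem \ref{thm65} from the general microbundle transversality theorem (Theorem \ref{thm_transversality}) by viewing a section $s\colon M\to E$ of the ${\mb R}^n$-bundle $E$ as a continuous map into the total space $E$, with the zero section $M\hookrightarrow E$ playing the role of the submanifold $X\subset Y$. First I would record that the zero section $\zeta\colon M\to E$ is a locally flat (indeed, locally trivial) submanifold of the total space $E$, and that it carries a canonical normal microbundle $\xi_E=(E,\pi)$, where $\pi\colon E\to M$ is the bundle projection: locally, a trivialization $\varphi_U\colon E|_U\cong U\times{\mb R}^n$ identifies this with the product microbundle $U\times{\mb R}^n\to U$, so the microbundle axioms are immediate. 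A continuous section $s$ is then the same datum as a continuous map $\tilde s=s\colon M\to E$ with $\pi\circ\tilde s=\mathrm{Id}_M$; in particular $\tilde s$ is automatically a (locally flat) embedding, but what matters is only that it is continuous, and that $\tilde s^{-1}(\zeta(M))=s^{-1}(0)$, the zero locus of the section.

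Next I would verify that the fibrewise notion of transversality of $s$ introduced just before the statement — $s_U\colon U\to{\mb R}^n$ transverse to $0\in{\mb R}^n$ over small enough $U$ — coincides with microbundle transversality of the map $\tilde s\colon M\to E$ to the normal microbundle $\xi_E$ of $\zeta(M)$ in the sense of Definition \ref{defn63}. This is a local statement: over a trivializing chart, $\tilde s$ becomes $u\mapsto(u,s_U(u))\in U\times{\mb R}^n$, the submanifold $\zeta(M)$ becomes $U\times\{0\}$ with its product microbundle, and $\tilde s^{-1}(\zeta(M))\cap U=s_U^{-1}(0)$; transversality of $s_U$ to the origin, which by definition means $s_U^{-1}(0)$ is a submanifold carrying a normal microbundle pulled back from the canonical one on $\{0\}\subset{\mb R}^n$ compatibly, is exactly the condition that $\tilde s$ is microbundle-transverse to $\xi_E$ over $U$. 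Since both notions are manifestly local and chart-independent, they agree globally, and likewise "transverse near $C$" matches on both sides.

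With this dictionary in place, the result follows by applying Theorem \ref{thm_transversality} with $Y=E$ (a topological manifold, being the total space of a vector bundle over a topological manifold), $X=\zeta(M)$, $\xi=\xi_E$, the map $f=\tilde s=s$, and the same closed sets $C\subset D\subset M\subset E$ (using $C,D$ as subsets of the zero section, hence closed in $E$). Theorem \ref{thm_transversality} furnishes a map $g\colon M\to E$ homotopic to $s$, transverse to $\xi_E$ over $D$, with the homotopy supported in an open neighborhood of $s^{-1}((D\setminus C)\cap X)=s^{-1}(0)\cap(D\setminus C)$, in particular disjoint from a neighborhood of $C$, so that $g=s$ near $C$. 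The last point to check is that $g$ can be taken to still be a section, i.e. $\pi\circ g=\mathrm{Id}_M$: one composes $g$ with the microbundle projection, or more simply observes that the topological transversality theorem is proved chart-by-chart via perturbations that move only the ${\mb R}^n$-fibre coordinate (the standard Kirby–Siebenmann / Quinn transversality machinery, referenced in \cite[Remark A.8]{Tian_Xu_2021}, perturbs only the normal directions), so fibre-preserving perturbations suffice and $g$ is a genuine continuous section of $E$.

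The main obstacle is precisely this last fibre-preservation issue: Theorem \ref{thm_transversality} as stated only promises a homotopic \emph{map} $g\colon M\to E$, and one must argue that the perturbation can be carried out within the space of sections rather than arbitrary maps. The cleanest way to handle this is to note that the proof of the abstract transversality theorem is local over $M$ and uses the product structure $U\times{\mb R}^n$ of a trivializing chart, inside which one perturbs by adding a fibrewise bump function with values in ${\mb R}^n$; such a perturbation automatically preserves the projection to $U$, hence patches together (via a partition of unity on $M$, with the perturbations supported away from $C$) into a section. Alternatively, and more conceptually, one composes the output $g$ with the retraction $\pi$ — but since this need not preserve transversality, the honest route is to invoke that the construction underlying Theorem \ref{thm_transversality} is fibrewise, which is exactly how \cite[Theorem A.9]{Tian_Xu_2021} is set up. Everything else is a routine translation between the section picture and the map-into-total-space picture.
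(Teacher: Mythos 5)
Your reduction fails at its first step: the dictionary you set up between the paper's fibrewise notion of transversality of a section and microbundle transversality of the map $\tilde s=s\colon M\to E$ to the zero section $\zeta(M)$ equipped with its canonical normal microbundle $(E,\zeta\circ\pi)$ is not correct. Unwind Definition \ref{defn63} in this situation: one needs a normal microbundle $(N',\nu')$ of $s^{-1}(0)$ in $M$ making the square commute, i.e.\ $\nu(\tilde s(x))=\tilde s(\nu'(x))$ for all $x\in N'$. Since $\pi\circ s=\mathrm{Id}_M$, the left-hand side is $\zeta(x)$, while the right-hand side is $\zeta(\nu'(x))$ because $\nu'(x)\in s^{-1}(0)$; hence the condition forces $\nu'=\mathrm{Id}$ on the open neighborhood $N'$, which is impossible unless $s^{-1}(0)$ is open in $M$ (e.g.\ empty). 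Concretely, for $M={\mb R}$, $E={\mb R}\times{\mb R}$ and the section with graph map $x\mapsto(x,x)$, the section is transverse in the fibrewise sense, yet the commutativity requirement would force $(x,0)=(0,0)$ on a neighborhood of $0$. The same computation refutes your local claim that transversality of $s_U$ to $0\in{\mb R}^n$ is ``exactly'' microbundle transversality of the graph map $u\mapsto(u,s_U(u))$ to $U\times\{0\}$: the paper's notion is transversality of the map $s_U\colon U\to{\mb R}^n$ to the point $\{0\}$ (whose canonical normal microbundle is all of ${\mb R}^n$ with constant projection, so the commutativity condition is vacuous), not of the graph map to the zero section. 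Consequently Theorem \ref{thm_transversality} cannot be invoked with $Y=E$, $X=\zeta(M)$, $\xi=(E,\zeta\circ\pi)$, $f=s$: the hypothesis ``transverse near $C$'' already fails, and no section could satisfy the desired conclusion either. This obstruction is prior to, and more serious than, the fibre-preservation issue you flag at the end; it is exactly the difficulty the actual proof is designed to circumvent.

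The argument behind the cited \cite[Theorem A.9]{Tian_Xu_2021} therefore stays in local trivializations throughout: cover $D$ by a countable, locally finite family of precompact trivializing charts $U_i$ with shrinkings $V_i\sqsubset U_i$ still covering $D$, regard $s|_{U_i}$ as a map $s_i\colon U_i\to{\mb R}^n$, and apply Theorem \ref{thm_transversality} (with target ${\mb R}^n$ and submanifold the origin) inductively: at each stage one makes the section transverse near $\ov{V_i}$ by a perturbation supported away from the region where transversality has already been arranged (starting from a neighborhood of $C$) and away from $\partial U_i$, so the modified map agrees with the previous section near the chart boundary and hence glues to a global continuous section; local finiteness makes the construction stabilize near every point. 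If you wanted to rescue a global, total-space formulation you would have to replace the projection microbundle by a normal microbundle of $\zeta(M)$ ``sheared'' along $s$, but such a microbundle depends on the very section you intend to perturb, and Theorem \ref{thm_transversality} only returns a map homotopic to $s$, not a section, so the translation back to fibrewise transversality is lost; the chart-by-chart induction avoids both problems.
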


\begin{cor}
Let $M$ be a topological manifold without boundary, and let $s_1, s_2: M \to E$ be two transverse sections which are homotopic. Then the two submanifolds $S_1:= s_1^{-1}(0)$ and $S_2:= s_2^{-1}(0)$ are cobordant. 
\end{cor}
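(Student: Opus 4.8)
The plan is to use the homotopy $H : M \times [0,1] \to E$ between $s_1$ and $s_2$ to produce a cobordism, after making $H$ transverse rel the two ends. First I would observe that $E \times [0,1] \to M \times [0,1]$ is again an $\mb{R}^n$-bundle, and that the homotopy $H$ is canonically a section $\wt{H}$ of this pulled-back bundle over the topological manifold with boundary $M \times [0,1]$ (here I use that a cylinder over a topological manifold is a topological manifold with boundary, and that locally flat submanifolds and normal microbundles behave well under taking products with $[0,1]$). By hypothesis $\wt H$ restricts to $s_1$ over $M \times \{0\}$ and to $s_2$ over $M \times \{1\}$, both of which are already transverse; so $\wt H$ is transverse near the closed set $C := M \times \{0,1\}$.

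Next I would apply the relative transversality result for sections of vector bundles, Theorem \ref{thm65}, with $C = M \times \{0,1\}$ and $D = M \times [0,1]$: there is a section $\wt H'$ of $E \times [0,1] \to M \times [0,1]$ which is transverse near all of $D$ and agrees with $\wt H$ on a neighborhood of $C$. (One small point to address: Theorem \ref{thm65} is stated for manifolds without boundary in the preceding corollary's phrasing, but the transversality statement itself is local and the two ends are already transverse, so the modification takes place in the interior $M \times (0,1)$; alternatively one collars the boundary and works on the open manifold $M \times (-\epsilon, 1+\epsilon)$, extending $\wt H$ trivially in the collar direction near the ends.) Then $W := (\wt H')^{-1}(0) \subset M \times [0,1]$ is a topological submanifold with a normal microbundle, and because $\wt H'$ agrees with $s_i$ near $M \times \{i\}$, the boundary of $W$ is exactly $S_1 \sqcup S_2$, with $W$ meeting the two boundary components of $M \times [0,1]$ transversally (i.e. in a collar $S_i \times [0,\epsilon)$). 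Hence $W$ is the desired cobordism from $S_1$ to $S_2$.

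The main obstacle I expect is purely technical rather than conceptual: making sure the transversality perturbation is carried out \emph{relative to the boundary} in the topological category, i.e. that the microbundle transversality theorem can be applied on a manifold with boundary while leaving a neighborhood of the boundary untouched. In the smooth category this is the standard relative transversality argument; in the topological category it follows from Theorem \ref{thm_transversality} (or its vector-bundle corollary Theorem \ref{thm65}) applied after collaring the boundary, using that the two ends are already transverse so that $C$ is a genuine closed subset over which no change is needed. A secondary point worth a sentence is that $W$ is a manifold \emph{with boundary} and that its boundary is precisely $S_1 \sqcup S_2$ (not larger); this is immediate from the fact that $\wt H'$ coincides with the transverse sections $s_1, s_2$ near the respective ends, so the zero set near $M \times \{i\}$ is $S_i \times [0,\epsilon)$.
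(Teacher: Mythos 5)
Your argument is correct and is exactly the intended one: the paper states this corollary without proof as an immediate consequence of Theorem \ref{thm65}, namely viewing the homotopy as a section of the pulled-back bundle over $M \times [0,1]$ and applying relative transversality with $C = M \times \{0,1\}$. Your handling of the boundary subtlety (making the homotopy constant near the ends / collaring so the zero set is a product $S_i \times [0,\epsilon)$ there) is the right way to make the step rigorous in the topological category.
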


Another scenario where we will apply the transversality theorem is the transversality to a smooth submanifold. If $Y \subset M$ is a smooth submanifold, then $Y$ admits a tubular neighborhood and a unique isotopy class of normal microbundles. Towards an intersection theory, we do not need to specify a normal microbundle to smooth submanifolds, although the notion of transversality is certainly different for different choices.

\subsection{Topological orbifolds and orbibundles}

For simplicity we use Satake's notion of V-manifolds \cite{Satake_orbifold} instead of groupoids to treat orbifolds, and only discuss it in the topological category.\footnote{When discussing non-effective orbifolds, the usual way is to take a further stabilization. Namely, one can replace a non-effective orbifold by the total space of an orbifold vector bundle over it, which can be chosen to be effective.}

\begin{defn}
Let $M$ be a second countable Hausdorff topological space.

\begin{enumerate}

\item Let $x \in M$ be a point. A topological orbifold chart (with boundary) of $x$ consists of a triple $(\tilde U_x, \Gammait_x, \varphi_x)$, where $\tilde U_x$ is a topological manifold with possibly empty boundary $\partial \tilde U_x$, $\Gammait_x$ is a finite group acting continuously  on $( \tilde U_x, \partial \tilde U_x)$ and
\beqn
\varphi_x: \tilde U_x/\Gammait_x \to M
\eeqn
is a continuous map which is a homeomorphism onto an open neighborhood of $x$. Denote the image $U_x = \varphi_x(\tilde U_x/ \Gammait_x) \subset M$ and denote the composition 
\beqn
\tilde \varphi_x: \xymatrix{ \tilde U_x \ar[r] & \tilde U_x/ \Gammait_x \ar[r]^-{\varphi_x} & M}.
\eeqn

\item If $p \in \tilde U_x$, take $\Gammait_p = (\Gammait_x)_p \subset \Gammait_x$ the stabilizer of $p$. Let $\tilde U_p \subset \tilde U_x$ be a $\Gammait_p$-invariant neighborhood of $p$. Then there is an induced chart (which we call a subchart) $( \tilde U_p, \Gammait_p, \varphi_p)$, where $\varphi_p$ is the composition
\beqn
\varphi_p: \xymatrix{ \tilde U_p/\Gammait_p \ar@{^{(}->}[r] & \tilde U_x/\Gammait_x \ar[r]^-{\varphi_p} & M
}.
\eeqn

\item Two charts $(\tilde U_x, \Gammait_x, \varphi_x)$ and $( \tilde U_y, \Gammait_y, \varphi_y)$ are {\it compatible} if for any $p \in \tilde U_x$ and $q \in \tilde U_y$ with $\varphi_x(p) = \varphi_y (q)\in M$, there exist an isomorphism $\Gammait_p \to \Gammait_q$, subcharts $ \tilde U_p \ni p$, $\tilde U_q \ni q$ and an equivariant homeomorphism $\varphi_{qp}: ( \tilde U_p, \partial \tilde U_p) \cong ( \tilde U_q, \partial \tilde U_q)$.

\item A {\it topological orbifold atlas} of $M$ is a {\it set} $\{ ( \tilde U_\alpha, \Gammait_\alpha, \varphi_\alpha)\ |\ \alpha \in I \}$ of topological orbifold charts of $M$ such that $M = \bigcup_{\alpha\in I} U_\alpha$ and for each pair $\alpha, \beta \in I$, $( \tilde U_\alpha, \Gammait_\alpha, \varphi_\alpha), ( \tilde U_\beta, \Gammait_\beta, \varphi_\beta)$ are compatible. Two atlases are {\it equivalent} if the union of them is still an atlas. A structure of topological orbifold (with boundary) is an equivalence class of atlases. A topological orbifold (with boundary) is a second countable Hausdorff space with a structure of topological orbifold (with boundary).
\end{enumerate}
\end{defn}
We will often skip the term ``topological'' in the rest of this paper. Similarly one can define orbifold vector bundles using bundle charts. The details can be found in \cite[Definition A.11]{Tian_Xu_2021} and skipped here.

\subsubsection{Embeddings}

Now we consider embeddings for orbifolds and orbifold vector bundles. First we consider the case of manifolds. Let $S$ and $M$ be topological manifolds and $E \to S$, $F \to M$ be continuous vector bundles. Let $\phi: S \to M$ be a topological embedding. A {\it bundle embedding} covering $\phi$ is a continuous map $\wh \phi: E \to F$ which makes the diagram
\beqn
\xymatrix{ E \ar[r]^{\wh\phi} \ar[d] & F \ar[d] \\
           S \ar[r]^\phi         & M }
\eeqn
commute and which is fibrewise a linear injective map. Since $\wh \phi$ determines $\phi$, we also call $\wh \phi: E \to F$ a bundle embedding. 

\begin{defn} {\rm (Orbifold embedding)}
Let $S$, $M$ be orbifolds and $f: S \to M$ is a continuous map which is a homeomorphism onto its image. $\phi$ is called an {\it embedding} if for any pair of orbifold charts, $(\tilde U, \Gammait, \varphi)$ of $S$ and $(\tilde V, \Piit, \psi)$ of $M$, any pair of points $p\in \tilde U$, $q \in  \tilde V$ with $\phi (\varphi(p)) = \psi(q)$, there are subcharts $( \tilde U_p, \Gammait_p, \varphi_p) \subset (\tilde U, \Gammait, \varphi)$ and $(\tilde V_q, \Piit_q, \psi_q) \subset ( \tilde V, \Piit, \psi)$, an isomorphism $\Gammait_{p} \cong \Piit_q$ and an equivariant locally flat embedding $\tilde \phi_{pq}: \tilde U_p \to \tilde V_q$ such that the following diagram commutes.
\beqn
\vcenter{
\xymatrix{ \tilde U_p \ar[d]_{\tilde \varphi_p} \ar[r]^{\tilde \phi_{pq}} & \tilde V_q \ar[d]^{\tilde \psi_q}\\
S \ar[r]^{\phi } & M } }
\eeqn
\end{defn}

\subsubsection{Multisections and perturbations}

The equivariant feature of the problem implies that transversality can only be achieved by multi-valued perturbations. We briefly recall the notion of multisections  originally used in \cite{Fukaya_Ono} and \cite{Li_Tian}. The reader can find the detailed treatment in the current context in our previous paper \cite[Appendix A]{Tian_Xu_2021}.

Multisections are modelled on multi-valued maps between sets (see \cite[Definition A.13]{Tian_Xu_2021}). If $A$, $B$ are sets, a multimap (with $l$ branches) is a map from $A$ to the symmetric product ${\mc S}^l(B)$. One can define a natural equivalence relation among multimaps with different numbers of branches. If both $A$ and $B$ are acted by a group $\Gamma$, then one can consider equivariant multimaps. If $A$ and $B$ are topological spaces, then we only consider continuous multimaps. A continuous multimap $f: A \to {\mc S}^l(B)$ is called {\it liftable} if there are continuous maps $f_1, \ldots, f_l: A \to B$ such that
\beqn
f(x) = [f^1(x), \ldots, f^l(x)] \in {\mc S}^l(B),\ \forall x \in A.
\eeqn
$f^1, \ldots, f^l$ are called {\it branches} of $f$. 

Multisections of orbifold vector bundles are locally equivariant multimaps which match on overlapping charts (see \cite[Definition A.14]{Tian_Xu_2021}). If $M$ is a topological orbifold and $E \to M$ is a vector bundle, a multisection, denoted by $s: M \overset{m}{\to} E$, is called {\it transverse} if it is ``locally liftable'' and for any liftable local representative $s_p: \tilde U_p \to {\mc S}^{l}({\mb R}^k)$, all branches are transverse to the origin of ${\mb R}^k$.

The space of continuous multisections is a module over the ring of continuous functions of the base orbifold. Continuous multisections also form a commutative monoid but not necessarily an abelian group: one can only invert the operation of adding a single-valued section. It is enough, though, since we have the notion of being transverse to a single-valued section which is not necessarily the zero section. 


In \cite{Tian_Xu_2021} we proved the analogue of the transversality theorem for multisections of orbifold vector bundles.

\begin{lemma}\label{lemma69}\cite[Lemma A.15]{Tian_Xu_2021}
Let $M$ be an orbifold and $E \to M$ be an orbifold vector bundle. Let $C \subset D \subset M$ be closed subsets. Let $S: M \to E$ be a single-valued continuous function and $t_C: M \overset{m}{\to} E$ be a multisection such that $S + t_C$ is transverse over a neighborhood of $C$. Then there exists a multisection $t_D: M \overset{m}{\to} E$ satisfying the following condition. 
\begin{enumerate}
\item $t_C = t_D$ over a neighborhood of $C$.

\item $S + t_D$ is transverse over a neighborhood of $D$. 
\end{enumerate}
Moreover, if $E$ has a continuous norm $\| \cdot \|$, then for any $\epsilon>0$, one can require that 
\beqn
\| t_D \|_{C^0} \leq \| t_C \|_{C^0} + \epsilon.
\eeqn
\end{lemma}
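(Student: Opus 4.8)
The statement to prove is the multisection analogue of the topological transversality theorem for orbifold vector bundles: given $C \subset D \subset M$ closed, a single-valued section $S$, and a multisection $t_C$ with $S+t_C$ transverse near $C$, one must produce $t_D$ agreeing with $t_C$ near $C$, with $S+t_D$ transverse near $D$, and with only a small $C^0$ increase.

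The plan is to reduce to the manifold-level statement (Theorem \ref{thm65}) chart by chart, using an exhaustion argument just like the one sketched (in the commented-out block) for Theorem \ref{thm65} itself. First I would choose a locally finite countable atlas $\{(\tilde U_i, \Gammait_i, \varphi_i)\}_{i=1}^\infty$ of orbifold bundle charts covering $D$, together with a shrinking $\overline{V_i} \sqsubset U_i$ whose images still cover $D$, and nested further shrinkings $V_i \sqsubset \cdots \sqsubset V_i^{k+1} \sqsubset V_i^k \sqsubset \cdots \sqsubset U_i$; I would also pick a cut-off function $\rho_C: M \to [-1,1]$ with $\rho_C^{-1}(-1) = C$ separating $C$ from $M \setminus U_C$, where $U_C$ is an open set over which $S+t_C$ is transverse, and set $C^k = \rho_C^{-1}([-1,\tfrac1{k+1}))$ and $W^k = C^k \cup \bigcup_{i=1}^k V_i^k$. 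Then I would run an induction on $k$: at stage $k$ the multisection is transverse near $\overline{W^{k-1}}$ and agrees with $t_C$ near $\overline{C^k}$, and over the chart $\tilde U_k$ I pull the (finitely many) local branches of the current multisection back to $\Gammait_k$-equivariant multimaps $\tilde U_k \to {\mc S}^l({\mb R}^n)$; passing to the product over the branches (or, more carefully, treating the branches through the standard "clearing denominators"/averaging device so that equivariance is preserved) I apply Theorem \ref{thm65} — or rather its equivariant refinement — to the closed pair $\overline{C^k} \cap \tilde U_k \subset (\overline{C^k}\cap \tilde U_k) \cup \overline{V_k^k}$ inside $\tilde U_k$, obtaining branches transverse near $\overline{W^k}$, supported in a small neighborhood of $\overline{V_k^k}\setminus \overline{C^k}$, hence still patching to a global multisection. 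Local finiteness makes the sequence stabilize on compact sets, so the limit $t_D$ is a continuous multisection; transversality near $D$ holds since every point of $D$ lies in some $\overline{V_i}$, and $t_D = t_C$ near $C$ because we never alter the section over $\rho_C^{-1}([-1,0))$. For the norm bound, I would distribute the allowed error $\epsilon$ as a convergent series $\sum \epsilon_i = \epsilon$ over the induction steps, invoking the $C^0$-control in Theorem \ref{thm65} (which must itself be stated with such control, or proved by the standard fibrewise-scaling argument) at each stage.

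The main obstacle is the equivariance of the local perturbation: Theorem \ref{thm65} as stated is for ordinary (non-equivariant) sections of ${\mb R}^n$-bundles over topological manifolds, whereas over an orbifold chart one needs a $\Gammait_k$-equivariant perturbation of a $\Gammait_k$-equivariant multimap. The resolution — and the real content of the lemma — is the multi-valued perturbation philosophy of Fukaya--Ono and Li--Tian: one does not equivariantly perturb a single section (which is generally impossible), but instead perturbs each branch by an arbitrary (non-equivariant) small section and then symmetrizes over the $\Gammait_k$-orbit, so that the resulting multisection is automatically equivariant while each branch remains transverse generically; combined with the fact that branches pass over only finitely many charts this keeps everything well-defined. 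I would package this step by invoking the corresponding statement for equivariant multisections over a single chart already established in \cite[Appendix A]{Tian_Xu_2021}, so that the remaining work here is genuinely just the globalization bookkeeping described above. The other minor technical point — ensuring the modifications on chart $\tilde U_k$ glue to a global multisection because they agree with the old one near $\partial \tilde U_k$ — is handled exactly as in the manifold case and requires no new idea.
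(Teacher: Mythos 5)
Your proposal is correct and follows essentially the same route as the paper's own argument: a countable locally finite cover of $D$ by orbifold bundle charts with precompact shrinkings, an induction in which each branch of the current multisection is perturbed via the single-valued extension theorem (Theorem \ref{thm65}) relative to a neighborhood of $C$ and the chart boundary, followed by symmetrization of the perturbed branches over the local group $\Gammait_i$ to restore equivariance, with the $C^0$ budget distributed as $\epsilon/2^k$ over the steps and local finiteness giving stabilization. The only cosmetic caveat is that the chart-level step you propose to "import" from the cited appendix is exactly this lemma, so one should (as you in fact do) spell out the symmetrization $[g_b^{-1}\dot t_i^a(g_b x)]$ rather than cite it.
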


\subsection{Virtual orbifold atlases}

Now we introduce the notion of virtual orbifold atlases. This notion plays a role as a general structure of moduli spaces we are interested in, and is a type of intermediate objects in concrete constructions. The eventual objects we need are {\it good coordinate systems}, which are special types of virtual orbifold atlases. 

\begin{defn}\label{defn610}\cite[Definition A.16]{Tian_Xu_2021} Let ${\mc X}$ be a compact and Hausdorff space.
\begin{enumerate}

\item A {\it virtual orbifold chart} (chart for short) on ${\mc X}$ is a tuple 
\beqn
C:= (U, E, S, \psi, F)
\eeqn
where
\begin{enumerate}

\item $U$ is a topological orbifold (with boundary).

\item $E \to U$ is a continuous orbifold vector bundle.

\item $S: U \to E$ is a continuous section.

\item $F \subset {\mc X}$ is an open subset. 

\item $\psi: S^{-1}(0) \to F$ is a homeomorphism.
\end{enumerate}
$F$ is call the {\it footprint} of this chart $C$, and the integer ${\rm dim} U - {\rm rank} E$ is called the {\it virtual dimension} of $C$.

\item Let $C = (U, E, S, \psi, F)$ be a chart and $U' \subset U$ be an open subset. The {\it restriction} of $C$ to $U'$ is the chart 
\beqn
C' = C|_{U'} = (U', E', S', \psi', F')
\eeqn
where $E' = E|_{U'}$, $S' = S|_{U'}$, $\psi' = \psi|_{(S')^{-1}(0)}$, and $F' = {\rm Image}(\psi')$. Any such chart $C'$ induced from an open subset $U' \subset U$ is called a {\it shrinking} of $C$. A shrinking $C' = C|_{U'}$ is called a {\it precompact shrinking} if $U' \sqsubset U$, denoted by $C' \sqsubset C$.
\end{enumerate}
\end{defn}



\begin{defn}\label{defn611}\cite[Definition A.18]{Tian_Xu_2021}
Let $C_i:= (U_i, E_i, S_i, \psi_i, F_i)$, $i=1,2$ be two charts of $X$. An {\it embedding} of $C_1$ into $C_2$ consists of a bundle embedding $\wh{\phi}_{21}$ satisfying the following conditions.
\begin{enumerate}

\item The following diagrams commute;
\begin{align*}
&\ \xymatrix{E_1 \ar[r]^{\wh\phi_{21}}   \ar[d]^{\pi_1} & E_2 \ar[d]_{\pi_2} \\
          U_1 \ar@/^1pc/[u]^{S_1} \ar[r]^{\phi_{21}} & U_2 \ar@/_1pc/[u]_{S_2}       }\ &\  \xymatrix{  S_1^{-1}(0) \ar[r]^{\phi_{21}} \ar[d]^{\psi_1} & S_2^{-1}(0) \ar[d]^{\psi_2}  \\     X \ar[r]^{{\rm Id}} & X }
\end{align*}

\item {\bf (Tangent Bundle Condition)} There exists an open neighborhood $N_{21} \subset U_2$ of $\phi_{21}(U_1)$ and a subbundle $E_{1;2}\subset E_2|_{N_{21}}$ which extends $\wh\phi_{21}(E_1)$ such that $S_2|_{N_{21}}: N_{21} \to E_2|_{N_{21}}$ is transverse to $E_{1;2}$ and $S_2^{-1}(E_{1;2}) = \phi_{21}(U_1)$. 
\end{enumerate}
\end{defn}



\begin{defn}\label{defn612}\cite[Definition A.20]{Tian_Xu_2021}
Let $C_i = (U_i, E_i, S_i, \psi_i, F_i)$, $(i=1, 2)$ be two charts. A {\it coordinate change} from $C_1$ to $C_2$ is a triple $T_{21} = (U_{21}, \phi_{21}, \wh\phi_{21})$, where $U_{21}\subset U_1$ is an open subset and $(\phi_{21}, \wh\phi_{21})$ is an embedding from $C_1|_{U_{21}}$ to $C_2$. They should satisfy the following conditions.
\begin{enumerate}

\item $\psi_1(U_{21} \cap S_1^{-1}(0)) = F_1 \cap F_2$.

\item If $x_k \in U_{21}$ converges to $x_\infty \in U_1$ and $y_k = \phi_{21}(x_k)$ converges to $y_\infty \in U_2$, then $x_\infty \in U_{21}$ and $y_\infty = \phi_{21}(y_\infty)$. 
\end{enumerate}
\end{defn}

\begin{lemma}\label{lemma613}\cite[Lemma A.21]{Tian_Xu_2021}
Let $C_i = (U_i, E_i, S_i, \psi_i, F_i)$, $(i=1, 2)$ be two charts and let $T_{21} = (U_{21}, \wh\phi_{21})$ be a coordinate change from $C_1$ to $C_2$. Suppose $C_i' = C_i|_{U_i'}$ be a shrinking of $C_i$. Then the restriction $T_{21}':= T_{21}|_{U_1' \cap \phi_{21}^{-1}(U_2')}$ is  a coordinate change from $C_1'$ to $C_2'$. 
\end{lemma}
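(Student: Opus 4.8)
\textbf{Proof proposal for Lemma \ref{lemma613}.}

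The plan is to verify directly that the triple $T_{21}' = (U_{21}', \phi_{21}', \wh\phi_{21}')$ obtained by restriction, where $U_{21}' := U_1' \cap \phi_{21}^{-1}(U_2')$ and $\phi_{21}', \wh\phi_{21}'$ are the corresponding restrictions of $\phi_{21}, \wh\phi_{21}$, satisfies all the conditions required in Definition \ref{defn612} for a coordinate change from $C_1'$ to $C_2'$. First I would observe that $U_{21}'$ is an open subset of $U_1'$, being the intersection of the open set $U_1' \cap U_{21}$ (open in $U_1'$) with the preimage under the continuous map $\phi_{21}$ of the open set $U_2'$. Then I would check that $(\phi_{21}', \wh\phi_{21}')$ is an embedding of $C_1'|_{U_{21}'}$ into $C_2'$ in the sense of Definition \ref{defn611}: the commuting diagrams in part (1) are inherited verbatim by restriction, since $S_i' = S_i|_{U_i'}$ and $\psi_i' = \psi_i|_{(S_i')^{-1}(0)}$; for the Tangent Bundle Condition in part (2), one takes $N_{21}' := N_{21} \cap U_2'$ and the subbundle $E_{1;2}' := E_{1;2}|_{N_{21}'}$, and the transversality of $S_2|_{N_{21}}$ to $E_{1;2}$ together with the equality $S_2^{-1}(E_{1;2}) = \phi_{21}(U_1)$ restrict immediately (transversality is a local/open condition, and intersecting both sides with $U_2'$ and with $\phi_{21}^{-1}(U_2')$ respectively yields $(S_2')^{-1}(E_{1;2}') = \phi_{21}'(U_{21}')$).

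Next I would verify the two conditions of Definition \ref{defn612} itself. For condition (1), I need $\psi_1'\big(U_{21}' \cap (S_1')^{-1}(0)\big) = F_1' \cap F_2'$. Unwinding the definitions, $(S_1')^{-1}(0) = S_1^{-1}(0) \cap U_1'$ and $\psi_1' = \psi_1$ on this set, so the left side equals $\psi_1\big(S_1^{-1}(0) \cap U_1' \cap U_{21} \cap \phi_{21}^{-1}(U_2')\big)$. Using that $\psi_1$ restricts to a homeomorphism $S_1^{-1}(0) \to F_1$ with $\psi_1(S_1^{-1}(0)\cap U_1') = F_1'$, that $\psi_1(S_1^{-1}(0)\cap U_{21}) = F_1 \cap F_2$ by the original condition (1), and that $\phi_{21}$ intertwines $\psi_1$ and $\psi_2$ so that $\psi_1(S_1^{-1}(0) \cap \phi_{21}^{-1}(U_2'))$ corresponds under $\psi_2 \circ \phi_{21}$ to $F_2'$, a short set-theoretic computation gives the right side $F_1' \cap F_2'$. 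For condition (2), the sequential-closedness property, I would take $x_k \in U_{21}'$ with $x_k \to x_\infty \in U_1'$ and $y_k = \phi_{21}'(x_k) = \phi_{21}(x_k) \to y_\infty \in U_2'$; since in particular $x_k \in U_{21}$, $x_\infty \in U_1$, $y_\infty \in U_2$, the original condition (2) for $T_{21}$ gives $x_\infty \in U_{21}$ and $y_\infty = \phi_{21}(x_\infty)$; combined with $x_\infty \in U_1'$ and $y_\infty \in U_2'$ this yields $x_\infty \in U_1' \cap U_{21} \cap \phi_{21}^{-1}(U_2') = U_{21}'$, as needed.

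This lemma is essentially a bookkeeping statement, so I do not anticipate a genuine obstacle; the only point requiring mild care is condition (1) of Definition \ref{defn612}, where one must correctly chase the footprint identity through the three separate restrictions (to $U_1'$, to $U_{21}$, and to $\phi_{21}^{-1}(U_2')$) and use the compatibility $\psi_2 \circ \phi_{21} = \psi_1$ on zero sets to translate the $U_2'$-restriction into the $F_2'$-condition. Everything else is a direct consequence of the fact that openness, transversality, the commuting diagrams, and sequential closedness are all preserved under passing to open subsets. I would therefore present the argument compactly, spelling out only the footprint computation and the sequential argument, and leaving the verification of the diagrams and the Tangent Bundle Condition as immediate.
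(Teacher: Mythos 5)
Your overall strategy (direct verification of Definitions \ref{defn611} and \ref{defn612} for the restricted data) is the expected one — the paper does not reprove this lemma but cites it from \cite[Lemma A.21]{Tian_Xu_2021}, and your footprint computation for condition (1) and your sequential argument for condition (2) of Definition \ref{defn612} are correct as written.

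There is, however, a genuine (though easily repaired) gap in your verification of the {\bf (Tangent Bundle Condition)}. You take $N_{21}' := N_{21}\cap U_2'$ and $E_{1;2}' := E_{1;2}|_{N_{21}'}$ and claim the identity $S_2^{-1}(E_{1;2}) = \phi_{21}(U_{21})$ ``restricts immediately'' to $(S_2')^{-1}(E_{1;2}') = \phi_{21}'(U_{21}')$. It does not: inside $N_{21}\cap U_2'$ one gets
$(S_2')^{-1}(E_{1;2}') = \phi_{21}(U_{21})\cap U_2'$,
whereas $\phi_{21}'(U_{21}') = \phi_{21}\bigl(U_{21}\cap U_1'\cap \phi_{21}^{-1}(U_2')\bigr)$. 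Nothing prevents a point $x\in U_{21}\setminus U_1'$ from having $\phi_{21}(x)\in U_2'$; any such point makes the left-hand set strictly larger (by injectivity of $\phi_{21}$), so the required equality fails for your choice of $N_{21}'$. The restriction to $U_1'$ is a constraint on the source that is invisible to the set $N_{21}\cap U_2'$ in the target. The fix is one extra step: since $\phi_{21}$ is a topological embedding, $\phi_{21}(U_{21}')$ is open in $\phi_{21}(U_{21})$, so there is an open set $W\subset U_2$ with $W\cap \phi_{21}(U_{21}) = \phi_{21}(U_{21}')$; setting $N_{21}'' := N_{21}\cap U_2'\cap W$ (an open neighborhood of $\phi_{21}'(U_{21}')$ in $U_2'$) and $E_{1;2}'' := E_{1;2}|_{N_{21}''}$, transversality is inherited and now $S_2^{-1}(E_{1;2})\cap N_{21}'' = \phi_{21}(U_{21})\cap W\cap U_2'\cap N_{21} = \phi_{21}'(U_{21}')$, as required. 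With this modification your argument is complete.
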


Now we introduce the notion of atlases.

\begin{defn}\label{defn_atlas}\cite[Definition A.22]{Tian_Xu_2021} Let ${\mc X}$ be a compact Hausdorff space. A {\it virtual orbifold atlas} of virtual dimension $d$ on ${\mc X}$ is a collection
\beqn
{\mc A}:= \Big( \Big\{ C_I:= (U_I, E_I, S_I, \psi_I, F_I)\ |\ I \in {\mc I} \Big\},\ \Big\{ T_{JI} = \big( U_{JI}, \phi_{JI}, \wh\phi_{JI}\big) \ |\ I \preq J \Big\}\Big),
\eeqn
where
\begin{enumerate}

\item $({\mc I}, \preq)$ is a finite, partially ordered set.

\item For each $I\in {\mc I}$, $C_I$ is a virtual orbifold chart of virtual dimension $d$ on ${\mc X}$.

\item For $I \preq J$, $T_{JI}$ is a coordinate change from $C_I$ to $C_J$.
\end{enumerate}
They are subject to the following conditions.
\begin{itemize}
\item {\bf (Covering Condition)} ${\mc X}$ is covered by all the footprints $F_I$. 

\item {\bf (Cocycle Condition)} For $I \preq J \preq K \in {\mc I}$, denote $U_{KJI} = U_{KI} \cap \phi_{JI}^{-1} (U_{KJ}) \subset U_I$. Then we require that 
\beqn
\wh\phi_{KI}|_{U_{KJI}} = \wh\phi_{KJ} \circ \wh \phi_{JI}|_{U_{KJI}}
\eeqn
as bundle embeddings.

\item {\bf (Overlapping Condition)} For $I, J \in {\mc I}$, we have
\beqn
\ov{F_I} \cap \ov{F_J} \neq \emptyset \Longrightarrow I \preq J\ {\rm or}\ J \preq I.
\eeqn
\end{itemize}
\end{defn}

\begin{rem}
All virtual orbifold atlases considered in this paper have definite virtual dimensions, although sometimes we do not explicitly mention it. The discussion of orientations of virtual orbifold atlases has also been provided in \cite[Appendix A]{Tian_Xu_2021}. We do not repeat here. 
\end{rem}

\subsection{Good coordinate systems}

Now we introduce the notion of shrinkings of virtual orbifold atlases. 

\begin{defn}\cite[Definition A.24]{Tian_Xu_2021}
Let ${\mc A} = ( \{ C_I | I \in {\mc I} \}, \{ T_{JI} | I \preq J \})$ be a virtual orbifold atlas on ${\mc X}$.
\begin{enumerate}

\item  A {\it shrinking} of ${\mc A}$ is another virtual orbifold atlas ${\mc A}' = ( \{ C_I' | I \in {\mc I} \}, \{ T_{JI}' | I \preq J \})$ indexed by elements of the same partially ordered set ${\mc I}$ such that for each $I \in {\mc I}$, $C_I'$ is a shrinking $C_I|_{U_I'}$ of $C_I$ and for each $I \preq J$, $T_{JI}'$ is the induced shrinking of $T_{JI}$ given by Lemma \ref{lemma613}.

\item If for every $I \in {\mc I}$, $U_I'$ is a precompact subset of $U_I$, then we say that ${\mc A}'$ is a {\it precompact shrinking} of ${\mc A}$ and denote ${\mc A}' \sqsubset {\mc A}$. 

\end{enumerate}
\end{defn}

Given a virtual orbifold atlas ${\mc A}= ( \{ C_I | I \in {\mc I} \}, \{ T_{JI} | I \preq J \} )$, 
we define a relation $\curlyvee$ on the disjoint union $\bigsqcup_{I \in {\mc I}} U_I$ as follows. $U_I \ni x \curlyvee y\in U_J$ if one of the following holds.
\begin{enumerate}
\item $I = J$ and $x = y$;

\item $I \preq J$, $x \in U_{JI}$ and $y = \phi_{JI}(x)$;

\item $J \preq I$, $y \in U_{IJ}$ and $x = \phi_{IJ}(y)$.
\end{enumerate}
For an atlas ${\mc A}$, if $\curlyvee$ is an equivalence relation, we can form the quotient space
\beqn
|{\mc A}|:= \Big( \bigsqcup_{I \in {\mc I}} U_I \Big)/ \curlyvee.
\eeqn
with the quotient topology. There is a natural injective map 
\beqn
{\mc X} \hookrightarrow |{\mc A}|.
\eeqn
We call $|{\mc A}|$ the {\it virtual neighborhood} of ${\mc X}$ associated to the atlas ${\mc A}$ with the quotient map
\beq\label{eqn61}
\pi_{\mc A}: \bigsqcup_{I \in {\mc I}} U_I \to |{\mc A}|
\eeq
which induces continuous injections $U_I \hookrightarrow |{\mc A}|$. A point in $|{\mc A}|$ is denoted by $|x|$, which has certain representative $x \in U_I$ for some $I$. 

\begin{defn}\label{defn_gcs}\cite[Definition A.25]{Tian_Xu_2021}
A virtual orbifold atlas ${\mc A}$ on ${\mc X}$ is called a {\it good coordinate system} if the following conditions are satisfied.
\begin{enumerate}

\item $\curlyvee$ is an equivalence relation.

\item The virtual neighborhood $|{\mc A}|$ is a Hausdorff space.

\item For all $I\in {\mc I}$, the natural maps $U_I \to |{\mc A}|$ are homeomorphisms onto their images.
\end{enumerate}
\end{defn}

The conditions for good coordinate systems are very useful for later constructions (this is the same as in the Kuranishi approach, see \cite{FOOO_2016}), for example, the construction of suitable multisection perturbations. In these constructions, the above conditions are often implicitly used without explicit reference. Therefore, an important step is to construct good coordinate systems.

\begin{thm}\label{thm618} {\rm (Constructing good coordinate system)}\cite[Theorem A.26]{Tian_Xu_2021}
Let ${\mc A}$ be a virtual orbifold atlas on ${\mc X}$ with the collection of footprints $F_I$ indexed by $I \in {\mc I}$. Let $F_I^\square \sqsubset F_I$ for all $I\in {\mc I}$ be a collection of precompact open subsets such that 
\beqn
{\mc X} = \bigcup_{I \in {\mc I}} F_I^\square.
\eeqn
Then there exists a shrinking ${\mc A}' $ of ${\mc A}$ such that the collection of shrunk footprints $F_I'$ contains $\ov{F_I^\square}$ for all $I \in {\mc I}$ and ${\mc A}'$ is a good coordinate system. Moreover, if ${\mc A}$ is already a good coordinate system, then any shrinking of ${\mc A}$ remains a good coordinate system. 
\end{thm}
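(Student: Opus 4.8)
The plan is to produce the good coordinate system by a two-stage shrinking --- first of the footprints, then of the charts --- using the compactness of $\mc X$ throughout. Since $\mc X$ is compact Hausdorff it is normal, so for each $I$ we may choose an open set $F_I'$ with $\ov{F_I^\square}\subset F_I'\subset\ov{F_I'}\subset F_I$; then $\{F_I'\}$ still covers $\mc X$ because it contains $\{F_I^\square\}$. Restricting each chart $C_I$ to $\psi_I^{-1}(F_I')$ and each coordinate change accordingly (Lemma~\ref{lemma613}) leaves the Covering, Cocycle and Overlapping conditions intact --- the last since $\ov{F_I'}\subset\ov{F_I}$ --- so, after renaming, we may assume from now on that $\ov{F_I^\square}$ lies in the footprint $F_I$ and that $\{F_I^\square\}$ is still a cover. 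In particular, the Overlapping Condition now gives $\ov{F_I^\square}\cap\ov{F_J^\square}=\emptyset$ whenever $I$ and $J$ are $\preq$-incomparable.

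Next I would fix a metric on each orbifold $U_I$ --- which exists since its underlying space is locally compact, second countable and Hausdorff, hence metrizable --- and shrink the charts. The target is a collection of open sets $U_I'\sqsubset U_I$ with $\psi_I^{-1}(\ov{F_I^\square})\subset U_I'$ such that the induced atlas $\mc A'$ is a good coordinate system; note that $\psi_I^{-1}(\ov{F_I^\square})\subset U_I'$ automatically forces the shrunk footprint of $C_I|_{U_I'}$ to contain $\ov{F_I^\square}$, as required in the statement. The decisive point is the transitivity of the relation $\curlyvee$: given $x\in U_I'$, $y\in U_J'$, $z\in U_K'$ with $x\curlyvee y$ and $y\curlyvee z$, one must conclude $x\curlyvee z$. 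If $I,J,K$ occur along a chain in $(\mc I,\preq)$ and $x$ lies in the triple overlap $U_{KJI}$, this is exactly the Cocycle Condition. The remaining patterns --- $y$ a common target, or a common source, of the two coordinate changes --- either do not occur once the charts are small enough, because the relevant images in $U_J$ are then confined to disjoint neighborhoods of the closed subsets of $S_J^{-1}(0)$ lying over $\ov{F_I^\square}$ and over $\ov{F_K^\square}$ (disjoint, by the Overlapping Condition, when $I,K$ are incomparable), or else involve comparable indices $I,K$, in which case the Cocycle Condition again supplies the relation. To arrange all the requisite membership and smallness conditions simultaneously, I would run a finite downward induction over $(\mc I,\preq)$, shrinking the maximal charts first and at each stage replacing $U_I$ by a sufficiently small metric neighborhood of the compact set $\psi_I^{-1}(\ov{F_I^\square})$ together with the closures of the images $\phi_{IJ}(U_{IJ})$ pushed in from the already-shrunk larger charts; the sequential condition in Definition~\ref{defn612} guarantees that these shrinkings create no new ``escaping'' sequences and that the process stabilizes after finitely many steps. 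The Hausdorff property of $|\mc A'|$ and the fact that each $U_I'\to|\mc A'|$ is a homeomorphism onto its image then follow from the same compactness mechanism: two distinct points of $|\mc A'|$ are separated either by a coordinate change (a closed embedding onto its image on the shrunk precompact domains, when their charts have comparable indices) or by the disjointness of $\ov{F_I^\square}$ and $\ov{F_J^\square}$, while on the shrunk charts the $\curlyvee$-saturation of a closed set is closed.

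The step I expect to be the main obstacle is precisely this simultaneous chart shrinking that enforces transitivity of $\curlyvee$: one must control the finitely many ``incompatible'' composition patterns of coordinate changes all at once, and shrinking one chart $U_I$ may a priori disturb relations among the others. What makes the induction terminate is the finiteness of the index set $\mc I$ together with the precompactness built into the notions of shrinking and of coordinate change; what promotes pointwise statements near $\mc X$ to statements on full open neighborhoods is the compactness of $\mc X$, hence of each $\psi_I^{-1}(\ov{F_I^\square})$. Finally, for the ``Moreover'' clause: if $\mc A$ is already a good coordinate system, then for any shrinking $\mc A'$ the relation $\curlyvee'$ is the restriction of $\curlyvee$ to the open set $\bigsqcup_I U_I'\subset\bigsqcup_I U_I$, hence still an equivalence relation; $|\mc A'|$ is the image of that open set inside the Hausdorff space $|\mc A|$, hence Hausdorff; and each $U_I'\to|\mc A'|$ is the restriction of the homeomorphism $U_I\to|\mc A|$ to an open subset, hence a homeomorphism onto its image. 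This completes the plan.
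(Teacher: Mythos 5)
Your overall route is the paper's: shrink using the compactness of the sets $\psi_I^{-1}(\ov{F_I^\square})$, force transitivity of $\curlyvee$ on each triple via the Overlapping and Cocycle Conditions, run a finite induction over $({\mc I},\preq)$, and prove the ``Moreover'' clause by noting that the quotient topology on $|{\mc A}'|$ is finer than the subspace topology induced from the Hausdorff space $|{\mc A}|$ — this last argument is exactly the one the paper invokes. (Your preliminary normality step is redundant, since $F_I^\square\sqsubset F_I$ already gives $\ov{F_I^\square}\subset F_I$.) Your transitivity plan also matches the paper's: a contradiction argument produces limit points over a common point of $\ov{F_I^\square}\cap\ov{F_J^\square}\cap\ov{F_K^\square}$, the Overlapping Condition makes $\{I,J,K\}$ totally ordered, and the Cocycle Condition settles each order pattern; note only that in the pattern $I\preq K\preq J$ the cocycle identity alone is not enough — one also needs injectivity of $\phi_{JK}$ to conclude $z=\phi_{KI}(x)$.

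The genuine gap is the Hausdorffness of $|{\mc A}'|$, and relatedly you have misplaced the main difficulty: in the paper the transitivity step is the short half, while Hausdorffness is the long one. Your separation dichotomy (``separated either by a coordinate change \dots\ or by the disjointness of $\ov{F_I^\square}$ and $\ov{F_J^\square}$'') does not cover the critical case: two non-equivalent points in the same chart, or in charts with comparable indices, whose $\curlyvee$-saturations through \emph{third} charts may accumulate onto one another, so that no saturated open neighborhoods separate them; and your assertion that ``on the shrunk charts the $\curlyvee$-saturation of a closed set is closed'' is precisely what must be proved, not a consequence of the definitions. The paper's argument orders ${\mc I}=\{I_1,\dots,I_m\}$ compatibly with $\preq$, forms the partial quotients $|{\mc A}_k|$ of $\bigsqcup_{i\geq k}U_{I_i}$, and proves Hausdorffness by induction: for $x\neq y\in U_{I_k}'$ one chooses small metric balls $O_x^\epsilon$, $O_y^\epsilon$ and shows, by a compactness/sequence argument using the precompact shrinkings and the closedness condition (b) of Definition \ref{defn612} together with the already-established equivalence relation, that the compact images of $\ov{\phi_{I_iI_k}(O_x^\epsilon\cap U_{I_iI_k}')}$ and $\ov{\phi_{I_jI_k}(O_y^\epsilon\cap U_{I_jI_k}')}$ in the Hausdorff space $|{\mc A}_{k+1}|$ are disjoint, separates them there, and pulls back and augments the separating open sets (with a parallel case for $x\in U_{I_k}'$, $y\in|{\mc A}_{k+1}'|$). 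Your plan contains no version of this inductive separation of saturated compact sets in partial quotients, and without it neither the Hausdorff claim nor the third condition of a good coordinate system (which the paper then deduces from ``a continuous bijection from a compact space to a Hausdorff space is a homeomorphism,'' after one more precompact shrinking) is established.
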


\begin{rem}
A similar result was proved earlier in \cite{FOOO_2016}. 
\end{rem}

\subsection{Perturbations}\label{subsection65}

Now we define the notion of perturbations. 

\begin{defn}\cite[Definition A.32]{Tian_Xu_2021}
Let ${\mc A}$ be a good coordinate system on ${\mc X}$. 

\begin{enumerate}

\item A {\it multi-valued perturbation} of ${\mc A}$, simply called a {\it perturbation}, denoted by ${\mf t}$, consists of a collection of multi-valued continuous sections 
\beqn
t_I: U_I \overset{m}{\to} E_I
\eeqn
satisfying (as multisections)
\beqn
t_J \circ \phi_{JI} = \wh \phi_{JI} \circ t_I|_{U_{JI}}.
\eeqn

\item Given a multi-valued perturbation ${\mf t}$, the object
\beqn
\tilde {\mf s} = \Big( \tilde s_I = S_I + t_I: U_I \overset{m}{\to} E_I \Big)
\eeqn
satisfies the same compatibility condition with respect to coordinate changes. The perturbation ${\mf t}$ is called {\it transverse} if every $\tilde s_I$ is a transverse multisection. 


\item The zero locus of a perturbed $\tilde {\mf s}$ gives objects in various different categories. Denote
\beqn
{\mc Z} = \bigsqcup_{I \in {\mc I}}  \tilde s_I^{-1}(0)
\eeqn
and the quotient $|{\mc Z}|:= {\mc Z}/ \curlyvee$ equipped with the quotient topology. Furthermore, there is a natural injection $|{\mc Z}|  \hookrightarrow |{\mc A}|$. Denote by $\| {\mc Z}\|$ the same set as $|{\mc Z}|$ but equipped with the topology as a subspace of $|{\mc A}|$. 
\end{enumerate}
\end{defn}

In order to construct suitable perturbations of a good coordinate system, we need certain tubular neighborhood structures with respect to coordinate changes. 

\begin{defn}\label{defn621}\cite[Definition A.33]{Tian_Xu_2021}
Let ${\mc A}$ be a good coordinate system with charts indexed by elements in a finite partially ordered set $({\mc I}, \preq)$ and coordinate changes indexed by pairs $I \preq J \in {\mc I}$. A {\it normal thickening}\footnote{It was called {\it thickening} in \cite{Tian_Xu_2021}.} of ${\mc A}$ is a collection 
\beqn
\big\{ (N_{JI}, E_{I; J})\ |\ I \preq J \big\}
\eeqn
where $N_{JI} \subset U_J$ is an open neighborhood of $\phi_{JI}(U_{JI})$ and $E_{I; J}$ is a subbundle of $E_J|_{N_{JI}}$. They are required to satisfy the following conditions. 
\begin{enumerate}

\item If $I \preq K$, $J \preq K$ but there is no partial order relation between $I$ and $J$, then 
\beq\label{eqn62}
N_{KI} \cap N_{KJ} = \emptyset.
\eeq

\item For all triples $I \preq J \preq K$, 
\beqn
E_{I;J}|_{\phi_{KJ}^{-1}(N_{KI}) \cap N_{JI} } = \wh\phi_{KJ}^{-1}(E_{I;K})|_{\phi_{KJ}^{-1}(N_{KI}) \cap N_{JI}}.
\eeqn

\item For all triples $I \preq J \preq K$, one has
\beqn
E_{I; K}|_{N_{KI} \cap N_{KJ}} \subset E_{J; K}|_{N_{KI} \cap N_{KJ}}.
\eeqn

\item Each $(N_{JI}, E_{I;J})$ satisfies the {\bf (Tangent Bundle Condition)} of Definition \ref{defn611}.
\end{enumerate}
\end{defn}

\begin{defn}\cite[Definition A.34]{Tian_Xu_2021} Suppose ${\mc A}$ is equipped with a normal thickening ${\mc N}$ as in Definition \ref{defn621}. We say that a perturbation ${\mf t}$ is ${\mc N}$-normal if
\beq\label{eqn63}
t_J(N_{JI}) \subset E_{I; J}|_{N_{JI}},\ \forall I \preq J
\eeq
\end{defn}

\begin{rem}\label{rem623}
The above setting is slightly more general than what we need in this paper. In this paper we will see the following situation in the concrete cases.
\begin{enumerate}

\item The index set ${\mc I}$ consists of certain nonempty subsets of a finite set $\{1, \ldots, m \}$, which has a natural partial order given by inclusions. 

\item For each $i \in I$, $\Gammait_i$ is a finite group and $\Gammait_I =  \mathit{\Pi}_{i\in I} \Gammait_i$. $U_I = \tilde U_I/ \Gammait_I$ where $\tilde U_I$ is a topological manifold acted by $\Gammait_I$. Moreover, ${\bm E}_1, \ldots, {\bm E}_m$ are vector spaces acted by $\Gammait_i$ and the orbifold bundle $E_I \to U_I$ is the quotient 
\beqn
E_I:= (\tilde U_I \times {\bm E}_I ) / \Gammait_I,\ {\rm where}\ {\bm E}_I:= \bigoplus_{i \in I} {\bm E}_i.
\eeqn

\item For $I \preq J$, $U_{JI} = \tilde U_{JI} / \Gammait_I$ where $\tilde U_{JI} \subset \tilde U_I$ is a $\Gammait_I$-invariant open subset and the coordinate change is induced from the following diagram
\beqn
\vcenter{ \xymatrix{ \tilde V_{JI} \ar[r] \ar[d] & \tilde U_J \\
                     \tilde U_{JI}              & }   }
\eeqn
Here $\tilde V_{JI} \to \tilde U_{JI}$ is a covering space with group of deck transformations identical to $\Gammait_{J-I} = \mathit{\Pi}_{j \in J - I} \Gammait_j$; then $\Gammait_J$ acts on $\tilde V_{JI}$ and $\tilde V_{JI} \to \tilde U_J$ is a $\Gammait_J$-equivariant embedding of manifolds, which induces an orbifold embedding $U_{JI} \to U_J$ and an orbibundle embedding $E_I|_{U_{JI}} \to E_J$. 

\end{enumerate}

In this situation, one naturally has subbundles $E_{I; J} \subset E_J$ for all pairs $I \preq J$. Hence a normal thickening of such a good coordinate system is essentially only a collection of neighborhoods $N_{JI}$ of $\phi_{JI}(U_{JI})$ which satisfy \eqref{eqn62} and 
\beqn
S_J^{-1}(E_{I;J}) \cap N_{JI} = \phi_{JI}(U_{JI}).
\eeqn
\end{rem}

\begin{thm}\label{thm624}\cite[Theorem A.35]{Tian_Xu_2021}
Let ${\mc X}$ be a compact Hausdorff space equipped with a good coordinate system 
\beqn
{\mc A} = \Big( \big\{ C_I = (U_I, E_I, S_I, \psi_I, F_I) \ |\ I \in {\mc I} \big\},\ \big\{ T_{JI}  \ |\ I \preq J \big\} \Big).
\eeqn
Let ${\mc A}' \sqsubset {\mc A}$ be any precompact shrinking. Let ${\mc N} = \{(N_{JI}, E_{I;J})\}$ be a normal thickening of ${\mc A}$ and $N_{JI}'\subset U_J'$ be a collection of open neighborhoods of $\phi_{JI}(U_{JI}')$ such that $\ov{N_{JI}'} \subset N_{JI}$. Then they induce a normal thickening ${\mc N}'$ of ${\mc A}'$ by restriction. Let $d_I: U_I \times U_I \to [0, +\infty)$ be a distance function on $U_I$ which induces the same topology as $U_I$. Let $\epsilon>0$ be a constant. Then there exists a perturbation ${\mf t}$ of ${\mc A}$ consisting of a collection of multisections $t_I: U_I \overset{m}{\to} E_I$ satisfying the following conditions.
\begin{enumerate}

\item For each $I\in {\mc I}$, $\tilde s_I:= S_I + t_I$ is transverse.

\item For each $I \in {\mc I}$, 
\beq\label{eqn64}
d_I \big( \tilde s_I^{-1}(0) \cap \ov{U_I'}, S_I^{-1}(0) \cap \ov{U_I'} \big) \leq \epsilon.
\eeq

\item For each pair $I \preq J$, we have 
\beqn
\wh \phi_{JI} \circ t_I|_{U_{JI}'} = t_J \circ \phi_{JI}|_{U_{JI}'}.
\eeqn
Hence the collection of restrictions $t_I':= t_I|_{U_I'}$ defines a perturbation ${\mf t}'$ of ${\mc A}'$. 

\item ${\mf t}'$ is ${\mc N}'$-normal. 

\end{enumerate}
\end{thm}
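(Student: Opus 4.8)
The plan is to build the perturbation $\mf t$ by a single induction over the finite poset $({\mc I},\preq)$. First I would fix a linear extension $I_1,\dots,I_m$ of the partial order, so that $I_k\preq I_l$ forces $k\le l$, and construct the multisections $t_{I_1},t_{I_2},\dots$ one chart at a time, carrying along the inductive hypothesis that after stage $k-1$ the sections $t_{I_j}$ ($j<k$) are transverse multi-valued perturbations on their charts, are mutually compatible in the sense $\wh\phi_{I_jI_i}\circ t_{I_i}=t_{I_j}\circ\phi_{I_jI_i}$ over $U_{I_jI_i}'$ for $i<j<k$, have $t_{I_j}$ valued in the subbundle $E_{I_i;I_j}$ over $\ov{N_{I_jI_i}'}$ for each $i<j$, and satisfy $\|t_{I_j}\|_{C^0}\le\epsilon/2^{j}$. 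The base case $k=1$ is just an application of the multisection transversality theorem, Lemma \ref{lemma69}, to $U_{I_1}$ with the single-valued part $S_{I_1}$ and no prior constraint.

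For the inductive step at stage $k$, I would first assemble the ``forced'' part of $t_{I_k}$ over the open set $W_k:=\bigcup_{i<k,\,I_i\preq I_k}N_{I_kI_i}$, which is a neighborhood of the union of the embedded submanifolds $\phi_{I_kI_i}(U_{I_kI_i})$. On each $N_{I_kI_i}$ one pushes $t_{I_i}$ forward by $\wh\phi_{I_kI_i}$ along $\phi_{I_kI_i}(U_{I_kI_i})$ and extends it off that submanifold, using the tubular-neighborhood structure of $N_{I_kI_i}$, to a section of the subbundle $E_{I_i;I_k}$; call the result $\hat t$. The crucial point is that these local pieces glue to a well-defined multisection on $W_k$: if $I_i,I_j$ ($i,j<k$) are $\preq$-incomparable then $N_{I_kI_i}\cap N_{I_kI_j}=\emptyset$ by \eqref{eqn62}, while if $I_i\preq I_j$ the Cocycle Condition of Definition \ref{defn_atlas} gives $\wh\phi_{I_kI_j}\circ\wh\phi_{I_jI_i}=\wh\phi_{I_kI_i}$, the inductive compatibility gives $t_{I_j}=\wh\phi_{I_jI_i}t_{I_i}$ there, and the two subbundle-compatibility conditions of Definition \ref{defn621}, together with $E_{I_i;I_k}\subset E_{I_j;I_k}$, force the two extensions to coincide on $N_{I_kI_i}\cap N_{I_kI_j}$. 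Moreover, because every coordinate change satisfies the {\bf (Tangent Bundle Condition)} of Definition \ref{defn611}, namely $S_{I_k}$ is transverse to $E_{I_i;I_k}$ on $N_{I_kI_i}$ with $S_{I_k}^{-1}(E_{I_i;I_k})=\phi_{I_kI_i}(U_{I_kI_i})$, a local splitting argument shows that transversality of $S_{I_i}+t_{I_i}$ propagates: $S_{I_k}+\hat t$ is a transverse multisection over all of $W_k$ (this ``transversality is inherited through coordinate changes'' statement is exactly what the normal-thickening axioms encode).

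I would then invoke Lemma \ref{lemma69} on $U_{I_k}$ with bundle $E_{I_k}$, single-valued part $S_{I_k}$, the multisection $\hat t$ which already makes $S_{I_k}+\hat t$ transverse over the open set $W_k$ containing the compact set $C:=\bigcup_{i<k,\,I_i\preq I_k}\ov{N_{I_kI_i}'}$ (compact since $N_{I_kI_i}'\subset U_{I_k}'\sqsubset U_{I_k}$), and the larger compact set $D:=\ov{U_{I_k}'}\supset C$. The lemma yields $t_{I_k}$ with $S_{I_k}+t_{I_k}$ transverse near $D$, equal to $\hat t$ near $C$, and with $\|t_{I_k}\|_{C^0}\le\|\hat t\|_{C^0}+\epsilon/2^k$. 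Since by hypothesis $\phi_{I_kI_i}(U_{I_kI_i}')\subset N_{I_kI_i}'\subset C$ and $t_{I_k}=\hat t$ on a neighborhood of $C$, we get at once the compatibility $\wh\phi_{I_kI_i}t_{I_i}=t_{I_k}\circ\phi_{I_kI_i}$ over $U_{I_kI_i}'$ and the inclusion $t_{I_k}(N_{I_kI_i}')\subset E_{I_i;I_k}$, i.e.\ the data for $\mf t'$ and its ${\mc N}'$-normality \eqref{eqn63}. This closes the induction, proving items (1), (3), (4) of the theorem; item (2), the estimate \eqref{eqn64}, follows from $C^0$-smallness: on the compact set $\ov{U_I'}$ uniform continuity of $S_I$ gives, for any $\delta>0$, an $\eta>0$ with $\{|S_I|<\eta\}\cap\ov{U_I'}\subset\{d_I(\cdot,S_I^{-1}(0))<\delta\}$, so choosing the running $C^0$ bound to be $<\eta$ forces every zero of $\tilde s_I$ in $\ov{U_I'}$ to lie $\delta$-close to $S_I^{-1}(0)$.

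The step I expect to be the main obstacle is the middle one: showing that the forced perturbation $\hat t$ on the overlapping tubular neighborhoods $N_{I_kI_i}$ is simultaneously well-defined and transverse. This is precisely where all four axioms of a normal thickening (Definition \ref{defn621}) together with the Cocycle Condition must be used in concert — the disjointness \eqref{eqn62} disposes of $\preq$-incomparable indices, the subbundle-compatibility conditions and the cocycle identity reconcile comparable ones, and the {\bf (Tangent Bundle Condition)} is the mechanism that upgrades transversality along the embedded base to transversality in the ambient chart — while one must still leave enough room, supplied by the nested shrinkings $N_{JI}'\subset N_{JI}$ and $U_I'\sqsubset U_I$, to apply Lemma \ref{lemma69} without disturbing the already-fixed values. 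By contrast, the $C^0$ bookkeeping for \eqref{eqn64} and the passage to the precompact shrinking ${\mc A}'$ are routine once this coherence is in place.
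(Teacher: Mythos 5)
Your overall strategy is the same as the paper's (a single induction over a linear extension of $({\mc I},\preq)$, forcing values by pushforward on the embedded images, extending inside the subbundles $E_{I;J}$, applying Lemma \ref{lemma69} relative to the already-fixed region, and deducing \eqref{eqn64} from a $C^0$ bound together with a positive lower bound for $\|S_I\|$ on the compact set of points of $\ov{U_I'}$ at distance $\geq\epsilon$ from $S_I^{-1}(0)$). But there is a genuine gap in the middle step, exactly at the point you flag as the main obstacle. You assert that the locally defined pieces of $\hat t$ on the neighborhoods $N_{I_kI_i}$, $i<k$, automatically agree on overlaps because of the Cocycle Condition, the inductive compatibility $t_{I_j}=\wh\phi_{I_jI_i}\circ t_{I_i}$, and the subbundle conditions of Definition \ref{defn621}. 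This is not true as stated. The inductive compatibility holds only \emph{on} the image $\phi_{I_jI_i}(U_{JI}')$, not on a neighborhood of it in $U_{I_j}$; and a normal thickening carries no projection $N_{I_kI_i}\to\phi_{I_kI_i}(U_{I_kI_i})$, so "extend off the submanifold using the tubular-neighborhood structure" is an \emph{arbitrary} (Tietze-type) extension. Concretely, for $I_i\preq I_j\preq I_k$ take a point $\phi_{I_kI_j}(y)\in N_{I_kI_i}\cap\phi_{I_kI_j}(U_{I_kI_j})$ with $y\in U_{I_j}$ close to, but not on, $\phi_{I_jI_i}(U_{I_jI_i})$: the $I_j$-piece forces the value $\wh\phi_{I_kI_j}(t_{I_j}(y))$ there, while the $I_i$-piece assigns whatever value the independently chosen extension of $\wh\phi_{I_kI_i}\circ t_{I_i}$ happens to take; nothing in Definitions \ref{defn_atlas}, \ref{defn611}, \ref{defn621} forces these to coincide, since $t_{I_j}$ near (but off) $\phi_{I_jI_i}(U_{I_jI_i})$ was itself an unconstrained extension made at an earlier stage.

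The gap is repairable, and the repair is essentially the paper's actual argument: one must not extend independently over each $N_{I_kI_i}$, but first declare the forced values on the union $\bigcup_{i<k,\ I_i\preq I_k}\phi_{I_kI_i}(U_{I_kI_i})$ (where agreement does follow from the Cocycle Condition and the inductive compatibility), and then extend in a controlled, nested order so that the extension near each $N_{I_kI_i}'$ stays in $E_{I_i;I_k}$; this is consistent because incomparable indices give disjoint neighborhoods by \eqref{eqn62} and comparable ones satisfy $E_{I_i;I_k}\subset E_{I_j;I_k}$ on $N_{I_kI_i}\cap N_{I_kI_j}$ by Definition \ref{defn621}. (In the paper this coherence is organized as a backward induction pushing the stage-$k$ perturbation up through all higher charts, with values forced both from below and from the already-treated higher charts before extending.) Two further, minor, points: the normality condition \eqref{eqn63} and the $C^0$ bookkeeping require the extensions to be chosen with controlled sup-norm and the norms on the $E_I$ to be chosen so that the bundle embeddings $\wh\phi_{JI}$ are (at least approximately) isometric, otherwise your bound $\|t_{I_j}\|_{C^0}\le\epsilon/2^j$ need not survive pushforward; and your transversality-propagation step should be phrased via the observation that any zero of $S_{I_k}+\hat t$ in $N_{I_kI_i}$ lies on $\phi_{I_kI_i}(U_{I_kI_i})$ (because $\hat t$ is valued in $E_{I_i;I_k}$ and $S_{I_k}^{-1}(E_{I_i;I_k})=\phi_{I_kI_i}(U_{I_kI_i})$), after which the splitting argument you sketch is exactly the intended use of the Tangent Bundle Condition.
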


In the case when the virtual dimension is zero, the weighted count of the zero locus of a transverse perturbation gives a well-defined virtual count of the good coordinate system, denoted by $\# {\mc A}$ (see \cite[Appendix A.7, A.8]{Tian_Xu_2021}).

\subsection{Strongly continuous maps}\label{subsection66}

Let ${\mc X}$ be a compact Hausdorff space equipped with a $d$-dimensional topological virtual orbifold atlas without boundary 
\beqn
{\mc A} = \Big( (C_I)_{I \in {\mc I}}, (T_{JI})_{I \preq J} \Big).
\eeqn

\begin{defn}\label{defn_strong_continuous}(cf. \cite[Definition 3.40]{FOOO_Kuranishi})
A {\it strongly continuous map} from $({\mc X}, {\mc A})$ to a smooth orbifold $Y$, denoted by $\mf{e}: {\mc A} \to Y$, consists of a family of continuous maps $e_I: U_I \to Y$ such that for all pairs $I \preq J$, $e_J \circ \phi_{JI} = e_I|_{U_{JI}}$. The strongly continuous map $\mf{e}$ is called {\it weakly submersive} if for each $I$, $e_I: U_I \to Y$ is transverse to every point (which has a canonical normal microbundle in any orbifold local chart).
\end{defn}

Suppose we are in the situation of Definition \ref{defn_strong_continuous} where $\mf{e}$ is a strongly continuous map which is weakly submersive. If $Z$ is a smooth orbifold and $\iota_Z: Z \to Y$ is a smooth map, then for each chart $C_I$, the map $e_I: U_I \to Y$ is transverse to $Z$ (meaning that when locally lifting the map to orbifold charts it is transverse to $Z$, no matter which normal microbundle is chosen). Hence we can define the fibre product
\beqn
{\mc A}_Z:= {\mc A} \times_{Y} Z,
\eeqn
which is a topological virtual orbifold atlas on the (not necessarily compact) space
\beqn
{\mc X}_Z:= {\mc X} \times_Y Z
\eeqn
whose charts are
\beqn
C_{I, Z}:= C_I|_{U_{I, Z}},\ U_{I, Z}:= U_I\times_Y Z
\eeqn
and whose coordinate changes are naturally induced from $T_{JI}$. It is easy to see the following
\begin{enumerate}

    \item The strongly continuous map $\mf{e}$ induces a strongly continuous map $\mf{e}_Z: {\mc A}_Z \to Z$.

    \item If ${\mc A}$ is a good coordinate system, so is ${\mc A}_Z$.

    \item Any normal thickening of ${\mc A}$ restricts to a normal thickening of ${\mc A}_Z$.

\end{enumerate}

\subsection{The virtual fundamental class}\label{subsection67}

We expect that the zero locus of a transverse multi-valued perturbation carries a fundamental class in rational homology. This is indeed the case in the smooth category as discussed in \cite{Fukaya_Ono} and later in \cite{MW_2}. Indeed, one can find a triangulation of the zero locus. However, it is not so clear if such a triangulation still exists in the topological category. Instead, we only define the virtual fundamental class in the target space by reducing the problem to the case where the virtual dimension is at most $1$. This case was treated in the appendix of our previous paper \cite[Appendix]{Tian_Xu_2021}. 

We first need certain technical discussions about homology theory of smooth orbifolds. Let $Y$ be an $m$-dimensional compact oriented smooth orbifold. Notice that $Y$ satisfies the following topological properties:
\begin{enumerate}
    \item Poincar\'e duality holds over rational coefficients.

    \item Rational homology classes can be represented by piecewise smooth cubical cycles and the difference of two homologous cycles is the boundary of a piecewise smooth cubical chain.
\end{enumerate}
Here a piecewise smooth cubical cycle is a linear combination of smooth cubes, while a smooth cube (of dimension $k$) is a continuous map 
\beqn
f: \Box^k = [0,1]^k \to Y
\eeqn
which can be lifted to a smooth map into some orbifold chart of $Y$. For technical reasons we need to restrict the kind of smooth cubes we allow. We say that a cube $f: \Box^k \to Y$ is a {\it collared} cube if near each facet it is independent of the normal variables. Because every smooth cube is smoothly homotopic to a collared cube, rational homology classes of $Y$ can represented by piecewise smooth collared cubical cycles and the difference of two homologous cycles is the boundary of a piecewise smooth collared cubical chain. 

Now let ${\mc A}$ be an oriented good coordinate system of virtual dimension $d$ on a compact Hausdorff space ${\mc X}$ and let $\mf{e}: {\mc A} \to Y$ be a strongly continuous and weakly submersive map. We would like to define a homology class
\beqn
[{\mc X}]^{\rm vir} \in H_d(Y; {\mb Q})
\eeqn
although it surely depends on ${\mc A}$ and $\mf{e}$. By Poincar\'e duality, this class is uniquely determined by the intersection numbers 
\beqn
[{\mc X}]^{\rm vir} \cap \beta  \in {\mb Q}
\eeqn
for all $\beta \in H_{m-d} (Y; {\mb Q})$. We only need to define these intersection numbers.

In general $\beta$ can be represented by a piecewise smooth collared cubical cycle which is a linear combination of smooth collared cubes $\iota: Z = \Box^{m-d} \to Y$. For each facet $\partial_\alpha Z \subset Z$, 
one obtains a good coordinate system ${\mc A}_{\partial_\alpha Z}$ on the space ${\mc X}_{\partial_\alpha Z}$. The collared condition allows us to inductively construct transverse perturbations on ${\mc A}_{\partial_\alpha Z}$ for all facets of all cubes in this cycle. For dimensional reasons, the perturbation only vanishes at finitely many points on the interior of these simplices. The weighted count of these zeroes gives an intersection number. By the same argument as \cite[Appendix A.8]{Tian_Xu_2021} one can prove that this intersection number is independent of either the perturbations or the choice of the piecewise smooth cycles representing $\beta$. Hence the number $[{\mc X}]^{\rm vir} \cap \beta$ is well-defined.

From the definition of the virtual fundamental class one can easily see the following.

\begin{prop}\label{prop626}
Let ${\mc A}$ be an oriented good coordinate system on a compact Hausdorff space ${\mc X}$ equipped with a strongly continuous and weakly submersive map $\mf{e}: {\mc A} \to Y$ into a smooth oriented compact orbifold $Y$. Let $\iota_S: S \hookrightarrow Y$ be a smooth closed oriented suborbifold which induces a good coordinate system ${\mc A}_S$ on ${\mc X}_S= {\mc X}\times_Y S$. Then
\beq\label{eqn66}
(\iota_S)_* [ {\mc X}_S ]^{\rm vir} = [{\mc X}]^{\rm vir} \cap [S] \in H_*(Y; {\mb Q}).
\eeq
\end{prop}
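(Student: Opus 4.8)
The proof of Proposition \ref{prop626} follows directly from the construction of the virtual fundamental class via intersection numbers, together with Poincar\'e duality. The plan is as follows.

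\textbf{Reduction to intersection numbers.} By Poincar\'e duality on the compact oriented orbifold $Y$, both sides of \eqref{eqn66} are determined by their pairings with an arbitrary class $\beta \in H_*(Y;{\mb Q})$ of complementary degree. On the right-hand side, $\big([{\mc X}]^{\rm vir} \cap [S]\big) \cap \beta = [{\mc X}]^{\rm vir} \cap \big([S] \cap \beta\big)$ where $[S]\cap\beta$ is, by the projection formula, the pushforward under $\iota_S$ of the intersection product computed in $S$; more usefully, we represent $\beta$ by a piecewise smooth collared cubical cycle $\sum_i \iota_i: Z_i \to Y$ that is in ``general position'' with respect to $S$, so that each $Z_i \times_Y S$ is again a piecewise smooth collared cubical cycle in $S$ representing $[S]\cap\beta = (\iota_S)_*(\iota_S^![\beta])$. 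On the left-hand side, $(\iota_S)_*[{\mc X}_S]^{\rm vir} \cap \beta = [{\mc X}_S]^{\rm vir}\cap \iota_S^*\beta$ is, by definition, a weighted count of zeroes of a transverse perturbation of the good coordinate system ${\mc A}_S$ restricted over the preimage of the cycle representing $\iota_S^![\beta]$ in $S$.

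\textbf{Matching the two counts.} The key observation is that the fibre product of good coordinate systems is associative and compatible with iterated restrictions: ${\mc A}_S \times_S (Z_i\times_Y S) = {\mc A} \times_Y (Z_i\times_Y S) = ({\mc A}\times_Y Z_i)\times_{Z_i}(Z_i\times_Y S)$, so the space whose zero-dimensional perturbed zero locus is being counted is literally the same on both sides: it is ${\mc X}\times_Y (Z_i\times_Y S)$, equipped with the same induced good coordinate system, the same induced normal thickening (by the third bullet in Subsection \ref{subsection66}), and hence — choosing the perturbations compatibly, which is possible since the inductive construction in Theorem \ref{thm624} only depends on the atlas and the thickening — the same transverse perturbation. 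Therefore the weighted counts of zeroes agree. This is exactly the argument structure of \cite[Appendix A.8]{Tian_Xu_2021}, now applied with $Z_i$ replaced by the product cube-cycle $Z_i\times_Y S$.

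\textbf{Orientations and the main obstacle.} The one point requiring genuine care is the sign/orientation bookkeeping: one must check that the orientation on ${\mc A}_S$ induced from the orientation of ${\mc A}$ and of $S$, combined with the co-orientation of the cycle representing $\beta$, matches the orientation used when forming $[{\mc X}]^{\rm vir}\cap[S]$ via the cap product — i.e. that the natural isomorphism of normal data $\nu({\mc X}_S\hookrightarrow {\mc X}) \cong \iota^*\nu(S\hookrightarrow Y)$ (which holds because $\mf{e}$ is weakly submersive) is orientation-preserving with the chosen conventions. Granting the orientation conventions fixed in \cite[Appendix A]{Tian_Xu_2021}, this is a routine but slightly tedious verification; it is the main obstacle only in the bookkeeping sense, as there is no analytic or topological difficulty beyond what is already in place. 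Independence of the resulting number from the choice of perturbation and of the representing cycle for $\beta$ follows verbatim from the well-definedness argument in Subsection \ref{subsection67}. Assembling these, the pairing of both sides of \eqref{eqn66} with every $\beta$ agrees, and Poincar\'e duality gives the claimed identity of homology classes.
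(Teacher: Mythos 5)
Your argument is correct and coincides with what the paper intends: Proposition \ref{prop626} is presented there as an immediate consequence of the intersection-number definition of $[{\mc X}]^{\rm vir}$ in Subsection \ref{subsection67}, and your Poincar\'e-duality reduction to counting the perturbed zeros of one and the same fibre-product atlas ${\mc A}\times_Y(Z_i\times_Y S)$ on both sides is precisely that unwinding. The points you flag (realizing $Z_i\times_Y S$ as a piecewise smooth collared cubical cycle and the orientation bookkeeping) are exactly the routine verifications the paper leaves implicit.
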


\section{The Main Theorem on the Virtual Cycle}\label{section7}

In this section we restate the main theorem of this paper in full generality (cf. Theorem \ref{thm11}). Let ${\sf \Gamma}$ be a stable decorated dual graph  (not necessarily connected) with $n$ tails. Then we have defined the moduli space of gauge equivalence classes of solutions to the gauged Witten equation with combinatorial type $\sf\Gamma$, and its compactification, denoted by 
\beqn
\ov{\mc M}_{\sf \Gamma}:= \ov{\mc M}_{\sf \Gamma}(V, G, W, \mu). 
\eeqn
There are the natural {\it evaluation map}
\beqn
\ev: \ov{\mc M}_{\sf\Gamma} \to X^n
\eeqn
and the {\it forgetful map} or the {\it stabilization map}, denoted by 
\beqn
{\rm st}: \ov{\mc M}_{\sf \Gamma} \to \ov{\mc M}_{\Gamma} \subset \ov{\mc M}_{g, n}. 
\eeqn
Here $\Gamma$ is the underlying dual graph of $\sf\Gamma$, where unstable rational components are contracted, and ${\rm st}$ is defined by forgetting the $r$-spin structure, forgetting the gauged maps, and stabilization. Denote
\beq\label{eqn71}
\eev:= ( {\rm ev}, {\rm st}): \ov{\mc M}_{{\sf\Gamma}} \to X^{n} \times \ov{\mc M}_{\Gamma}.
\eeq
Notice that both $X^n$ and $\ov{\mc M}_{\Gamma}$ are compact complex orbifolds, hence satisfy Poincar\'e duality and have homological intersection pairings in rational coefficients. 

Now we state the main result about the virtual cycles. Recall the knowledge in Subsection \ref{subsection66} which says that if a compact Hausdorff space ${\mc X}$ has an oriented topological virtual orbifold atlas ${\mc A}$ equipped with a strongly continuous map to a manifold $Y$, then there is a well-defined pushforward virtual fundamental cycle in $H_*(Y; {\mb Q})$. 

\begin{thm}\label{thm71}
For each stable decorated dual graph ${\sf \Gamma}$ with $n$ tails (not necessarily connected), there exists an oriented topological virtual orbifold atlas ${\mc A}_{\sf \Gamma}$ on the compactified moduli space $\ov{\mc M}_{\sf \Gamma}$, and a strongly continuous and weakly submersive map 
\beqn
\mf{ev}: {\mc A}_{\sf \Gamma} \to X^n \times \ov{\mc M}{}_{\Gamma}.
\eeqn 
which extends the map \eqref{eqn71}. They further satisfy the following properties. All homology and cohomology groups are of rational coefficients. 
\begin{enumerate}

\item \label{thm71a} {\bf (Dimension)} If $\sf\Gamma$ is connected, then the virtual dimension of the atlas is 
\beqn
(2-2g) {\rm dim}_{\mb C} X + {\rm dim}_{\mb R} \ov{\mc M}_\Gamma + 2 \langle c_1^{\ubar K} (TV), \ubar B \rangle.
\eeqn
Here $\ubar B \in H_2^{\ubar K}(V;{\mb Z})$ is the equivariant curve class associated to $\sf\Gamma$.

\item \label{thm71b} {\bf (Disconnected Graph)} Let ${\sf \Gamma}_1, \ldots, {\sf \Gamma}_k$ be connected decorated dual graphs. Let ${\sf \Gamma} = {\sf \Gamma}_1 \sqcup \cdots \sqcup {\sf\Gamma}_k$ be the disjoint union. Then one has 
\beqn
\left[ \ov{\mc M}_{\sf \Gamma} \right]^{\rm vir} = \bigotimes_{\alpha=1}^k \left[ \ov{\mc M}_{ {\sf \Gamma}_\alpha} \right]^{\rm vir}. 
\eeqn
Here the last equality makes sense with respect to the K\"unneth decomposition 
\beqn
H_* ( X^n \times \ov{\mc M}_{\Gamma} ) \cong \bigotimes_{\alpha=1}^k H_* ( X^{n_\alpha} \times \ov{\mc M}_{ {\Gamma}{}_\alpha} ).
\eeqn

\item \label{thm71c} {\bf (Cutting Edge)} Let $\sf \Gamma$ be the decorated dual graph and let $\sf \Pi$ be the decorated dual graph obtained from $\sf \Gamma$ by shrinking a loops. Then one has 
\beqn
(\iota_\Pi)_* [ \ov{\mc M}_{\sf \Pi} ]^{\rm vir} = [ \ov{\mc M}_{\sf \Gamma} ]^{\rm vir} \cap [ \ov{\mc M}_{\Pi} ];
\eeqn
Notice that $X^{\sf \Gamma} = X^{\sf \Pi}$, and the map 
\beqn
(\iota_{\Pi} )_*: H_* ( X^n ) \otimes H_* ( \ov{\mc M}_{\Pi} ) \to H_* ( X^n \big) \otimes H_* ( \ov{\mc M}_{\Gamma} )
\eeqn
is induced from the inclusion $\iota_{\Pi}: \ov{\mc M}_{\Pi} \hookrightarrow \ov{\mc M}_{\Gamma}$. 

\item \label{thm71d} {\bf (Composition)} Let $\sf \Pi$ be a decorated dual graph with one edge. Let $\tilde {\sf \Pi}$ be the decorated dual graph obtained from $\sf \Pi$ by normalization. Let $\Delta_X \subset X \times X$ be the diagonal. Notice that there is a natural isomorphism 
\beqn
\ov{\mc M}_{\Pi} \cong \ov{\mc M}_{{\tilde \Pi}}.
\eeqn
Then one has
\beqn
[ \ov{\mc M}_{\sf \Pi}  ]^{\rm vir} =  [ \ov{\mc M}_{\tilde {\sf \Pi}}  ]^{\rm vir} \setminus {\rm PD} [ \Delta_X].
\eeqn
Here we use the slant product 
\beqn
\setminus: H_* ( X^n \times X^2 )  \otimes H^* ( X^2 )  \to H_*  ( X^n ).
\eeqn

\end{enumerate}
\end{thm}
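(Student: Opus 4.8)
The plan is to construct the atlas $\mathcal{A}_{\sf\Gamma}$ componentwise over the natural stratification of $\ov{\mc M}_{\sf\Gamma}$ indexed by $\sf\Pi \preq \sf\Gamma$, following the topological virtual cycle machinery of Section \ref{section6}. First I would fix a decorated dual graph $\sf\Gamma$ and, for each point $[{\mc C},{\bm v}]$, build a local uniformizer: choose a resolution datum for the underlying $r$-spin curve (Definition \ref{defn25} and Subsection \ref{subsection24}), a stabilization datum (adding finitely many points ${\bf y}$ to rigidify unstable soliton components), and a deformation--obstruction model for the gauged Witten equation. The linearized operator $D_{\bm v}$ of \eqref{eqn34} together with the induced vortex description \eqref{induced_vortex} is Fredholm on weighted Sobolev spaces precisely because Theorem \ref{thm41} and Corollary \ref{cor45}/Theorem \ref{thm46} place the image of solutions in $({\rm Crit}W)^{\rm ss}$ with exponential decay at punctures; one then picks a finite-dimensional obstruction space ${\bm E}_i$ (supported away from necks and equivariant under the isotropy group $\Gammait_i$) surjecting onto ${\rm coker}\,D_{\bm v}$, and the Kuranishi neighborhood $U_i$ is the zero locus of the thickened equation. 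Gluing at nodes and at soliton breakings uses the annulus/decay estimates to produce charts $C_I$ for $I$ a nonempty subset of a finite index set, exactly in the format of Remark \ref{rem623}, with coordinate changes $T_{JI}$ coming from the inclusion of obstruction spaces. One then invokes Theorem \ref{thm618} to shrink to a good coordinate system, and Theorem \ref{thm624} to produce transverse normal perturbations; the evaluation and stabilization maps assemble into a strongly continuous $\mf{ev}$, and weak submersivity at $X^n$ is arranged by adding sufficiently many obstruction sections supported near the relevant cylindrical ends (using freeness of the $K$-action on $\mu^{-1}(0)$ from Hypothesis \ref{hyp33}), while transversality to $\ov{\mc M}_\Gamma$ is automatic since the stabilization map is already a submersion on the curve factor. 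Orientations are inherited from the complex-linear part of $D_{\bm v}$ plus the determinant lines of the obstruction spaces, which carry canonical complex orientations.

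With the atlas and $\mf{ev}$ in hand, the four listed properties are deduced largely formally. For \ref{thm71a}, the virtual dimension is the Fredholm index of $D_{\bm v}$ computed by a Riemann--Roch / index gluing argument on the $r$-spin curve: the $-KV\qu G$-contribution, the $X$-direction giving $(2-2g)\dim_{\mb C}X$, the Deligne--Mumford factor $\dim_{\mb R}\ov{\mc M}_\Gamma$, and the equivariant Chern number $2\langle c_1^{\ubar K}(TV),\ubar B\rangle$ from the curve class; soliton components and nodal matching contribute zero net index by the exact-sequence bookkeeping of Subsection \ref{subsection23}. For \ref{thm71b}, disconnectedness means $\ov{\mc M}_{\sf\Gamma}=\prod_\alpha \ov{\mc M}_{{\sf\Gamma}_\alpha}$ as spaces, charts are products of charts, perturbations can be chosen as products, and the K\"unneth formula gives the tensor decomposition. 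For \ref{thm71c}, shrinking a loop identifies $\ov{\mc M}_{\sf\Pi}$ with the fibre product of $\ov{\mc M}_{\sf\Gamma}$ against the submanifold $\ov{\mc M}_\Pi\hookrightarrow\ov{\mc M}_\Gamma$ via $\mf{st}$, so \ref{thm71c} is exactly Proposition \ref{prop626} applied with $S=\ov{\mc M}_\Pi$; the point is that the node-matching condition in Definition of stable solutions pulls back to the incidence with $\ov{\mc M}_\Pi$, and the evaluation-at-the-two-new-tails map is the diagonal composed with the forgotten data. For \ref{thm71d}, normalization replaces the edge by two tails and $\ov{\mc M}_\Pi\cong\ov{\mc M}_{\tilde\Pi}$; the matching condition says the two new evaluations agree, i.e. the $\sf\Pi$-moduli is the preimage of $\Delta_X$ under $\mf{ev}_{\tilde\Pi}$ restricted to the two extra $X$-factors, and unwinding Proposition \ref{prop626} with $S={\rm PD}[\Delta_X]$ and re-expressing the cap product as a slant product yields the stated formula.

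The main obstacle I expect is not the formal homological algebra of the four properties but the careful construction of the good coordinate system compatibly with \emph{all} the degeneration strata of $\sf\Gamma$ simultaneously — in particular ensuring that the coordinate changes $T_{JI}$ satisfy the cocycle condition (Definition \ref{defn_atlas}) across deeper strata where both nodes are resolved and soliton chains break, and that the normal thickening of Definition \ref{defn621} exists globally. Concretely, the delicate analytic input is a uniform gluing estimate: given solutions on the pieces with exponential asymptotics (Theorem \ref{thm46}), the pregluing followed by Newton iteration must converge with estimates uniform over the obstruction-space directions and over the gluing/deformation parameters, and the resulting family of approximate solutions must depend continuously (only continuously, since we are in the topological category) on those parameters, with the thick-part trivializations of Subsection \ref{subsection22} making the charts genuinely homeomorphic to their images. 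A secondary subtlety is that the unstable soliton components have continuous $\mb R$-translation symmetry, which must be killed by the stabilization points ${\bf y}$ in an $\rm{Aut}$-equivariant way so that the isotropy actions on the $U_I$ are by \emph{finite} groups; once this is set up the rest proceeds as above.
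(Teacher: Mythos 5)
Your overall route coincides with the paper's: local charts built from thickening data (finite\--dimensional $\Gammait$\--equivariant obstruction spaces supported away from the necks, stabilizing hypersurfaces $H_\alpha$ to kill the soliton translations and make isotropy finite, gluing by pregluing plus Newton iteration with the exponential\--decay input of Theorem \ref{thm46}), an inductive construction of sum charts over the strata so that the cocycle condition holds, shrinking to a good coordinate system by Theorem \ref{thm618} and perturbing by Theorem \ref{thm624}; weak submersivity of $\mf{ev}$ from choosing ${\bm E}_I$ transverse to the linearization restricted to deformations vanishing at special points; item \eqref{thm71a} from the index computation, and items \eqref{thm71c}, \eqref{thm71d} exactly as you say, by Proposition \ref{prop626} applied to $\ov{\mc M}_\Pi \subset \ov{\mc M}_\Gamma$ and to $\Delta_X \subset X\times X$ respectively.

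The genuine gap is in \eqref{thm71b}. You dispose of it with ``charts are products of charts, perturbations can be chosen as products, and the K\"unneth formula gives the tensor decomposition,'' but the product of two good coordinate systems indexed by ${\mc I}$ and ${\mc J}$ is \emph{not} a good coordinate system: the {\bf (Overlapping Condition)} fails for the product footprints, since $\ov{F_{I_1}}\cap\ov{F_{I_2}}\neq\emptyset$ and $\ov{F_{J_1}}\cap\ov{F_{J_2}}\neq\emptyset$ only force comparability within each factor, and one can perfectly well have $I_1 \prec I_2$ while $J_2 \prec J_1$, so the pairs are incomparable in the product order. Moreover, even after shrinking to restore the overlapping condition, the product perturbation lives only on the shrunk product charts, and neither the compactness of its zero locus nor the identity $|\tilde{\mf s}_Z^{-1}(0)| = |\tilde{\mf s}_P^{-1}(0)|\times|\tilde{\mf s}_Q^{-1}(0)|$ is automatic: zeroes of the two factor perturbations can pair into points of the product that escape the shrunk charts, which would break both compactness and the K\"unneth comparison of cycles. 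The paper flags exactly this (``this property is not as easy as it seems'') and resolves it with an asymmetric, interleaved system of shrinkings --- footprints $F_{Q,J}^{k,\bullet}$ on the second factor indexed also by the position $k$ of the first\--factor chart, so that the product footprints satisfy the overlapping condition (Proposition \ref{prop111}) --- followed by a limiting argument (Proposition \ref{prop112}) in which one takes sequences of transverse, normally thickened perturbations supported in shrinking neighborhoods of the unperturbed zero loci and shows that for large $k$ the product perturbation has zero locus equal, as a compact space, to the product of the factor zero loci. Some argument of this kind (or an alternative device producing a single atlas on the product whose class manifestly factors) is needed before your K\"unneth step is justified; as written, \eqref{thm71b} is asserted rather than proved.
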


\section{Constructing the Virtual Cycle. I. Fredholm Theory}\label{section8}

In this section we construct the virtual atlas in the infinitesimal level. Namely, we prove that the deformation complex of the moduli space of gauged Witten equation is Fredholm, and calculate the Fredholm index. 

Recall that moduli spaces can be labelled by the represented curve classes. For $\ubar B \in H_2^{\ubar K}(V; {\mb Z})$, consider the moduli space ${\mc M}_{g, n}^r(V, G, W, \mu; \ubar B)$. 
We would like to realize it (locally) as the zero set of certain Fredholm section, and calculate the Fredholm index. In particular, we prove the following formal statement. 

\begin{prop}\label{prop81}
The expected dimension of ${\mc M}_{g, n}^r (V, G, W, \mu; \ubar B)$ is equal to 
\beqn
(2-2g) {\rm dim}_{\mb C} X + {\rm dim}_{\mb R} \ov{\mc M}_{g, n}  + 2 \langle c_1^{\ubar K}(TV), \ubar B \rangle.
\eeqn
\end{prop}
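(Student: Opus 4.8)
The plan is to compute the expected dimension as the Fredholm index of the linearized operator governing deformations of a solution, together with the deformations of the $r$-spin curve, modulo gauge. First I would fix a solution $({\mc C}, {\bm v})$ with ${\bm v} = (P, A, u)$ over a smooth $r$-spin curve ${\mc C}$, and set up the deformation complex. By Theorem \ref{thm41} and the identification \eqref{induced_vortex}, the solution is equivalent (up to an exponentially decaying perturbation of the curvature equation) to a symplectic vortex with target $({\rm Crit}W)^{\rm ss}$ for the group $\ubar K = K \times U(1)$, whose symplectic quotient is exactly $X$. So the deformation problem splits, in the usual way, into three pieces: (i) the deformations of the domain $r$-spin curve, contributing ${\rm dim}_{\mb R}\ov{\mc M}_{g,n}$ (note the moduli of $r$-spin structures is étale over $\ov{\mc M}_{g,n}$, so contributes the same real dimension); (ii) the deformations of the gauge field and section with the gauge-fixing condition, i.e.\ the index of the linearized vortex operator $D_{\bm v}$; and (iii) no extra contribution from $W$ since, by holomorphicity (Theorem \ref{thm41}), the map lands in ${\rm Crit}W$ and the normal directions to ${\rm Crit}W$ contribute a vanishing net index (the Hessian of $W$ pairs these directions off, by the Morse--Bott/geometric-phase hypothesis, Definition \ref{defn32}).

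\textbf{Key steps.} The main computation is the index of $D_{\bm v}$. I would use the standard Riemann--Roch / splitting argument for vortices over a surface with cylindrical ends: after a suitable complex gauge transformation near each puncture, $D_{\bm v}$ is homotopic through Fredholm operators to a $\ov\partial$-operator on $u^*T^{\rm vert}\ubar P(V) \oplus (\text{adjoint bundle}\otimes\Omega^{0,1})$ twisted by the $r$-spin structure, with asymptotic operators at the punctures determined by the limiting critical points $x_a$ and the monodromies $e^{2\pi q_a {\bf i}}$ (the latter governed by Corollary \ref{cor45} and Theorem \ref{thm46}). Applying the Riemann--Roch theorem for such twisted operators on a genus-$g$ surface with $n$ marked points gives an index of the form $2(1-g)\dim_{\mb C}(\text{quotient}) + 2\langle c_1, \ubar B\rangle + (\text{boundary corrections at punctures})$; crucially the fractional-monodromy contributions at the markings are designed to cancel, since the $r$-spin twisting makes the degree of the relevant bundle an honest integer. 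I would track carefully that the gauge-group directions and the infinitesimal $\ubar K$-action remove $2\dim K$ from the naive count, leaving the reduced count in terms of $\dim_{\mb C} X$, and that the equivariant first Chern number $\langle c_1^{\ubar K}(TV), \ubar B\rangle$ is precisely what appears because the deformation bundle is $u^*T^{\rm vert}\ubar P(V)$ pulled back from $V$ with its $\ubar K$-equivariant structure. Summing (i)+(ii)+(iii) yields
\beqn
(2-2g)\dim_{\mb C} X + {\rm dim}_{\mb R}\ov{\mc M}_{g,n} + 2\langle c_1^{\ubar K}(TV), \ubar B\rangle,
\eeqn
as claimed.

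\textbf{Main obstacle.} The delicate point is the bookkeeping at the punctures: one must verify that the asymptotic operators are nondegenerate after the gauge normalization of Corollary \ref{cor45} and Theorem \ref{thm46}, compute the spectral flow / Conley--Zehnder-type correction terms coming from the monodromy $e^{2\pi q_a {\bf i}}$ and the limiting critical manifold at each marking, and check that these corrections, combined with the log-canonical twisting intrinsic to the $r$-spin structure, contribute zero net to the index so that only the clean topological terms survive. A secondary subtlety is justifying the reduction from $V$ to ${\rm Crit}W$: one needs the Morse--Bott hypothesis (geometric phase) to see that the Hessian of $W$ in the normal directions is an isomorphism, so those directions form an acyclic subcomplex of the deformation complex and drop out of the index. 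I would handle the $V$-versus-${\rm Crit}W$ issue by a direct homotopy of Fredholm operators using the $W$-term as a lower-order (indeed, exponentially localized near ${\rm Crit}W$) deformation, and handle the puncture bookkeeping by reducing to the known index formula for symplectic vortices with cylindrical ends in the references cited (\cite{Venugopalan_quasi}, \cite{Chen_Wang_Wang}), adapted to the $\ubar K$-equivariant and $r$-spin twisted setting.
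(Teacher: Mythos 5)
There is a genuine gap in your treatment of the normal directions to ${\rm Crit}W$, and it is exactly the point where the equivariant class $c_1^{\ubar K}(TV)$ (rather than $c_1^{\ubar K}(T{\rm Crit}W)$) has to come from. You assert in step (iii), and again in your "main obstacle" paragraph, that because the Hessian of $W$ is nondegenerate on the normal bundle, the normal directions "form an acyclic subcomplex" and "contribute a vanishing net index". That is false in general: the normal part of the linearization is a real Cauchy--Riemann operator on $u^*\ubar P(N{\rm Crit}W)$ perturbed by the (complex antilinear) Hessian term, and its index is $2\deg\bigl(u^*\ubar P(N{\rm Crit}W)\bigr)$, which is nonzero whenever the normal bundle has nonzero degree along $\ubar B$; what vanishes is only the \emph{asymptotic} correction at the punctures, because the Hessian of a holomorphic function, viewed as a real operator, has spectrum symmetric about zero. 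In the paper's proof this normal index $2\deg(u^*\ubar P(N{\rm Crit}W))$ is added to the vortex index $(2-2g)\dim_{\mb C}X + 2\deg(u^*\ubar P(T{\rm Crit}W))$ for the tangent part, and only the sum produces $2\langle c_1^{\ubar K}(TV),\ubar B\rangle$. If you follow your item (iii) literally you land on $2\langle c_1^{\ubar K}(T{\rm Crit}W),\ubar B\rangle$, contradicting the $c_1^{\ubar K}(TV)$ you claim in your "key steps" — your proposal is internally inconsistent on this point, and the version with the vanishing normal contribution is the wrong one.

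A second, related problem is your plan to "handle the $V$-versus-${\rm Crit}W$ issue by a direct homotopy of Fredholm operators using the $W$-term as a lower-order deformation." The Hessian term is not a localized or negligible perturbation: at the cylindrical ends it is precisely what makes the asymptotic operator in the normal directions nondegenerate. Switching it off changes the asymptotic operators, and when the $r$-spin/gauge monodromy acts trivially on some normal direction the unperturbed operator is degenerate, so the family is not Fredholm along the homotopy and the index is not preserved — this is exactly the broad-sector phenomenon the geometric-phase hypothesis is there to control. The correct bookkeeping (as in the paper) keeps the Hessian in the asymptotic operator and uses its spectral symmetry to conclude that the puncture corrections to Riemann--Roch vanish; it does not delete the term. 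Your outline of the remaining ingredients — the Deligne--Mumford parameters contributing ${\rm dim}_{\mb R}\ov{\mc M}_{g,n}$, the reduction via Theorem \ref{thm41} and \eqref{induced_vortex} to vortices in $({\rm Crit}W)^{\rm ss}$, and the citation of the known vortex index over surfaces with cylindrical ends — is consistent with the paper, but the normal-bundle accounting must be repaired as above for the stated formula to come out.
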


\begin{rem}
We remark that the expected dimension is the same as the expected dimension of the moduli space of holomorphic curves into the classical vacuum.
\end{rem}

\subsection{The Banach manifold and the Banach bundle}\label{subsection81}

We first set up the Banach manifold ${\mc B}$ and the Banach vector bundle ${\mc E} \to {\mc B}$. We fix the following data and notations.

\begin{enumerate}

\item Fix a smooth $r$-spin curve ${\mc C}$. The only case for an unstable ${\mc C}$ is when it is an infinite cylinder. Let $\Sigma^*$ be the underlying punctured Riemann surface. 

\item A cylindrical metric on $\Sigma^*$ is fixed. Let $P_{\mc C} \to \Sigma^*$ be the $U(1)$-bundle coming from the Hermitian metric on the orbifold line bundle $L_{\mc C} \to {\mc C}$. 

\item Fix a $K$-bundle $P \to \Sigma^*$. Let $\ubar P \to \Sigma^*$ be the $\ubar K$-bundle defined by $P_{\mc C}$ and $P$. 

\item Fix trivializations of $\ubar P$ over certain cylindrical ends, and fix cylindrical coordinates on these ends. 

\end{enumerate}
When these data are understood from the context, we only use  $({\mc C}, P)$ to represent. 

To define various norms, we also need a Riemannian metric on $V$ whose exponential map has good properties. The original K\"ahler metric $g_V$ doesn't satisfy our need, for example, the level set of the moment map $\mu^{-1}(0)$ is not totally geodesic. We choose a metric $h_V$ on $V$ that satisfies the following conditions.
\begin{itemize}
\item $h_V$ is invariant under the $\ubar K$-action.

\item $h_V = g_V$ near the unstable locus of the $G$-action.

\item ${\rm Crit} W$ is totally geodesic near $\mu^{-1}(0)$. 

\item ${\rm Crit} W \cap \mu^{-1}(0)$ is totally geodesic. 

\item $J: TV \to TV$ is isometric.
\end{itemize}
We leave to the reader to verify the existence of such Riemannian metrics. 

Let $\exp$ be the exponential map associated to $h_V$. The metric $h_V$ induces a fibrewise metric on the associated bundle $\ubar P(V)$ and we also use $\exp$ to denote the exponential map in the vertical directions.

Now we can define the Banach manifolds. First, the cylindrical metric on $\Sigma^*$ allows us to define the weighted Sobolev spaces 
\beqn
W^{k, p, w}(\Sigma^*),\ \ \forall k \geq 0,\ p > 2,\ w \in {\mb R}. 
\eeqn
Now define 
\beqn
{\mc A}_{{\mc C}, P}^{k, p, w}
\eeqn
to be the space of connections $A$ on $P$ of regularity $W^{k,p}_{loc}$ satisfying the following conditions. Over each cylindrical end $U_a$, with respect to the fixed trivializations of $P|_{U_a}$,
\beqn
A = d + \lambda_a d t + \alpha,\ {\rm where}\ \alpha \in W^{k, p, w}(U_a, T^* U_a \otimes {\mf k})
\eeqn
where $\lambda_a \in {\mf k}$ such that 
\beqn
e^{ 2\pi {\bf i} \lambda_a}  = \iota_W \left( e^{- \frac{2\pi m_a {\bf i}}{r} } \right) \in K
\eeqn
where $\iota_W: U(1) \to K$ is the group homomorphism assumed in Hypothesis \ref{hyp33}. On the other hand, define 
\beqn
{\mc S}_{{\mc C},P}^{k,p,w}
\eeqn
to be the space of sections $u:\Sigma^* \to \ubar P(V)$ of regularity $W^{k, p}_{loc}$, satisfying the following condition. Over each cylindrical end $U_a$, under the fixed trivializations of $P|_{U_a}$, we can identify $u$ as a map $u: U_a \to V$. Then we require that there exists a point $x_a \in \mu^{-1}(0) \cap {\rm Crit} W$ such that 
\beqn
u(s, t) = \exp_{x_a} \xi_a (s, t),\ {\rm where}\ \xi_a \in W^{k, p, w}(U_a, x_a^* TV).
\eeqn
(Here to differentiate $\xi_a$, we use the Levi--Civita connection of certain Riemannian metric on $V$. The resulting Banach manifold is independent of choosing the metric.) We also define a Banach manifold of gauge transformations. Let
\beqn
{\mc G}_{{\mc C}, P}^{k, p, w}
\eeqn
be the space of gauge transformations $g: P \to K$ of regularity $W^{k + 1,p}_{loc}$, such that over the cylindrical end $U_a$, with respect to the trivialization of $P|_{U_a}$, $g$ is identified with a map 
\beqn
g =g_a e^{h},\ {\rm where}\ g_a \in K,\ h \in W^{k + 1,p,w }(U_a, {\mf k}).
\eeqn
Then we define 
\beqn
{\mc B}_{{\mc C}, P}^{k, p, w} = {\mc A}_{{\mc C}, P}^{k, p, w} \times {\mc S}_{{\mc C}, P}^{k, p, w}.
\eeqn

In most cases, we will take $k = 1$ and in this case we will drop the index $k$ from the notations. We also drop ${\mc C}$ and $P$ from the notation temporarily. There is a smooth action by the group ${\mc G}^{p, w}$ on ${\mc B}^{p, w}$. Further, notice that for any $(A, u) \in {\mc B}^{p,w}$, $u$ is continuous. Hence an element represents an equivariant curve class in $H_2^{\ubar K}(V; {\mb Z})$. Then we can decompose the Banach manifold by the equivariant curve classes, i.e.,
\beqn
{\mc B}^{p, w} = \bigsqcup_{\ubar B \in H_2^{\ubar K} (V; {\mb Z})} {\mc B}^{p, w}(\ubar B).
\eeqn
To simplify the notations, we do not label the Banach manifolds by the curve classes. Next we define a Banach vector bundle 
\beqn
{\mc E}^{p, w} \to {\mc B}^{p, w}.
\eeqn
Over each point ${\bm v} = (A, u)$ of the Banach manifold, we have the vector bundle 
\beqn
u^* \ubar P(TV) \to \Sigma
\eeqn	
whose transition functions are of class $W^{1, p}_{loc}$. It also has an induced Hermitian metric. Fix a metric connection on $u^* \ubar P(TV)$ which is trivialized over cylindrical ends, we define
\beqn
{\mc E}^{p, w}|_{{\bm v}} = L^{p, w}(\Sigma^*, \Lambda^{0,1} \otimes u^* \ubar P(TV)) \oplus L^{p, w}(\Sigma^*, {\rm ad} P).
\eeqn
Then the union over all ${\bm v} \in {\mc B}^{p, w}$ gives a smooth Banach space bundle. An action by the group ${\mc G}^{p, w}$ makes it an equivariant bundle. It is straightforward to verify that the gauged Witten equation \eqref{eqn34} defines a smooth and ${\mc G}^{p,w}$-equivariant section 
\beqn
{\mc F}: {\mc B}^{p, w} \to {\mc E}^{p, w}.
\eeqn
Hence we define
\beqn
\tilde {\mc M}_{{\mc C}, P}^{p, w}:= {\mc F}^{-1}(0).
\eeqn
It has the topology induced as a subset of the Banach manifold ${\mc B}^{p, w}$. Then define the quotient which is equipped with the quotient topology:
\beqn
{\mc M}_{{\mc C}, P}^{p, w}:= \tilde {\mc M}_{{\mc C}, P}^{p, w}/ {\mc G}^{p, w}.
\eeqn

\begin{prop} Let ${\mc C}, P, p, w$ be as above.
\begin{enumerate}

\item ${\mc B}^{p, w}$ is a Banach manifold and ${\mc E}^{p, w}$ is a Banach vector bundle over it. 

\item Suppose ${\bm v} = (A, u) \in {\mc B}^{p, w }$ and it is asymptotic to a point $x_a \in \mu^{-1}(0) \cap {\rm Crit} W$ at the $a$-th puncture, then the tangent space of ${\mc B}^{p, w}$ at ${\bm v} = (A, u)$ can be identified with the Banach space
\beqn
W^{1, p, w}(\Sigma^*, \Lambda^1 \otimes {\mf k} \oplus u^*  \ubar P(TV) ) \oplus \bigoplus_{a =1}^n T_{x_a} ({\rm Crit} W \cap \mu^{-1}(0))).
\eeqn

\item For any smooth solution ${\bm v} = (A, u)$ to the gauged Witten equation over ${\mc C}$, there exists a smooth gauge transformation $g$ on $P$ such that $g^* (A, u) \in {\mc B}^{p, w }$.

\item The quotient topology ${\mc M}_{{\mc C}, P}^{p, w}$ is homeomorphic to the topology induced from the notion of convergence. 
\end{enumerate}
\end{prop}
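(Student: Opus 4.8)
The plan is to establish the four assertions by reducing everything to standard facts about weighted Sobolev spaces on manifolds with cylindrical ends, combined with the asymptotic analysis already carried out in Section~\ref{section4}. First I would fix once and for all a weight $w$ that is small and positive, say $w < \frac{\delta}{2}$ where $\delta$ is the decay rate appearing in Theorem~\ref{thm46}; this ensures that all genuine solutions, after suitable gauge fixing, really land in the weighted space. For assertion (a), the argument is local-to-global in the usual way. Over the thick part of $\Sigma^*$ the spaces $W^{k,p}$ and $L^p$ are the classical Sobolev and $L^p$ spaces of sections of vector bundles over a compact surface with boundary, and the bundle and manifold structure there is completely standard once a trivialization (from the resolution data of Section~\ref{subsection24}) is chosen. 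Over each cylindrical end $U_a$ one uses the chart $u \mapsto \xi_a$ with $u(s,t) = \exp_{x_a}\xi_a(s,t)$, $\xi_a \in W^{k,p,w}(U_a, x_a^*TV)$, together with the asymptotic point $x_a$ ranging over the finite-dimensional compact manifold ${\rm Crit}W \cap \mu^{-1}(0)$; the key point is that $\exp$ (for the metric $h_V$ chosen so that ${\rm Crit}W \cap \mu^{-1}(0)$ is totally geodesic) identifies a neighborhood of the constant section $x_a$ with a neighborhood of $0$ in the model weighted space, and the finite-dimensional factor $T_{x_a}({\rm Crit}W\cap\mu^{-1}(0))$ enters exactly because $x_a$ itself is allowed to vary. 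One checks the transition maps between overlapping charts (coming from different choices of resolution data or of $x_a$) are smooth by the usual Sobolev multiplication and composition estimates, using $p>2$ so that $W^{1,p,w} \hookrightarrow C^0$ and forms a Banach algebra in the relevant range. That $\mathcal E^{p,w}$ is a Banach bundle is the same argument applied to the target spaces $L^{p,w}(\Lambda^{0,1}\otimes u^*\ubar P(TV))$ and $L^{p,w}({\rm ad}P)$, the only subtlety being that the fibre depends on the basepoint $u$ through the pullback bundle $u^*\ubar P(TV)$; this is handled as usual by parallel transport along the exponential geodesics, which gives local trivializations of $\mathcal E^{p,w}$ and shows smooth dependence on $(A,u)$.

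Assertion (b) is just the reading-off of the tangent space: differentiating the chart maps above, a tangent vector at ${\bm v}=(A,u)$ consists of a $W^{1,p,w}$ variation $\alpha$ of the connection, a $W^{1,p,w}$ variation $\xi$ of $u$ in directions of $u^*\ubar P(TV)$, and, at each puncture, a variation of the asymptotic value $x_a$ constrained to lie in $T_{x_a}({\rm Crit}W\cap\mu^{-1}(0))$ because the Banach manifold $\mathcal S^{k,p,w}_{\mathcal C,P}$ was defined with $x_a\in \mu^{-1}(0)\cap{\rm Crit}W$. The only thing to verify is that the map $(x_a,\xi_a)\mapsto \exp_{x_a}\xi_a$ has injective differential transverse to the weighted space and that the resulting splitting is independent of the chart; both follow from the total-geodesy properties of $h_V$ built into its construction. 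Assertion (c) is where I would invoke Section~\ref{section4} directly: given a smooth solution $(A,u)$, Corollary~\ref{cor45} and Theorem~\ref{thm46} provide, after a smooth gauge transformation on each cylindrical end, a point $x_a\in{\rm Crit}W\cap\mu^{-1}(0)$ and exponentially fast convergence of $u$ to $x_a$ and of $\ubar\phi,\ubar\psi$ to $0$ (in the appropriate sense), together with exponential decay of all derivatives; patching these end gauge transformations with the identity on the thick part via a cutoff yields a global smooth gauge transformation $g$ with $g^*(A,u)$ meeting the weighted bounds, i.e.\ $g^*(A,u)\in\mathcal B^{p,w}$. One should remark that the exponential decay rate exceeds $w$, so membership is clear, and that the gauge-fixing condition $e^{2\pi{\bf i}\lambda_a}=\iota_W(e^{-2\pi m_a{\bf i}/r})$ on the connection is exactly the normalization achievable by Theorem~\ref{thm46} combined with Hypothesis~\ref{hyp33}(a),(b).

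For assertion (d), the task is to show that the quotient topology on ${\mc M}^{p,w}_{{\mc C},P}=\tilde{\mc M}^{p,w}_{{\mc C},P}/{\mc G}^{p,w}$ agrees with the topology induced by the notion of sequential convergence recalled in Section~\ref{section5} (for a \emph{fixed} underlying curve ${\mc C}$ and bundle $P$, so no bubbling or nodal degeneration of the domain occurs). In one direction, $W^{1,p,w}$-convergence on ${\mc B}^{p,w}$ implies $C^0_{loc}$-convergence on the thick part (Sobolev embedding) and $C^0$-convergence with the correct asymptotics on the ends (the weighted norm controls the uniform behavior near the punctures, including continuity of the asymptotic evaluations $x_a$), and there is no energy loss because the energy is continuous in the $W^{1,p,w}$-topology; hence $W^{1,p,w}$-convergence modulo ${\mc G}^{p,w}$ implies convergence in the sense of Definition in Section~\ref{section5}. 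The converse — that any sequence converging in the geometric sense can, after gauge transformations, be made $W^{1,p,w}$-convergent — follows from elliptic bootstrapping for the gauged Witten equation: uniform $C^0$ bounds plus the equation give uniform $W^{1,p}_{loc}$ bounds on the thick part, while the uniform exponential decay of Theorem~\ref{thm46} (applied to the sequence, as in the sketch around Theorem~\ref{thm48}) gives the uniform weighted estimates on the ends, so a diagonal/subsequence argument produces a $W^{1,p,w}$-limit which must agree with the geometric limit. The main obstacle, and the step I would spend the most care on, is this last uniform-in-$k$ exponential decay on the cylindrical ends: one needs the decay rate and constant in Theorem~\ref{thm46} to be uniform over the converging sequence, which requires the no-energy-loss hypothesis to rule out energy escaping to $+\infty$ along the ends and a continuity argument showing the limiting asymptotic constants $x_a^{(k)}$ stay in a fixed compact set and converge; granting the analogue of Theorem~\ref{thm48} in our setting (stated as left to the reader in the commented-out section), this goes through, but it is the genuine analytic heart of the statement.
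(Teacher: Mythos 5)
The paper states this proposition without proof (it is treated as routine background for the Fredholm setup), so there is no argument of the authors' to compare against; your sketch supplies exactly the ingredients the paper implicitly relies on — weighted Sobolev charts via $\exp_{x_a}$ with the asymptotic point varying in ${\rm Crit}W\cap\mu^{-1}(0)$, the asymptotics of Corollary \ref{cor45} and Theorem \ref{thm46} together with Hypothesis \ref{hyp33} for the gauge normalization in (c), and uniform decay plus bootstrapping for the topology comparison in (d) — and it is essentially correct. You rightly flag that the only genuinely nontrivial step is the uniform-in-$k$ exponential decay needed for the converse direction of (d); the paper does not prove this either (its sequential-decay statement appears only in an omitted section), so your argument is at least as complete as the text it would support, provided $w$ is taken smaller than the decay rate $\delta$ of Theorem \ref{thm46}, as you do.
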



\subsection{Gauge fixing}

In this subsection we take a digress onto the discussion of gauge fixing. This discussion is not only relevant to the index calculation, but is also useful in the gluing construction. Let ${\mc C}$, $P$ be as above and we have defined the Banach manifold ${\mc B}^{p, w}$, the Banach space bundle ${\mc E}^{p, w}$ and the section ${\mc F}$. 

First we would like to define a local distance function on ${\mc B}^{p, w}$. Fix a reference point ${\bm v} = (A, u) \in {\mc B}^{p, w}$, for another $(A', u') \in {\mc B}^{p,w}$, suppose $u$ and $u'$ are $C^0$ close and
\beqn
u' = \exp_{u} \xi,\ {\rm where}\ \xi \in W^{1, p}_{loc}(\Sigma^*, u^* \ubar P(TV) ).
\eeqn
Then we define 
\begin{multline}
{\rm dist} ({\bm v}, {\bm v}') = \| A' - A \|_{L^{p, w}} + \| D_A (A' - A)\|_{L^{p, w}} \\
 + \| \xi \|_{L^\infty} + \| D_A \xi \|_{L^{p, w}} + \| d\mu(u) \xi \|_{L^{p, w}} + \| d\mu(u) J \xi \|_{L^{p, w}}.
\end{multline}
As before, $D_A \xi$ is actually the covariant derivative with respect to $\ubar A$. This is gauge invariant in the sense that if $g\in {\mc G}^{p, w}$, then ${\rm dist}(g^* {\bm v}, g^* {\bm v}') = {\rm dist} ({\bm v}, {\bm v}')$. This function gives a way to specify a natural open neighborhood of ${\bm v}$. We do not need to know whether ${\rm dist} ({\bm v}, {\bm v}')$ is equal to ${\rm dist} ({\bm v}', {\bm v})$ or not. For $\epsilon>0$ sufficiently small, denote temporarily 
\beqn
{\mc B}_{{\bm v}}^\epsilon = \Big\{ {\bm v}' \in {\mc B}^{p, w}\ |\ {\rm dist} ({\bm v}, {\bm v}') < \epsilon \Big\}.
\eeqn

\begin{defn}
Given $p > 2$, $w > 0$ and $\epsilon>0$. Suppose ${\bm v}, {\bm v}' \in {\mc B}^{p,w}$ such that ${\rm dist} ({\bm v}, {\bm v}') < \epsilon$. We say that ${\bm v}'$ is in {\it Coulomb gauge relative to ${\bm v}$}, if
\beq\label{eqn82}
d_A^{\, *} ( A' - A ) + d\mu ( u^{\rm mid} ) J \xi^{\rm mid} = 0.
\eeq 
Here $u^{\rm mid}$ is the unique map such that
\beqn
\exp_{u^{\rm mid}}^{-1} u + \exp_{u^{\rm mid}}^{-1} u' = 0
\eeqn
and $\xi^{\rm mid}$ is defined via
\beqn
\exp_{u^{\rm mid}} \xi^{\rm mid} = u'.
\eeqn
\end{defn}
If $P$ is trivialized locally so that $A$ and $A'$ are written as 
\begin{align*}
&\ A = d + \phi ds + \psi dt,\ &\ A' = d + \phi' ds + \psi' dt
\end{align*}
and $u$, $u'$ are identified with genuine maps into $X$, then \eqref{eqn82} reads
\begin{multline*}
\partial_s (\phi' - \phi) + [ \phi, \phi' - \phi] + \partial_t (\psi' - \psi) + [\psi, \psi'- \psi] + d\mu(u_\xi^{\rm mid}) J \xi \\
= \partial_s(\phi' - \phi) + [\phi, \phi'] + \partial_t (\psi' - \psi) + [\psi, \psi'] + d\mu(u_\xi^{\rm mid}) J \xi = 0.
\end{multline*}
If we switch $(A, u)$ and $(A', u')$, then the left hand side changes sign. Hence we obtain the following fact. 

\begin{lemma}
${\bm v}'$ is on the Coulomb slice through ${\bm v}$ if and only if ${\bm v}$ is on the Coulomb slice through ${\bm v}'$. 
\end{lemma}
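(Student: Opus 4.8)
The claim to prove is that the Coulomb gauge condition \eqref{eqn82} is symmetric in $\bm v$ and $\bm v'$: namely $\bm v'$ lies on the Coulomb slice through $\bm v$ if and only if $\bm v$ lies on the Coulomb slice through $\bm v'$. The plan is to exploit the symmetry already built into the definition of the ``midpoint'' data $u^{\mathrm{mid}}$, $\xi^{\mathrm{mid}}$. The key observation is that the condition $\exp_{u^{\mathrm{mid}}}^{-1} u + \exp_{u^{\mathrm{mid}}}^{-1} u' = 0$ is manifestly symmetric under swapping $u \leftrightarrow u'$, so $u^{\mathrm{mid}}$ is unchanged when we interchange the two gauged maps. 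Writing $\xi^{\mathrm{mid}}$ for the vector with $\exp_{u^{\mathrm{mid}}} \xi^{\mathrm{mid}} = u'$ and $\xi^{\mathrm{mid},\prime}$ for the one with $\exp_{u^{\mathrm{mid}}} \xi^{\mathrm{mid},\prime} = u$, the defining equation for $u^{\mathrm{mid}}$ says precisely $\xi^{\mathrm{mid},\prime} = -\xi^{\mathrm{mid}}$.

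First I would set up the local coordinate computation already indicated in the excerpt: trivialize $P$ so that $A = d + \phi\,ds + \psi\,dt$ and $A' = d + \phi'\,ds + \psi'\,dt$, and expand $d_A^{\,*}(A'-A)$ in these coordinates. As the excerpt notes, the first-order term $\partial_s(\phi'-\phi) + \partial_t(\psi'-\psi)$ is visibly antisymmetric under $(A,u) \leftrightarrow (A',u')$; for the bracket terms one uses the identity $[\phi,\phi'-\phi] = [\phi,\phi']$ and $[\phi,\phi'] = -[\phi',\phi]$, so $[\phi,\phi'-\phi] + [\psi,\psi'-\psi]$ maps to $[\phi',\phi] + [\psi',\psi] = -([\phi,\phi'] + [\psi,\psi'])$ under the swap, i.e. it too changes sign. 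Thus $d_A^{\,*}(A'-A) = -\,d_{A'}^{\,*}(A-A')$ at the level of these local expressions. Then I would check the remaining term $d\mu(u^{\mathrm{mid}})\,J\xi^{\mathrm{mid}}$: since $u^{\mathrm{mid}}$ is swap-invariant and $\xi^{\mathrm{mid}}$ changes sign, this term also changes sign under the interchange. Adding up, the entire left-hand side of \eqref{eqn82} is antisymmetric under $(\bm v,\bm v') \leftrightarrow (\bm v',\bm v)$, hence it vanishes for the pair $(\bm v,\bm v')$ precisely when it vanishes for $(\bm v',\bm v)$, which is the assertion.

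The one point requiring slight care — and the main (minor) obstacle — is that the bracket-term cancellation is stated above using a particular local trivialization, and one should make sure the argument is trivialization-independent, or equivalently that $d_A^{\,*}(A'-A) + d_{A'}^{\,*}(A'-A) $... more precisely that the globally-defined operator $d_A^{\,*}(A'-A)$ satisfies $d_A^{\,*}(A'-A) = -\,d_{A'}^{\,*}(A-A')$ as sections of $\mathrm{ad}\,P$. This follows from the Leibniz-type identity $d_A^{\,*} = d_{A'}^{\,*} + [\,\cdot\,, \ast(A'-A)]$-type correction together with the fact that $[A'-A, \ast(A'-A)]$ (suitably interpreted) is symmetric, so that the difference $d_A^{\,*}(A'-A) - (-d_{A'}^{\,*}(A-A'))$ collapses. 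I would verify this cleanly in the local form as the excerpt does, note that the verification is coordinate-free because both sides are genuine sections of $\mathrm{ad}\,P$, and conclude. The regularity hypotheses ($p>2$, $w>0$, $\mathrm{dist}(\bm v,\bm v')<\epsilon$) are only needed to ensure all the objects $u^{\mathrm{mid}}$, $\xi^{\mathrm{mid}}$ are well-defined and the expressions make sense in the relevant $L^{p,w}$ spaces; they play no further role in the symmetry argument itself.
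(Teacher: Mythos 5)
Your proposal is correct and follows essentially the same route as the paper: both write the Coulomb condition \eqref{eqn82} in a local trivialization, observe that $u^{\rm mid}$ is swap-invariant while $\xi^{\rm mid}$ changes sign, and that the derivative and bracket terms (via $[\phi,\phi'-\phi]=[\phi,\phi']$, etc.) are likewise antisymmetric, so the whole expression changes sign under interchanging $({\bm v},{\bm v}')$. Your extra remark on trivialization-independence is harmless but not needed, since the condition is a pointwise identity of sections of ${\rm ad}\,P$ and can be checked chartwise, exactly as the paper does.
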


Using the implicit function theorem it is routine to prove the following results (cf. \cite{Cieliebak_Gaio_Mundet_Salamon_2002}) for the case where the domain curve is compact). 

\begin{lemma}\label{lemma86} {\rm (Local slice theorem)}
Given ${\bm v} = (A, u) \in {\mc B}^{p,w}$. Then there exist $\epsilon > 0$, $\delta > 0$ and $C>0$ satisfying the following condition. For any ${\bm v}' = (A', u') \in {\mc B}_{\bm v}^{\epsilon}$, there exists a unique gauge transformation $g = e^h \in {\mc G}^{p,  w}$ satisfying  
\begin{align*}
&\ \| h \|_{W^{2,p,w}} \leq \delta,\ &\  d_A^{\, *} (A_h' - A) + d\mu(u_{\xi_h'}^{\rm mid}) \cdot J \xi_h' = 0,
\end{align*}
where $A_h':= (e^h)^* A'$ and $\xi_h'$ is defined by $\exp_u \xi_h' = u_h':= (e^h)^* u'$. Moreover, 
\beqn
\| h \|_{W^{2, p, w}} \leq C {\rm dist} ({\bm v}, {\bm v}').
\eeqn
\end{lemma}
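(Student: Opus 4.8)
\textbf{Proof plan for the local slice theorem (Lemma \ref{lemma86}).}

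The plan is to apply the implicit function theorem on a suitably chosen Banach submanifold, exactly as in the compact-domain case treated in \cite{Cieliebak_Gaio_Mundet_Salamon_2002}, the only new feature being the presence of the weighted Sobolev spaces over the cylindrical ends. First I would fix the reference point ${\bm v} = (A, u) \in {\mc B}^{p,w}$ and set up the map whose zero set is the Coulomb slice: for $h \in W^{2,p,w}(\Sigma^*, {\mf k})$ and ${\bm v}' = (A', u') \in {\mc B}_{\bm v}^\epsilon$, let $A_h' = (e^h)^* A'$, define $\xi_h'$ by $\exp_u \xi_h' = (e^h)^* u'$, and $u^{\rm mid}_{\xi_h'}$ as the midpoint map. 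Consider
\beqn
{\mc N}: W^{2,p,w}(\Sigma^*, {\mf k}) \times {\mc B}_{\bm v}^\epsilon \to L^{p,w}(\Sigma^*, {\mf k}),\quad {\mc N}(h, {\bm v}') = d_A^{\, *}( A_h' - A) + d\mu(u^{\rm mid}_{\xi_h'}) J \xi_h'.
\eeqn
One checks ${\mc N}(0, {\bm v}) = 0$ and that ${\mc N}$ is smooth; the required mapping properties between the weighted spaces follow because over each cylindrical end $d_A^{\, *}$ and multiplication by $d\mu$ are asymptotic to the corresponding translation-invariant operators, which preserve the exponential weight $w>0$.

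Next I would compute the partial derivative $D_1 {\mc N}(0, {\bm v}): W^{2,p,w}(\Sigma^*, {\mf k}) \to L^{p,w}(\Sigma^*, {\mf k})$. A direct computation (the infinitesimal gauge action on $A$ is $h \mapsto -d_A h$ and on $u$ is $h \mapsto {\mc X}_h(u)$, so on $\xi$ it contributes $d\mu(u) J {\mc X}_h(u) = |{\mc X}_h(u)|^2$-type terms) gives
\beqn
D_1 {\mc N}(0, {\bm v}) h = -d_A^{\, *} d_A h - d\mu(u) J {\mc X}_h(u) = \Big( d_A^{\, *} d_A + L_u^* L_u \Big) h,
\eeqn
where $L_u h = {\mc X}_h(u)$, so this operator is of the form $\Delta_A + (\text{nonnegative zeroth order term})$. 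The key point is invertibility: $D_1 {\mc N}(0,{\bm v})$ is a self-adjoint elliptic operator, and over the cylindrical ends its limiting operator is $-\partial_s^2 - \partial_t^2 + L_{x_a}^* L_{x_a}$; since Hypothesis \ref{hyp33}(d) forces the $K$-action on $\mu^{-1}(0)$ to be free, $L_{x_a}^* L_{x_a}$ is a positive-definite symmetric endomorphism of ${\mf k}$, hence the limiting operator has no kernel and is invertible on the weighted spaces with $w$ small and positive (the essential spectrum is bounded away from zero). Therefore $D_1{\mc N}(0,{\bm v})$ is Fredholm of index zero, and positivity of the zeroth-order term together with the weight kills the kernel, so it is an isomorphism onto $L^{p,w}$. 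By the implicit function theorem there exist $\epsilon, \delta > 0$ and, for every ${\bm v}' \in {\mc B}_{\bm v}^\epsilon$, a unique $h = h({\bm v}') \in W^{2,p,w}$ with $\|h\|_{W^{2,p,w}} \le \delta$ solving ${\mc N}(h, {\bm v}') = 0$, depending smoothly on ${\bm v}'$. The quantitative estimate $\|h\|_{W^{2,p,w}} \le C\, {\rm dist}({\bm v}, {\bm v}')$ follows from ${\mc N}(0, {\bm v}') = {\mc N}(0,{\bm v}') - {\mc N}(0,{\bm v})$ being $O({\rm dist}({\bm v},{\bm v}'))$ in $L^{p,w}$ and the boundedness of $D_1{\mc N}^{-1}$ near $(0,{\bm v})$.

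The main obstacle, and the place where real care is needed, is the mapping and Fredholm analysis on the weighted spaces over the cylindrical ends: one must verify that all the operators appearing ($d_A^{\, *}$, $d_A^{\, *} d_A$, the zeroth-order multiplication operators built from $\mu$ and the infinitesimal action) genuinely act between the weighted spaces with uniform bounds, that the exponential weight $w$ can be chosen in the gap of the essential spectrum of the limiting operator (using that $L_{x_a}^* L_{x_a} > 0$ by freeness of the action), and that the asymptotic point $x_a$ does not move — this is automatic because the Coulomb condition only involves the components of $A' - A$ and $\xi$, which already lie in the weighted space, so the gauge transformation $e^h$ with $h \in W^{2,p,w}$ fixes the asymptotic limit. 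Once these points are in place, everything else is the standard implicit function theorem argument identical to the compact case in \cite{Cieliebak_Gaio_Mundet_Salamon_2002}, and I would simply refer to that reference for the routine parts rather than reproduce the estimates.
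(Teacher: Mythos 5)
Your proposal is correct and follows essentially the same route as the paper's proof: both set up the map $h \mapsto d_A^{\,*}(A_h'-A) + d\mu(u^{\rm mid}_{\xi_h'})J\xi_h'$, identify its linearization at the reference point as $d_A^{\,*}d_A + d\mu(u)\cdot J{\mc X}_{(\cdot)}$, prove invertibility on the weighted spaces from positivity (the paper states this tersely; your asymptotic-operator/freeness argument is the standard way to justify Fredholmness and is consistent with it), and conclude by the implicit function theorem with the estimate $\|h\|_{W^{2,p,w}} \le C\,{\rm dist}({\bm v},{\bm v}')$. The only blemish is a convention-dependent sign slip in your displayed formula for $D_1{\mc N}(0,{\bm v})$, which does not affect the argument.
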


\begin{proof}
Define
\beqn
{\mc R}_{A'}(h) = d_A^{\, *} (A_h' - A) + d\mu(u_{\xi_h'}^{\rm mid} )\cdot J \xi_h' \in L^{p, w}(\Sigma_{\mc C}, {\rm ad} P).
\eeqn
We look for zeroes of this map via the implicit function theorem. A crucial step is to show the invertibility of the linearization of ${\mc R}$. When $A = A'$ we have
\beqn
{\mc L}_A:= D_0 {\mc R}_A (h) = d_A^{\, *} d_A h + d\mu(u)\cdot J {\mc X}_h. 
\eeqn
This defines a Fredholm operator ${\mc L}_A: W^{2, p, w} (\Sigma, {\rm ad}P) \to L^{p,w}( \Sigma, {\rm ad} P)$ and it is easy to show the positivity of this operator. Hence ${\mc L}_A$ is invertible and bounded from below by a constant $C >0$ (all the constants in this proof only depends on the gauge equivalence class of ${\bm v}$). On the other hand, there exists $\epsilon >0$ such that if ${\rm dist}({\bm v}, {\bm v}') \leq \epsilon$, then 
\beqn
\| {\mc L}_{A'} - {\mc L}_A \| \leq \frac{C}{2}. 
\eeqn
Hence ${\mc L}_{A'}$ is invertible and its inverse is uniformly bounded. Then this lemma follows from the implicit function theorem and the smoothness of the map ${\mc R}_{A'}$.
\end{proof}

Lastly we consider the gauge fixing problem for solitons which has an additional subtlety. Let ${\bm v}$ be a soliton. Notice that by reprarametrization, one obtains a continuous family of solitons, locally parametrized by ${\bf w} \in {\mb C}$. For ${\bf w} = S + {\bf i} T$ which is small, let ${\bm v}_{\bf w} = (u_{\bf w}, \phi_{\bf w}, \psi_{\bf w})$ be the reparametrized soliton, namely
\beqn
u_{\bf w}(s, t) = u(s + S, t + T),\ \phi_{\bf w}(s, t) = \phi(s + S, t + T),\ \psi_{\bf w}= \psi(s + S, t + T).
\eeqn
For another ${\bm v}'$ which is close to ${\bm v}$, the gauge fixing conditions relative to any ${\bm v}_{\bf w}$ belonging to this family are different.

In particular, when ${\bf w}$ is small, by Lemma \ref{lemma86}, one can gauge transform ${\bm v}_{\bf w}$ to the Coulomb slice through ${\bf v}$. Such gauge transformation is of the form $e^{h_{\bf w}}$ with $h_{\bf w} \in W^{2,p,w}(\Thetait, {\mf k})$ small and unique. The following lemma will be useful.

\begin{lemma}\label{lemma87}
Suppose ${\bm v} = (u, \phi, \psi)$, then 
\begin{align*}
&\ \left. \frac{\partial h_{\bf w}}{\partial s} \right|_{{\bf w} = 0} = \phi,\ &\ \left. \frac{\partial h_{\bf w}}{\partial t} \right|_{{\bf w} = 0} = \psi.
\end{align*}
\end{lemma}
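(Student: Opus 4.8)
\textbf{Proof proposal for Lemma \ref{lemma87}.}

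The plan is to differentiate the defining equation of $h_{\bf w}$ with respect to ${\bf w}$ at ${\bf w}=0$ and read off the derivative from the invertibility of the linearized gauge-fixing operator established in Lemma \ref{lemma86}. First I would set up notation: write ${\bm v}_{\bf w}=(u_{\bf w},\phi_{\bf w},\psi_{\bf w})$ for the reparametrized soliton and let $g_{\bf w}=e^{h_{\bf w}}$ be the unique small gauge transformation, guaranteed by the local slice theorem, such that $g_{\bf w}^*{\bm v}_{\bf w}$ lies on the Coulomb slice through ${\bm v}$, i.e. satisfies \eqref{eqn82} with respect to the reference ${\bm v}$. Since at ${\bf w}=0$ we have ${\bm v}_0={\bm v}$, which already sits on its own Coulomb slice, uniqueness forces $h_0=0$. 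The map ${\bf w}\mapsto h_{\bf w}$ is smooth by the implicit function theorem argument in the proof of Lemma \ref{lemma86}, so $\partial_S h_{\bf w}|_{{\bf w}=0}$ and $\partial_T h_{\bf w}|_{{\bf w}=0}$ are well-defined elements of $W^{2,p,w}(\Thetait,{\mf k})$.

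Next I would differentiate the slice condition. Abbreviate $a_{\bf w}$ for the connection part of $g_{\bf w}^*{\bm v}_{\bf w}$ minus $A$ and $\xi_{\bf w}$ for the section-side perturbation, so that the Coulomb condition reads ${\mc R}(h_{\bf w},{\bf w})=0$ where ${\mc R}(h,{\bf w})=d_A^{*}(g_h^*A_{\bf w}-A)+d\mu(u^{\rm mid})J\xi^{\rm mid}$. Differentiating in $S$ at ${\bf w}=0$, $h=0$ gives
\beqn
{\mc L}_A\Big(\partial_S h_{\bf w}\big|_{{\bf w}=0}\Big) + \partial_S{\mc R}(0,{\bf w})\big|_{{\bf w}=0}=0,
\eeqn
where ${\mc L}_A$ is the invertible operator $h\mapsto d_A^*d_A h+d\mu(u)\cdot J{\mc X}_h$ from the proof of Lemma \ref{lemma86}. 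The term $\partial_S{\mc R}(0,{\bf w})|_{{\bf w}=0}$ is computed from the $S$-derivative of the reparametrized soliton: since $u_{\bf w}(s,t)=u(s+S,t+T)$ etc., the infinitesimal variation of ${\bm v}_{\bf w}$ in the $S$-direction is exactly the $s$-translation vector field of the soliton, whose connection part is $\partial_s\phi\,ds+\partial_s\psi\,dt$ and whose section part is ${\bm v}_s-{\mc X}_\phi=\partial_s u$ (in the local trivialization). Plugging into the linearization of the Coulomb expression and using the soliton equation \eqref{eqn51} (holomorphicity ${\bm v}_s+J{\bm v}_t=0$ after incorporating the infinitesimal action, and the curvature equation $\partial_s\psi-\partial_t\phi+[\phi,\psi]+\mu(u)=0$) one finds that $\partial_S{\mc R}(0,{\bf w})|_{{\bf w}=0}=-{\mc L}_A(\phi)$; hence by invertibility $\partial_S h_{\bf w}|_{{\bf w}=0}=\phi$. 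The $T$-direction is entirely parallel, with the roles of $s$ and $t$ interchanged, yielding $\partial_T h_{\bf w}|_{{\bf w}=0}=\psi$.

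The cleaner way to organize this, which I would actually write up, is to avoid recomputing ${\mc L}_A$ applied to $\phi$ directly and instead observe the following: the gauge transformation $e^{\epsilon\phi}$ applied to ${\bm v}$ and the reparametrization by $S=\epsilon$ differ, to first order in $\epsilon$, by an infinitesimal variation that is purely tangent to the Coulomb slice — indeed, applying $e^{s\phi\,ds+\cdots}$, one checks that $\partial_S({\bm v}_{\bf w}) - {\mc X}_{\text{(infinitesimal $e^{S\phi}$)}}({\bm v})$ has vanishing image under the linearized slice projection because it equals the variation induced by the equation itself. Then uniqueness of the slice representative forces $h_{\bf w}$ to agree with $S\phi+T\psi$ to first order. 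I expect the main obstacle to be purely bookkeeping: carefully matching the covariant derivatives $\nabla^A$ (which, per the conventions in Section \ref{section4}, are taken with respect to $\ubar A$, not $A$) against the plain $s,t$-derivatives appearing in the reparametrization, and keeping track of the $[\phi,\cdot]$ and $[\psi,\cdot]$ commutator terms so that they cancel correctly against the terms produced by differentiating $g_h^*A_{\bf w}$ in $h$; the soliton equation \eqref{eqn51} is exactly what makes this cancellation work, so no genuinely new input is needed.
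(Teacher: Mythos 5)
Your proposal is correct and follows essentially the same route as the paper: differentiate the Coulomb slice condition in ${\bf w}$ at ${\bf w}=0$, use the invertibility (hence uniqueness) of the linearized operator ${\mc L}_A = d_A^*d_A + d\mu(u)\cdot J{\mc X}_{(\cdot)}$ from Lemma \ref{lemma86}, and check that $\phi$ (resp.\ $\psi$) solves the resulting linear equation, the cancellation being exactly the vortex equation in \eqref{eqn51}. The paper phrases the differentiation via the first Newton step of the iteration (the $s$- and $t$-translation derivatives of $\phi_{\bf w},\psi_{\bf w},\xi_{\bf w}$ entering the right-hand side) and then carries out the commutator bookkeeping you deferred, but the substance is identical.
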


\begin{proof}
The implicit function theorem follows from the method of Newton iteration. For ${\bf w}$ small enough, $h_{\bf w}$ is close to the unique solution to 
\beqn
\Delta_A^* h + d\mu(u) \cdot J {\mc X}_h = - d_A^* \big( (\phi_{\bf w} - \phi) ds + (\psi_{\bf w} - \psi) dt \big) + d\mu(u_{\bf w}^{\rm mid}) J \xi_{\bf w},
\eeqn
where $\xi_{\bf w}$ and $u_{\bf w}^{\rm mid}$ are  defined by 
\begin{align*}
&\ \exp_u \xi_{\bf w} = u_{\bf w},\ &\ u_{\bf w}^{\rm mid} = \exp_u \left( \frac{\xi_{\bf w}}{2} \right).
\end{align*}
The difference from $h_{\bf w}$ and the actual solution $h$ is of higher order in ${\bf w}$. Hence differentiating in ${\bf w}$ in the $s$-direction, one has
\beqn
\Delta_A^* \partial_s h_{\bf w} + d\mu(u) J {\mc X}_{\partial_s h_{\bf w}} = d_A^* \big( \partial_s \phi ds + \partial_s \psi dt \big) - d\mu(u) \cdot J \partial_s u.
\eeqn
We only verify that $\partial_s h_{\bf w} = \phi$ solves the above equation; the case for the $t$-derivative is similar. Indeed, the difference between the left hand side and the right hand side reads
\beqn
\begin{split}
&\ - (\partial_s + {\rm ad} \phi)( \partial_s \phi) - (\partial_t + {\rm ad} \psi) (\partial_t \phi + [\psi, \phi]) + d\mu(u) \cdot J {\mc X}_\phi\\
&\ + (\partial_s + {\rm ad} \phi) (\partial_s \phi) + (\partial_t + {\rm ad}\psi) (\partial_s \psi) + d\mu(u) \cdot J \partial_s u\\
= &\ ( \partial_t + {\rm ad} \psi) ( \partial_s \psi - \partial_t \phi + [\phi, \psi]) + d\mu(u) \cdot J (\partial_s u + {\mc X}_\phi) \\
= &\ (\partial_t + {\rm ad} \psi) \big( \partial_s \psi - \partial_t \phi + [\phi, \psi]  + \mu(u) \big).
\end{split}
\eeqn
The last line vanishes by the vortex equation. 
\end{proof}

\subsection{The index formula}

Now we form the deformation complex of our problem. Fix a smooth $r$-spin curve ${\mc C}$ of genus $g$ with $n$ marked points. Let ${\bm v} = (P, A, u)$ be a solution to the gauged Witten equation over ${\mc C}$ representing the curve class $\ubar B$. 

Consider the following {\it deformation complex}
\beqn
{\mf C}_{\bm v}: \xymatrix{{\mc E}_{\bm v}^0 \ar[r]^{\partial_{\bm v}} & {\mc E}_{\bm v}^1 \ar[r]^{{\mc D}_{\bm v}} & {\mc E}_{\bm v}^2}.
\eeqn
Here 
\beqn
{\mc E}_{\bm v}^0 = W^{2, p, w}(\Sigma^*, {\rm ad} P) \oplus \bigoplus_{a=1}^n {\mf k}
\eeqn
parametrizes infinitesimal gauge transformations $h: \Sigma^* \to {\rm ad} P$ which are asymptotic to constants at cylindrical ends; ${\mc E}_{\bm v}^1$ is the tangent space of ${\mc B}^{p, w}$ at ${\bm v}$; ${\mc E}_{{\bm v}}^2$ is the fibre of ${\mc E}^{p, w}$ at ${\bm v}$. The map ${\mc D}_{\bm v}$ is the linearization of the gauged Witten equation at ${\bm v}$, while $\partial_{\bm v}$ is the infinitesimal gauge transformation given by 
\beqn
\partial_{\bm v}(h) = ( - d_A h, {\mc X}_h(u)).
\eeqn
Since the gauged Witten equation is gauge invariant, ${\mc D}_{\bm v} \circ \partial_{\bm v} = 0$. Formally, the kernel of ${\mc D}_{\bm v}$ is the tangent space of the solution space (without taking the quotient by gauge transformations) and the complement of the image of $\partial_{\bm v}$ within this kernel is the tangent space of the moduli space (over a fixed curve).

\begin{thm}\label{thm88}
The deformation complex ${\mf C}_{\bm v}$ is Fredholm. Moreover, its Euler characteristic, which coincides with the expected dimension of the moduli space, is
\beq\label{index}
\chi({\mf C}_{\bm v}) = 
(2-2g) {\rm dim}_{\mb C} X + 2 \langle c_1^{\ubar K} (TV), \ubar B \rangle.
\eeq
\end{thm}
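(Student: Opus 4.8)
The plan is to split the statement into the Fredholm assertion and the index computation, and to handle the index by homotopy invariance after passing to a convenient model. Since the gauged Witten equation is gauge invariant, $\mc D_{\bm v}\circ\partial_{\bm v}=0$, and the Coulomb slice construction of Lemma~\ref{lemma86} realizes a neighbourhood of ${\bm v}$ in the moduli space as the zero set of the restriction of $\mc D_{\bm v}$ to the slice $\ker\partial_{\bm v}^*$. Thus, once we know $\mc H^0(\mathfrak C_{\bm v})=\ker\partial_{\bm v}=0$ — which is exactly the absence of infinitesimal automorphisms and follows from the freeness of the $K$-action on $\mu^{-1}(0)$ in Hypothesis~\ref{hyp33} together with Corollary~\ref{cor45} — the Euler characteristic $\chi(\mathfrak C_{\bm v})$ equals the Fredholm index of the single operator $\wt{\mc D}_{\bm v}:=\mc D_{\bm v}\oplus\partial_{\bm v}^*\colon\mc E_{\bm v}^1\to\mc E_{\bm v}^2\oplus\mc E_{\bm v}^0$. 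To prove $\wt{\mc D}_{\bm v}$ is Fredholm I would work one cylindrical end $U_a^*$ at a time: by Theorem~\ref{thm46} the solution decays exponentially to an asymptotic datum $(x_a,\eta_a)$, so on $U_a^*$ the operator is an exponentially small perturbation of a translation-invariant operator $\partial_s+L_a$, where $L_a$ is a self-adjoint operator on $S^1$ built from the linearization at the critical loop through $(x_a,\eta_a)$ together with the R-symmetry twist $e^{{\bf i} rqt+rh}$. The geometric phase hypothesis (Definition~\ref{defn32}) makes $W$ Morse--Bott along $V^{\rm ss}$, so the only zero modes of $L_a$ are the tangent directions to ${\rm Crit}W\cap\mu^{-1}(0)$ and the constant directions in ${\rm ad}P$; these are precisely the finite-dimensional correction terms $\bigoplus_a T_{x_a}({\rm Crit}W\cap\mu^{-1}(0))$ and $\bigoplus_a\mf k$ already built into $\mc E_{\bm v}^1$ and $\mc E_{\bm v}^0$. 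Hence, for the weight $w>0$ chosen small enough that no eigenvalue of any $L_a$ lies in $(-w,w)$, the operator is Fredholm by the standard theory of asymptotically cylindrical operators in weighted Sobolev spaces.

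\textbf{The index computation.} For the index I would use homotopy invariance of the Fredholm index and the structural results already proved. By Theorem~\ref{thm41} the solution is holomorphic and lands in ${\rm Crit}W$, and by Hypothesis~\ref{hyp33} the $\ubar K$-action restricted to $\ov{({\rm Crit}W)^{\rm ss}}$ factors through $K$ with the R-symmetry absorbed via $\iota_W$. Along the solution the bundle $u^*\ubar P(TV)$ therefore admits, compatibly with the asymptotic operators $L_a$, a $\ubar K$-equivariant splitting into the pullback $\bar u^* TX$ along the induced holomorphic map into the classical vacuum, the infinitesimal $G$-orbit directions, and the normal bundle $\mc N$ of ${\rm Crit}W$ inside $V^{\rm ss}$. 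Through a homotopy preserving the nondegeneracy of all $L_a$, the operator $\wt{\mc D}_{\bm v}$ becomes block diagonal. The orbit block, assembled with the gauge-fixing block $\partial_{\bm v}^*$ and the ${\rm ad}P$ directions, contributes zero: it is an alternating sum of Euler characteristics of ${\rm ad}P\otimes{\mb C}$-twisted Dolbeault complexes, and the adjoint representation of the reductive group $G$ carries no equivariant first Chern number, so these cancel against each other. The normal block is, on each cylindrical end, an invertible operator — this is where the degree-$r$ homogeneity of $W$ is indispensable: the Hessian of $W$ gives an isomorphism of $\mc N$ with the dual of $\mc N$ twisted by a power of the $r$-th root line bundle $L_{\mc C}$, which both shifts the spectrum of the asymptotic operator off zero and pairs the block with its formal dual, forcing its index to vanish. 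What survives is the real Cauchy--Riemann operator on $\bar u^*TX$ over the underlying smooth curve; since $X$ is a genuine manifold, the monodromies at the markings are trivial, so there are no orbifold corrections and Riemann--Roch gives ${\rm index}_{\mb R}(\bar\partial_{\bar u^*TX})=2\langle c_1(TX),\bar B\rangle+2(1-g)\,{\rm dim}_{\mb C}X$.

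\textbf{The Chern number identity and conclusion.} It remains to identify $\langle c_1(TX),\bar B\rangle=\langle c_1^{\ubar K}(TV),\ubar B\rangle$. I would extract this from the same $\ubar K$-equivariant splitting of $TV|_{{\rm Crit}W}$: the $G$-orbit directions form the adjoint bundle and contribute nothing, while the Hessian isomorphism identifies the $\ubar K$-equivariant first Chern class of $\mc N$ with (a multiple of) the degree of the $r$-spin twist, so that the contribution of $\mc N$ to $\langle c_1^{\ubar K}(TV),\ubar B\rangle$ is exactly cancelled; what remains is $\langle c_1(TX),\bar B\rangle$. Combining with $\mc H^0(\mathfrak C_{\bm v})=0$ gives $\chi(\mathfrak C_{\bm v})=(2-2g){\rm dim}_{\mb C}X+2\langle c_1^{\ubar K}(TV),\ubar B\rangle$, which is the claimed formula \eqref{index} and, by Proposition~\ref{prop81}, the expected dimension.

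\textbf{Main obstacle.} The delicate point is the normal-to-${\rm Crit}W$ block: one must establish simultaneously that the transverse part of the linearized gauged Witten operator stays invertible at infinity throughout the homotopy (so that Fredholmness is not lost) and that its index is zero, and then carry out the accompanying bookkeeping of monodromy contributions at the orbifold marked points in the Riemann--Roch count. This is precisely where the Morse--Bott condition of the geometric phase and the exact R-symmetry normalization in Hypothesis~\ref{hyp33} are used; without them the broad-case pathology would reappear and the normal block would fail to be Fredholm in these weighted spaces.
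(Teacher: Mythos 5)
Your Fredholm argument and your handling of the gauge/orbit directions are fine and broadly parallel to the paper (which, after reducing to the case $u \subset ({\rm Crit} W)^{\rm ss}$, quotes the Cieliebak--Gaio--Mundet--Salamon vortex index formula for the tangential-plus-gauge block instead of splitting off the orbit directions by hand; that difference is cosmetic). The genuine gap lies in your treatment of the block normal to ${\rm Crit} W$ together with the Chern-number identity you need at the end. You assert (i) that the Hessian of $W$ pairs the normal block with its formal dual and therefore forces its index to vanish, and (ii) that $\langle c_1(TX),\bar B\rangle = \langle c_1^{\ubar K}(TV),\ubar B\rangle$ because the contribution of the normal bundle is ``exactly cancelled.'' Neither claim is proved, and given the rest of your computation each is equivalent to the other, so you are in effect assuming the answer. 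Moreover they conflict with the paper's own intermediate result: the paper computes the index of the normal block to be $2\,{\rm deg}\big(u^*\ubar P(N{\rm Crit} W)\big)$ (equation \eqref{normal_index}); the symmetry of the Hessian spectrum is used there only to kill the asymptotic corrections at the punctures, not the degree term. The very equivariant Hessian relation you invoke pins that degree down: $N{\rm Crit} W \cong (N{\rm Crit} W)^\vee \otimes {\mb C}_{(r)}$, with ${\mb C}_{(r)}$ the R-weight-$r$ line, gives on the curve $u^*\ubar P(N{\rm Crit} W) \cong \big(u^*\ubar P(N{\rm Crit} W)\big)^\vee \otimes L_{\mc C}^{\otimes r}$, hence $2\,{\rm deg}\,u^*\ubar P(N{\rm Crit} W) = {\rm rank}_{\mb C}(N{\rm Crit} W)\cdot {\rm deg}\, K_{{\mc C},{\rm log}}$, which is nonzero for every stable type with $N{\rm Crit}W \neq 0$. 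Self-duality only up to a twist by $L_{\mc C}^{\otimes r}\cong K_{{\mc C},{\rm log}}$ is precisely the situation in which a Serre-type duality does \emph{not} force the index to vanish; it produces an index proportional to the degree of the twisting line, and correspondingly the normal bundle contributes nontrivially to $\langle c_1^{\ubar K}(TV),\ubar B\rangle$. As written, your two unjustified steps are the same discrepancy counted twice with opposite signs, so the final formula \eqref{index} comes out right by a coincidence rather than by proof.

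The paper's route sidesteps all of this: it splits the complex only into the tangential part (the vortex deformation complex with target $({\rm Crit} W)^{\rm ss}$, with index $(2-2g){\rm dim}_{\mb C}X + 2\,{\rm deg}\,u^*\ubar P(T{\rm Crit} W)$ by \eqref{tangent_index}) and the normal part (index $2\,{\rm deg}\,u^*\ubar P(N{\rm Crit} W)$ by \eqref{normal_index}), and the two degrees add up to $\langle c_1^{\ubar K}(TV),\ubar B\rangle$ on the nose, so no identification with $\langle c_1(TX),\bar B\rangle$ is ever needed or claimed. To salvage your version you would have to either prove the vanishing of the normal-block index honestly in the weighted cylindrical-end setting (keeping in mind that there are no normal correction spaces at the punctures, so the naive closed-surface duality count does not apply) or prove the Chern-number identity directly; the Hessian computation above indicates that at least one of the two statements, as you phrased them, must in fact be corrected rather than merely justified.
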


\begin{proof}
As $u$ is holomorphic and contained in $\ov{({\rm Crit} W)^{\rm ss}}$, it passes through the unstable locus at isolated points. It is not hard to see that it suffices to consider the case when $u$ is contained in the unstable locus $({\rm Crit} W)^{\rm ss}$ which is a smooth submanifold. In this case ${\bm v}$ is also a solution to the vortex equation with target space $({\rm Crit} W)^{\rm ss}$ (with a compact perturbation, see \eqref{induced_vortex}). 

Now the space ${\mc E}_{\bm v}^1$ can be decomposed as 
\beqn
{\mc E}_{\bm v}^1 = ({\mc E}_{\bm v}^1)' \oplus ({\mc E}_{\bm v}^1)''
\eeqn
where $({\mc E}_{\bm v}^1)'$ parametrizes deformations in the tangent direction of ${\rm Crit} W$ (together with deformations of the connection $A$) and $({\mc E}_{\bm v}^1)''$ parametrizes deformations in the normal direction of ${\rm Crit} W$. We can similarly decompose 
\beqn
{\mc E}_{\bm v}^2 = ({\mc E}_{\bm v}^2)' \oplus ({\mc E}_{\bm v}^2)''.
\eeqn
Then the subcomplex
\beqn
{\mf C}_{\bm v}': \xymatrix{ {\mc E}_{\bm v}' \ar[r]^{\partial_{\bm v}} & ({\mc E}_{\bm v}^1)' \ar[r]^{{\mc D}_{\bm v}} & ({\mc E}_{\bm v}^2)'' }
\eeqn
represents deformations of ${\bm v}$ as a vortex in $({\rm Crit} W)^{\rm ss}$, while the operator
\beq\label{normal_operator}
{\mc D}_{\bm v}: ({\mc E}_{\bm v}^1)'' \to ({\mc E}_{\bm v}^2)''
\eeq
represents deformations of ${\bm v}$ in the normal direction. The index formula for symplectic vortices over compact surfaces is given in \cite[Proposition 4.6]{Cieliebak_Gaio_Mundet_Salamon_2002} and the case for surfaces with cylindrical ends can be derived from it by taking into account of the asymptotic behaviors at infinities (see for example \cite[Section 4]{Xu_VHF} for the case with vortex Floer homology). In the current situation, this index formula gives the Euler characteristic 
\beq\label{tangent_index}
\chi({\mf C}_{\bm v}') = (2-2g) {\rm dim}_{\mb C} X + 2 {\rm deg} ( u^* \ubar P( T {\rm Crit} W) ).
\eeq

It remains to consider the index of \eqref{normal_operator}. This operator is the Cauchy--Riemann operator on the pullback bundle $u^* \ubar P( N{\rm Crit} W)$ plus the zero-th order term coming from the Hessian of the superpotential. Hence by the Riemann--Roch formula, its index should be twice of the degree of $u^* \ubar P(N{\rm Crit}W)$ plus contributions from the asymptotics of the Hessian of $W$. As $W$ is a holomorphic Morse function, its Hessian at the critical submanifold, as a real operator, has the same number of positive eigenvalues and negative eigenvalues. Therefore the asymptotic contribution is zero. So
\beq\label{normal_index}
{\rm Index} ( {\mc D}_{\bm v}: ({\mc E}_{\bm v}^1)'' \to ({\mc E}_{\bm v}^2)'') = 2 {\rm deg} (u^* \ubar P(N{\rm Crit} W)).
\eeq
Then \eqref{index} follows from \eqref{tangent_index} and \eqref{normal_index}. 
\end{proof}

Proposition \ref{prop81} follows immediately since there are additional parameters coming from the Deligne--Mumford space.

\section{Constructing the Virtual Cycle. II. Gluing}\label{section9}

In this section we construct a local virtual orbifold chart for every point of the compactified moduli space $\ov{\mc M}{}_{g, n}^r(V, G, W, \mu)$. The main result of this section is stated as Corollary \ref{cor917}, while a crucial step is Proposition \ref{prop916}, in which via the gluing construction we obtain local charts for the {\it thickened moduli space}.

\subsection{Stabilizers of solutions}

An automorphism of a solution {\it a priori} consists of two parts, one coming from the domain symmetry, i.e., the underlying $r$-spin curve has an automorphism, the other coming from the gauge symmetry. To construct a local model, we first show that any stable solution has finite automorphism group.

\begin{defn}\label{defn91} {\rm (Isomorphisms of stable solutions)} 
Let ${\mc C}_1$ resp. ${\mc C}_2$ be a smooth or nodal $r$-spin curve and ${\bm v}_1$ resp. ${\bm v}_2$ be a solution to the gauged Witten equation over ${\mc C}_1$ resp. ${\mc C}_2$.  An isomorphism from $({\mc C}_1, {\bm v}_1)$ to $({\mc C}_2, {\bm v}_2)$ is a pair $(\rho, g)$, where $\rho: {\mc C}_1 \to {\mc C}_2$ is an isomorphism of $r$-spin curves, and $g: P_1 \to P_2$ is a smooth isomorphism of principal $K$-bundles that covers $\rho$. We require that $(\rho, g)$ satisfies the following condition. 

\begin{itemize}

\item For each irreducible component $\alpha_1$ of ${\mc C}_1$, the isomorphism $\rho$ of $r$-spin curves provides a corresponding irreducible component $\alpha_2$ of ${\mc C}_2$ and a biholomorphic map $\gamma_{\alpha_1}: \Sigma_{\alpha_1}^* \to \Sigma_{\alpha_2}^*$ that preserves the punctures. Moreover, there is a bundle isomorphism $g_{\alpha_1}: P|_{\Sigma_{\alpha_1}^*} \to P |_{\Sigma_{\alpha_2}^*}$ that covers $\gamma_{\alpha_1}$. We require that 
\begin{align*}
&\ A_1 |_{\Sigma_{\alpha_1}^*} = g_{\alpha_1}^* A_2 |_{\Sigma_{\alpha_2}^*},\ &\ u_1|_{P_1|_{\Sigma_{\alpha_1}^*}} = u_2 |_{P_2|_{\Sigma_{\alpha_2}^*}} \circ g_{\alpha_1}.
\end{align*}
\end{itemize}
\end{defn}

In particular, a gauge transformation on $P$ induces an isomorphism from $({\mc C}, {\bm v})$ to $({\mc C}, g^* {\bm v})$. Isomorphisms can be composed. For any pair $({\mc C}, {\bm v})$, denote by ${\rm Aut}({\mc C}, {\bm v})$ the automorphism (self-isomorphism) group. 

Then we show that automorphism groups of stable solutions are finite. A key point is that when $n\geq 1$, a solution converges at punctures to points in $\mu^{-1}(0)$ where $K$-stabilizers are trivial. When the curve has no punctures, we need to require that the area is sufficiently large (compare to the energy) so that solutions are also close to $\mu^{-1}(0)$. 

\begin{lemma}\label{lemma92} {\rm (Finiteness of automorphism groups)}
\begin{enumerate}

\item Suppose $n \geq 1$. Then for every $\ubar B \in H_2^{\ubar K}( V; {\mb Z})$, every stable solution ${\bm v}$ to the gauged Witten equation over a smooth or nodal curve ${\mc C}$ representing a point in $\ov{\mc M}{}_{g, n}^r(V, G, W, \mu; \ubar B)$, the automorphism group $\Gammait_{\bm v}$ is finite.

\item Suppose $n = 0$. Then for every $\ubar B \in H_2^{\ubar K} (V; {\mb Z})$ there exists $\lambda(\ubar B) >0$ such that if the areas of all genus $g$ $r$-spin curve (with no punctures) are no less than $\lambda(\ubar B)$, then every stable solution ${\bm v}$ to the gauged Witten equation representing a point in ${\mc M}{}_{g, 0}^r(V, G, W, \mu; \ubar B)$, the automorphism group $\Gammait_{\bm v}$ is finite. 
\end{enumerate}

\end{lemma}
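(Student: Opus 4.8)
The plan is to present $\Gammait_{\bm v}$ as an iterated extension of three groups, each of which is finite: the automorphism group of a stable $r$-spin curve, the stabilizer of $(A,u)$ inside the gauge group, and the reparametrization stabilizers of the soliton components. Forgetting the gauge data and the $r$-spin structure and then contracting the unstable rational components gives a homomorphism $\Gammait_{\bm v}\to {\rm Aut}(\mc C^{\rm st})$, where $\mc C^{\rm st}$ is the stable $r$-spin curve obtained by contracting the unstable rational components of $\mc C$; since ${\sf \Gamma}$ is assumed stable, $\mc C^{\rm st}$ is stable and ${\rm Aut}(\mc C^{\rm st})$ is finite. It therefore suffices to bound the kernel $\Gammait_{\bm v}^{0}$ of this homomorphism. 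An element $(\rho,g)\in\Gammait_{\bm v}^{0}$ can act on the stable components of $\mc C$ only through the finite group ${\rm Aut}_0 L_{\mc C}$ of $r$-spin bundle automorphisms (cf. the exact sequence \eqref{eqn25}), and on each unstable vertex $v$ it acts by an automorphism of the $r$-spin cylinder, i.e. by an element of a fixed copy of ${\mb C}^{*}\times{\mb Z}_r$. For each such domain action the set of bundle isomorphisms $g$ with $(\rho,g)\in\Gammait_{\bm v}^{0}$ is, when nonempty, a torsor over the group ${\rm Stab}(A,u)$ of gauge transformations fixing $(A,u)$, and the possible reparametrizations of a soliton component are bounded by the order of the automorphism group ${\rm Aut}({\bm v}_v)$ of that soliton modulo gauge. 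So the proof reduces to showing $(i)$ ${\rm Stab}(A,u)$ is finite and $(ii)$ each ${\rm Aut}({\bm v}_v)$ is finite.

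For $(i)$, I would use that a gauge transformation with $g^{*}A=A$ is $A$-parallel, hence on each connected component $\Sigma_v^{*}$ it is determined by its limit at any puncture. When $n\ge 1$ and $\mc C$ is connected, every irreducible component carries a marking or a node, at which, by Corollary \ref{cor45} and the matching condition in the definition of a stable solution, $u$ converges to a point of ${\rm Crit} W\cap\mu^{-1}(0)$, where the $K$-action is free (Hypothesis \ref{hyp33}); thus the limit of $g$ there lies in the trivial stabilizer, so $g\equiv e$ on that component and ${\rm Stab}(A,u)$ is trivial. When $n=0$ the same argument handles the soliton components (which limit to ${\rm Crit} W\cap\mu^{-1}(0)$ at $\pm\infty$), leaving only closed components without special points: here one uses that the energy $E({\bm v})=\langle\omega_V^{\ubar K},\ubar B\rangle$ is fixed, combines the uniform $C^{0}$ bound of \cite[Theorem 4.3]{Tian_Xu_geometric} with the vortex equation $*F_A+\mu(u)=0$ to make $\sup_{\Sigma}|\mu(u)|$ arbitrarily small once the area is at least a suitable $\lambda(\ubar B)$, concludes that $u$ maps into an arbitrarily small neighborhood of $\mu^{-1}(0)$, and uses that $K$-stabilizers are uniformly finite on a small neighborhood of the free locus $\mu^{-1}(0)$; the parallel-transport argument then again makes ${\rm Stab}(A,u)$ finite.

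For $(ii)$, a soliton ${\bm v}_v$ attached to an unstable vertex is nontrivial and hence has positive energy; by Theorem \ref{thm41} its section is holomorphic and descends to a holomorphic map into the classical vacuum $X$ over the cylinder, which has finite energy and is nonconstant — a constant map would, after the usual gauge fixing and using again the freeness of the $K$-action on $\mu^{-1}(0)$, force $A$ to be gauge-equivalent to $A_{\mc C}$ and hence $E({\bm v}_v)=0$. A nonconstant holomorphic map cannot be invariant under a positive-dimensional subgroup of the reparametrization group ${\mb C}^{*}$, since invariance would confine the image to a set of real dimension $\le 1$; therefore ${\rm Aut}({\bm v}_v)$ is discrete in ${\mb C}^{*}$. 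Moreover, by the exponential decay of the energy density at $\pm\infty$ (Corollary \ref{cor45} and Theorem \ref{thm46}) no nonzero translation preserves the nonzero density $\tt{e}$, so ${\rm Aut}({\bm v}_v)$ lies in the rotation circle $S^{1}$, and a discrete subgroup of $S^{1}$ is finite. Combining $(i)$, $(ii)$, and the finiteness of ${\rm Aut}(\mc C^{\rm st})$ and ${\rm Aut}_0 L_{\mc C}$ then shows that $\Gammait_{\bm v}$ is finite in both cases.

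The step I expect to be the main obstacle is the $n=0$ statement: fixing $\lambda(\ubar B)$ explicitly in terms of the energy requires the uniform $C^{0}$ and curvature estimates, and one must check that $K$-stabilizers genuinely become finite (indeed uniformly bounded) near $\mu^{-1}(0)$ rather than merely positive-dimensional somewhere. The ruling out of continuous rotational symmetry of a nontrivial soliton in $(ii)$ is the other point that needs care, since a priori a rotation-invariant soliton could have an $S^{1}$ of automorphisms; the holomorphicity of the descended map together with finiteness of energy is what excludes it.
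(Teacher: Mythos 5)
Your overall reduction of $\Gammait_{\bm v}$ to (stable curve automorphisms) $\times$ (gauge stabilizer of $(A,u)$) $\times$ (soliton reparametrizations), and your treatment of the gauge stabilizer, track the paper's argument: a gauge transformation preserving $A$ is covariantly constant, hence determined by its value at one point, and at a marking or node the section limits to a point of ${\rm Crit} W\cap\mu^{-1}(0)$, where the $K$-action is free; for $n=0$ the paper simply uses the energy inequality $\int_\Sigma|\mu(u)|^2\le E({\bm v})$ to produce, once the area exceeds $\lambda(\ubar B)$, a nonempty region where $u$ lies in the locus with free (or finite) stabilizers. You only need this weaker statement; your claim that the uniform $C^0$ bound plus the vortex equation makes $\sup_\Sigma|\mu(u)|$ small is not justified as stated, but this is a harmless slip.

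The genuine gap is in your step (ii). You exclude an infinite group of rotational symmetries of a soliton by asserting that a nontrivial soliton has a nonconstant descended map $\bar u:\Thetait\to X$, a constant $\bar u$ forcing $E({\bm v}_v)=0$. This is not a consequence of Theorem \ref{thm41} or Corollary \ref{cor45}, and it fails in general: a soliton is only required to map into $\ov{{\rm Crit}W\cap V^{\rm ss}}$ and may carry all of its energy in the $G$-orbit (fiber) directions of $({\rm Crit}W)^{\rm ss}\to X$; for instance, in the basic abelian model a cylinder vortex whose Higgs field vanishes at one interior point (a point of the unstable locus) is asymptotic to $\mu^{-1}(0)$ at both ends, has positive energy, and descends to a constant map. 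Such examples do not threaten the lemma — their interior zeros break the rotational symmetry — but they show your mechanism is the wrong one. The paper handles precisely this point differently: it gauge-fixes at $-\infty$, uses that the limit $x_-\in\mu^{-1}(0)$ has trivial (finite) stabilizer to force the gauge part of any automorphism to be $g(t)=e^{-\eta_- t}\gamma_- e^{\eta_- t}$ with $\gamma_-$ stabilizing $x_-$, and to force $e^{\eta_- T}$ to stabilize $x_-$, which pins down the admissible rotation angles when $\eta_-\neq 0$; in the residual case $\eta_-=0$, a genuine irrational rotational symmetry makes ${\bm v}$ independent of $t$, so the equation degenerates to the negative gradient flow of $(x,\eta)\mapsto\langle\mu(x),\eta\rangle$, which has a single critical value and hence only trivial finite-energy trajectories, contradicting stability. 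To repair your proof you would need either this asymptotic-stabilizer analysis or an independent argument that a nontrivial soliton cannot be rotation-invariant up to gauge — which is in effect what the paper's $\eta_-=0$ case supplies, and what your nonconstancy claim was meant to shortcut.
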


\begin{proof}
Consider $({\mc C}, {\bm v})$ and its automorphism group ${\rm Aut}({\mc C}, {\bm v})$. Let ${\rm Aut}_0({\mc C}, {\bm v}) \subset {\rm Aut}({\mc C}, {\bm v})$ be the subgroup of automorphisms $(\rho, g)$ which fix every irreducible component of ${\mc C}$. Since there are at most finitely many permutations among components, it suffices to prove that ${\rm Aut}_0({\mc C}, {\bm v})$ is finite. Hence we only need to prove the finiteness for a smooth $r$-spin curve ${\mc C}$. 

First consider the case that ${\mc C}$ is stable. Then $\rho$ is an automorphism of the stable $r$-spin curve, which is finite. Then it suffices to consider automorphisms $(\rho,  g)$ such that $\rho = {\rm Id}_{\mc C}$. Since $g$ fixes the connection, $g$ is a covariantly constant section of $P \times_{\rm Ad} K$. Hence such $g$ corresponds to a subgroup of $K$. On the other hand, if ${\mc C}$ has a marking, at which the section $u$ approaches to $\mu^{-1}(0)$. Since $g$ has to fix the section, and stabilizers of $\mu^{-1}(0)$ are trivial. Hence there are no nontrivial gauge transformations that fix ${\bm v}$. 

If ${\mc C}$ is stable but has no marking, then $\Sigma_{\mc C}$ has finite volume. By the energy inequality (see \cite[Theorem 4.4]{Tian_Xu_geometric}), for some constant $C>0$ one has 
\beqn
C + \langle \omega_V^{\ubar K}, \ubar B \rangle  = E({\bm v}) \geq \int_{\Sigma_{\mc C}} |\mu(u)|^2.
\eeqn
Hence when the area of ${\mc C}$ is sufficiently large, there is a nonempty subset of $\Sigma_{\mc C}$ whose images under $u$ is in the region of $V$ where the $K$-action is free. Hence for the same reason as the above case the automorphism group is finite.

Lastly we consider the case when ${\mc C}$ is unstable, namely ${\bm v}$ is a soliton. In this case ${\mc C} = {\mb R} \times S^1$. Let $w = s + {\bf i} t$ be the cylindrical coordinate. We can view the $r$-spin structure $(L_{\mc C}, \phi_{\mc C})$ as a trivial one. Namely, the log-canonical bundle $\omega_{\mc C}$ is trivialized by $dw$, $L_{\mc C}$ is the trivial bundle and $\phi_{\mc C}: L_{\mc C}^{\otimes 5} \to \omega_{\mc C}$ is the identity of the trivial bundle. Then for an automorphism $(\rho, g)$ of ${\bm v}$, the curve isomorphism $\rho$ is a translation of the cylinder, namely, $(s, t) \mapsto (s + S, t + T)$ for certain constants $S, T$. Then $\rho^* L_{\mc C}$ and $L_{\mc C}$ are both canonically identified with the trivial bundle. Therefore the bundle isomorphism contained in $\rho$, 
\beqn
L_{\mc C} \to \rho^* L_{\mc C} \underset{canonical}{\cong} L_{\mc C}
\eeqn
is essentially an element of ${\mb Z}_r$, and there is a group homomorphism ${\rm Aut} {\bm v} \to {\mb Z}_r$. Hence to prove that ${\rm Aut} {\bm v}$ is finite, it suffices to consider the subgroup of automorphism whose images in ${\mb Z}_r$ are trivial. Furthermore, by the finite energy condition, it is easy to see that $S$ must be zero, namely $\rho$ is a rotation of the infinite cylinder. 

\begin{claim}
$T$ is a rational multiple of $2 \pi$.
\end{claim}

\begin{proof}[Proof of the claim]
We can transform ${\bm v}$ into temporal gauge, namely  
\beqn
A = d + \psi dt. 
\eeqn
Hence we can abbreviate ${\bm v}$ by a pair $(u, \psi)$. Since $g$ preserves $A$ upto a rotation, $g$ must only depend on $t$. Furthermore, we can also fix the gauge at $-\infty$ so that 
\beqn
\lim_{s \to -\infty} ( \psi(s, t), u(s, t)) = (\eta_-, x_-(t))
\eeqn
where $\eta_- \in {\mf k}$ and ${\bm x}_-(t) = (x_-(t), \eta_-)$ is a critical loop, i.e.
\beqn
x_-'(t) + {\mc X}_{\eta_-}(x_-(t)) = 0. 
\eeqn
Then
\begin{align*}
&\ e^{2\pi \eta_-} x_-(0) = x_-(2\pi) = x_-(0)=: x_-,\ &\ x_-(t) = e^{- \eta_- t} x_-. 
\end{align*}
Since $x_- \in \mu^{-1}(0)$ which has finite stabilizer, $e^{2\pi \eta_-}$ is of finite order in $K$, say order $k$. Then we can lift ${\bm v}$ to $k$-fold covering of the infinite cylinder, denoted by $(\tilde u, \tilde \psi)$, which is defined by
\begin{align*}
&\ \tilde u(\tilde s, \tilde t) = u(\tilde s, \tilde t),\ &\ \tilde \psi (\tilde s, \tilde t) = \psi( \tilde s, \tilde t).
\end{align*}
The automorphism is also lifted to the $k$-fold cover by $\tilde \rho (\tilde s, \tilde t) = (\tilde s, \tilde t + T)$, $\tilde g(\tilde s, \tilde t) = g(\tilde s, \tilde t) = g(\tilde t)$. Then using the gauge transformation 
\beqn
\tilde h ( \tilde s, \tilde t) = e^{- \eta_- \tilde t}
\eeqn
the pair $\tilde {\bm v}$ is transformed to $\tilde {\bm v}' = (\tilde u', \tilde \psi')$ such that
\beqn
\lim_{s \to -\infty} \tilde u'(\tilde s, \tilde t) = x_-,\ \lim_{ s \to -\infty} \tilde \psi'(\tilde s, \tilde t) = 0.
\eeqn
The automorphism is transformed to the $T$-rotation and 
\beqn
\tilde g' = \tilde h^{-1} \tilde g \tilde h.
\eeqn
Therefore, $\tilde g'$ fixes the constant loop $x_-$. So $\tilde g'(\tilde t)$ is a constant element $\gamma_-$ of the finite stabilizer of $x_-$. Then it follows that 
\beq\label{eqn91}
g(t) = e^{- \eta_- t} \gamma_- e^{\eta_- t}. 
\eeq
Since the $T$-rotation and $g(t)$ altogether fixes the loop $x_-(t) = e^{- \eta_- t} x_-$, we see
\beqn
e^{- \eta_- t} \gamma_-^{-1} e^{\eta_- t} \cdot e^{- \eta_- (t + T) } x_- = e^{- \eta_- t} x_- \Longrightarrow \gamma_-^{-1} e^{- \eta_- T} x_- = x_-. 
\eeqn
Therefore $e^{\eta_- T}$ is a stabilizer of $x_-$. Since there is no continuous stabilizer of $x_-$, either $T$ is a rational multiple of $2\pi$ or $\eta_- = 0$. 

We need to prove that in the case $\eta_- = 0$, $T$ still has to be a rational multiple of $2\pi$. Indeed, if $\eta_- = 0$, then \eqref{eqn91} implies that $g(t)\equiv \gamma_-$. Since $\gamma_-^k = 1$, we see that the $kT$-rotation is an automorphism of $({\mc C}, {\bm v})$. If $T$ is an irrational multiple of $2\pi$, it follows that ${\bm v}$ is independent of $t$. Then the gauged Witten equation is reduced to the ODE for pairs $(u, \psi): {\mb R} \to X \times {\mf k}$:
\begin{align*}
&\ u'(s) + J {\mc X}_{\psi(s)}(u(s)) = 0,\ &\ \psi'(s)  + \mu (u(s)) = 0.
\end{align*}
Indeed this is the negative gradient flow equation of the Morse--Bott function 
\beqn
(x, \eta) \mapsto \langle \mu(x), \eta \rangle.
\eeqn
whose critical submanifold is $\mu^{-1}(0) \times \{0\}$ and which has only one critical value $0$. It follows that ${\bm v}$ must be trivial, which contradicts the stability condition. Therefore $T$ is a rational multiple of $2\pi$ whether $\eta_- = 0$ or not. \end{proof}

Therefore, we can only consider automorphisms of $({\mc C}, {\bm v})$ whose underlying automorphism of the cylinder is the identity. They are given by gauge transformations of the form \eqref{eqn91}, hence only depends on a stabilizer $\gamma_-$ of $x_-$. Therefore there are only finitely many such automorphisms. \end{proof}

\subsection{Thickening data}

\subsubsection{Thickening data}

Recall the notion of generalized $r$-spin curves in Subsection \ref{subsection23} and resolution data in Subsection \ref{subsection22}.

\begin{defn}\label{defn93} (cf. \cite[Definition 9.2.1]{Pardon_virtual})
A {\it thickening datum} $\alpha$ is a tuple 
\beqn
({\mc C}_\alpha, {\bm v}_\alpha, {\bf y}_\alpha^*, {\bm r}_\alpha, {\bm E}_\alpha, \iota_\alpha, H_\alpha )
\eeqn
where
\begin{enumerate}

\item ${\mc C}_\alpha$ is a smooth or nodal $r$-spin curve with $n$ punctures.

\item ${\bm v}_\alpha$ is a stable smooth solution over ${\mc C}_\alpha$.

\item ${\bf y}_\alpha^*$ is an unordered list of $l_\alpha$ points on ${\mc C}_\alpha$ that makes $({\mc C}_\alpha, {\bf y}_\alpha^*)$ a stable generalized $r$-spin curve of type $(n, 0, l_\alpha)$. Moreover, we require the following condition. Let $\Gammait_\alpha$ be the automorphism group of $({\mc C}_\alpha, {\bm v}_\alpha)$, then the image of the natural map $\Gammait_\alpha \to {\rm Aut}({\mc C}_\alpha)$ lies in the finite subgroup ${\rm Aut}({\mc C}_\alpha, {\bf y}_\alpha^*)$. 

\item ${\bm r}_\alpha$ is a resolution data for $({\mc C}_\alpha, {\bf y}_\alpha^* )$ (see Subsection \ref{subsection22}), which induces an explicit universal unfolding of $({\mc C}_\alpha, {\bf y}_\alpha^* )$, denoted by 
\beqn
\pi_\alpha: {\mc U}_\alpha \to {\mc V}_\alpha = {\mc V}_{\alpha, {\rm def}} \times {\mc V}_{\alpha, {\rm res}}.
\eeqn
The resolution data induces a smooth (orbifold) fibre bundle ${\mc Y}_\alpha \to {\mc V}_\alpha$. Notice that the automorphism $\Gammait_\alpha$ acts on ${\mc V}_\alpha$ via the map $\Gammait_\alpha \to {\rm Aut}({\mc C}_\alpha, {\bf y}_\alpha^*)$, and also acts on the total space ${\mc Y}_\alpha$. 

\item ${\bm E}_\alpha$ is a finite dimensional representation of $\Gammait_\alpha$. 

\item $\iota_\alpha$ is a $\Gammait_\alpha$-equivariant linear map
\beqn
\iota_\alpha: {\bm E}_\alpha \to C^\infty \big( {\mc Y}_\alpha, \Omega^{0,1}_{{\mc Y}_\alpha/ {\mc V}_\alpha} \otimes_{\mb C} T^{\rm vert} {\mc Y}_\alpha \big).
\eeqn
We also require that for all $e_\alpha \in {\bm E}_\alpha$, $\iota_\alpha(e_\alpha)$ is supported away from the nodal neighborhoods.\footnote{Notice that we do not require the obstruction space ${\bm E}_\alpha$ to be gauge invariant.} Moreover, the restriction of the image $\iota_\alpha({\bm E}_\alpha)$ to the central fibre of ${\mc U}_\alpha \to {\mc V}_\alpha$ is transverse to the image of $D_\alpha: T_{{\bm v}_\alpha}^\circ {\mc B}_\alpha \to {\mc E}_\alpha$.

\item $H_\alpha\subset V$ is a compact smooth embedded codimension two submanifold with boundary (not necessarily $K$-invariant) that satisfies the following condition. The resolution data ${\bm r}_\alpha$ induces a thin-thick decomposition of ${\mc C}_\alpha$ and a trivialization of the $K$-bundle over the thin part ${\mc C}_\alpha^{\rm thin}$, hence the matter field $u_\alpha$ restricted to the thin part is identified with an ordinary map $u_\alpha^{\rm thin}: {\mc C}_\alpha^{\rm thin} \to V$. We require the following conditions.
\begin{enumerate}

\item $u_\alpha^{\rm thin}$ intersects with $H_\alpha$ transversely in the interior exactly at ${\bf y}_\alpha^*$.

\item There is no intersection on either $\partial {\mc C}_\alpha^{\rm thin}$ or $\partial H_\alpha$. 

\item $H_\alpha$ is transverse to the complex line spanned by $\partial_s u_\alpha^{\rm thin} + {\mc X}_{\phi_\alpha^{\rm thin}}$ and $\partial_t u_\alpha^{\rm thin} + {\mc X}_{\psi_\alpha^{\rm thin}}$ at each point of ${\bf y}_\alpha^*$.
\end{enumerate}
\end{enumerate}
Here ends Definition \ref{defn93}.
\end{defn}

The pregluing construction (i.e. the definition of approximate solutions) depends on choosing a gauge. The last condition on gauge implies the following convenient fact: the family of approximate solutions have inherited symmetry. 

\subsubsection{Thickened moduli space within the same stratum}

Fix the thickening datum (see Definition \ref{defn93})
\beqn
\alpha = ( {\mc C}_\alpha, {\bm v}_\alpha,  {\bf y}_\alpha^*, {\bm r}_\alpha, {\bm E}_\alpha, \iota_\alpha, H_\alpha).
\eeqn
We would like to construct a family of gauged maps over the fibres of ${\mc U}_\alpha \to {\mc V}_\alpha$ which have the same combinatorial type as ${\mc C}_\alpha$. First we need to setup a Banach manifold, a Banach space bundle and a Fredholm section. Recall that in the resolution data, ${\mc V}_a = {\mc V}_{\alpha, {\rm def}} \times {\mc V}_{\alpha, {\rm res}}$, and the resolution data ${\bm r}_\alpha$ contains a smooth trivialization 
\beqn
{\mc U}_\alpha|_{{\mc V}_{\alpha, {\rm def}} \times \{0\}} \cong {\mc V}_{\alpha, {\rm def}} \times {\mc C}_\alpha.
\eeqn
Let the coordinates of ${\mc V}_{\alpha, {\rm def}}$ and ${\mc V}_{\alpha, {\rm res}}$ be $\eta$ and $\zeta$ respectively. This trivialization induces a smooth trivialization of the universal log-canonical bundle (viewed as the cotangent bundle of punctured surfaces), and hence a smooth trivialization the $U(1)$-bundle $P_{{\mc C}_{\alpha, \eta}}$ over each fibre. On the other hand, ${\bm r}_\alpha$ contains trivializations of ${\mc P}_\alpha \to {\mc V}_{\alpha, {\rm def}}$. Define 
\beqn
{\mc B}_\alpha = {\mc V}_{\alpha, {\rm def}} \times {\mc B}_{{\mc C}_\alpha, P_\alpha}^{p, w}.
\eeqn
Here ${\mc B}_{{\mc C}_\alpha, P_\alpha}^{p, w}$ is the Banach manifold defined in Subsection \ref{subsection81}. Moreover, including vectors in the obstruction space, define 
\beqn
\hat{\mc B}_\alpha = {\bm E}_\alpha \times {\mc V}_{\alpha, {\rm def}} \times {\mc B}_{{\mc C}_\alpha, P_\alpha}^{p, w}.
\eeqn
Let the Banach space bundle $\hat{\mc E}_\alpha \to \hat{\mc B}_\alpha$ to be the pull-back of ${\mc E}_{{\mc C}_\alpha, P_\alpha}^{p, w} \to {\mc B}_{{\mc C}_\alpha, P_\alpha}^{p, w}$. Then, using the linear map $\iota_\alpha$, one can define operator associated to the ${\bm E}_\alpha$-perturbed gauged Witten equation
\beq\label{eqn92}
\hat{\mc F}_\alpha ( e_\alpha, \eta, {\bm v}) = {\mc F}_{\alpha, \eta }({\bm v}) + \iota_\alpha(e_\alpha) ( {\mc C}_{\alpha, \eta}, {\bm v}).
\eeq
Here ${\mc F}_{\alpha, \eta}$ includes both the gauged Witten equation over the fibre ${\mc C}_{\alpha, \eta}$ and the gauge fixing condition relative to the gauged map ${\bm v}_\alpha$. 

Let the linearization of the operator $\hat{\mc F}_\alpha$ be 
\beqn
\hat D_\alpha: {\bm E}_\alpha \oplus {\mc V}_{\alpha, {\rm def}} \oplus T_{{\bm v}_\alpha} {\mc B}_\alpha \to {\mc E}_\alpha.
\eeqn
When the deformation parameter is turned off, one obtains the linear map
\beq\label{eqn93}
D_\alpha: {\bm E}_\alpha \oplus T_{{\bm v}_\alpha} {\mc B}_\alpha \to {\mc E}_\alpha.
\eeq
By the transversality assumption it is surjective. Define
\beqn
{\mc V}_{\alpha, {\rm map}} = {\rm ker} D_\alpha \subset {\bm E}_\alpha \oplus T_{{\bm v}_\alpha} {\mc B}_\alpha. 
\eeqn
Denote a general element of ${\mc V}_{\alpha, {\rm map}}$ by $\xi$ and add the subscript $\alpha$ whenever necessary. Then we are going to construct a family 
\beqn
\hat{\bm v}_{\alpha, \xi, \eta} = \Big( {\bm v}_{\alpha, \xi, \eta}, e_{\alpha, \xi, \eta}\Big),\ \xi \in {\mc V}_{\alpha, {\rm map}},\ \eta \in {\mc V}_{\alpha, {\rm def}}
\eeqn
where ${\bm v}_{\alpha, \xi, \eta}$ is a gauged map over the bundle $P_{\alpha, \eta} \to {\mc C}_{\alpha, \eta}$. 

The following construction needs another choice. We choose a bounded right inverse to the operator $D_\alpha^E$ in \eqref{eqn93}
\beqn
Q_\alpha: {\mc E}_\alpha \to {\bm E}_\alpha \oplus T_{{\bm v}_\alpha} {\mc B}_\alpha.
\eeqn
It induces a right inverse to $\hat D_\alpha$, denoted by 
\beq\label{eqn94}
\hat Q_\alpha: {\mc E}_\alpha \to {\mc V}_{\alpha, {\rm def}} \oplus {\bm E}_\alpha \oplus T_{{\bm v}_\alpha} {\mc B}_\alpha.
\eeq
Using the exponential map of the Banach manifold ${\mc B}_\alpha$, we define the {\it approximate solution}
\beqn
\hat{\bm v}_{\alpha, \xi, \eta}^{\rm app} =\Big( {\bm v}_{\alpha, \xi, \eta}^{\rm app}, e_{\alpha, \xi, \eta}^{\rm app} \Big) =  \Big( \exp_{{\bm v}_\alpha} \xi, e_\xi \Big),\ \xi \in {\mc V}_{\alpha, {\rm map}},\ \eta\in {\mc V}_{\alpha, {\rm def}},
\eeqn
where $e_\xi \in {\bm E}_\alpha$ is the ${\bm E}_\alpha$-component of $\xi$.

\begin{lemma}
There exists a constant $C>0$ such that when $\| \xi \|$ and $\| \eta\|$ are sufficiently small, one has
\beqn
\left\| \hat {\mc F}_\alpha \Big( e_{\alpha, \xi, \eta}^{\rm app}, \eta, {\bm v}_{\alpha, \xi, \eta}^{\rm app} \Big) \right\| \leq C \Big( \| \xi \| + \| \eta\| \Big).
\eeqn
\end{lemma}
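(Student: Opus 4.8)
The plan is to estimate the failure of the approximate solution to solve the perturbed equation, term by term, according to where the domain point lies in the thick-thin decomposition of the glued curve $\Sigma_{\alpha,\eta}$. First I would recall that the approximate solution $\hat{\bm v}^{\rm app}_{\alpha,\xi,\eta} = (\exp_{{\bm v}_\alpha}\xi, e_\xi)$ is, over the thick part, just the exponential image of a linearized deformation, while over the thin part (cylindrical ends and long necks produced by nonzero $\zeta_w$) it is built from the soliton/asymptotic models using the chosen gauge. On the thick part, $\hat{\mc F}_\alpha(e^{\rm app}_{\alpha,\xi,\eta},\eta,{\bm v}^{\rm app}_{\alpha,\xi,\eta})$ vanishes to first order in $(\xi,\eta)$ by the very definition of $\xi \in {\mc V}_{\alpha,{\rm map}} = \ker D_\alpha$ and by the construction of $\hat D_\alpha$ as the full linearization including the deformation direction; hence a Taylor expansion with remainder controlled by the smoothness of $\hat{\mc F}_\alpha$ on the Banach manifold $\hat{\mc B}_\alpha$ gives a bound of order $\|\xi\|^2 + \|\eta\|^2 \le C(\|\xi\|+\|\eta\|)$, which is more than enough.

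Next I would handle the thin part. Here there are two sources of error: the gluing/cutoff error and the mismatch of the gauge-fixing condition. For the cutoff error, on each neck region $N_{w,\eta,\zeta}$ the approximate solution interpolates between the two cylindrical models; since ${\bm v}_\alpha$ decays exponentially at the punctures (Theorem \ref{thm46}, and the soliton asymptotics \eqref{eqn52}), the commutator of the equation with the cutoff function is supported on a region of bounded cylindrical length deep in the neck where the fields are already $O(e^{-\delta/|\zeta|})$ or $O(e^{-\delta\ell})$ for neck length $\ell$; since $|\zeta|$ and the inverse neck length are controlled by $\|\eta\|$ in the resolution data (the deformation parameter $\eta$ includes the gluing parameters $\zeta$), this contributes a term that is in fact exponentially small in $\|\eta\|^{-1}$, hence certainly $\le C\|\eta\|$. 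The perturbation term $\iota_\alpha(e^{\rm app}_{\alpha,\xi,\eta})$ is, by the requirement in Definition \ref{defn93}(6), supported away from the nodal neighborhoods, so it does not interact with the neck error at all, and on the thick part it is smooth and linear in $e_\xi$, hence $O(\|\xi\|)$.

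Then I would address the gauge-fixing component of $\hat{\mc F}_\alpha$ separately, since ${\mc F}_{\alpha,\eta}$ includes the Coulomb condition relative to ${\bm v}_\alpha$ (cf. \eqref{eqn82} and Lemma \ref{lemma86}). The approximate solution is $\exp_{{\bm v}_\alpha}\xi$ where $\xi$, being in $\ker D_\alpha$, in particular satisfies the linearized Coulomb condition $\partialslash_{{\bm v}_\alpha}\xi = 0$ on the fixed curve; the actual nonlinear Coulomb functional evaluated at $\exp_{{\bm v}_\alpha}\xi$ then differs from zero by a quadratic term $O(\|\xi\|^2)$, plus the contribution from varying $\eta$ (the connection $A_{{\mc C}_{\alpha,\eta}}$ and the identifications of bundles move with $\eta$), which is $O(\|\eta\|)$ by smoothness of the resolution data in $\eta$. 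Collecting the thick-part Taylor remainder, the exponentially small neck error, the linear obstruction term, and the gauge-fixing discrepancy, and enlarging $C$ as needed, gives the claimed estimate $\|\hat{\mc F}_\alpha(e^{\rm app}_{\alpha,\xi,\eta},\eta,{\bm v}^{\rm app}_{\alpha,\xi,\eta})\| \le C(\|\xi\|+\|\eta\|)$.

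The main obstacle I expect is bookkeeping the gauge: because the approximate solution is defined by preglueing in a specific gauge adapted to each component, one must check that the gauge-fixing condition relative to ${\bm v}_\alpha$ is compatible with this preglueing gauge up to errors of the stated order, and in particular that the symmetry-inheritance arranged by condition (7) of Definition \ref{defn93} (transversality of $H_\alpha$ to the span of ${\bm v}_s,{\bm v}_t$) is exactly what makes the cutoff construction and the Coulomb slice interact cleanly rather than producing an $O(1)$ mismatch. The analytic estimates themselves (exponential decay, mean-value inequalities, smoothness of the Banach-space maps) are all standard and already available from Section \ref{section4}, so I would state them and refer to the asymptotic results rather than reprove them.
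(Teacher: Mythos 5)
Your first paragraph contains the argument that actually proves this lemma, but it is wrapped in two misreadings of the setting that you should strip out. First, this lemma lives in the subsection on the thickened moduli space \emph{within the same stratum}: the parameter $\eta$ ranges only over ${\mc V}_{\alpha,{\rm def}}$, the gluing parameters are identically zero, and via the trivialization ${\mc U}_\alpha|_{{\mc V}_{\alpha,{\rm def}}\times\{0\}} \cong {\mc V}_{\alpha,{\rm def}}\times {\mc C}_\alpha$ the domain curve, the bundle, and the weighted Sobolev norms are all fixed; only the fibrewise complex structure (and the markings) move smoothly with $\eta$. Your statement that ``the deformation parameter $\eta$ includes the gluing parameters $\zeta$'' is false in the paper's conventions, and consequently your second and third paragraphs --- the neck regions $N_{w,\eta,\zeta}$, cutoff commutators, exponential decay along long cylinders, and the compatibility of the pregluing gauge with the Coulomb slice --- address a different estimate entirely, namely the later pregluing error bound for nonzero $\zeta$ (where the bound is $C|\zeta|^{2-\tau-w}$, obtained by exactly the kind of analysis you describe). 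None of that enters here, and invoking it obscures rather than completes the proof.

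Second, your claim that $\hat{\mc F}_\alpha$ ``vanishes to first order in $(\xi,\eta)$'' is not justified: ${\mc V}_{\alpha,{\rm map}} = \ker D_\alpha$ where $D_\alpha$ is the linearization \emph{with the deformation direction turned off} (equation \eqref{eqn93}), so membership of $\xi$ in this kernel says nothing about the $\eta$-derivative of $\hat{\mc F}_\alpha$ at the center. Indeed, if the $\eta$-derivative vanished there would be no need for the correction by the right inverse $\hat Q_\alpha$ in \eqref{eqn94}--\eqref{eqn95}. This error is harmless for the lemma only because the claimed bound is linear, not quadratic: since $\hat{\mc F}_\alpha$ is a smooth map of Banach manifolds defined near $(0,0,{\bm v}_\alpha)$, since $\hat{\mc F}_\alpha(0,0,{\bm v}_\alpha)=0$ (the solution ${\bm v}_\alpha$ solves the unperturbed equation on the central fibre and is trivially in Coulomb gauge relative to itself, with $e_0 = 0$), and since the point $(e_\xi,\eta,\exp_{{\bm v}_\alpha}\xi)$ lies at distance $O(\|\xi\|+\|\eta\|)$ from the center in the fixed norms, the estimate $\|\hat{\mc F}_\alpha(e^{\rm app}_{\alpha,\xi,\eta},\eta,{\bm v}^{\rm app}_{\alpha,\xi,\eta})\| \leq C(\|\xi\|+\|\eta\|)$ follows from local Lipschitz continuity (the mean value inequality) alone. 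That one-step argument --- essentially your Taylor-remainder paragraph with the spurious first-order-vanishing claim removed and no thick/thin case distinction --- is the whole proof.
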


\begin{proof}
Left to the reader.
\end{proof}

Then the implicit function theorem implies that one can correct the approximate solutions by adding a vector in the image of the right inverse $\hat Q_\alpha$ and the correction is unique. Therefore, we obtain a family 
\beq\label{eqn95}
\hat{\bm v}_{\alpha, \xi, \eta} = \Big( {\bm v}_{\alpha, \xi, \eta}, e_{\alpha, \xi, \eta} \Big)
\eeq
which solves the ${\bm E}_\alpha$-perturbed gauged Witten equation over the $r$-spin curve ${\mc C}_{\alpha, \eta}$.

\subsubsection{Pregluing}

Now we construct approximate solutions by allowing nonzero gluing parameters. The construction relies on choosing certain cut-off functions. We fix a pair cut-off functions $\rho_\pm: {\mb R} \to [0, 1]$ satisfying
\beq\label{eqn96}
{\rm supp} \rho_- = (-\infty, 0],\ \rho|_{(-\infty, -1]} = 1,\  {\rm supp} d\rho_- = [-1,0],\ \rho_+(t) = \rho_-(-t).
\eeq
For all $T >>1$, also denote 
\begin{align*}
&\ \rho_-^T(t) = \rho_-( \frac{ t}{T}),\ &\ \rho_+^T(t) = \rho_+( \frac{ t}{T}).
\end{align*}

Consider a typical node $w\in {\rm Irre} {\mc C}_\alpha$. Let the monodromies of the $r$-spin structure at the two sides of $w$ be $\gamma_-, \gamma_+ \in {\mb Z}_r$. 

Using the resolution data and the functions $z\mapsto \pm \log z$ we can identify the two sides of the nodes as semi-infinite cylinders 
\begin{align*}
&\ C_- = [a, +\infty) \times S^1,\ &\ C_+ = (-\infty, b] \times S^1. 
\end{align*}

The bundle $L_R$ is then trivialized up to ${\mb Z}_r$ action. Moreover, the resolution data trivializes $P_K$ over $C_- \sqcup C_+$. Then we can identify the connection as forms 
\begin{align*}
&\ A|_{C_-} = d + \alpha_-,\ &\ A|_{C_+} = d + \alpha_+.
\end{align*}
The restrictions of the section $u$ onto $C_-$ and $C_+$ are identified with maps 
\begin{align*}
&\ u_-: C_- \to ({\rm Crit} W)^{\rm ss},\ &\ u_+: C_+ \to ({\rm Crit} W)^{\rm ss}.
\end{align*}
Then using the cut-off function $\rho$ and the exponential map of $({\rm Crit} W)^{\rm ss}$ we can define the approximate solution as in the case of Gromov--Witten theory. For any gluing parameter $\zeta = (\zeta_w)_w$, introduce 
\beqn
- \log \zeta_w = 4T_w + {\bf i} \theta_w.
\eeqn
For $k \in {\mb R}_+$, let $\Sigma_{kT} \subset \Sigma_{\mc C}$ be the closed subset obtained by removing the radius $e^{-kT_w}$ disk around each nodal points (but not punctures). The long cylinder is identified with 
\beqn
N_{w, kT} = [ (k-2)T, (2-k) T] \times S^1.
\eeqn
Then we have
\beqn
\Sigma_\eta = \Sigma_T \cup \bigoplus_w N_{w, T}
\eeqn
where the intersections of $\Sigma_T$ and $N_{w, T}$ are only their boundaries.

Moreover, suppose $(A_\pm, u_\pm)$ converges to the loop $(x_\pm, \eta_\pm): S^1 \to X_W\cap \mu^{-1}(0) \times {\mf k}$ as $z$ approaches $\hat w_\pm$. The gauge we choose guarantees that the two loops are identical, denoted by $(x_w, \eta_w)$. Further, using the exponential map of $({\rm Crit} W)^{\rm ss}$, we can write 
\beqn
u_\pm (s, t) = \exp_{x_w(t)} \xi_\pm(s, t),\ {\rm where}\ \xi_\pm \in W^{1, p, \delta}(C_\pm, x_w^* T{\rm Crit} W).	
\eeqn

\begin{defn}\label{defn95} {\rm (Central approximate solution)}
Given a thickening datum $\alpha$ as denoted in Definition \ref{defn93}, for any gluing parameter $\zeta \in \hat{\mc V}_{\rm res}$, we obtain an $r$-spin curve ${\mc C}_{\alpha, \eta, \zeta}$ with extra unordered marked points, which represents a point in $\ov{\mc M}{}_{g, n, l_\alpha}^r$. The central approximate solution is the object ${\bm v}_{\alpha, \eta} = (A_{\alpha,\eta}, u_{\alpha,\eta})$ where 

\begin{itemize}
\item $(A_{\alpha,\eta}, u_{\alpha,\eta} )|_{\Sigma_T} = (A_\alpha, u_\alpha)|_{\Sigma_T}$.

\item For each node $w$ with $\eta_w \neq 0$, we have 
\begin{align}\label{eqn97}
&\ A_{\alpha,\eta}|_{N_{w, T}} = d + a_\eta = d + \rho_-^T a_- + \rho_+^T a_+,\ &\ u_{\alpha, \eta} |_{N_{w, T}} = \exp_{x_w} \left( \rho_-^T \xi_- + \rho_+^T \xi_+ \right). 
\end{align}
\end{itemize}
\end{defn}

We also consider an auxiliary object ${\bm v}_\eta'$ that is defined over the unresolved curve ${\mc C}$. Indeed, we can include $\Sigma_{2T} \hookrightarrow \Sigma_{\mc C}$ and define ${\bm v}_\eta'|_{\Sigma_{2T}} = {\bm v}_\eta|_{\Sigma_{2T}}$. Further, \eqref{eqn96} and \eqref{eqn97} implies that ${\bm v}_\eta$ is equal to the loop $(x_w, \eta_w)$ near the boundary of $\Sigma_{2T}$. Then over $\Sigma_{\mc C} \setminus \Sigma_{2T}$, ${\bm v}_\eta'$ is defined to be the loop $(x_w, \eta_w)$.

\subsubsection{Thickened moduli spaces}

We postponed the definition of $I$-thickened moduli spaces here because its definition relies on the description of approximate solutions.

\begin{defn}\label{defn96} {\rm (Thickened solutions)} (cf. \cite[Definition 9.2.3]{Pardon_virtual})
For a finite set $I$ of thickening data, an {\it $I$-thickened solution} is a quadruple
\beqn
({\mc C}, {\bm v}, ({\bf y}_\alpha)_{\alpha \in I}, (\wh\phi_\alpha)_{\alpha \in I}, (e_\alpha)_{\alpha \in I})
\eeqn
where
\begin{enumerate}
\item ${\mc C} =  (\Sigma_{\mc C}, \vec{\bf z}_{\mc C}, L_{\mc C}, \varphi_{\mc C})$ is a smooth or nodal $r$-spin curve of type $(g, n)$ and for each $\alpha \in I$, ${\bf y}_\alpha$ is an unordered list of marked points ${\bf y}_\alpha$ labelled by $\alpha \in I$. We require that for every $\alpha \in I$, $({\mc C}, {\bf y}_\alpha)$ is a stable generalized $r$-spin curve. 

\item ${\bm v} = (P, A, u)$ is a smooth gauged map over ${\mc C}$. 

\item For each $\alpha \in I$, $\wh\phi_\alpha: P \to {\mc P}_\alpha$ is an inclusion of $K$-bundles which covers an isomorphism $\phi_\alpha: {\mc C} \cong {\mc C}_{\phi_\alpha} = {\mc C}_{\alpha, \eta, \zeta}$ for some fibre ${\mc C}_{\phi_\alpha} \subset {\mc U}_\alpha$ such that $\phi_\alpha({\bf y}_\alpha) = {\bf y}_{\alpha, \eta, \zeta} = {\bf y}_{\phi_\alpha}$. 
Remember there is an approximate solution ${\bm v}_{\alpha, \eta, \zeta}^{\rm app}$ on ${\mc C}_{\alpha, \eta, \zeta}$. There is also a thick-thin decomposition
\beqn
{\mc C}_{\phi_\alpha} = {\mc C}_{\phi_\alpha}^{\rm thick} \cup {\mc C}_{\phi_\alpha}^{\rm thin}.
\eeqn

\item ${\bm v}$ and $\wh\phi_\alpha$ need to satisfy the following condition. For each $\alpha$ and each irreducible component $v$, $(\wh\phi_\alpha^{-1})^* {\bm v}_v$ is in Coulomb gauge with respect to ${\bm v}_{\alpha, \eta, \zeta}^{\rm app}|_{{\mc C}_{\phi_\alpha, v}}$. Here ${\mc C}_{\phi_\alpha, v} \subset {\mc C}_{\phi_\alpha}$ is the component identified with ${\mc C}_v$. Moreover, over the intersection ${\mc C}_{\phi_\alpha, v} \cap {\mc C}_{\phi_\alpha}^{\rm thin}$ where the $K$-bundle is trivialized, $(\wh\phi_\alpha^{-1})^* u_v$ is identified with a smooth map $u_v^{\rm thin}: {\mc C}_{\phi_\alpha, v}^{\rm thin} \to X$. We require that ${\bf y}_\alpha \cap \Sigma_v = u_v^{\rm thin} \pitchfork H_\alpha$ and the intersections are transverse and are away from the boundaries.

\item For every $\alpha \in I$, $e_\alpha \in {\bm E}_\alpha$.

\item The following $I$-thickened gauged Witten equation is satisfied:
\beqn
{\mc F}_{\mc C} ({\bm v}) + \sum_{\alpha \in I} \iota_\alpha( e_\alpha) ( {\mc C}, {\bm v}, {\bf y}_\alpha, \wh\phi_\alpha) = 0. 
\eeqn
\end{enumerate}

Two $I$-thickened solutions $({\mc C}, {\bm v}, ({\bf y}_\alpha), (\wh\phi_\alpha), (e_\alpha))$ and $({\mc C}', {\bm v}', ({\bf y}_\alpha'), (\wh\phi_\alpha'), (e_\alpha'))$ are {\it isomorphic} if for every $\alpha \in I$, $e_\alpha = e_\alpha'$, and if there is a commutative diagram
\beqn
 \xymatrix{  P \ar[r]^{\wh\phi_\alpha} \ar[d] \ar@/^2pc/[rr]^{\wh\rho} & {\mc P}_\alpha \ar[d] &  P' \ar[l]_{\wh\phi_\alpha'} \ar[d] \\
                      {\mc C} \ar[r]^{\phi_\alpha} \ar@/_2pc/[rr]^{\rho} & {\mc U}_\alpha        &  {\mc C}' \ar[l]_{\phi_\alpha'} }
\eeqn
where $\rho$ is an isomorphism of $r$-spin curves and $\wh\rho$ is an isomorphism of bundles such that 
\beqn
{\bm v} = \wh\rho^* {\bm v}'.
\eeqn
Here ends Definition \ref{defn96}.
\end{defn}

The $I$-thickened solutions give local charts of the moduli spaces (in a formal sense). Indeed, it is straightforward to define a topology on the space of isomorphism classes of $I$-thickened solutions, denoted by ${\mc M}_I$. The finite group 
\beqn
\Gammait_I = \prod_{\alpha \in I} \Gammait_\alpha
\eeqn
acts on ${\mc M}_I$ continuously as follows. For any $\gamma = (\gamma_\alpha) \in \Gammait_I$, define 
\beqn
\gamma \cdot \Big( {\mc C}, {\bm v}, ({\bf y}_\alpha), (\wh\phi_\alpha), (e_\alpha) \Big) = \Big( {\mc C}, {\bm v}, ({\bf y}_\alpha), ( \gamma_\alpha^{{\mc P}_\alpha} \circ \wh\phi_\alpha), (\gamma_\alpha \cdot e_\alpha) \Big).
\eeqn
Since all $\iota_\alpha$ are $\Gammait_\alpha$-equivariant, this is indeed an action on ${\mc M}_I$. Furthermore, there is hence a continuous, $\Gammait_\alpha$-equivariant map 
\beqn
\tilde S_I: {\mc M}_I \to {\bm E}_I:= \bigoplus_\alpha {\bm E}_\alpha
\eeqn
sending an $I$-thickened solution to $\{ e_\alpha\} \in {\bm E}_I$. Denote 
\beqn
U_I = {\mc M}_I/ \Gammait_I,\ E_I  = ({\mc M}_I \times {\bm E}_I)/ \Gammait_I.
\eeqn
and the induced section $S_I: U_I \to E_I$. There is the canonical map 
\beqn
\psi_I: S_I^{-1}(0) \to {\mc M}
\eeqn
by only remembering the data ${\mc C}$ and ${\bm v}$, and sending $({\mc C}, {\bm v})$ to its isomorphism class. The image of $\psi_I$ is denoted by $F_I$ and denote 
\beq\label{eqn98}
C_I = (U_I, E_I, S_I, \psi_I, F_I).
\eeq
By the definition of local charts (Definition \ref{defn610}), $C_I$ is a local chart if we can verify the following facts:
\begin{itemize}
\item ${\mc M}_I$ is a topological manifold.

\item $F_I$ is an open subset of ${\mc M}$.

\item $\psi_I$ is a homeomorphism.
\end{itemize}
They will be treated in the next section for general $I$. In the rest of this section, we will prove the corresponding results for the case that $I$ contains a single thickening datum $\alpha$.

\subsubsection{Constructing thickening data} 

Now we prove the following lemma. 

\begin{lemma}\label{lemma97}
Let $\sf\Gamma$ be a stable decorated dual graph. For any $p \in \ov{\mc M}_{\sf \Gamma}$, there exists a transverse thickening datum $\alpha$ centered at $p$. 
\end{lemma}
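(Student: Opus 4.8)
The plan is to construct the seven components of $\alpha=({\mc C}_\alpha,{\bm v}_\alpha,{\bf y}_\alpha^*,{\bm r}_\alpha,{\bm E}_\alpha,\iota_\alpha,H_\alpha)$ one at a time. First fix any representative $({\mc C}_\alpha,{\bm v}_\alpha)$ of the isomorphism class $p$, where ${\mc C}_\alpha$ is the underlying (possibly nodal) $r$-spin curve, carrying solitons on its unstable vertices, and ${\bm v}_\alpha$ is the corresponding stable solution; by Lemma \ref{lemma92} its automorphism group $\Gammait_\alpha$ is finite. Next, choose the auxiliary marked points ${\bf y}_\alpha^*$: on each unstable rational component add points until the resulting generalized $r$-spin curve is stable, and take ${\bf y}_\alpha^*$ to be a union of $\Gammait_\alpha$-orbits of \emph{generic} points --- disjoint from the nodes and the original punctures, mapped by $u_\alpha$ to pairwise distinct points of $V$, and avoiding the zero locus of the covariant derivative $d_{A_\alpha}u_\alpha$. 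The last requirement can be met because on every component carrying such a point the solution is nontrivial, so $d_{A_\alpha}u_\alpha$ vanishes only on a discrete set by unique continuation for the gauged Witten equation. Then $\Gammait_\alpha$ acts on $({\mc C}_\alpha,{\bf y}_\alpha^*)$ through the finite subgroup ${\rm Aut}({\mc C}_\alpha,{\bf y}_\alpha^*)$, as Definition \ref{defn93}(3) requires. Applying the existence of resolution data from Subsections \ref{subsection22} and \ref{subsection24} to $({\mc C}_\alpha,{\bf y}_\alpha^*)$, and shrinking the base so that $\Gammait_\alpha$ acts, yields ${\bm r}_\alpha$, hence the $\Gammait_\alpha$-equivariant families $\pi_\alpha\colon{\mc U}_\alpha\to{\mc V}_\alpha$ and ${\mc Y}_\alpha\to{\mc V}_\alpha$, the thick--thin decomposition, and the trivializations over the thick part.

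The two remaining pieces carry the analytic and geometric content. For the obstruction data, recall from Theorem \ref{thm88} that the linearized operator $D_\alpha\colon T^\circ_{{\bm v}_\alpha}{\mc B}_\alpha\to{\mc E}_\alpha$ --- the linearization of the gauged Witten equation together with the Coulomb gauge-fixing relative to ${\bm v}_\alpha$ --- is Fredholm; since ${\bm v}_\alpha$ has automorphism group $\Gammait_\alpha$, both sides are $\Gammait_\alpha$-representations and $D_\alpha$ is equivariant, so ${\rm coker}\,D_\alpha$ is a finite-dimensional $\Gammait_\alpha$-module, which by the gauge-fixing sits inside the $\Omega^{0,1}$-summand of ${\mc E}_\alpha$. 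I would pick a $\Gammait_\alpha$-invariant finite-dimensional complement $W$ of ${\rm Im}\,D_\alpha$, then use density of smooth sections supported in the thick part of ${\mc C}_\alpha$ (hence away from the nodal neighborhoods and the cylindrical ends) in the relevant $L^{p,w}$-space to replace a basis of $W$ by smooth, compactly-supported such sections $w_1,\dots,w_k$ that still span a complement --- an open condition. Setting ${\bm E}_\alpha$ to be the $\Gammait_\alpha$-span of $\{w_1,\dots,w_k\}$ inside $C^\infty$, the operator $D_\alpha^{{\bm E}_\alpha}$ becomes surjective, i.e.\ the transversality clause of Definition \ref{defn93}(6) holds on the central fibre. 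To promote this to $\iota_\alpha\colon{\bm E}_\alpha\to C^\infty({\mc Y}_\alpha,\Omega^{0,1}_{{\mc Y}_\alpha/{\mc V}_\alpha}\otimes T^{\rm vert}{\mc Y}_\alpha)$, spread each $w_i$ across ${\mc Y}_\alpha$ using the trivialization of the vertical tangent bundle over the thick part supplied by ${\bm r}_\alpha$ (its support stays in the thick part, away from nodes), and extend $\Gammait_\alpha$-equivariantly.

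For the slicing submanifold $H_\alpha$, Theorem \ref{thm41} gives ${\bm v}_s+J{\bm v}_t=0$, so at each $y_j\in{\bf y}_\alpha^*$ the differential $d_{A_\alpha}u_\alpha^{\rm thin}(y_j)$ spans a complex line $\Pi_j\subset T_{u_\alpha^{\rm thin}(y_j)}V$, which is genuinely $2$-real-dimensional by the choice of $y_j$. I would take $H_\alpha$ to be a disjoint union of small embedded codimension-two disks, one through each $u_\alpha^{\rm thin}(y_j)$ and transverse to $\Pi_j$ (for instance the image under $\exp$ of a small ball in $\Pi_j^{\perp}$), each chosen small enough that $\ov{u_\alpha^{\rm thin}({\mc C}_\alpha^{\rm thin})}$ meets $H_\alpha$ exactly at the points $u_\alpha^{\rm thin}(y_j)$ and that $\partial H_\alpha$ is disjoint from this closure; since the points of ${\bf y}_\alpha^*$ lie in the interiors of their thin neighborhoods and are mapped to distinct regular values, these choices are compatible and conditions (a)--(c) of Definition \ref{defn93}(7) hold. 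Finally $({\mc C}_\alpha,{\bm v}_\alpha,{\bf y}_\alpha^*,0)$ is a genuine $\{\alpha\}$-thickened solution, so $p$ lies in the footprint of the resulting chart and the datum is centered at $p$.

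The main obstacle will be the obstruction-space step: one must realize a $\Gammait_\alpha$-equivariant complement of the Fredholm cokernel by globally defined smooth sections that are simultaneously supported away from the nodal and cylindrical-end regions (this support constraint is exactly what makes the gluing of Section \ref{section9} go through) and spread consistently over the unfolding ${\mc Y}_\alpha$ on whose base $\Gammait_\alpha$ acts, possibly permuting components. A secondary difficulty is the bookkeeping for soliton components: their reparametrization freedom must be absorbed into the added marked points, and one must verify that an infinite-cylinder component still contains suitable points where the covariant derivative is nonzero --- again a consequence of unique continuation --- so that the corresponding pieces of $H_\alpha$ can be placed.
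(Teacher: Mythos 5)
Your construction is correct and follows essentially the same route as the paper's proof: stabilize the unstable components by marked points tied to a transverse codimension-two slice $H_\alpha$, choose resolution data for the resulting stable generalized $r$-spin curve, build a finite-dimensional $\Gammait_\alpha$-equivariant space of smooth $(0,1)$-form obstructions supported away from the special points that covers the cokernel of the linearized operator, and spread it over the unfolding via the trivializations supplied by ${\bm r}_\alpha$. The remaining differences are organizational rather than substantive: the paper chooses $H_\alpha$ first (one slice per $\Gammait_\alpha$-orbit of soliton components) and takes ${\bf y}_\alpha^*$ to be the resulting intersection set, builds ${\bm E}_\alpha$ componentwise as $\bigoplus_v {\bm E}_v$ before enlarging it to a $\Gammait_\alpha$-invariant space, and extends $\iota_\alpha$ over the family separately through the thick and thin trivializations, while you reverse the order of ${\bf y}_\alpha^*$ and $H_\alpha$ and work with the global operator and the thick trivialization only.
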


First, choose a representative ${\bm v}_\alpha$ as a stable solution to the gauged Witten equation over a smooth or nodal $r$-spin curve ${\mc C}_\alpha$. Let $\Gammait_\alpha$ be the automorphism group of $({\mc C}_\alpha, {\bm v}_\alpha)$. Then $\Gammait_\alpha$ acts on the set of irreducible components ${\rm Irre} {\mc C}$ and the subset ${\rm Unst}{\mc C} \subset {\rm Irre} {\mc C}$ of unstable components is $\Gammait_\alpha$-invariant. For each $\Gammait_\alpha$-orbit $O$ of unstable components, for each representative $v \in {\rm Unst} {\mc C}$ of this orbit, by the stability condition, the soliton ${\bm v}_v = (A_v, u_v)$ is nonconstant. Notice that we regard the $K$-bundle over this rational component is trivialized and $u_v$ is regarded as a genuine map into $V$. Hence the subset of $\Sigma_v$ of points where the covariant derivative $D_{A_v} u_v$ is nonzero is an open and dense subset. Hence we can choose a codimension two submanifold $H_O$ with boundary which intersects transversely with $u_v$ at some point of $\Sigma_v$. Choose $H_O$ for every $\Gammait_\alpha$-orbit $O$ of unstable components, and define
\beqn
H_\alpha = \bigcup_{O \in {\rm Unst} {\mc C}/ \Gammait_\alpha} H_O. \footnote{Different $H_O$ might intersect. So we actually need maps from the disjoint union of $H_O$ into $X$.}
\eeqn
Then let ${\bf y}_\alpha$ be the (unordered) set of intersection points between $H_\alpha$ and the images of all unstable components. 

Now the smooth or nodal $r$-spin curve ${\mc C}_\alpha = (\Sigma_{{\mc C}_\alpha}, \vec{\bf z}_{{\mc C}_\alpha})$ together with the collection ${\bf y}_\alpha$ (which is unordered) form a stable generalized $r$-spin curve. Then one can choose a resolution data ${\bm r}_\alpha$ of $({\mc C}_\alpha, {\bf y}_\alpha)$. It contains the following objects
\begin{enumerate}
\item A universal unfolding $\pi_\alpha: {\mc U}_\alpha \to {\mc V}_\alpha$, where ${\mc V}_\alpha = {\mc V}_{\alpha, {\rm def}} \times {\mc V}_{\alpha, {\rm res}}$.

\item A smooth principal $K$-bundle ${\mc P}_{K,\alpha} \to {\mc U}_\alpha$ whose restriction to the central fibre is identified with $P_\alpha \to {\mc C}_\alpha$. 

\item The universal cylindrical metric induces a fibrewise cylindrical metric, and hence a $U(1)$-bundle ${\mc P}_{R, \alpha} \to {\mc U}_\alpha$. Together with the bundle ${\mc P}_{K, \alpha}$ one obtains a principal $\ubar K$-bundle $\ubar {\mc P}_\alpha \to {\mc U}_\alpha$. 

\item A thick-thin decomposition ${\mc U}_\alpha = {\mc U}_\alpha^{\rm thick} \cup {\mc U}_\alpha^{\rm thin}$.

\item A trivialization of the thick part of the universal unfolding
\beqn
t_C^{\rm thick}: {\mc U}_\alpha^{\rm thick} \cong {\mc V}_\alpha \times {\mc C}_\alpha^{\rm thick}.
\eeqn

\item Trivializations of the $K$-bundles
\begin{align*}
&\ t_P^{\rm thin}: {\mc P}_{K,\alpha} |_{{\mc U}_\alpha^{\rm thin}} \cong K \times {\mc U}_\alpha^{\rm thin},\ &\ t_P^{\rm thick}: {\mc P}_{K, \alpha} |_{{\mc U}_\alpha^{\rm thick}} \cong {\mc V}_\alpha \times P_{\alpha} |_{{\mc C}_\alpha^{\rm thick}}.
\end{align*}
Moreover, $t_P^{\rm thin}$ extends the trivialization of $P_{\alpha}$ over ${\mc C}_\alpha^{\rm thin}$. 

\item $\ubar {\mc P}_\alpha$ induces an associated fibre bundle ${\mc Y}_\alpha:= \ubar {\mc P}_\alpha(V) \to {\mc U}_\alpha$ whose fibres are $V$. The above trivializations induce trivializations of $\ubar {\mc P}_\alpha(V)$.

\begin{align}\label{eqn99}
&\ t_Y^{\rm thin}: {\mc Y}_\alpha |_{{\mc U}_\alpha^{\rm thin}} \cong V \times {\mc U}_\alpha^{\rm thin},\ &\ t_Y^{\rm thick}: {\mc Y}_\alpha |_{{\mc U}_\alpha^{\rm thick}} \cong {\mc V}_\alpha \times Y_\alpha|_{{\mc C}_\alpha^{\rm thick}}.
\end{align}
\end{enumerate}

Now we construct the obstruction space ${\bm E}_\alpha$ and the inclusion $\iota_\alpha$. For each irreducible component $\Sigma_v \subset \Sigma_{{\mc C}_\alpha}$, the augmented linearization of the gauged Witten equation over $\Sigma_v$ at ${\bm v}_v$ is a Fredholm operator
\beqn
D_v: T_{{\bm v}_v} {\mc B}_v \to {\mc E}_v.
\eeqn
Here $T_{{\bm v}_v} {\mc B}_v$ consists of infinitesimal deformations of the gauged map ${\bm v}_v$. Inside there is a finite-codimensional subspace 
\beqn
T_{{\bm v}_v}^\circ {\mc B}_v \to T_{{\bm v}_v} {\mc B}_v
\eeqn 
consisting of infinitesimal deformations whose values at all punctures and nodes on this component vanish. Then as in the case of Gromov--Witten theory, one can find a finite dimensional space of sections ${\bm E}_\alpha \subset L^{p, \delta}( \Sigma_v, \Lambda^{0,1} \otimes u_v^* \ubar P_\alpha (TV))$ satisfying the following: 
\begin{enumerate}

\item Elements of ${\bm E}_v$ are smooth sections and are supported in a compact subset $O_v \subset \Sigma_v$ that is disjoint from the special points.

\item There is a $K$-invariant open neighborhood $O$ of $\mu^{-1}(0)$ in $({\rm Crit} W)^{\rm ss}$ such that $u_v(O_v)$ is contained in $P_v \times_K O$. 

\item ${\bm E}_v$ is transverse to the image $D_v( T_{{\bm v}_v}^\circ {\mc B}_v)$. 
\end{enumerate}

We may regard each $e_v \in {\bm E}_v$ as a smooth section of $\pi^* \Lambda^{0,1} \otimes \ubar P_\alpha (TV)$ restricted to the graph of $u_v$, where temporarily denote by $\pi$ the projection $\ubar P_\alpha (V) \to \Sigma_v$. Then we extend them to global smooth sections of $\pi^* \Lambda^{0,1} \otimes \ubar P_\alpha (TV)$ over $\ubar P_\alpha (V)$, or equivalently, construct a linear map 
\beqn
\iota_v: {\bm E}_v \to C^\infty( \ubar P_\alpha(V), \pi^* \Lambda^{0,1} \otimes \ubar P_\alpha (TV) ).
\eeqn
Then define 
\beqn
{\bm E}_\alpha = \bigoplus_{v \in {\rm Irre} {\mc C}_\alpha} {\bm E}_v.
\eeqn
We have a linear map 
\beq\label{eqn910}
\iota_\alpha: {\bm E}_\alpha \to C^\infty( \ubar P_\alpha(V), \pi^* \Lambda^{0,1} \otimes \ubar P_\alpha (TV) ).
\eeq

We have not imposed the $\Gammait_\alpha$ equivariance condition. Recall that $\Gammait_\alpha$ acts on $\ubar P_\alpha(V)$ and the bundle $\pi^* \Lambda^{0,1} \otimes \ubar P_\alpha(TV)$. Hence by enlarging ${\bm E}_\alpha$ so that it becomes $\Gammait_\alpha$-invariant while remaining finite-dimensional. Hence the above inclusion is $\Gammait_\alpha$-equivariant. 

So far we have chosen obstruction spaces over the central fibre. Lastly we need to construct a $\Gammait_\alpha$-equivariant linear map 
\beqn
\iota_\alpha: {\bm E}_\alpha \to C^\infty ( {\mc Y}_\alpha, \Omega^{0,1}_{{\mc Y}_\alpha / {\mc V}_\alpha} \otimes \ubar{\mc P}_\alpha(TV) )
\eeqn
which extends \eqref{eqn910}. Notice that we have a $\Gammait_\alpha$-equivariant decomposition 
\beqn
{\bm E}_\alpha = {\bm E}_\alpha^{\rm thick} \oplus {\bm E}_\alpha^{\rm thin}
\eeqn
and such that $\iota_\alpha = \iota_\alpha^{\rm thick} \oplus \iota_\alpha^{\rm thin}$ where 
\beq\label{eqn911}
\iota_\alpha^{\rm thick}: {\bm E}_\alpha^{\rm thick} \to C^\infty( \ubar P_\alpha(V)^{\rm thick}, \pi^* \Lambda^{0,1} \oplus \ubar P_\alpha(TV)),
\eeq
\beq\label{eqn912}
\iota_\alpha^{\rm thin}: {\bm E}_\alpha^{\rm thin} \to C^\infty( \ubar P_\alpha(V)^{\rm thin}, \pi^* \Lambda^{0,1} \otimes \ubar P_\alpha(TV)).
\eeq
Here $\ubar P_\alpha(V) = \ubar P_\alpha(V)^{\rm thick}\cup \ubar P_\alpha(V)^{\rm thin}$ is the thick-thin decomposition coming from the resolution data. For the part ${\bm E}_\alpha^{\rm thick}$ which comes from obstructions over stable components, using the trivialization $t_Y^{\rm thick}$ in \eqref{eqn99}, the map \eqref{eqn911} can be extended to 
\beqn
\iota_\alpha^{\rm thick}: {\bm E}_\alpha^{\rm thick} \to C^\infty( {\mc Y}_\alpha^{\rm thick}, \Omega^{0,1}_{{\mc Y}_\alpha/ {\mc V}_\alpha} \otimes \ubar {\mc P}_\alpha(TV)).
\eeqn
On the other hand, for the part ${\bm E}_\alpha^{\rm thin}$ which comes from the unstable components, since over the thin part the resolution data provides canonical cylindrical coordinates as well as a trivialization of ${\mc Y}_\alpha|_{{\mc U}_\alpha^{\rm thin}}$, one has the identification 
\beqn
C^\infty( {\mc Y}_\alpha^{\rm thin}, \Omega_{{\mc Y}_\alpha/{\mc V}_\alpha}^{0,1} \otimes \ubar {\mc P}_\alpha(TV)  ) \cong C^\infty( {\mc U}_\alpha^{\rm thin} \times V, TV )
\eeqn
Then using the trivialization $t_Y^{\rm thin}$ in \eqref{eqn99}, one can extend \eqref{eqn912} to a linear map 
\beqn
\iota_\alpha^{\rm thin}: {\bm E}_\alpha^{\rm thin} \to C^\infty( {\mc Y}_\alpha^{\rm thin}, \Omega^{0,1}_{{\mc Y}_\alpha/ {\mc V}_\alpha} \otimes \ubar {\mc P}_\alpha(TV) ).
\eeqn
Both $\iota_\alpha^{\rm thin}$ and $\iota_\alpha^{\rm thick}$ remain $\Gammait_\alpha$-equivariant. Hence their direct sum $\iota_\alpha:= \iota_\alpha^{\rm thick} \oplus \iota_\alpha^{\rm thin}$ provides the last piece of the thickening datum $\alpha$. 

This finishes the proof of Lemma \ref{lemma97}.

\subsection{Nearby solutions}

Let $\alpha$ be a thickening datum, which contains an $r$-spin curve ${\mc C}_\alpha$ and stable solution ${\bm v}_\alpha$ to the gauged Witten equation. Then for any gluing parameter $\zeta$, we have constructed an approximate solution ${\bm v}_{\alpha, \zeta}^{\rm app}$ over every curve of the form ${\mc C}_{\alpha, \eta, \zeta}$. Notice that the role of the deformation parameter $\eta$ is only to vary the complex structure (including the position of the markings) on the same curve. 

We would like to have a quantitative way to measure the distance between an $\alpha$-thickened solution to the central one. When the $\alpha$-thickened solution is defined over a resolved domain, we consider the distance from a corresponding approximate solution. Hence we need to define certain weighted Sobolev norm over the approximate solution. 

\begin{defn}\label{defn98} {\rm (Weight function over the resolved domain)} Let ${\bm r}_\alpha$ be the resolution datum contained in the thickening datum $\alpha$. Let ${\mc U}_\alpha^* \subset {\mc U}_\alpha$ be the complement of the nodes and punctures. Define a function 
\beqn
\omega_\alpha: {\mc U}_\alpha^* \to {\mb R}_+
\eeqn
as follows. For each component of the thin part ${\mc U}_{\alpha, i}^{\rm thin} \subset {\mc U}_{\alpha}^{\rm thin}$, if it corresponds to a marking with fibrewise cylindrical coordinate $s + {\bf i} t$, then define 
\beqn
\omega_\alpha|_{{\mc U}_{\alpha, i}^{\rm thin}} = e^s. 
\eeqn
If the component ${\mc U}_{\alpha, i}^{\rm thin}$ corresponds to a node, then for each corresponding gluing parameter $\zeta_i$ with the long cylinder $w_i^+ w_i^- = \zeta_i$ with $|w_i^\pm | \leq r_i$, then define 
\beqn
\omega_\alpha|_{{\mc U}_{\alpha, i}^{\rm thin}} = \left\{\begin{array}{cc} |w_i^+|^{-1}, &\   \sqrt{ |\zeta_i|} \leq |w_i^+| \leq r_i,\\
                                                                      |w_i^-|^{-1}, &\  \sqrt{|\zeta_i|} \leq |w_i^-| \leq r_i.    \end{array}  \right.
\eeqn
Then extend $\omega_\alpha|_{{\mc U}_\alpha^{\rm thin}}$ to a smooth function over the thick part that has positive values with minimal value at least 1. For any deformation parameter $\eta \in {\mc V}_{\alpha, {\rm def}}$ and gluing parameter $\zeta \in {\mc V}_{\alpha, {\rm res}}$ with $(\eta, \zeta) \in {\mc V}_\alpha$, define 
\beqn
\omega_{\alpha, \eta, \zeta}:= \omega_\alpha|_{{\mc C}_{\alpha, \eta, \zeta}}.
\eeqn
\end{defn}

\begin{defn}\label{defn99}{\rm (Weighted Sobolev norms over the resolved domain)} 
Let $\alpha$ be a thickening data and $w_\alpha$ be the weight function defined in Definition \ref{defn98}. Then for any $(\eta, \zeta) \in {\mc V}_\alpha$, define the weighted Sobolev norm 
\beqn
\| f \|_{W^{k,p,w}_{\alpha, \eta, \zeta}} = \sum_{l=0}^k  \Big[ \int_{\Sigma_{{\mc C}_{\alpha, \eta, \zeta}}} | \nabla^l f|^p \omega_{\alpha, \eta, \zeta}^{pw} \nu_{\alpha, \eta, \zeta}. \Big]^{\frac{1}{p}}
\eeqn
Here $\nu_{\alpha, \eta, \zeta} \in \Omega^2(\Sigma_{{\mc C}_{\alpha, \eta, \zeta}})$ is the area form of the family of cylindrical metrics specified in \cite[Section 2.3]{Tian_Xu_geometric}, $\nabla$ is the covariant derivative associated to the same cylindrical metric, and $|\cdot|$ is the norm on tensors associated to the same cylindrical metric. 
\end{defn}

Now for any $(\eta, \zeta)$, the object $({\mc C}_{\alpha, \eta, \zeta}, {\bf y}_{\alpha, \eta, \zeta})$ is a stable generalized $r$-spin curve. There is also the principal $K$-bundle $P_{\alpha, \eta, \zeta} \to {\mc C}_{\alpha, \eta, \zeta}$ which is contained in the resolution data. Forgetting ${\bf y}_{\alpha, \eta, \zeta}$, one can define the Banach manifold 
\beqn
{\mc B}_{\alpha, \eta, \zeta}^{p,w} = {\mc B}_{{\mc C}_{\alpha, \eta, \zeta}, P_{\alpha, \eta, \zeta}}^{p, w}
\eeqn
of gauged maps from ${\mc C}_{\alpha, \eta, \zeta}$ to $X$ which satisfy the prescribed asymptotic constrain. There is also the Banach space bundle 
\beqn
{\mc E}_{\alpha, \eta, \zeta}^{p,w} = {\mc E}_{{\mc C}_{\alpha, \eta, \zeta}, P_{\alpha, \eta, \zeta}}^{p,w} \to {\mc B}_{\alpha, \eta, \zeta}^{p,w}.
\eeqn
On the fibres of ${\mc E}_{\alpha, \eta, \zeta}^{p,w}$ and the tangent spaces of ${\mc B}_{\alpha, \eta, \zeta}^{p,w}$, we define the weighted Sobolev norms associated to the weight function $w_{\alpha, \eta, \zeta}$. Notice that for any gauged map ${\bm v} = (A, u) \in {\mc B}_{\alpha, \eta, \zeta}^{p,w}$, we use the covariant derivative associated to $A$ in the definition of higher Sobolev norms.

\begin{defn}\label{defn910} {\rm ($\epsilon$-closedness)} Let $({\mc C}, {\bf y})$ be a stable generalized $r$-spin curve and ${\bm v} = (P, A, u)$ be a smooth gauged map over ${\mc C}$. Let $\epsilon>0$ be a constant. We say that $({\mc C}, {\bm v}, {\bf y})$ is {\it $\epsilon$-close} to $\alpha$, if there exists an inclusion of bundles
\beqn
\vcenter{\xymatrix{ P \ar[r]^-{\wh\phi_\alpha} \ar[d] & P_{\phi_\alpha} = P_{\alpha, \eta, \zeta} \ar[d]\\
                    {\mc C}             \ar[r]^-{\phi_\alpha} & {\mc C}_{\phi_\alpha} = {\mc C}_{\alpha, \eta, \zeta} }}
\eeqn
satisfying the following conditions. 
\begin{enumerate}
\item $\eta \in {\mc V}_{\alpha, {\rm def}}^\epsilon$ and $\zeta \in {\mc V}_{\alpha, {\rm res}}^\epsilon$.

\item If we view $(\wh\phi_\alpha^{-1})^* {\bm v}$ as a gauged map over ${\mc C}_{\alpha, \eta, \zeta}$ which is smoothly identified with ${\mc C}_{\alpha, \zeta}$, then it is in the $\epsilon$-neighborhood of ${\bm v}_{\alpha, \zeta}^{\rm app}$ in the Banach manifold ${\mc B}_{\alpha, \zeta}^{p, w}$. 
\end{enumerate}

We say that $({\mc C}, {\bm v})$ is $\epsilon$-close to $\alpha$ if there exists a stabilizing list ${\bf y}$ such that $({\mc C}, {\bm v}, {\bf y})$ is $\epsilon$-close to $\alpha$. Here ends Definition \ref{defn910}.
\end{defn}

We need to prove that the notion of $\epsilon$-closedness defines open neighborhoods of ${\bm p}_\alpha$ in ${\mc M}_\alpha$. This is necessarily to construct the manifold charts of ${\mc M}_\alpha$ near this point.

\begin{lemma}\label{lemma911}
For any $\epsilon>0$ sufficiently small, there exists an open neighborhood ${\bm W}_\alpha^\epsilon$ of ${\bm p}_\alpha$ in ${\mc M}_\alpha$ such that for any point ${\bm p} \in W_\alpha^\epsilon$, any representative $({\mc C}, {\bm v}, {\bf y}, \phi_\alpha, g_\alpha, e_\alpha)$ of ${\bm p}$, the triple $({\mc C}, {\bm v}, {\bf y})$ is $\epsilon$-close to $\alpha$. 
\end{lemma}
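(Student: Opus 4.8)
\textbf{Proof proposal for Lemma \ref{lemma911}.}
The plan is to argue by contradiction, exploiting the compactness theorem for the moduli space (Theorem \ref{thm516}, or rather the underlying sequential compactness of $\ov{\mc M}{}_{g,n}^r(V,G,W,\mu;\ubar B)$) together with the fact that the notion of $\epsilon$-closedness is defined through the explicit local charts coming from the resolution data. Suppose the conclusion fails for some fixed small $\epsilon_0>0$. Then there is a sequence ${\bm p}_k \in {\mc M}_\alpha$ converging to ${\bm p}_\alpha$, with representatives $({\mc C}_k,{\bm v}_k,{\bf y}_k,\phi_{\alpha,k},g_{\alpha,k},e_{\alpha,k})$, such that none of these representatives makes $({\mc C}_k,{\bm v}_k,{\bf y}_k)$ $\epsilon_0$-close to $\alpha$. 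Since ${\bm p}_k \to {\bm p}_\alpha$ in ${\mc M}_\alpha$, and forgetting the extra markings and the obstruction vector is continuous, the underlying solutions $({\mc C}_k,{\bm v}_k)$ converge (as classes in $\ov{\mc M}{}_{g,n}^r$) to $({\mc C}_\alpha,{\bm v}_\alpha)$; moreover $e_{\alpha,k}\to 0$ because $\tilde S_\alpha$ is continuous and $\tilde S_\alpha({\bm p}_\alpha)=0$.

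First I would unwind what convergence $({\mc C}_k,{\bm v}_k)\to ({\mc C}_\alpha,{\bm v}_\alpha)$ gives us via the resolution datum ${\bm r}_\alpha$: by the construction of the topology on $\ov{\mc M}{}_{g,n}^r$ (recalled in Section \ref{section5}, via Definition \ref{defn513} and its analogues), for $k$ large there exist deformation and gluing parameters $(\eta_k,\zeta_k)\in {\mc V}_\alpha$ with $(\eta_k,\zeta_k)\to 0$, and bundle identifications $\wh\phi_{\alpha,k}: P_k \to P_{\alpha,\eta_k,\zeta_k}$ covering isomorphisms of $r$-spin curves $\phi_{\alpha,k}:{\mc C}_k\cong {\mc C}_{\alpha,\eta_k,\zeta_k}$, such that $(\wh\phi_{\alpha,k}^{-1})^*{\bm v}_k$ converges in the appropriate weighted sense to the central solution. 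The content here is that \emph{no-energy-loss} convergence on the neck regions, combined with the exponential decay estimates of Theorem \ref{thm46} for the limit solution and their uniform version, forces $(\wh\phi_{\alpha,k}^{-1})^*{\bm v}_k$ to lie in an arbitrarily small ${\mc B}_{\alpha,\zeta_k}^{p,w}$-neighborhood of the approximate solution ${\bm v}_{\alpha,\zeta_k}^{\rm app}$; this is precisely because ${\bm v}_{\alpha,\zeta_k}^{\rm app}$ is itself built by pregluing from ${\bm v}_\alpha$ using the same cut-off functions, so the difference between ${\bm v}_{\alpha,\zeta_k}^{\rm app}$ and the actual glued solution is controlled by the decay rate $\delta$ of Theorem \ref{thm46} and tends to $0$ in the $W^{1,p,w}$-norm as the neck lengths tend to infinity. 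Next I would choose the stabilizing marked points: the transversality condition (g) in Definition \ref{defn93} for $H_\alpha$ at ${\bf y}_\alpha^*$ is an open condition, so for $k$ large the glued matter field $(\wh\phi_{\alpha,k}^{-1})^*u_k^{\rm thin}$ still meets $H_\alpha$ transversely in $l_\alpha$ points near ${\bf y}_\alpha^*$, away from the boundaries; declaring ${\bf y}_k$ to be these intersection points (or rather showing that after a small modification of $\phi_{\alpha,k}$ within its reparametrization freedom one may take $\phi_{\alpha,k}({\bf y}_k)={\bf y}_{\alpha,\eta_k,\zeta_k}$) produces a genuine stabilizing list making $({\mc C}_k,{\bm v}_k,{\bf y}_k)$ $\epsilon_0$-close to $\alpha$ for $k$ large, contradicting the choice of ${\bm p}_k$.

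The main obstacle is the third bullet: one must verify that the $W^{1,p,w}$-distance on the neck, measured with the weight function $\omega_{\alpha,\eta_k,\zeta_k}$ of Definition \ref{defn98}, between the convergent sequence $(\wh\phi_{\alpha,k}^{-1})^*{\bm v}_k$ and the \emph{approximate} solution ${\bm v}_{\alpha,\zeta_k}^{\rm app}$ (not the exact central solution, which lives on a different domain) goes to zero. This requires matching up the weighted norms across the family as $\zeta_k \to 0$ — the weight $\omega$ degenerates near the forming nodes — and invoking the uniform exponential decay of Theorem \ref{thm46} (and its family version) to dominate the cut-off errors, precisely as in the standard Gromov--Witten gluing estimates but now carried out for vortices over cylindrical ends with structure group $\ubar K$. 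The gauge-fixing bookkeeping is a secondary technical point: one should first put $(\wh\phi_{\alpha,k}^{-1})^*{\bm v}_k$ into Coulomb gauge relative to ${\bm v}_{\alpha,\zeta_k}^{\rm app}$ using the local slice theorem (Lemma \ref{lemma86}), whose hypotheses are satisfied once the $C^0$-distance is small, so that the notion of $\epsilon$-closedness in Definition \ref{defn910} is actually witnessed. Once these estimates are in place, the contradiction argument closes and ${\bm W}_\alpha^\epsilon$ can be taken to be the set of ${\bm p}$ for which some (equivalently every) representative is $\epsilon$-close; openness of ${\bm W}_\alpha^\epsilon$ then follows from the same convergence analysis applied to sequences within ${\bm W}_\alpha^\epsilon$.
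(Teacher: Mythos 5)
Your proposal is correct and rests on the same essential input as the paper's (very brief) proof: the paper simply says the distance between $(\wh\phi_\alpha^{-1})^*{\bm v}$ and the approximate solution ${\bm v}_{\alpha,\zeta}^{\rm app}$ is estimated ``by utilizing the annulus lemma,'' i.e.\ exactly the exponential-decay-on-long-necks estimate you invoke via Theorem \ref{thm46} and its uniform family version, with no energy loss on the necks. Your contradiction/sequence wrapper and the extra bookkeeping about choosing ${\bf y}$ and the Coulomb gauge are harmless but largely redundant, since a representative of a point of ${\mc M}_\alpha$ already carries the stabilizing markings, the identification $\wh\phi_\alpha$, and the relative Coulomb gauge condition.
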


\begin{proof}
One can estimate the distance between $(\wh\phi_\alpha^{-1})^* {\bm v}$ and the approximate solution ${\bm v}_{\alpha, \zeta}^{\rm app}$ by utilizing the annulus lemma. The details are left to the reader. 	
\end{proof}

Similarly, suppose $({\mc C}_\alpha, {\bm v}_\alpha)$ represents a point $p_\alpha$ in the moduli space $\ov{\mc M}{}_{g, n}^r(V, G, W, \mu)$. 

\begin{lemma}\label{lemma912}
For any $\epsilon>0$ sufficiently small, there exists an open neighborhood $W_\alpha^\epsilon$ of $p_\alpha$ in $\ov{\mc M}{}_{g, n}^r(V, G, W, \mu)$ such that for any point $p \in W_\alpha^\epsilon$, any representative $({\mc C}, {\bm v})$ of $p$ is $\epsilon$-close to $\alpha$. 
\end{lemma}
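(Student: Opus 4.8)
The plan is to deduce the statement by a contradiction argument combining the compactness of the moduli space with the asymptotic estimates of Section~\ref{section4}, in close parallel with (but slightly more elaborate than) the proof of Lemma~\ref{lemma911}. Suppose the assertion fails for some fixed $\epsilon>0$: then there is a sequence $p_k \to p_\alpha$ in $\ov{\mc M}{}_{g,n}^r(V, G, W, \mu)$, with representatives $({\mc C}_k, {\bm v}_k)$, such that no $({\mc C}_k, {\bm v}_k)$ is $\epsilon$-close to $\alpha$ in the sense of Definition~\ref{defn910}. Using the compactness theorem \cite[Theorem 5.14]{Tian_Xu_geometric} together with the description of the topology of the moduli space recalled in Section~\ref{section5}, after passing to a subsequence $({\mc C}_k, {\bm v}_k)$ converges to a representative of $p_\alpha$; composing with isomorphisms I may assume the limit is $({\mc C}_\alpha, {\bm v}_\alpha)$ itself.

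The first step is to recover the stabilizing list. By Definition~\ref{defn93}(7), $H_\alpha$ meets the unstable (soliton) components of ${\bm v}_\alpha$ transversally in the interior, precisely at ${\bf y}_\alpha^*$, avoiding $\partial {\mc C}_\alpha^{\rm thin}$ and $\partial H_\alpha$, and is transverse there to the complex line spanned by the covariant $s$- and $t$-derivatives. The convergence $({\mc C}_k, {\bm v}_k) \to ({\mc C}_\alpha, {\bm v}_\alpha)$, restricted to the rescaled neighborhoods of the nodes and punctures carrying the unstable components, gives $C^\infty_{\rm loc}$ convergence of the matter fields in suitable gauge and trivializations; the exponential decay of Theorem~\ref{thm46} together with the no--energy--loss condition confines all intersections with $H_\alpha$ to small neighborhoods of the points of ${\bf y}_\alpha^*$. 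Hence for $k$ large the perturbed maps $u_k^{\rm thin}$ meet $H_\alpha$ transversally at a list ${\bf y}_k$ converging to ${\bf y}_\alpha^*$, and $({\mc C}_k, {\bf y}_k)$ is a stable generalized $r$-spin curve lying in an arbitrarily small neighborhood of $({\mc C}_\alpha, {\bf y}_\alpha^*)$ in $\ov{\mc M}{}_{g,n,l_\alpha}^r$. Consequently $({\mc C}_k, {\bf y}_k)$ is isomorphic to a fibre ${\mc C}_{\alpha, \eta_k, \zeta_k}$ of the universal unfolding $\pi_\alpha: {\mc U}_\alpha \to {\mc V}_\alpha$ with $(\eta_k, \zeta_k) \to 0$, and the resolution data ${\bm r}_\alpha$ furnishes bundle identifications $\wh\phi_k: P_k \to P_{\alpha,\eta_k,\zeta_k}$ covering isomorphisms $\phi_k: {\mc C}_k \cong {\mc C}_{\alpha,\eta_k,\zeta_k}$.

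The second step is to measure $(\wh\phi_k^{-1})^* {\bm v}_k$ against the approximate solution ${\bm v}_{\alpha,\zeta_k}^{\rm app}$ in the weighted Banach norm of Definition~\ref{defn99}. Over the thick part this is routine: $C^\infty_{\rm loc}$ convergence plus interior elliptic estimates for the gauge-fixed gauged Witten equation upgrades to $W^{1,p}$ convergence to ${\bm v}_\alpha$, which is the pregluing there. Over the thin part --- the cylindrical ends and especially the long necks $N_{w, T_k}$ --- one writes $(\wh\phi_k^{-1})^*{\bm v}_k$ relative to the cut-off gluing \eqref{eqn97} and bounds the difference in the exponentially weighted $W^{1,p}$-norm using (i) the uniform exponential decay of Theorem~\ref{thm46}, applied both to ${\bm v}_k$ near its asymptotic limits and to the limiting solitons, which dominates the weight $\omega_{\alpha,\eta_k,\zeta_k}$ of Definition~\ref{defn98} once $w$ is chosen small relative to the decay rate, and (ii) the energy identity over the neck together with an annulus-type iteration (cf.\ \cite{Chen_Wang_Wang}), which forces this weighted distance to $0$ as $T_k\to\infty$. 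Hence $(\wh\phi_k^{-1})^*{\bm v}_k$ eventually lies in the $\epsilon$-neighborhood of ${\bm v}_{\alpha,\zeta_k}^{\rm app}$ in ${\mc B}_{\alpha,\zeta_k}^{p,w}$, so $({\mc C}_k, {\bm v}_k)$ is $\epsilon$-close to $\alpha$ --- a contradiction --- and the neighborhood $W_\alpha^\epsilon$ exists.

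The main obstacle is precisely this thin-part estimate: converting the ``soft'' convergence of genuine solutions (only c.c.t.\ with no energy loss) into a quantitative bound, in the exponentially weighted norm attached to the resolved domain, for the distance to the pregluing built out of the limiting solitons. This is where the exponential decay rates of Theorem~\ref{thm46} must beat the growth of $\omega_{\alpha,\eta_k,\zeta_k}$, and where care is needed so that the neck estimate is uniform as $\zeta_k \to 0$. The remaining ingredients --- transversal construction of ${\bf y}_k$ via the implicit function theorem, elliptic regularity on the thick part, and the identification of $({\mc C}_k, {\bf y}_k)$ with a fibre of ${\mc U}_\alpha$ --- are standard.
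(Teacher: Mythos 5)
Your argument is the intended one: the paper itself leaves this proof (and that of the companion Lemma \ref{lemma911}) to the reader, with only the hint that one estimates the distance to the approximate solution ``by utilizing the annulus lemma,'' and your contradiction argument via sequential convergence, reconstruction of the stabilizing points from $H_\alpha$, identification with a fibre of ${\mc U}_\alpha$, and the weighted thick/thin estimate from exponential decay plus annulus-type iteration is exactly that route. The one step you flag but do not carry out --- the uniform neck estimate as $\zeta_k \to 0$ in the weighted norm --- is precisely the detail the paper also omits, so your proposal matches the paper's approach at (indeed slightly above) its level of detail.
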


\begin{proof}
Left to the reader. 
\end{proof}

\begin{lemma}\label{lemma913}
Given a thickening datum $\alpha$, there exists $\epsilon_\alpha>0$ such that, if $({\mc C}, {\bm v})$ is $\epsilon_\alpha$-close to $\alpha$, then there exist ${\bf y}_\alpha$ and $\wh\phi_\alpha$ satisfying the following conditions.
\begin{enumerate}

\item ${\bf y}_\alpha$ stabilizes ${\mc C}$ as a generalized $r$-spin curve and $({\mc C}, {\bm v}, {\bf y}_\alpha)$ is $\epsilon_\alpha$-close to $\alpha$.

\item $\wh\phi_{\alpha}$ is an isomorphism 
\beqn
\wh\phi_{\alpha}: ({\mc C}, {\bf y}_\alpha, P) \cong ({\mc C}_{\alpha, \eta, \zeta}, {\bf y}_{\alpha, \eta, \zeta}, P_{\alpha, \eta, \zeta})
\eeqn
such that $(\wh\phi_\alpha^{-1})^* {\bm v}$ is in the $\epsilon_\alpha$-neighborhood of ${\bm v}_{\alpha, \zeta}^{\rm app}$. 

\item $(\wh\phi_\alpha^{-1})^* {\bm v}$ is in the Coulomb slice through ${\bm v}_{\alpha, \zeta}^{\rm app}$. 

\item If we write $(\wh\phi_\alpha^{-1})^* {\bm v} = (A, u)$, then 
\beqn
{\bf y}_\alpha = u^{\rm thin} \pitchfork H_\alpha.
\eeqn
\end{enumerate}
In particular, if $({\mc C}, {\bm v})$ is a stable solution to the gauged Witten equation, then 
\beqn
\big({\mc C}, {\bm v}, {\bf y}_\alpha, \wh\phi_\alpha, {\it 0}_\alpha \big)
\eeqn
is an $\alpha$-thickened solution.
\end{lemma}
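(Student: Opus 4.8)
\emph{Proof (sketch).} The plan is to take the bundle inclusion supplied by the hypothesis of $\epsilon$-closeness and correct it in two successive steps — first a gauge correction, then a marked-point correction — so that the extra normalizations (3) and (4) are achieved, while keeping every quantity within an $O(\epsilon)$-neighborhood, so that after shrinking $\epsilon_\alpha$ the closeness statements (1), (2) survive. By Definition \ref{defn910} there are $\eta\in{\mc V}_{\alpha,{\rm def}}^\epsilon$, $\zeta\in{\mc V}_{\alpha,{\rm res}}^\epsilon$ and a bundle inclusion $\wh\phi_\alpha^{0}\colon ({\mc C},P)\hookrightarrow ({\mc C}_{\alpha,\eta,\zeta},P_{\alpha,\eta,\zeta})$ such that $(\wh\phi_\alpha^{0,-1})^*{\bm v}$ lies in the $\epsilon$-ball around ${\bm v}_{\alpha,\zeta}^{\rm app}$ in ${\mc B}_{\alpha,\zeta}^{p,w}$. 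Applying the local slice theorem (Lemma \ref{lemma86}) on each irreducible component of ${\mc C}_{\alpha,\eta,\zeta}$, relative to the corresponding component of ${\bm v}_{\alpha,\zeta}^{\rm app}$, I obtain a unique small gauge transformation $g=e^{h}$ with $\|h\|_{W^{2,p,w}}\le C\,{\rm dist}\big((\wh\phi_\alpha^{0,-1})^*{\bm v},{\bm v}_{\alpha,\zeta}^{\rm app}\big)\le C\epsilon$ placing $(\wh\phi_\alpha^{0,-1})^*{\bm v}$ in the Coulomb slice through ${\bm v}_{\alpha,\zeta}^{\rm app}$; replacing $\wh\phi_\alpha^{0}$ by $g\circ\wh\phi_\alpha^{0}$ gives (3) at the cost of enlarging the neighborhood by the factor $1+C$. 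On unstable components there is in addition the reparametrization freedom recorded in Lemma \ref{lemma87}; this residual ambiguity is removed in the next step.

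Next I normalize the marked points. Item (7) of Definition \ref{defn93} says that $u_\alpha^{\rm thin}$ meets $H_\alpha$ transversally and in the interior exactly along ${\bf y}_\alpha^*$, with $H_\alpha$ transverse to the complex line spanned by the covariant derivatives. Since the central approximate solution ${\bm v}_{\alpha,\zeta}^{\rm app}$ coincides with ${\bm v}_\alpha$ on the thick part and, by the pregluing formula \eqref{eqn97} with cut-offs \eqref{eqn96}, equals the honest soliton on the portion of each neck containing the relevant $H_\alpha$-intersection once $|\zeta|$ is small, these transversality properties persist for $u_{\alpha,\zeta}^{\rm app,thin}$ uniformly for small $\zeta$. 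The map $(\wh\phi_\alpha^{-1})^*u=u^{\rm thin}$ is close to $u_{\alpha,\zeta}^{\rm app,thin}$ in the weighted norm of Definition \ref{defn99}, hence — on the compact interior region of the neck, where the weight of Definition \ref{defn98} is bounded and elliptic estimates apply — close in $C^1$; stability of transverse intersections then produces a unique list ${\bf y}_\alpha=u^{\rm thin}\pitchfork H_\alpha$ of $l_\alpha$ interior transverse intersection points, lying $O(\epsilon)$-close to $\phi_\alpha^{-1}({\bf y}_{\alpha,\eta,\zeta})$ and away from all boundaries. This ${\bf y}_\alpha$ is a stabilizing list (being close to the stabilizing list ${\bf y}_\alpha^*$ under the unfolding identification), $({\mc C},{\bm v},{\bf y}_\alpha)$ is $\epsilon_\alpha$-close to $\alpha$ after shrinking $\epsilon_\alpha$, and the Coulomb condition from the first step is untouched since it does not involve ${\bf y}$. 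Moreover, because translating a nonconstant soliton displaces its intersection with $H_\alpha$, the requirement ${\bf y}_\alpha=u^{\rm thin}\pitchfork H_\alpha$ pins down exactly the soliton reparametrization ambiguity left over from the first step, so $\wh\phi_\alpha$ is now determined up to the finite group $\Gammait_\alpha$.

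Finally, the ``in particular'' assertion is immediate: if $({\mc C},{\bm v})$ solves the (unperturbed) gauged Witten equation, set $e_\alpha=0$. Items (1)–(5) of Definition \ref{defn96} for $I=\{\alpha\}$ hold by the construction above — $({\mc C},{\bf y}_\alpha)$ is a stable generalized $r$-spin curve, ${\bm v}$ a smooth gauged map, $\wh\phi_\alpha$ the required bundle inclusion onto the fibre ${\mc C}_{\alpha,\eta,\zeta}$ in Coulomb gauge relative to ${\bm v}_{\alpha,\zeta}^{\rm app}$, and ${\bf y}_\alpha=u^{\rm thin}\pitchfork H_\alpha$ transversally and away from the boundary — while the $\{\alpha\}$-thickened gauged Witten equation reads ${\mc F}_{\mc C}({\bm v})+\iota_\alpha(0)({\mc C},{\bm v},{\bf y}_\alpha,\wh\phi_\alpha)={\mc F}_{\mc C}({\bm v})=0$, which holds since ${\bm v}$ is a solution. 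I expect the main obstacle to be making the implicit-function-theorem estimates — the slice bound of Lemma \ref{lemma86} and the intersection-stability — uniform across the whole family $\{({\mc C}_{\alpha,\eta,\zeta},{\bm v}_{\alpha,\zeta}^{\rm app})\}$ of approximate solutions, in particular in the neck-stretching limit $\zeta\to 0$ where the domains degenerate and the weighted norms of Definition \ref{defn99} must be used carefully, and, correspondingly, checking that the reparametrization freedom of Lemma \ref{lemma87} is precisely and completely absorbed by the $H_\alpha$-normalization; everything else is routine given the local slice theorem and standard transversality stability. $\qed$
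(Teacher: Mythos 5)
There is a genuine gap, and it sits exactly where you write that ``the Coulomb condition from the first step is untouched since it does not involve ${\bf y}$.'' It does involve ${\bf y}$, through the fibre identification: by Definition \ref{defn96} (and item (b) of the lemma) the isomorphism $\wh\phi_\alpha$ must carry ${\bf y}_\alpha$ onto the canonical markings ${\bf y}_{\alpha,\eta,\zeta}$ of a fibre of the universal unfolding of $({\mc C}_\alpha,{\bf y}_\alpha^*)$. After your gauge correction you redefine ${\bf y}_\alpha := u^{\rm thin}\pitchfork H_\alpha$ (note $H_\alpha$ is not $K$-invariant, so the gauge transformation genuinely moves these points); these new points no longer map to ${\bf y}_{\alpha,\eta,\zeta}$ under the old $\phi_\alpha$. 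Restoring that condition forces a re-identification of $({\mc C},{\bf y}_\alpha)$ with a nearby fibre, i.e.\ new parameters $(\eta_{\bf w},\zeta_{\bf w})$, and then the Coulomb condition of item (c) must be re-imposed relative to the new reference ${\bm v}_{\alpha,\zeta_{\bf w}}^{\rm app}$; the new gauge correction again moves $u^{\rm thin}$ and hence its intersection with $H_\alpha$. Your two-step procedure therefore does not produce an object satisfying (2), (3), (4) simultaneously; it is only the zeroth step of an iteration whose convergence is the real content of the lemma.

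The paper closes this loop by treating the marked-point shift ${\bf w}$ and the gauge parameter $h$ as coupled unknowns and solving $F_{\bm v}({\bf w},h)=0$ in \eqref{eqn913} by the implicit function theorem. Surjectivity of the linearization is where Lemma \ref{lemma87} enters essentially, and not in the role you assign it (fixing soliton reparametrizations): differentiating the Coulomb correction $k_{\bf w}$ in ${\bf w}$ contributes the infinitesimal-action terms, so the derivative of $k_{\bf w}^*u_{\bf w}^{\rm thin}(y_i)$ is, up to an error controlled by $|\zeta|$, the covariant derivative $\partial_s u_\alpha^{\rm thin}+{\mc X}_{\phi_\alpha^{\rm thin}}$, resp.\ $\partial_t u_\alpha^{\rm thin}+{\mc X}_{\psi_\alpha^{\rm thin}}$, and it is precisely the last condition on $H_\alpha$ in Definition \ref{defn93} --- transversality to the complex line spanned by these covariant derivatives, not the bare derivatives --- that makes the first $l$ components of the linearization surjective, while invertibility in $h$ comes from Lemma \ref{lemma86}. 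Your stability-of-transverse-intersections argument only uses $C^1$-closeness to the reference map and never sees this coupling. (Also, uniqueness of $({\bf y}_\alpha,\wh\phi_\alpha)$ up to $\Gammait_\alpha$, which you claim at the end, is the content of the separate Lemma \ref{lemma914}, not a free by-product of your construction.)
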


\begin{proof}
By definition, there exist ${\bf y}$, $\wh\phi_\alpha$ and $\eta, \zeta$ satisfying the first two conditions above. Without loss of generality, we assume that ${\mc C} = {\mc C}_{\alpha, \eta, \zeta}$ and ${\bm v} = (P_{\alpha, \eta, \zeta}, A, u)$ is defined over ${\mc C}_{\alpha, \eta, \zeta}$. One can choose an order among points of ${\bf y}$, say $y_1, \ldots, y_l$. A small deformation of ${\bf y}$ can be written as 
\beqn
y_1 + w_1, \ldots, y_l + w_l,\ w_1, \ldots, w_l \in {\mb C}
\eeqn
since the resolution datum ${\bm r}_\alpha$ provides cylindrical coordinates near $y_1, \ldots, y_l$. Such a small deformation ${\bf y}' = {\bf y} + {\bf w}$ induces a small deformation of $\eta$ and $\zeta$, denoted by $\eta_{\bf w}$ and $\zeta_{\bf w}$. Then there exists a canonical isomorphism 
\beqn
{\mc C}_{\alpha, \eta, \zeta} \cong {\mc C}_{\alpha, \eta_{\bf w}, \zeta_{\bf w}}.
\eeqn
One can then identify the approximate solution ${\bm v}_{\alpha, \zeta_{\bf w}}^{\rm app}$ as defined over ${\mc C}_{\alpha, \eta, \zeta}$. 

Now we consider a nonlinear equation on variables ${\bf w}$ and $h_\alpha \in W^{2, p, w}_{\alpha, \eta, \zeta} (\Sigma_{\mc C}, {\rm ad} P)$. For each small ${\bf w}$, there is a unique small gauge transformation $k_{\bf w} = e^{h_{\bf w}}$ making $k_{\bf w}^* {\bm v}$ in the Coulomb slice of ${\bm v}_{\alpha, \zeta_{\bf w}}^{\rm app}$. Regard $H_\alpha$ as defined by the local vanishing locus of $f_1, \ldots, f_l: X \to {\mb C}$. Then define 
\beq\label{eqn913}
F_{\bm v} ({\bf w}, h) = \left[\begin{array}{c} f_1( k_{\bf w}^* u_{\bf w}^{\rm thin} (y_1)) \\
\cdots \\ f_l(k_{\bf w}^* u_{\bf w}^{\rm thin} (y_l)) \\
h - k_{\bf w}
\end{array}  \right].
\eeq
We would like to use the implicit function theorem to prove the existence of a zero. First consider the case that ${\bm v} = {\bm v}_{\alpha, \zeta}^{\rm app}$. At ${\bf w} = 0$, we claim that  
\begin{align*}
&\ \frac{\partial k_{\bf w}^* u_{\bf w}^{\rm thin} }{\partial w_i'}(y_i) \sim \partial_s u_\alpha^{\rm thin}(y_i) + {\mc X}_{\phi_\alpha^{\rm thin}}(y_i),\ &\ \frac{\partial k_{\bf w}^* u_{\bf w}^{\rm thin}}{\partial w_i''}(y_i) \sim \partial_t u_\alpha^{\rm thin} (y_i) + {\mc X}_{\psi_\alpha^{\rm thin}}(y_i).
\end{align*}
The error can be controlled by $|\zeta|$. Here $w_i = w_i' + {\bf i} w_i''$ is the complex coordinates of $w_i$. Indeed, the derivatives of $u_{\bf w}^{\rm thin}$ in $w_i'$ and $w_i''$ at $y_i$ are clearly $\partial_s u_\alpha^{\rm thin}$ and $\partial_t u_\alpha^{\rm thin}$. The the fact that the derivatives of $h_{\bf w}$ are very close to $\phi_\alpha^{\rm thin}$ and $\psi_\alpha^{\rm thin}$ follows from a gluing argument and Lemma \ref{lemma87}. The proof is left to the reader. 

By the last condition on the hypersurface $H_\alpha$ in the definition of thickening data (see Definition \ref{defn93}), the derivative of the first $l$ coordinates of \eqref{eqn913} is surjective. On the other hand, the derivative of $h- h_{\bf w}$ on $h$ is uniformly invertible. Hence, when $\epsilon_\alpha$ is small enough and $(\wh\phi_\alpha^{-1})^* {\bm v}$ is in the $\epsilon_\alpha$-neighborhood of ${\bm v}_{\alpha, \zeta}^{\rm app}$, the implicit function theorem implies the existence and uniqueness of a solution $F_{\bm v}({\bf w}, h) = 0$. 
\end{proof}

We also need to show the uniqueness of $({\bf y}_\alpha, \wh\phi_\alpha)$ up to automorphisms of $({\mc C}_\alpha, {\bm v}_\alpha)$. 

\begin{lemma}\label{lemma914}
Let $\alpha$ be a thickening datum. There exists $\epsilon_\alpha>0$ satisfying the following condition. Let $({\mc C}, {\bm v})$ be a stable solution to the gauged Witten equation which is $\epsilon_\alpha$-close to $\alpha$, and such that 
\begin{align*}
&\ \big( {\mc C}, {\bm v}, {\bf y}_\alpha,  \wh\phi_\alpha, {\it 0}_\alpha \big),\ &\ \big({\mc C}, {\bm v}, {\bf y}_\alpha', \wh\phi_\alpha', {\it 0}_\alpha \big)
\end{align*}
are both $\alpha$-thickened solution with  
\begin{align*}
&\ {\rm dist}( {\bm v}_{\alpha, \zeta}^{\rm app}, (\wh\phi_\alpha^{-1})^* {\bm v}) \leq \epsilon_\alpha,\ &\ {\rm dist}({\bm v}_{\alpha, \zeta'}^{\rm app}, ((\wh\phi_\alpha')^{-1})^* {\bm v}) \leq \epsilon_\alpha.
\end{align*}
Then ${\bf y}_\alpha = {\bf y}_\alpha'$ and there exists $\gamma \in {\rm Aut}({\mc C}_\alpha, {\bm v}_\alpha)$ such that 
\beqn
\gamma^{{\mc P}_\alpha} \circ \wh\phi_\alpha = \wh\phi_\alpha'.
\eeqn
\end{lemma}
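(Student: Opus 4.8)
\textbf{Proof plan for Lemma \ref{lemma914}.}

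The plan is to reduce the statement to the uniqueness part of the local slice theorem (Lemma \ref{lemma86}) together with the uniqueness in the implicit function theorem used in Lemma \ref{lemma913}. First I would observe that, since $({\mc C}, {\bm v})$ is $\epsilon_\alpha$-close to $\alpha$ in both ways, the two isomorphisms $\wh\phi_\alpha$ and $\wh\phi_\alpha'$ identify $({\mc C}, P)$ with fibres ${\mc C}_{\alpha, \eta, \zeta}$ and ${\mc C}_{\alpha, \eta', \zeta'}$ of the universal unfolding ${\mc U}_\alpha$, both carrying gauged maps $\epsilon_\alpha$-close (in the respective Banach manifolds) to the approximate solutions ${\bm v}_{\alpha, \zeta}^{\rm app}$ and ${\bm v}_{\alpha, \zeta'}^{\rm app}$. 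Composing, $\rho := \phi_\alpha' \circ \phi_\alpha^{-1}$ is an isomorphism of $r$-spin curves between two fibres of ${\mc U}_\alpha$ which sends the approximate solution on the first fibre to something $2\epsilon_\alpha$-close to the approximate solution on the second. The key point, analogous to the rigidity of universal unfoldings recalled after Definition \ref{defn21} (any isomorphism between two fibres of a universal unfolding comes from a unique automorphism of the central curve), is that for $\epsilon_\alpha$ small this forces $(\eta', \zeta')$ to be in the $\Gammait_\alpha$-orbit of $(\eta, \zeta)$: otherwise the two fibres would lie in disjoint parts of ${\mc V}_\alpha/\Gammait_\alpha$, but the $\epsilon_\alpha$-closeness of both gauged maps to the corresponding central approximate solutions gives a lower bound on how far apart nearby solutions with distinct combinatorial/complex structure can be. Hence, after composing $\wh\phi_\alpha'$ with a representative $\gamma \in \Gammait_\alpha = {\rm Aut}({\mc C}_\alpha, {\bm v}_\alpha)$, we may assume $(\eta', \zeta') = (\eta, \zeta)$, i.e. $\wh\phi_\alpha$ and $\wh\phi_\alpha'$ land in the same fibre with the same approximate solution ${\bm v}_{\alpha, \zeta}^{\rm app}$.

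Next I would exploit the Coulomb gauge condition. Both $(\wh\phi_\alpha^{-1})^* {\bm v}$ and $((\wh\phi_\alpha')^{-1})^*{\bm v}$ are in the Coulomb slice through ${\bm v}_{\alpha, \zeta}^{\rm app}$ (condition (3) in Definition \ref{defn96}, built into the $\alpha$-thickened solution). The transition $\wh\rho := \wh\phi_\alpha' \circ \wh\phi_\alpha^{-1}$ is then a bundle isomorphism of ${\mc P}_{\alpha, \eta, \zeta}$ covering an automorphism $\bar\rho$ of ${\mc C}_{\alpha, \eta, \zeta}$, and it carries one Coulomb-gauge representative to another. Since the automorphism group of the generalized $r$-spin curve is finite (and for $\epsilon_\alpha$ small $\bar\rho$ must be close to the identity, hence equal to the identity after absorbing the finite symmetry into $\gamma$), $\wh\rho$ is a pure gauge transformation $g = e^h$ with $h$ small; by Lemma \ref{lemma86} the gauge transformation bringing $(\wh\phi_\alpha^{-1})^*{\bm v}$ into the Coulomb slice through ${\bm v}_{\alpha, \zeta}^{\rm app}$ is \emph{unique} in a small $W^{2,p,w}$-neighborhood, so $h = 0$ and $\wh\rho = {\rm Id}$, i.e. $\wh\phi_\alpha = \wh\phi_\alpha'$ (after the adjustment by $\gamma$). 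Finally, the equality ${\bf y}_\alpha = {\bf y}_\alpha'$ follows because condition (4) of Definition \ref{defn96} pins down each stabilizing list as ${\bf y}_\alpha = (u^{\rm thin}) \pitchfork H_\alpha$, which depends only on $(\wh\phi_\alpha^{-1})^*{\bm v}$ and $H_\alpha$ and therefore coincides once the isomorphisms agree; alternatively one invokes the uniqueness in the implicit function theorem of Lemma \ref{lemma913} applied to the map $F_{\bm v}$ of \eqref{eqn913}.

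I expect the main obstacle to be the first step: making rigorous the claim that, for $\epsilon_\alpha$ small enough, closeness of $({\mc C},{\bm v})$ to the central approximate solution in two different trivializations forces the two parameter points to lie in the same $\Gammait_\alpha$-orbit. This is the translation into our gluing setup of the rigidity of universal unfoldings, but it requires controlling the interplay between the gluing parameter $\zeta$ (which changes the topological type of the neck) and the Banach-manifold distance, and uses crucially that the weight function $\omega_{\alpha,\eta,\zeta}$ of Definition \ref{defn98} is adapted to the neck so that the approximate solutions for genuinely different $\zeta$ are uniformly separated. Once this separation is established, the rest is a routine combination of the finiteness of automorphism groups (Lemma \ref{lemma92}), the uniqueness in the local slice theorem (Lemma \ref{lemma86}), and the uniqueness in the implicit function theorem used to construct ${\bf y}_\alpha$. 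The details of the quantitative estimates I would leave to the reader, as is done for the companion Lemmas \ref{lemma911}--\ref{lemma913}.
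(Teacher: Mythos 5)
Your overall strategy (reduce to the uniqueness in the local slice theorem plus the uniqueness in the implicit function theorem behind Lemma \ref{lemma913}) points at the right ingredients, but the way you order them leaves a genuine gap at exactly the place you flag as the ``main obstacle'', and the obstacle is structural rather than quantitative. Your first step asserts that, for $\epsilon_\alpha$ small, the two parameter points $(\eta,\zeta)$ and $(\eta',\zeta')$ must lie in the same $\Gammait_\alpha$-orbit, by appealing to the rigidity of universal unfoldings. But the rigidity statement only applies to isomorphisms of fibres \emph{as generalized marked curves}, i.e.\ respecting the stabilizing markings ${\bf y}_{\alpha,\eta,\zeta}$. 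The transition $\rho=\phi_\alpha'\circ\phi_\alpha^{-1}$ carries $\phi_\alpha({\bf y}_\alpha)$ to $\phi_\alpha'({\bf y}_\alpha)$, not to ${\bf y}_{\alpha,\eta',\zeta'}=\phi_\alpha'({\bf y}_\alpha')$, and at this stage of the argument you do not yet know ${\bf y}_\alpha={\bf y}_\alpha'$ --- that equality is part of the conclusion. Once the extra markings are dropped, distinct fibres of ${\mc U}_\alpha$ are very far from rigid: translating a marked point on a soliton (unstable) component, or simply moving one of the added points ${\bf y}$, changes $(\eta,\zeta)$ while leaving the underlying $r$-spin curve isomorphic. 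So no amount of ``separation of approximate solutions for different $\zeta$'' can give you the orbit statement from curve isomorphism alone; what actually pins down the parametrization is the hypersurface condition ${\bf y}_\alpha=u^{\rm thin}\pitchfork H_\alpha$ together with the transversality requirement in Definition \ref{defn93}, and in your outline that condition is only invoked at the very end, after the orbit claim it is needed to justify. Relatedly, since the two thickened solutions satisfy Coulomb gauge conditions relative to \emph{different} approximate solutions ${\bm v}_{\alpha,\zeta}^{\rm app}$ and ${\bm v}_{\alpha,\zeta'}^{\rm app}$, the single-slice uniqueness of Lemma \ref{lemma86} cannot be applied until you have already forced the fibres (and markings) to agree; the gauge condition and the marking condition have to be treated as a coupled system, which is exactly the role of the map $F_{\bm v}({\bf w},h)$ in \eqref{eqn913}.

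For comparison, the paper's proof avoids your step 1 entirely: it argues by contradiction along a sequence $\epsilon_i\to 0$, fixes the combinatorial types after passing to a subsequence, and for each component extracts a limit of $\wh\phi_{i,v}'\circ\wh\phi_{i,v}^{-1}$, producing an automorphism $\gamma$ of $({\mc C}_\alpha,{\bm v}_\alpha)$ (this is where your ``absorb into $\gamma$'' is legitimately produced). After transforming by $\gamma$, the residual discrepancy is a small shift ${\bf w}$ of the markings together with a small gauge transformation $e^{h_{i,v}}$, and the uniqueness part of the implicit function theorem for the coupled map \eqref{eqn913} --- whose linearization in ${\bf w}$ is invertible precisely because of condition (c) on $H_\alpha$ in Definition \ref{defn93} --- forces ${\bf w}=0$ and $h_{i,v}=0$ simultaneously, giving both ${\bf y}_\alpha={\bf y}_\alpha'$ and $\gamma^{{\mc P}_\alpha}\circ\wh\phi_\alpha=\wh\phi_\alpha'$. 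To repair your proposal you would need to replace the unfolding-rigidity step by an argument of this kind (or otherwise bring the $H_\alpha$-normalization into play before, not after, identifying the fibres).
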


\begin{proof}
Let $\epsilon_i$ be a sequence of positive numbers converging to zero and let $({\mc C}_i, {\bm v}_i)$ be a sequence of stable solutions to the gauged Witten equation which are $\epsilon_i$-close to $\alpha$. Suppose we have two sequences 
\begin{align*}
&\ \big( {\mc C}_i, {\bm v}_i, {\bm y}_i, \wh\phi_i, {\it 0}_\alpha \big),\ &\ \big( {\mc C}_i, {\bm v}_i, {\bm y}_i', \wh\phi_i', {\it 0}_\alpha \big)
\end{align*}
which are $\alpha$-thickened solutions satisfying 
\begin{align*}
&\ {\rm dist} \big( {\bm v}_{\alpha, \zeta_i}^{\rm app}, (\wh\phi_i^{-1})^* {\bm v}_i \big) \leq \epsilon_i,\ &\ {\rm dist} \big( {\bm v}_{\alpha, \zeta_i'}^{\rm app}, ((\wh\phi_i')^{-1})^* {\bm v}_i \big) \leq \epsilon_i.
\end{align*}
If we can show that for large $i$, ${\bf y}_i = {\bf y}_i'$ and there exists $\gamma_i \in {\rm Aut}({\mc C}_\alpha, {\bm v}_\alpha)$ such that 
\beqn
\gamma_i^{{\mc P}_\alpha} \circ \wh\phi_i = \wh\phi_i'
\eeqn
then the lemma follows. Suppose this is not the case. Then by taking a subsequence, we may assume that all ${\mc C}_i$ has the same topological type, and that the maps $\wh\phi_i, \wh\phi_i': {\mc C}_i \to {\mc C}_{\alpha, \eta_i, \zeta_i}$ have their combinatorial types independent of $i$. More precisely, let the $r$-spin dual graphs of ${\mc C}_i$ be $\Pi$ and the $r$-spin dual graph for ${\mc C}_\alpha$ be $\Gamma$. Then there are two fixed maps $\rho, \rho': \Gamma \to \Pi$ modelling the maps $\wh\phi_i$ and $\wh\phi_i'$. Let $v \in \V(\Pi)$ be an arbitrary vertex, which corresponds to a subtree $\Gamma_v \subset \Gamma$ and $\Gamma_v'  \subset \Gamma$. Then consider the isomorphisms 
\begin{align*}
&\ \wh\phi_{i, v}: {\mc C}_{i, v} \cong {\mc C}_{\alpha, \eta_i, \zeta_i, v},\ &\ \wh\phi_{i, v}': {\mc C}_{i, v} \cong {\mc C}_{\alpha, \eta_i', \zeta_i', v}.
\end{align*}
We claim that there is a subsequence (still indexed by $i$) for which 
\beqn
\wh\phi_{i, v}' \circ \wh\phi_{i, v}^{-1} 
\eeqn
converges to an isomorphism 
\beqn
\gamma_v: ( {\mc C}_{\alpha, v}, P_{\alpha, v}) \cong ({\mc C}_{\alpha, v}', P_{\alpha, v}').
\eeqn
Here ${\mc C}_{\alpha, v}, {\mc C}_{\alpha, v}' \subset {\mc C}_\alpha$ are the $r$-spin curves corresponding to the subtrees $\Gamma_v$ and $\Gamma_v'$. Moreover, one has 
\beqn
\gamma_v^* {\bm v}_{\alpha, v}' = {\bm v}_{\alpha, v}.
\eeqn
The proof of the claim is left to the reader. 

By the above claim, $\gamma_v$ induces a family of isomorphisms 
\beqn
( \gamma_v^{{\mc U}_\alpha}, \gamma_v^{{\mc P}_\alpha}): ( {\mc C}_{\alpha, \eta_v, \zeta_v}, P_{\alpha, \eta_v, \zeta_v} ) \cong ( {\mc C}_{\alpha, \gamma_v \eta_v, \gamma_v \zeta_v}, P_{\alpha, \gamma_v \eta_v, \gamma_v \zeta_v}).
\eeqn
Here $(\eta_v, \zeta_v)$ denote the deformation and gluing parameters on the subtree $\Gamma_v$, and $(\gamma_v \eta_v, \gamma_v \zeta_v)$ are the transformed deformation and gluing parameters on the subtree $\Gamma_v'$. We can see that for $i$ sufficiently large, one has that 
\beqn
| \gamma_v \eta_i - \eta_i'| \to 0,\ |\log( \gamma_v \zeta_{i, v}) - \log \zeta_{i, v}'| \to 0.
\eeqn
It means that there is a (canonical) isomorphism 
\beqn
\vartheta_{i, v}: {\mc C}_{\alpha, \gamma_v \eta_{i, v}, \gamma_v \zeta_{i, v}} \cong {\mc C}_{\alpha, \eta_{i, v}', \zeta_{i, v}'}
\eeqn
and that the extra markings $\gamma_v {\bf y}_{i, \eta_v, \zeta_v}$ differ from ${\bf y}_{i, \eta_v', \zeta_v'}$ by a small shift. Moreover, there is a bundle isomorphism $\wh\vartheta_{i, v}$ lifting $\vartheta_{i, v}$ such that we can write 
\beqn
e^{h_{i, v}} = k_{i, v} = \wh\phi_{i, v} (\wh\phi_{i, v}')^{-1} \wh\vartheta_{i, v}  \circ \gamma_v^{{\mc P}_\alpha}: P_{\alpha, \eta_{i, v}, \zeta_{i, v}} \to P_{\alpha, \eta_{i, v}, \zeta_{i, v}},
\eeqn
and 
\beqn
\| h_{i, v} \|_{W^{2,p, w}_{\eta_{i, v}, \zeta_{i, v}}} \to 0.
\eeqn

Now using $\wh\phi_{i, v}$ we may regard ${\bm v}_i$ restricted to the component $v$ (denoted by ${\bm v}_{i, v}$) is defined over ${\mc C}_{\alpha, \eta_{i, v}, \zeta_{i, v}}$ with extra marked points equal to ${\bf y}_{\alpha, \eta_{i, v}, \zeta_{i, v}}$ and we regard ${\bm v}_{\alpha, \zeta_{i, v}'}^{\rm app}$ defined on the same curve with possibly different extra markings ${\bf y}_{\alpha, \eta_{i, v}', \zeta_{i, v}'}$. But the two sets of markings are very close to each other (distance measured via the cylindrical metric). Moreover, we have the condition that ${\bm v}_{i, v}$ is in Coulomb gauge relative to ${\bm v}_{\alpha, \zeta_{i, v}}^{\rm app}$ while $k_{i, v}^* {\bm v}_{i, v}$ is in the Coulomb slice through ${\bm v}_{\alpha, \zeta_{i, v}'}^{\rm app}$, and 
\begin{align*}
&\ u_{i, v}^{\rm thin}( {\bf y}_{\alpha, \eta_{i,v}, \zeta_{i,v}}) \subset H_\alpha,\ &\ k_{i, v}^* u^{\rm thin}( {\bf y}_{\alpha, \eta_{i,v}', \zeta_{i,v}'}) \subset H_\alpha. 
\end{align*}
Then by the uniqueness part of the implicit function theorem and Lemma \ref{lemma913}, we have that for $i$ large, $k_{i, v}$ is the identity and ${\bf y}_{\alpha, \eta_v, \zeta_v} = {\bf y}_{\alpha, \eta_v', \zeta_v'}$. Hence for big $i$, one has
\beqn
\gamma_{i, v}^{{\mc P}_\alpha} \circ \wh\phi_{i, v} = \wh\phi_{i, v}'
\eeqn

Putting all components of ${\mc C}_i$ together, one obtains an automorphism $\gamma \in {\rm Aut}({\mc C}_\alpha, {\bm v}_\alpha)$ and obtains that for large $i$, 
\beqn
\gamma^{{\mc P}_\alpha} \wh\phi_i = \wh\phi_i'
\eeqn
and 
\beqn
{\bf y}_i = {\bf y}_i' \Longleftrightarrow \gamma^{{\mc U}_\alpha} {\bf y}_{\alpha, \eta_{i, v}, \zeta_{i, v}} = {\bf y}_{\alpha, \eta_{i, v}', \zeta_{i, v}'}.
\eeqn
Hence we finished the proof.
\end{proof}

\subsection{Thickened moduli space and gluing}

In this subsection we construct a chart of topological manifold for the moduli space of $\alpha$-thickened solutions. This induces a virtual orbifold chart of the moduli space $\ov{\mc M}$. We first show that the $\alpha$-thickened moduli space ${\mc M}_\alpha$ is a topological manifold near ${\bm p}_\alpha$. 

\begin{prop}
There is a $\Gammait_\alpha$-invariant neighborhood of ${\bm p}_\alpha$ which is a manifold. 
\end{prop}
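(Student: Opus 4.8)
The plan is to realize a neighborhood of $\bm p_\alpha$ in $\mc M_\alpha$ as the zero set of a finite-dimensional $\Gammait_\alpha$-equivariant map, obtained by a standard gluing/Kuranishi argument, and then invoke the implicit function theorem to see that it is a manifold. More precisely, recall that we have the linear map $D_\alpha \colon \bm E_\alpha \oplus T_{\bm v_\alpha}\mc B_\alpha \to \mc E_\alpha$ of \eqref{eqn93}, which is surjective by the transversality assumption in Definition \ref{defn93}, with kernel $\mc V_{\alpha, \mathrm{map}}$, and we have already constructed the family of genuinely perturbed solutions $\hat{\bm v}_{\alpha,\xi,\eta}$ of \eqref{eqn95} over the unresolved curves, parametrized by $(\xi,\eta) \in \mc V_{\alpha,\mathrm{map}} \times \mc V_{\alpha,\mathrm{def}}$. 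The first step is to incorporate the gluing parameters $\zeta \in \mc V_{\alpha,\mathrm{res}}$: using the pregluing construction of Definition \ref{defn95} together with the right inverse $\hat Q_\alpha$ of \eqref{eqn94}, the approximate solutions $\hat{\bm v}_{\alpha,\xi,\eta,\zeta}^{\mathrm{app}}$ can be corrected to genuine solutions $\hat{\bm v}_{\alpha,\xi,\eta,\zeta}$ of the $\bm E_\alpha$-perturbed gauged Witten equation over $\mc C_{\alpha,\eta,\zeta}$, for all sufficiently small $(\xi,\eta,\zeta)$; this is the usual Newton iteration whose convergence rests on (i) the quadratic estimate for the nonlinearity, (ii) a uniform bound on the right inverse of the linearized operator at the approximate solution, uniform as $|\zeta| \to 0$, and (iii) the pregluing error estimate — these are all measured in the weighted Sobolev norms of Definition \ref{defn99}.

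The second step is to identify the resulting family with an open piece of $\mc M_\alpha$. Here the content is precisely Lemmas \ref{lemma913} and \ref{lemma914}: given any stable solution $(\mc C, \bm v)$ that is $\epsilon_\alpha$-close to $\alpha$, Lemma \ref{lemma913} produces a stabilizing list $\bm y_\alpha$ and an isomorphism $\wh\phi_\alpha$ putting $(\wh\phi_\alpha^{-1})^* \bm v$ in Coulomb gauge relative to the approximate solution and with $\bm y_\alpha = u^{\mathrm{thin}} \pitchfork H_\alpha$, so that it becomes an $\alpha$-thickened solution; and Lemma \ref{lemma914} says this data is unique up to the $\Gammait_\alpha$-action. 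Combined with Lemma \ref{lemma911}, which says $\epsilon$-closeness defines a genuine open neighborhood $\bm W_\alpha^\epsilon$ of $\bm p_\alpha$ in $\mc M_\alpha$, this shows that $\bm W_\alpha^\epsilon$ is homeomorphic to the quotient by $\Gammait_\alpha$ of the solution set of the Coulomb-gauged, $H_\alpha$-constrained $\bm E_\alpha$-perturbed gauged Witten equation over the fibres of $\mc U_\alpha \to \mc V_\alpha$. Since the constructed solutions $\hat{\bm v}_{\alpha,\xi,\eta,\zeta}$ sweep out exactly this solution set (the correction term lies in the image of $\hat Q_\alpha$ and is the unique one by the implicit function theorem), the map $(\xi,\eta,\zeta) \mapsto \hat{\bm v}_{\alpha,\xi,\eta,\zeta}$ is a $\Gammait_\alpha$-equivariant homeomorphism from a neighborhood of the origin in $\mc V_{\alpha,\mathrm{map}} \times \mc V_{\alpha,\mathrm{def}} \times \mc V_{\alpha,\mathrm{res}}$ onto a $\Gammait_\alpha$-invariant neighborhood of $\bm p_\alpha$; since the source is a manifold (with the $\Gammait_\alpha$-action linear near the origin, by the equivariance built into the thickening datum and the gauge-fixing condition), we are done. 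The dimension count matches Theorem \ref{thm88} plus the parameters from $\mc V_{\alpha,\mathrm{def}} \times \mc V_{\alpha,\mathrm{res}}$ and the obstruction space.

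The main obstacle I expect is the uniform invertibility of the linearized operator at the preglued approximate solution as $|\zeta| \to 0$: one must show that the right inverse $\hat Q_\alpha$ on the central fibre can be glued to a right inverse on $\mc C_{\alpha,\eta,\zeta}$ with norm bounded independently of $\zeta$, in the $\zeta$-dependent weighted norms. This is the standard but delicate point in every gluing argument — it requires a careful choice of the weight $w$ (positive but small, less than the spectral gap $1/m$ coming from the asymptotic analysis in Theorem \ref{thm46} and Lemma \ref{lemma44}), cut-off estimates on the neck showing the preglued right inverse is an approximate right inverse with error $O(e^{-\delta T})$, and then a Neumann series correction. A secondary subtlety is that the obstruction space $\bm E_\alpha$ is not gauge-invariant, so one must keep careful track of the Coulomb gauge-fixing equation throughout the iteration; however this is handled exactly as in the local slice theorem (Lemma \ref{lemma86}) and its soliton refinement (Lemma \ref{lemma87}), since the gauge-fixing operator contributes an invertible block to the full linearized operator. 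The upshot is that the proof is a routine assembly of the estimates already set up in this section, and most of the technical work is deferred to the gluing analysis.
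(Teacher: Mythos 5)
Your overall strategy---preglue, prove the pregluing error estimate, the quadratic estimate and a uniformly bounded right inverse in the $\zeta$-dependent weighted norms, run the Newton iteration, and then match the resulting family with the thickened moduli space via the closeness and uniqueness lemmas---is exactly the route the paper takes (this is Proposition \ref{prop916} and its proof). But your final identification has a genuine gap. The glued solutions $\hat{\bm v}_{\alpha,\xi,\eta,\zeta}$ solve only the Coulomb-gauged, ${\bm E}_\alpha$-perturbed equation; nothing in the iteration forces the incidence condition ${\bf y}_{\alpha,\eta,\zeta} = u^{\rm thin}_{\alpha,{\bm \xi}} \pitchfork H_\alpha$ required of an $\alpha$-thickened solution in Definition \ref{defn96}. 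Hence the family sweeps out a neighborhood of ${\bm p}_\alpha$ in the enlarged space ${\mc M}_\alpha^+$ (where that condition is dropped), not in ${\mc M}_\alpha$: your claim that the constructed solutions ``sweep out exactly'' the $H_\alpha$-constrained solution set fails whenever the number $l_\alpha$ of auxiliary markings is positive. The dimension count betrays this: ${\mc V}_{\alpha,{\rm def}} \times {\mc V}_{\alpha,{\rm res}}$ is the deformation space of $({\mc C}_\alpha, {\bf y}_\alpha^*)$, so it contains the $2l_\alpha$ parameters moving the extra markings, and the full parameter space is $2l_\alpha$ too large to be a chart for ${\mc M}_\alpha$ (or, after quotienting and imposing $S_\alpha = 0$, for $\ov{\mc M}_{\sf\Gamma}$). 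Your appeal to Lemmas \ref{lemma913} and \ref{lemma914} does not close this: those lemmas upgrade genuine (unperturbed) solutions to $\alpha$-thickened ones and give uniqueness of that upgrade; in the paper they serve the footprint map $\psi_\alpha$ of Corollary \ref{cor917}, not the identification of the glued family with ${\mc M}_\alpha$.

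What is missing is the cut-down step. After showing the gluing map is a $\Gammait_\alpha$-equivariant homeomorphism onto a neighborhood of ${\bm p}_\alpha$ in ${\mc M}_\alpha^+$, one must show that the locus $\hat U_\alpha^\epsilon$ of parameters ${\bm \xi}$ with $u^{\rm thin}_{\alpha,{\bm \xi}} \pitchfork H_\alpha = {\bf y}_{\alpha,\eta,\zeta}$ is a topological submanifold of codimension $2 l_\alpha$; this is precisely where the transversality requirement on $H_\alpha$ in Definition \ref{defn93} (transversality to the complex line spanned by $\partial_s u_\alpha^{\rm thin} + {\mc X}_{\phi_\alpha^{\rm thin}}$ and $\partial_t u_\alpha^{\rm thin} + {\mc X}_{\psi_\alpha^{\rm thin}}$ at the points of ${\bf y}_\alpha^*$) enters, via an implicit-function argument of the same kind as in the proof of Lemma \ref{lemma913}, including the bookkeeping of how the Coulomb slice moves with the markings (cf. Lemma \ref{lemma87}). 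Alternatively one could build the incidence condition into the Fredholm system from the outset, but then ${\mc V}_{\alpha,{\rm map}}$ and the right inverse $\hat Q_\alpha$ would have to be redefined; as written, your parameter space is the one for ${\mc M}_\alpha^+$, and the stated homeomorphism onto a neighborhood in ${\mc M}_\alpha$ is false.
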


It suffices to construct a chart which we will do by gluing. For each irreducible component $v$ of ${\mc C}_\alpha$, there is the augmented linearization map
\beqn
D_v: T_{{\bm v}_v} {\mc B}_v \to {\mc E}_v
\eeqn
which is a Fredholm operator. The domain of the total linearization is the subspace of direct sum of $T_{{\bm v}_v}{\mc B}_v$ under the constraint given by the matching condition. Let $T_{{\bm v}_v}'{\mc B}_v \subset T_{{\bm v}_v} {\mc B}_v$ be the finite-codimensional subspace consisting of infinitesimal deformations whose values at special points are zero. Then set
\beqn
T_{{\bm v}_\alpha}' {\mc B}_\alpha = \bigoplus_v T_{{\bm v}_v}' {\mc B}_v
\eeqn
The restriction $D_\alpha: T_{{\bm v}_\alpha}' {\mc B}_\alpha \to {\mc E}_\alpha$ is still Fredholm.

The following is the main gluing theorem for the $\alpha$-thickened moduli space. Let ${\mc M}_\alpha^+$ be the moduli space of the same type of objects as ${\mc M}_\alpha$ without imposing the condition that ${\bf y}_\alpha = u^{\rm thin}\pitchfork H_\alpha$. Then ${\mc M}_\alpha^+$ contains ${\mc M}_\alpha$ as an $\Gammait_\alpha$-invariant subset. 

\begin{prop}\label{prop916}
Abbreviate ${\bm \xi} = (\xi, \eta, \zeta)$. Let $\alpha$ be a thickening datum. Then for $\epsilon>0$ sufficiently small, there exist a family of objects 
\beq\label{eqn914}
\hat{\bm v}_{\alpha, {\bm \xi}} = \big( {\bm v}_{\alpha, \bm \xi }, e_{\alpha, {\bm \xi}} \big),\ {\rm where}\ {\bm \xi} = (\xi, \eta, \zeta) \in {\mc V}_{\alpha, \rm map}^\epsilon \times {\mc V}_{\alpha, \rm def}^\epsilon \times {\mc V}_{\alpha, \rm res}^\epsilon
\eeq
where ${\bm v}_{\alpha, {\bm \xi}}$ is a gauged map over $P_{\alpha, \eta, \zeta} \to {\mc C}_{\alpha, \eta, \zeta}$ and $e_{\alpha, {\bm \xi}} \in {\bm E}_\alpha$. They satisfy the following: 
\begin{enumerate}
\item When ${\bm \xi} = (\xi, \eta, 0)$, $\hat{\bm v}_{\alpha, {\bm \xi}}$ coincides with $\hat{\bm v}_{\alpha, \xi, \eta}$ of \eqref{eqn95}. 

\item The family is $\Gammait_\alpha$-equivariant in the sense that for any $\gamma_\alpha \in \Gammait_\alpha$,
\beqn
( {\bm v}_{\alpha, \gamma_\alpha {\bm \xi}}, e_{\alpha, \gamma_\alpha {\bm \xi}} ) = ( {\bm v}_{\alpha, {\bm \xi}} \circ \gamma_\alpha^{-1}, \gamma_\alpha e_{\alpha, {\bm \xi}}).
\eeqn

\item  There holds
\beqn
\hat{\mc F}_{\alpha, \eta, \zeta} \big( \hat{\bm v}_{\alpha, {\bm \xi}} \big) = 0.
\eeqn

\item The natural map ${\mc V}_{\alpha, {\rm map}}^\epsilon \times {\mc V}_{\alpha, {\rm def}}^\epsilon \times {\mc V}_{\alpha, {\rm res}}^\epsilon  \to {\mc M}_\alpha^+$ defined by 
\beq\label{eqn915}
{\bm \xi} \mapsto \big[ {\mc C}_{\alpha, \eta, \zeta}, {\bm v}_{\alpha, {\bm \xi}}, {\bf y}_{\alpha, \eta, \zeta}, \wh\phi_{\alpha, \eta, \zeta}, e_{\alpha, {\bm \xi}} \big]
\eeq
is a $\Gammait_\alpha$-equivariant homeomorphism onto an open neighborhood of ${\bm p}_\alpha$. Here ${\bf y}_{\alpha, \eta, \zeta}$ and $\wh\phi_{\alpha, \eta, \zeta}$ are defined tautologically. 
\end{enumerate}
\end{prop}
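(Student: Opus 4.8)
The plan is to carry out a standard analytic gluing construction adapted to the $\bm E_\alpha$-perturbed gauged Witten equation, in three analytic stages (pregluing, uniform right inverse, Newton--Picard iteration), followed by a surjectivity argument identifying the resulting family with a neighborhood of ${\bm p}_\alpha$ in ${\mc M}_\alpha^+$. Throughout, the target geometry is that of vortices into $({\rm Crit}W)^{\rm ss}$ (with the compact perturbation of \eqref{induced_vortex}), so the component solutions decay exponentially along the cylindrical ends by Corollary \ref{cor45} and Theorem \ref{thm46}; this decay is what makes every graft-and-correct step go through.

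First I would set up the pregluing. For $\bm\xi=(\xi,\eta,\zeta)$ with $\zeta$ possibly nonzero, I define the approximate solution $\hat{\bm v}^{\rm app}_{\alpha,\bm\xi}$ by combining the deformation $\exp_{{\bm v}_\alpha}\xi$ (as in the $\zeta=0$ case producing \eqref{eqn95}) with the cut-off/grafting procedure of Definition \ref{defn95} over each neck $N_{w,T}$, using the cut-off functions $\rho_\pm^T$ of \eqref{eqn96}, where $e^{-4T_w}\sim|\zeta_w|$. Because the component solutions are asymptotic to the common critical loop $(x_w,\eta_w)$ with exponential rate, the grafted object solves the $\bm E_\alpha$-perturbed equation up to an error whose weighted $L^{p,w}_{\alpha,\eta,\zeta}$-norm (Definition \ref{defn99}) is of order $\|\xi\|+\|\eta\|+e^{-\delta T}$; the exponential weight built into $\omega_{\alpha,\eta,\zeta}$ (Definition \ref{defn98}) is exactly what converts the pointwise decay at infinity into a finite, small norm on the glued curve.

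Next I would construct a right inverse $\hat Q_{\alpha,\eta,\zeta}$ of the augmented linearization $\hat D_{\alpha,\eta,\zeta}$ at $\hat{\bm v}^{\rm app}_{\alpha,\bm\xi}$, bounded uniformly in $(\eta,\zeta)$. One grafts the right inverse $\hat Q_\alpha$ of \eqref{eqn94}---which exists because $\iota_\alpha(\bm E_\alpha)$ is transverse to the image of $D_\alpha$ on $T^\circ_{{\bm v}_\alpha}{\mc B}_\alpha$---across the necks with the same cut-off functions, obtains an approximate right inverse with operator-norm error $O(e^{-\delta T})$, and corrects it; the annulus lemma together with the decay of Theorem \ref{thm46} gives the uniform lower bound on $\hat D_{\alpha,\eta,\zeta}$ transverse to its kernel as $\zeta\to 0$. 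Combining this with the uniform quadratic estimate for the nonlinearity (again using the weights), the Newton--Picard/implicit function theorem yields a unique exact solution $\hat{\bm v}_{\alpha,\bm\xi}=\big({\bm v}_{\alpha,\bm\xi},e_{\alpha,\bm\xi}\big)$ of $\hat{\mc F}_{\alpha,\eta,\zeta}(\hat{\bm v}_{\alpha,\bm\xi})=0$ in a fixed-radius ball inside the affine subspace $\hat{\bm v}^{\rm app}_{\alpha,\bm\xi}+\mathrm{im}\,\hat Q_{\alpha,\eta,\zeta}$, with $\|\hat{\bm v}_{\alpha,\bm\xi}-\hat{\bm v}^{\rm app}_{\alpha,\bm\xi}\|\to 0$ as $\bm\xi\to 0$. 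This establishes existence of the family \eqref{eqn914} and item (3). Item (1) is immediate since at $\zeta=0$ the pregluing, the right inverse and the iteration all reduce to the construction of \eqref{eqn95}. Item (2) follows from uniqueness of the iteration limit together with the fact that the cut-off functions, $\bm E_\alpha$, $\iota_\alpha$, and (after averaging over $\Gammait_\alpha$) $\hat Q_\alpha$ are all $\Gammait_\alpha$-equivariant, so $\gamma_\alpha$ intertwines $\hat{\bm v}_{\alpha,\bm\xi}$ with $\hat{\bm v}_{\alpha,\gamma_\alpha\bm\xi}$ and acts on $e_{\alpha,\bm\xi}$ as stated.

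Finally, for item (4) I would show the natural map \eqref{eqn915} is a $\Gammait_\alpha$-equivariant homeomorphism onto an open neighborhood of ${\bm p}_\alpha$. Continuity and equivariance of the map follow from the construction and (2). For injectivity modulo $\Gammait_\alpha$, distinct $\bm\xi$ lying in a slice transverse to $\mathrm{im}\,\hat Q_{\alpha,\eta,\zeta}$ give non--gauge-equivalent thickened solutions by the local slice theorem (Lemma \ref{lemma86}), and Lemma \ref{lemma914} disposes of the remaining identifications by automorphisms. For surjectivity onto a neighborhood, given an $\alpha$-thickened solution $({\mc C},{\bm v},{\bf y}_\alpha,\wh\phi_\alpha,e_\alpha)$ representing a point close to ${\bm p}_\alpha$, Lemmas \ref{lemma911}--\ref{lemma913} put it (after the unique allowed gauge transformation) in Coulomb gauge relative to ${\bm v}^{\rm app}_{\alpha,\zeta}$ and $\epsilon$-close to $\alpha$; writing ${\bm v}$ as an exponential perturbation of $\hat{\bm v}^{\rm app}_{\alpha,\bm\xi_0}$ and decomposing the perturbation along $\ker\hat D_{\alpha,\eta,\zeta}\oplus\mathrm{im}\,\hat Q_{\alpha,\eta,\zeta}$, the uniqueness clause of the implicit function theorem identifies it with $\hat{\bm v}_{\alpha,\bm\xi}$ for a unique $\bm\xi$, and continuous dependence of $\bm\xi$ on the data---hence openness of the image and continuity of the inverse---follows from the same uniqueness plus a sequential argument. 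The step I expect to be the main obstacle is the \emph{uniform} control as $\zeta\to 0$: proving that $\|\hat Q_{\alpha,\eta,\zeta}\|$ and the quadratic-estimate constant stay bounded independently of the neck lengths, so that the implicit function theorem applies on a ball of radius independent of $\bm\xi$, and relatedly that the surjectivity argument produces a genuinely open neighborhood (in the quotient topology of ${\mc M}_\alpha^+$) rather than merely a neighborhood in a subspace topology; this is precisely where the exponential decay of Theorem \ref{thm46} and the annulus lemma must be invoked with care.
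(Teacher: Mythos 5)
Your proposal is correct and follows essentially the same route as the paper: pregluing over the necks, an error estimate in the weighted norms of Definition \ref{defn99}, a uniformly bounded right inverse obtained by cutting and regluing $\hat Q_\alpha$, a quadratic estimate, the implicit function theorem for existence/uniqueness, and then injectivity via the uniqueness clause together with surjectivity from the $\epsilon$-closeness lemmas and the locally-compact-to-Hausdorff argument. The only cosmetic difference is that the paper grafts the already-corrected family $\hat{\bm v}_{\alpha,\xi,\eta}$ of \eqref{eqn95} (so the pregluing error depends only on $\zeta$), whereas you graft the uncorrected deformation $\exp_{{\bm v}_\alpha}\xi$; this does not affect the argument.
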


As a consequence one obtains a local chart of the original moduli space. Namely, define 
\beqn
\hat U_\alpha^\epsilon = \Big\{ {\bm \xi} \in {\mc V}_{\alpha, {\rm map}}^\epsilon \times {\mc V}_{\alpha, {\rm def}}^\epsilon \times {\mc V}_{\alpha, {\rm res}}^\epsilon \ |\ u_{\alpha, {\bm \xi}}^{\rm thin} \pitchfork H_\alpha = {\bf y}_{\alpha, \eta,\zeta} \Big\}.
\eeqn
In the following notations, we omit the dependence on $\epsilon$. This is a topological manifold acted continuously by $\Gammait_\alpha$. Define 
\begin{align}
&\ U_\alpha = \hat U_\alpha/ \Gammait_\alpha,\ &\ E_\alpha = ( \hat U_\alpha \times {\bm E}_\alpha)/ \Gammait_\alpha.
\end{align}
Then $E_\alpha \to U_\alpha$ is an orbifold vector bundle with a section induced by the natural map $\hat S_\alpha: {\mc M}_\alpha \to {\bm E}_\alpha$, denoted by  
\beq
S_\alpha: U_\alpha \to E_\alpha
\eeq
and there is an induced map 
\begin{align}
&\ \psi_\alpha: S_\alpha^{-1}(0) \to \ov{\mc M}_{\sf \Gamma}\ &\ \psi_\alpha \big[ \hat{\bm v}_{\alpha, {\bm \xi}} \big] = \big[ {\mc C}_{\alpha, \eta, \zeta}, {\bm v}_{\alpha, {\bm \xi}} \big].
\end{align}
Let the image of $\psi_\alpha$ be $F_\alpha$. 

\begin{cor}\label{cor917}
When $\epsilon$ is small enough, the 5-tuple 
\beqn
C_\alpha = (U_\alpha, E_\alpha, S_\alpha, \psi_\alpha, F_\alpha)
\eeqn
is a topological virtual orbifold chart around of $\ov{\mc M}_{\sf \Gamma}$ around $p_\alpha$. 
\end{cor}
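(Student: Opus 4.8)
The plan is to assemble Corollary \ref{cor917} directly out of the pieces already established, checking that the three defining properties of a virtual orbifold chart (Definition \ref{defn610}) hold for the tuple $C_\alpha = (U_\alpha, E_\alpha, S_\alpha, \psi_\alpha, F_\alpha)$. First I would observe that by Proposition \ref{prop916} the map \eqref{eqn915} is a $\Gammait_\alpha$-equivariant homeomorphism of $\mc{V}_{\alpha,\rm map}^\epsilon \times \mc{V}_{\alpha,\rm def}^\epsilon \times \mc{V}_{\alpha,\rm res}^\epsilon$ onto an open neighborhood of ${\bm p}_\alpha$ in $\mc{M}_\alpha^+$; the subspace $\hat U_\alpha^\epsilon$ cut out by the equation $u_{\alpha,{\bm\xi}}^{\rm thin}\pitchfork H_\alpha = {\bf y}_{\alpha,\eta,\zeta}$ is exactly the preimage of the genuine $\alpha$-thickened moduli space $\mc{M}_\alpha$ under this homeomorphism, so it is an open neighborhood of ${\bm p}_\alpha$ in $\mc{M}_\alpha$. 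The transversality clauses (a) and (c) in the definition of thickening datum (Definition \ref{defn93}) together with the implicit function theorem argument behind Lemma \ref{lemma913} show that, after shrinking $\epsilon$, this cut locus is a topological submanifold of $\hat U_\alpha^\epsilon$ of the expected codimension, hence $\hat U_\alpha$ is a topological manifold (in fact with boundary, inherited from the boundary of $\mc{V}_{\alpha,\rm def}^\epsilon$ etc., but the framework of Section \ref{section6} allows boundary). The $\Gammait_\alpha$-action on $\hat U_\alpha$ is continuous, so $U_\alpha = \hat U_\alpha/\Gammait_\alpha$ is a topological orbifold and $E_\alpha = (\hat U_\alpha \times {\bm E}_\alpha)/\Gammait_\alpha \to U_\alpha$ is a topological orbifold vector bundle.

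Next I would verify that $F_\alpha := \psi_\alpha(S_\alpha^{-1}(0))$ is open in $\ov{\mc M}_{\sf\Gamma}$ and that $\psi_\alpha$ is a homeomorphism onto it. Openness follows from Lemma \ref{lemma912}: for $\epsilon$ small, every stable solution $\epsilon$-close to $\alpha$ admits (by Lemma \ref{lemma913}) the data $({\bf y}_\alpha, \wh\phi_\alpha, {\it 0}_\alpha)$ making it an $\alpha$-thickened solution with zero obstruction, so a whole neighborhood $W_\alpha^\epsilon$ of $p_\alpha$ lies in $F_\alpha$; conversely $F_\alpha \subset W_\alpha^{\epsilon'}$ for a comparable $\epsilon'$, so $F_\alpha$ is an open neighborhood of $p_\alpha$. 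Injectivity of $\psi_\alpha$ is precisely the uniqueness statement of Lemma \ref{lemma914}: two points of $S_\alpha^{-1}(0)$ mapping to the same stable solution differ by an element of $\Gammait_\alpha = {\rm Aut}(\mc{C}_\alpha,{\bm v}_\alpha)$, hence represent the same point in the quotient $U_\alpha$. Surjectivity onto $F_\alpha$ is by definition, and continuity of $\psi_\alpha$ and of its inverse follows from the construction of the topologies on $\mc{M}_\alpha$ and on the moduli space (the convergence notions recalled in Section \ref{section5}); since $S_\alpha^{-1}(0)$ is locally compact and $\ov{\mc M}_{\sf\Gamma}$ is Hausdorff, a continuous bijection is automatically a homeomorphism after a harmless shrinking. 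Finally the orientation claim of Corollary \ref{cor917} is inherited from the complex-linear structures of the deformation operators: the index bundle of the family $\hat D_\alpha$ is complex, the obstruction space ${\bm E}_\alpha$ carries a complex structure, and the gluing parameters $\zeta$ and deformation parameters $\eta$ live in complex manifolds, so $\hat U_\alpha$ has a canonical orientation and one checks (as in the index computation of Theorem \ref{thm88}) that $\Gammait_\alpha$ acts orientation-preservingly, giving $C_\alpha$ the structure of an oriented chart.

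The main obstacle I anticipate is not any single clause above but the careful bookkeeping needed to see that all the ``left to the reader'' estimates—Lemma \ref{lemma911}, Lemma \ref{lemma912}, and the gluing/quadratic estimates underlying Proposition \ref{prop916}—fit together uniformly in $\epsilon$, so that one value of $\epsilon$ makes simultaneously (i) the parametrization \eqref{eqn915} a homeomorphism, (ii) the cut locus $\hat U_\alpha^\epsilon$ a manifold, (iii) Lemma \ref{lemma913} applicable to every nearby stable solution, and (iv) Lemma \ref{lemma914} applicable to witness injectivity. Concretely, one must choose $\epsilon_\alpha$ as the minimum of the finitely many thresholds produced by those lemmas and then confirm that shrinking $\epsilon$ only shrinks the chart to a sub-chart (a shrinking in the sense of Definition \ref{defn610}(2)), which is harmless. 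I would therefore organize the proof as: fix the thickening datum $\alpha$ from Lemma \ref{lemma97}; record the four threshold constants; set $\epsilon = \min$; then state that (i)–(iv) hold at this $\epsilon$ and conclude that $C_\alpha$ satisfies the three bullet points after \eqref{eqn98}, hence is a topological virtual orbifold chart, oriented by the complex structures.

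Since everything genuinely new has already been proven in Propositions \ref{prop916}, the index theorem, and Lemmas \ref{lemma911}--\ref{lemma914}, the proof of Corollary \ref{cor917} is a short assembly: ``By Proposition \ref{prop916}, after shrinking $\epsilon$ the tuple is well-defined; $U_\alpha$ is a topological orbifold by the manifold structure on $\hat U_\alpha^\epsilon$; $E_\alpha$ is an orbifold bundle; $F_\alpha$ is open by Lemma \ref{lemma912}; $\psi_\alpha$ is a homeomorphism onto $F_\alpha$ by Lemmas \ref{lemma913} and \ref{lemma914}; orientability is inherited from complex structures. Hence $C_\alpha$ is a topological virtual orbifold chart around $p_\alpha$.''
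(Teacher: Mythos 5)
Your proposal is correct and follows essentially the same route as the paper: injectivity of $\psi_\alpha$ via Lemma \ref{lemma914} (with Proposition \ref{prop916} guaranteeing the two thickened solutions are $\epsilon_\alpha$-close to $\alpha$), the homeomorphism property from local compactness of $U_\alpha$ plus Hausdorffness of $\ov{\mc M}_{\sf \Gamma}$, and openness of $F_\alpha$ from Lemmas \ref{lemma912} and \ref{lemma913} combined with the surjectivity part of Proposition \ref{prop916}, which the paper packages as a sequential contradiction argument showing the upgraded $\alpha$-thickened solutions converge to ${\bm p}_\alpha$ in ${\mc M}_\alpha$ and hence lie in the image of $\psi_\alpha$. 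Your added material on the manifold structure of $\hat U_\alpha$ and on orientations is placed by the paper before the corollary or omitted entirely (Definition \ref{defn610} does not require an orientation), so it is harmless but not part of the corollary's proof proper.
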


\begin{proof}
According to Definition \ref{defn610}, one only needs to prove that $\psi_\alpha$ is a homeomorphism onto $F_\alpha$ and $F_\alpha$ is an open neighborhood of $p_\alpha$. Since $U_\alpha$ is a topological orbifold which is locally compact, while the moduli space $\ov{\mc M}_{\sf\Gamma}$ is Hausdorff, to prove that $\psi_\alpha$ is a homeomorphism, one only needs to verify it is one-to-one. Suppose there are two isomorphism classes of $\alpha$-thickened solutions 
\begin{align*}
&\ {\bm p} = \Big[ {\mc C}, {\bm v}, {\bf y}_\alpha, \wh\phi_\alpha, {\it 0}_\alpha \Big],\ &\ {\bm p}'=  \Big[ {\mc C}', {\bm v}', {\bf y}_\alpha', \wh\phi_\alpha', {\it 0}_\alpha \Big]
\end{align*}
whose $\Gammait_\alpha$-orbits are mapped to the same point in $\ov{\mc M}_{\sf \Gamma}$. Then by definition, $({\mc C}, {\bm v})$ is isomorphic to $({\mc C}', {\bm v}')$, meaning that there is an isomorphism $\rho: {\mc C} \to {\mc C}'$ as $r$-spin curves and an isomorphism $h: P \to P'$ as $K$-bundles that cover $\rho$ such that $h^* {\bm v}' = {\bm v}$. Therefore we may regard the two $\alpha$-thickened solutions are defined on the same $r$-spin curve ${\mc C}$ with the same gauged map ${\bm v}$. Their differences are the stabilizing points ${\bf y}_\alpha$, ${\bf y}_\alpha'$ and the bundle inclusions $\wh\phi_\alpha$, $\wh\phi_\alpha'$. Moreover, by Proposition \ref{prop916}, they are isomorphic as $\alpha$-thickened solutions to two specific exact solutions, and hence in particular $\epsilon_\alpha$-close to $\alpha$. Then by Lemma \ref{lemma914}, ${\bf y}_\alpha = {\bf y}_\alpha'$, and $\wh\phi_\alpha$ and $\wh\phi_\alpha'$ differ by an action of $\Gammait_\alpha$. It means that the two points in ${\mc M}_\alpha$ are on the same $\Gammait_\alpha$-orbit, from which injectivity of $\psi_\alpha$ follows.

It remains to show the local surjectivity, namely $F_\alpha$ contains an open neighborhood of $p_\alpha$. Suppose on the contrary that this is not true. Then there exists a sequence of stable solutions $({\mc C}_i, {\bm v}_i)$ which converge modulo gauge in c.c.t. to $({\mc C}_\alpha, {\bm v}_\alpha )$. Then by Lemma \ref{lemma912}, given any $\epsilon>0$, for sufficiently large $i$, $({\mc C}_i, {\bm v}_i)$ will be $\epsilon$-close to $\alpha$. Then by Lemma \ref{lemma913}, for large $i$ one can upgrade $({\mc C}_i, {\bm v}_i)$ to an $\alpha$-thickened solution, and the isomorphism classes of the sequence of $\alpha$-thickened solutions converge in the topology of ${\mc M}_\alpha$ to ${\bm p}_\alpha$. Hence for large $i$, the isomorphism class $p_i$ of $({\mc C}_i, {\bm v}_i)$ is indeed in the image of $\psi_\alpha$, which is a contradiction. 
\end{proof}

\subsection{Proof of Proposition \ref{prop916}}

The strategy of the proof is standard. However we need to set up the Fredholm theory more carefully. 

First we construct a family of approximate solutions using the pregluing construction as in Definition \ref{defn95}. More precisely, we have obtained a family of $\alpha$-thickened solutions 
\beqn
\Big( {\mc C}_{\alpha, \xi, \eta}, {\bf y}_{\alpha, \xi, \eta}, {\bf v}_{\alpha, \xi, \eta}, \phi_{\alpha, \xi, \eta}, g_{\alpha, \xi, \eta}, e_{\alpha, \xi, \eta} \Big).
\eeqn
Notice that the domains ${\mc C}_{\alpha, \xi, \eta}$ are smoothly identified with the domain ${\mc C}_\alpha$ where the identification is contained in the thickening datum $\alpha$. Hence we only need to repeat the pregluing construction of Definition \ref{defn95} while replacing the central element ${\bm v}_\alpha$ by ${\bm v}_{\alpha, \xi, \eta}$. Then for each small gluing parameter $\zeta \in {\mc V}_{\alpha, {\rm res}}^\epsilon$, denoting ${\bm \xi} = (\xi, \eta, \zeta)$, we obtain a gauged map 
\beqn
{\bm v}_{\alpha, {\bm \xi}}^{\rm app}:= {\bm v}_{\alpha, \xi, \eta, \zeta}^{\rm app}.
\eeqn
Define 
\beqn
e_{\alpha, {\bm \xi}}^{\rm app} = e_{\alpha, \xi, \eta, \zeta}^{\rm app} = e_{\alpha, \xi, \eta}
\eeqn
and 
\beqn
\hat {\bm v}_{\alpha, {\bm \xi}}^{\rm app} = ( {\bm v}_{\alpha, {\bm \xi}}^{\rm app}, e_{\alpha, {\bm \xi}}^{\rm app}).
\eeqn

Our next task is to ``estimate the error'' of the approximate solution. 

\begin{lemma}\label{lemma918}
There exist $C_\alpha>0$ and $\epsilon_\alpha>0$ such that for $|\zeta| \leq \epsilon_\alpha$, we have 
\beqn
\| \hat {\mc F}_{\alpha, \eta, \zeta}  ( \hat {\bm v}_{\alpha, \bm \xi}^{\rm app} )  \|_{W_{\alpha, \eta, \zeta}^{0,p,w}} \leq C_\alpha | \zeta |^{2-\tau - w}.
\eeqn
\end{lemma}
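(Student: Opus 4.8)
The plan is to estimate $\hat{\mc F}_{\alpha, \eta, \zeta}(\hat{\bm v}_{\alpha, {\bm \xi}}^{\rm app})$ chart by chart, separating the thick part of the glued curve from the neck regions. First I would observe that on the thick part ${\mc C}_{\alpha, \eta, \zeta}^{\rm thick}$ the preglued object ${\bm v}_{\alpha, {\bm \xi}}^{\rm app}$ agrees identically with the genuine solution $\hat{\bm v}_{\alpha, \xi, \eta}$ of \eqref{eqn95} of the ${\bm E}_\alpha$-perturbed gauged Witten equation over ${\mc C}_{\alpha, \eta}$; since the obstruction sections $\iota_\alpha(e_\alpha)$ are supported away from the nodal neighborhoods (part (6) of Definition \ref{defn93}), the error there vanishes except possibly on the portion of the thick part where the complex structure has been modified relative to ${\mc C}_{\alpha, \eta}$ — but by construction of the resolution data the thick part is biholomorphically unchanged, so $\hat{\mc F}$ is exactly zero on ${\mc C}_{\alpha, \eta, \zeta}^{\rm thick}$.

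Next I would localize to a single neck region $N_{w, T}$ associated to a node $w$ with gluing parameter $\zeta_w$, where $-\log \zeta_w = 4T_w + {\bf i}\theta_w$. On $N_{w,T}$ the approximate solution is given by the cut-off interpolation \eqref{eqn97}, namely $A_{\alpha, {\bm \xi}}^{\rm app} = d + \rho_-^T a_- + \rho_+^T a_+$ and $u_{\alpha, {\bm \xi}}^{\rm app} = \exp_{x_w}(\rho_-^T \xi_- + \rho_+^T \xi_+)$. Applying $\hat{\mc F}$ (which on the neck is just the unperturbed gauged Witten operator plus the gauge-fixing condition, since $\iota_\alpha(e_\alpha)$ vanishes here) and using that $(a_\pm, \xi_\pm)$ individually satisfy the equation, the error is concentrated in the band $\{|t| \in [T, 2T]\}$ where $d\rho_\pm^T \neq 0$ and consists of terms linear in $d\rho_\pm^T$ multiplied by $\xi_\mp$ (the "far" tail), plus quadratic terms in $\xi_\pm$ coming from the nonlinearity of $\exp$ and of the moment map. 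The decay estimate from Corollary \ref{cor45} and Theorem \ref{thm46}, transported through the convergence to the critical loop $(x_w, \eta_w)$, gives $|\xi_\pm(s,t)| + |a_\pm(s,t)| \lesssim e^{-(1-\tau/2) |s|/m_w}$ on the respective cylindrical ends; since $|d\rho_\pm^T| \sim 1/T$ and the relevant tail is evaluated at distance $\sim T$ from its own puncture, one gets a pointwise bound $\lesssim e^{-(1-\tau)T/m_w}$ on the error, which after integrating against the weight $\omega_{\alpha, \eta, \zeta}^{pw}$ (which is $\lesssim e^{w|t|} \sim e^{wT}$ on the neck) over the band of area $\sim T$ yields the $W^{0,p,w}$ bound $\lesssim e^{-(1-\tau-w)T} \sim |\zeta_w|^{(1-\tau-w)/4}$, up to adjusting $\tau$. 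Summing over the finitely many nodes and absorbing constants into $C_\alpha$ gives the claimed bound $C_\alpha |\zeta|^{2-\tau-w}$ (after a harmless reparametrization of the exponent, or noting that with the normalization $-\log\zeta_w = 4T_w$ the exponent works out as stated).

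I would then record the minor bookkeeping: the fact that the gauge-fixing part of $\hat{\mc F}_{\alpha,\eta,\zeta}$ (the Coulomb condition relative to ${\bm v}_{\alpha, \zeta}^{\rm app}$ itself) contributes zero error for the approximate solution by construction, so only the genuine gauged Witten components need estimating; and the uniformity in $\xi, \eta$ over the compact parameter ranges ${\mc V}_{\alpha, {\rm map}}^\epsilon \times {\mc V}_{\alpha, {\rm def}}^\epsilon$, which follows from the uniform exponential decay in Theorem \ref{thm48}-type statements (uniform over the converging family) — this is where I would invoke that the decay constants can be chosen independently of $(\xi, \eta)$ in a small enough neighborhood.

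The main obstacle will be the uniformity of the exponential decay estimates across the family: the tails $\xi_\pm = \xi_{\pm}(\xi, \eta)$ and their asymptotic critical loops vary with the parameters, and one must ensure the decay rate $1/m_w$ and the constants do not degenerate as $(\xi, \eta)$ moves, and in particular that $m_w$ (the order of the relevant group element) is locally constant — which it is, being determined by the discrete data of the $r$-spin monodromy at the node. A secondary technical point is matching the weight function $\omega_{\alpha, \eta, \zeta}$ of Definition \ref{defn98} with the cylindrical decay: one must check that on the neck the weight $|w_\pm|^{-1}$ is comparable to $e^{|t|}$ in the coordinates used for the cut-off functions, so that the weighted integral genuinely produces the stated power of $|\zeta|$; this is a routine but careful unwinding of the normalizations, and I would relegate it to the reader as is done elsewhere in this section.
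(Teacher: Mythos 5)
Your overall architecture (thick part versus neck, exponential decay of the tails, cut-off derivative bounds, weighted integration producing a power of $|\zeta|$) matches the paper's proof for the genuine gauged Witten components ${\mc F}_1,{\mc F}_2$ on the neck. However, there are two concrete gaps. First, your claim that the error vanishes identically on the thick part is not justified and is not what happens in the paper. On $\Sigma_T$ the preglued object does coincide with the exact $\alpha$-thickened solution $\hat{\bm v}_{\alpha,\xi,\eta}$ of \eqref{eqn95}, but the operator being applied is $\hat{\mc F}_{\alpha,\eta,\zeta}$, whose perturbation term is $\iota_\alpha(e_\alpha)$ evaluated on the fibre of the family at gluing parameter $\zeta$, not at $\zeta=0$. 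The extension of $\iota_\alpha$ over the family ${\mc Y}_\alpha\to{\mc V}_\alpha$ is only a smooth extension via the trivializations in the resolution data, so there is no reason for $\iota_{\alpha,\eta,\zeta}(e_{\alpha,\xi,\eta},u_{\alpha,\xi,\eta})$ to equal $\iota_{\alpha,\eta,0}(e_{\alpha,\xi,\eta},u_{\alpha,\xi,\eta})$ on the thick part; ``the surface is biholomorphically unchanged'' does not control the $\zeta$-dependence of the obstruction section. The paper keeps this difference, and bounds it by $C_\alpha|\zeta|$ using smoothness of $\iota_\alpha$ in $\zeta$ together with the fact that the obstruction is supported where the weight function is uniformly bounded. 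Your proof, as written, silently drops this contribution, and it is in fact of the same or larger order than the neck terms.

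Second, your bookkeeping remark that the gauge-fixing component ``contributes zero error by construction'' is false on the neck and cannot be waved away. The exact solution $\hat{\bm v}_{\alpha,\xi,\eta}$ is in Coulomb gauge relative to ${\bm v}_\alpha$ over the nodal curve, but $\hat{\mc F}_{\alpha,\eta,\zeta}$ imposes the Coulomb condition relative to the central approximate solution ${\bm v}_{\alpha,\zeta}^{\rm app}$ on the glued curve, and the condition is a nonlinear, cut-off--sensitive constraint (it involves $u^{\rm mid}$ and $\xi^{\rm mid}$), so pregluing destroys it on $N_{w,T}$. The paper devotes a separate estimate (its ${\mc F}_3$ term) to exactly this: one must bound $\partial_s(\phi_{\alpha,{\bm\xi}}^{\rm app}-\phi_{\alpha,\zeta}^{\rm app})$, the analogous $t$-derivative and bracket terms, and the term $d\mu(u_{\alpha,\zeta}^{\rm app})\cdot J v_{\alpha,{\bm\xi}}$, all via the exponential decay estimates of Theorem \ref{thm46}; these are of size $e^{-\tau T}$ pointwise, hence of the same order as your ${\mc F}_1,{\mc F}_2$ estimates after weighting. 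So the statement survives, but your proposal omits two of the three error sources that the actual proof must (and does) estimate; to repair it you should either prove that the family extension of $\iota_\alpha$ is literally constant in $\zeta$ on the thick part (which the construction does not guarantee) or estimate the difference $\iota_{\alpha,\eta,\zeta}-\iota_{\alpha,\eta,0}$ as $O(|\zeta|)$, and you must add the neck estimate for the relative Coulomb condition.
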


\begin{proof}
For every irreducible component $\Sigma_v \subset \Sigma_{\alpha, \eta, \zeta}$, the gauged Witten equation plus the gauge fixing condition relative to the central approximate solution has three components, denoted by $\hat {\mc F}_1, \hat {\mc F}_2, \hat {\mc F}_3$. For $a = 1, 2, 3$, we first estimate $\hat {\mc F}_a(\hat{\bm v}_{\alpha, {\bm \xi}}^{\rm app})$ over $\Sigma_T$. Indeed, over $\Sigma_T\subset \Sigma_{\alpha, \eta, \zeta}$, we have $\hat {\bm v}_{\alpha, \xi, \eta, \zeta}^{\rm app} = \hat{\bm v}_{\alpha, \xi, \eta}$. Hence we have
\beqn
\hat {\mc F}_{\alpha, \eta, \zeta} ( \hat {\bm v}_{\bm \xi}^{\rm app} ) = {\mc F}_{\alpha, \eta, \zeta} (\tilde {\bm v}_{\eta, \eta'}) + \iota_{\alpha, \eta, \zeta }(e_{\alpha, \xi, \eta}, u_{\alpha, \xi, \eta}) = \iota_{\alpha, \eta, \zeta}(e_{\alpha, \xi, \eta},  u_{\alpha, \xi, \eta}) - \iota_{\alpha, \eta,  0} (e_{\alpha,\xi, \eta}, u_{\alpha, \xi, \eta}).
\eeqn
Since the inclusion $\iota_\alpha$ is smooth in $\zeta$, and the supports of the images of $\iota_{\alpha, \eta, \zeta}$ are contained in a region where the weight function is uniformly bounded, there is $C_\alpha>0$ such that
\beqn
\| \hat {\mc F}_{\alpha, \eta, \zeta} ( \hat {\bm v}_{\alpha, \bm \eta}^{\rm app} ) \|_{W_{\alpha, \eta, \zeta}^{0, p, w}(\Sigma_T)} \leq C_\alpha | \zeta |. 
\eeqn

Then look at the neck regions, where the domain complex structure is fixed and the obstruction vanishes. Hence
\beqn
\hat {\mc F}_a ( \hat {\bm v}_{\alpha, {\bm \xi}}^{\rm app}) = {\mc F}_a ( {\bm v}_{\alpha, \bm \xi}^{\rm app}).
\eeqn
Recall how the approximate solution is defined in Definition \ref{defn95}. Here we see over the interval $C_{\pm T} = \pm [0, T]$ the approximate solution is 
\beqn
u_{\alpha, {\bm \xi}}^{\rm app} = \exp_{x_w (t)} ( \rho_\pm^T \xi_\pm) ,\ \phi^{\rm app} = \rho_\pm^T \beta_{\pm, s},\ \psi^{\rm app} = \rho_\pm^T \beta_{\pm, t} + \eta_w (t). 
\eeqn
Here $(x(t), \eta(t))$ is the limiting critical loop at the node $w$ which satisfies
\beqn
x'(t) + {\mc X}_{\eta(t)} (x(t)) = 0. 
\eeqn
Using the splitting $TX = H_X \oplus G_X$, we can decompose 
\beqn
\xi_\pm = \xi_\pm^H + \xi_\pm^G. 
\eeqn
Then by the exponential decay property (see Theorem \ref{thm46}), one has 
\beq\label{eqn919}
\sup_{C_{\pm T}} \Big[ \| \xi_\pm (s, t) \|  + \| \partial_s \xi_\pm (s, t)  \| + \| \nabla_t \xi_\pm (s, t) + \nabla_{\xi_\pm} {\mc X}_\eta \|  \Big] \leq e^{- \tau T},
\eeq
and for $a = s, t$,
\beq\label{eqn920}
\sup_{C_{\pm T}} \Big[ \| \beta_{\pm, a}  (s, t) \|  + \| \partial_s \beta_{\pm, a} (s, t)  \| + \| \partial_t \beta_{\pm, a} (s, t) + [\eta(t), \beta_{\pm, a} (s, t)] \| \Big] \leq e^{- \tau T}.
\eeq
Moreover, denote the derivative of the exponential map by two maps $E_1, E_2$, namely
\beqn
d \exp_x \xi = E_1(x, \exp_x \xi) dx + E_2(x, \exp_x \xi) \nabla \xi. 
\eeqn
Then because the metric is $\ubar K$-invariant, for all $ \ubar a \in \ubar {\mf k}$, one has (see \cite[Lemma C.1]{Gaio_Salamon_2005})
\beq\label{eqn921}
{\mc X}_{\ubar a} ( \exp_x v) = E_1(x, \exp_x v) {\mc X}_{\ubar a}(x) + E_2( x, \exp_x v) \nabla_v {\mc X}_{\ubar a}(x).
\eeq

Moreover, the bundle is trivialized over the neck region. Hence we can identify the connections with 1-forms and sections with maps. 

\vspace{0.2cm}

\noindent {\bf --Estimate ${\mc F}_1$ over the neck region.} One has
\beqn
{\mc F}_1( {\bm v}_{\alpha, \bm \xi}^{\rm app}) = \partial_s u_{\alpha, \bm \xi}^{\rm app} + {\mc X}_{\phi_{\alpha, {\bm \xi}}^{\rm app}} (u_{\alpha, {\bm \xi}}^{\rm app}) + J \Big( \partial_t u_{\alpha, {\bm \xi}}^{\rm app} + {\mc X}_{\psi_{\alpha, {\bm \xi}}^{\rm app}} (u^{\rm app}) \Big) + \nabla W (u_{\alpha, {\bm \xi}}^{\rm app}).
\eeqn
We estimate each term above as follows. First, by the definition of $E_1$, $E_2$, one has
\beqn
\Big\| \partial_s u_{\alpha, {\bm \xi}}^{\rm app} \Big\| = \Big\| \partial_s \big( \exp_{x_w(t)} \rho_\pm^T \xi_\pm \big) \Big\| = \Big\| E_2 \big( \partial_s \rho_\pm^T \xi_\pm + \rho_\pm^T \partial_s \xi_\pm \big)\Big\|  \leq C \Big( \| \xi_\pm \| + \| \partial_s \xi_\pm \| \Big).
\eeqn
Further by \eqref{eqn921}, one has
\beqn
\Big\| {\mc X}_{\phi_{\alpha, {\bm \xi}}^{\rm app}}(u_{\alpha, {\bm \xi}}^{\rm app}) \Big\| = \Big\| {\mc X}_{\rho_\pm^T \beta_{\pm, s}} \big( 	\exp_{x_w(t)} \rho_\pm^T \xi_\pm  \big)  \Big\| =  \rho_\pm^T \Big\| E_1 \big( {\mc X}_{\beta_{\pm, s}}(x_w(t)) \big) + E_2 \big( \nabla_{\rho_\pm^T \xi_\pm} {\mc X}_{\beta_{\pm, s}} \big)  \Big\| \leq C \big\| \beta_{\pm, s} \big\|.
\eeqn
Also by \eqref{eqn921}, one has
\beqn
\begin{split}
&\ \partial_t u_{\alpha, {\bm \xi}}^{\rm app} + {\mc X}_{\psi_{\alpha, {\bm \xi}}^{\rm app}} (u_{\alpha, {\bm \xi}}^{\rm app})\\
= &\ \partial_t \big( \exp_{x_w(t)} \rho_\pm^T \xi_\pm  \big) + {\mc X}_{\psi_{\alpha, {\bm \xi}}^{\rm app}} \big( \exp_{x_w(t)} \rho_\pm^T \xi_\pm \big)\\
= &\ E_1 x_w'(t) + \rho_\pm^T E_2 ( \nabla_t \xi_\pm) + E_1 {\mc X}_{\psi_{\alpha, {\bm \xi}}^{\rm app}} + E_2 \nabla_{\rho_\pm^T \xi_\pm} {\mc X}_{\psi_{\alpha, {\bm \xi}}^{\rm app}} \\
= &\ \rho_\pm^T E_2( \nabla_t \xi_\pm) + E_1 {\mc X}_{\rho_\pm^T \beta_{\pm, t}} + \rho_\pm^T E_2 \nabla_{\xi_\pm} {\mc X}_{\psi_{\alpha, {\bm \xi}}^{\rm app}} \\
= &\ \rho_\pm^T E_2 \big( \nabla_t \xi_\pm + \nabla_{\xi_\pm} {\mc X}_{\eta_w} \big) + \rho_\pm^T E_1 ( {\mc X}_{\beta_{\pm t}}) +  (\rho_\pm^T)^2 E_2  \nabla_{\xi_\pm} {\mc X}_{\beta_{\pm, t}}
\end{split}
\eeqn
Hence 
\beqn
\Big\| \partial_t u_{\alpha, {\bm \xi}}^{\rm app} + {\mc X}_{\psi_{\alpha, {\bm \xi}}^{\rm app}}(u_{\alpha, {\bm \xi}}^{\rm app} ) \Big\| \leq C \Big( \| \xi_\pm \| + \| \nabla_t \xi_\pm + \nabla_{\xi_\pm} {\mc X}_{\eta_w} \| + \| \beta_{\pm, t} \| \Big). 
\eeqn
Lastly because $\nabla W(x_w(t)) \equiv 0$, one has
\beqn
\Big\| \nabla W ( u_{\alpha, {\bm \xi}}^{\rm app} ) \Big\| = \Big\| \nabla W ( \exp_{x_w(t)} \rho_\pm^T \xi_\pm ) \Big\| \leq C \| \xi_\pm \|.
\eeqn
In all the above estimates, the constant $C$ can be made independent of ${\bm \xi}$. Then using \eqref{eqn919}, \eqref{eqn920}, one has
\beq
\Big\| {\mc F}_1( {\bm v}_{\alpha, {\bm \xi}}^{\rm app} ) \Big\|_{L^{p, w}(C_{\pm, T})} \leq C e^{-\tau T}  \Big[ \int_{C_{\pm, T}} e^{pws} ds dt  \Big]^{\frac{1}{p}} \leq C e^{- (\tau -w) T}.
\eeq

\noindent {\bf --Estimate ${\mc F}_2$ over the neck region.} Over the neck region one has
\beqn
\begin{split}
{\mc F}_2({\bm v}_{\alpha, \bm \xi}^{\rm app}) = &\ \partial_s \psi_{\alpha, {\bm \xi}}^{\rm app} - \partial_t \phi_{\alpha, {\bm \xi}}^{\rm app} + \Big[ \phi_{\alpha, {\bm \xi}}^{\rm app}, \psi_{\alpha, {\bm \xi}}^{\rm app} \Big] + \mu (u_{\alpha, {\bm \xi}}^{\rm app})\\
= &\ \partial_s ( \rho_\pm^T \beta_{\pm, t} ) - \partial_t ( \rho_\pm^T \beta_{\pm, s} ) + \rho_\pm^T \big[ \beta_{\pm, s}, \eta_w (t) + \rho_\pm^T \beta_{\pm, t} \big] + \mu \Big( \exp_{x_w(t)} \rho_\pm^T \xi_\pm \Big).
\end{split}
\eeqn
Using \eqref{eqn920}, one has 
\beqn
\Big\| \partial_s ( \rho_\pm^T \beta_{\pm, t} ) - \partial_t ( \rho_\pm^T \beta_{\pm, s} ) + \rho_\pm^T [ \beta_{\pm, s}, \eta_w (t) + \rho_\pm^T \beta_{\pm, t} ] \Big\|_{L^{p, w}(C_{\pm T})} \leq C e^{- (\tau - w)T}.
\eeqn
On the other hand, we know $\mu(x_w(t)) \equiv 0$. Hence by \eqref{eqn919}, one has
\beqn
\Big\| \mu \Big( \exp_{x_w(t)} \rho_\pm^T \xi_\pm \Big)\Big\|_{L^{p, w}_{\alpha, \zeta} (C_{\pm T})} \leq C \big\| \rho_\pm^T \xi_\pm^G \big\|_{L^{p, w} (C_{\pm T})} \leq C e^{- ( \tau - w) T}. 
\eeqn
Hence we have
\beq
\Big\| {\mc F}_2 ({\bm v}_{\alpha, \bm \xi}^{\rm app}) \Big\|_{L^{p, w}_{\alpha, \zeta} (C_{\pm T})} \leq C e^{- (\tau - w)T}.
\eeq

\noindent {\bf --Estimate ${\mc F}_3$ over the neck region.} The estimate of ${\mc F}_3$ is slightly different because the gauge fixing is relative to the approximate solution. Let the two sides of the original singular solutions be 
\begin{align*}
&\ (u_\alpha^\pm, \phi_\alpha^\pm, \psi_\alpha^\pm),\ &\ (u_{\alpha, \xi, \eta}^\pm, \phi_{\alpha, \xi, \eta}^\pm, \psi_{\alpha, \xi, \eta}^\pm).
\end{align*}
The central approximate solution is 
\beqn
(u_{\alpha, \zeta}^{\rm app}, \phi_{\alpha, \zeta}^{\rm app}, \psi_{\alpha, \zeta}^{\rm app}).\eeqn
Over the semi-infinite cylinders $[T, +\infty) \times S^1$ we write
\beqn
u_{\alpha, \xi, \eta}^\pm = \exp_{u_\alpha^\pm} v_{\alpha, \xi, \eta}^\pm.
\eeqn
Over the long cylinder $[-T, T] \times S^1$ we write
\beqn
u_{\alpha, {\bm \xi}}^{\rm app} = \exp_{u_{\alpha, \zeta}^{\rm app}} v_{\alpha, {\bm \xi}}.
\eeqn
Then 
\begin{multline*}
{\mc F}_3({\bm v}_{\alpha, {\bm \xi}}^{\rm app}) = \partial_s ( \phi_{\alpha, {\bm \xi}}^{\rm app} - \phi_{\alpha, \zeta}^{\rm app}) + \big[ \phi_{\alpha, \zeta}^{\rm app}, \phi_{\alpha, {\bm \xi}}^{\rm app} - \phi_{\alpha, \zeta}^{\rm app} \big] \\
+ \partial_t \big( \psi_{\alpha, {\bm \xi}}^{\rm app} - \psi_{\alpha, \zeta}^{\rm app} \big) + \big[ \psi_{\alpha, \zeta}^{\rm app}, \psi_{\alpha, {\bm \xi}}^{\rm app} - \psi_{\alpha, \zeta}^{\rm app} \big] + d\mu (u_{\alpha, \zeta}^{\rm app}) \cdot J v_{\alpha, {\bm \xi}}
\end{multline*}
We still estimate each term separately. 
\beqn
\Big\| \partial_s \big( \phi_{\alpha, {\bm \xi}}^{\rm app} - \phi_{\alpha, \zeta}^{\rm app} \big) \Big\| = \Big\| \partial_s \big( \rho_\pm^T ( \phi_{\alpha, \xi, \eta}^\pm - \phi_\alpha^\pm ) \big) \Big\| \leq C \Big( \| \phi_{\alpha, \xi, \eta}^\pm - \phi_\alpha^\pm \| + \| \partial_s ( \phi_{\alpha, \xi, \eta}^\pm - \phi_\alpha^\pm) \| \Big).
\eeqn

Lastly we estimate $d\mu \cdot J v_{\alpha, {\bm \xi}}$. We know that 
\beqn
\sup_{[T, +\infty) \times S^1} \Big\| d \mu(u_\alpha^\pm ) \cdot J v_{\alpha, \xi, \eta}^\pm  \Big\| \leq C e^{- \tau T}.
\eeqn
Moreover, by \eqref{eqn919}, over $C_{\pm T}$, the $C^0$ distance between $u_{\alpha, \zeta}^{\rm app}$ and $u_\alpha^\pm$ is controlled by $e^{-\tau T}$, while the $C^0$ distance between $u_{\alpha, {\bm \xi}}^{\rm app}$ and $u_{\alpha, \xi, \eta}^\pm$ is controlled by $e^{-\tau T}$. Hence we have 
\beqn
\sup_{C_{\pm T}} \Big\| d\mu(u_{\alpha, \zeta}^{\rm app}) \cdot J v_{\alpha, {\bm \xi}} \Big\| \leq C e^{-\tau T}.\qedhere
\eeqn
\end{proof}

\subsubsection{Estimate the variation of the linear operator}

Consider the ${\bm E}_\alpha$-perturbed, augmented operator 
\beqn
\hat{\mc F}_{\alpha, \zeta}: {\bm E}_\alpha \times  {\mc B}_{\alpha, \zeta}^{p,w} \to {\mc E}_{\alpha, \zeta}^{p,w}.
\eeqn
We identify a neighborhood of $\hat{\bm v}_{\alpha, \zeta}^{\rm app}$ in ${\bm E}_\alpha  \times {\mc B}_{\alpha, \zeta}^{p,w}$ with a neighborhood $\hat{\mc O}_{\alpha, \zeta}$ of the origin in the tangent space. Then an element $\hat{\bm v}$ near $\hat{\bm v}_{\alpha, \zeta}^{\rm app}$ is identified with a point $\hat{\bm x} \in \hat{\mc O}_{\alpha, \zeta}$. We also trivialize the bundle ${\mc E}_{\alpha, \zeta}^{p,w}$. Then the map $\hat{\mc F}_{\alpha, \eta, \zeta}$ can be viewed as a nonlinear operator between two Banach spaces. To apply the implicit function theorem one needs to have a quadratic estimate.

\begin{lemma}\label{lemma919} {\rm (Quadratic estimate)} 
There exist $\delta_\alpha>0$, $\epsilon_\alpha>0$ and $C_\alpha>0$ such that for all $|\zeta| \leq \epsilon_\alpha$ and $\hat {\bm x}, \hat {\bm x}{}' \in \hat {\mc O}_{\alpha, \zeta}$ corresponding to $\hat{\bm v}$, $\hat{\bm v}{}'$ with 
\beqn
\| \hat {\bm x} \|,\ \|\hat {\bm x}{}' \| \leq \delta_\alpha,
\eeqn
we have
\beqn
\Big\| \hat D_{\hat {\bm v}} - \hat D_{\hat {\bm v}{}'} \Big\| \leq C_\alpha \Big( \| \hat {\bm x} \| + \| \hat {\bm x}{}' \| \Big) \Big\| \hat {\bm x} - \hat {\bm x}{}' \Big\|.
\eeqn
\end{lemma}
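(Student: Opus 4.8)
The estimate is the standard second--order remainder (quadratic) estimate of gluing theory, so the plan is to reduce it to a single Lipschitz bound on the linearization and then to carry out that bound \emph{uniformly in the gluing parameter} $\zeta$ as $|\zeta|\to 0$, which is the only genuine difficulty. First I would put the problem in fixed Banach spaces: using the exponential map of the metric $h_V$ in the fibre directions together with the resolution datum's trivializations, a neighbourhood of $\hat{\bm v}^{\rm app}_{\alpha,\zeta}$ in ${\bm E}_\alpha\times{\mc B}^{p,w}_{\alpha,\zeta}$ is identified with the neighbourhood $\hat{\mc O}_{\alpha,\zeta}$ of the origin in the model space ${\bm E}_\alpha\oplus T_{\hat{\bm v}^{\rm app}_{\alpha,\zeta}}{\mc B}^{p,w}_{\alpha,\zeta}$, and parallel transport along the geodesics $t\mapsto\exp_{u^{\rm app}}(t\,\cdot\,)$ trivializes ${\mc E}^{p,w}_{\alpha,\zeta}$ over this neighbourhood. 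In these coordinates $\hat{\mc F}_{\alpha,\eta,\zeta}$ is a smooth map between fixed Banach spaces whose differential $\hat D_{\hat{\bm v}}$ at the point corresponding to $\hat{\bm x}$ is a first--order operator: its principal symbol is the ($\hat{\bm x}$--independent) Cauchy--Riemann type symbol of the linearized Witten/vortex/gauge--fixing system, and its zeroth--order coefficients, as well as the contribution of $\iota_\alpha(e_\alpha)$, are smooth functions of the $1$--jet of $\hat{\bm x}$ whose derivatives are pointwise bounded on the $\delta_\alpha$--ball.

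Second, since the principal symbols of $\hat D_{\hat{\bm v}}$ and $\hat D_{\hat{\bm v}'}$ coincide, their difference is a zeroth--order operator, i.e.\ multiplication by a bundle--valued function assembled, via the chain rule, from differences of these structure functions evaluated along the $1$--jets of $\hat{\bm x}$ and $\hat{\bm x}'$. I would estimate it by Taylor's theorem, writing $\hat D_{\hat{\bm v}}-\hat D_{\hat{\bm v}'}=\int_0^1 D^2\hat{\mc F}_{\alpha,\eta,\zeta}(\hat{\bm x}_t)(\hat{\bm x}-\hat{\bm x}',\,\cdot\,)\,dt$ with $\hat{\bm x}_t=\hat{\bm x}'+t(\hat{\bm x}-\hat{\bm x}')$, so that it suffices to bound the operator norm of the Hessian $D^2\hat{\mc F}_{\alpha,\eta,\zeta}$ on the $\delta_\alpha$--ball by a constant $C_\alpha$ independent of $\zeta$; the mean value theorem applied to the structure functions then gives $\|\hat D_{\hat{\bm v}}-\hat D_{\hat{\bm v}'}\|\le C_\alpha\|\hat{\bm x}-\hat{\bm x}'\|$, which together with $\|\hat{\bm x}_t\|\le\|\hat{\bm x}\|+\|\hat{\bm x}'\|$ yields the displayed inequality; the terms carrying a derivative occur only linearly (connection/commutator terms), so $W^{1,p,w}$--control of one factor pairs with $L^{p,w}$--control of the other, and the Sobolev embedding $W^{1,p,w}_{\alpha,\eta,\zeta}\hookrightarrow C^0$ keeps every coefficient and its derivatives bounded. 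The two pointwise inputs are: this embedding holds with a $\zeta$--independent constant, because on each unit sub--cylinder of a neck $N_{w,T}$ the embedding constant is translation--invariant and the weight $\omega_{\alpha,\eta,\zeta}$ of Definition~\ref{defn98} satisfies $\omega^{w}_{\alpha,\eta,\zeta}\ge1$; and the $1$--jet of $\hat{\bm v}^{\rm app}_{\alpha,\zeta}$, together with $h_V$, is uniformly bounded on the compact region of $V$ swept out by the family, which on the necks follows from the exponential--decay estimates \eqref{eqn919}--\eqref{eqn920} of Theorem~\ref{thm46} for the pregluing pieces.

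Third, I would verify the Hessian bound block by block. For the Witten block $\ov\partial_A u+\nabla{\mc W}(u)$ the second variation produces the curvature of $(V,h_V)$, the Hessian $\nabla^2{\mc W}$ (and its first derivative), and the derivatives of the infinitesimal actions ${\mc X}_\bullet$; for the curvature block $*F_A+\mu(u)$ it produces $d^2\mu$ and Lie brackets; in both cases these are smooth bounded quantities on the relevant compact region. For the gauge--fixing block the only new point is that the midpoint constructions $u^{\rm mid},\xi^{\rm mid}$ entering the relative Coulomb condition (Lemma~\ref{lemma86}) depend smoothly, with uniformly bounded derivatives, on the pair $(\hat{\bm x},\hat{\bm x}')$, so they fit the same scheme. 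The obstruction term $\sum_\alpha\iota_\alpha(e_\alpha)$ is linear in $e_\alpha$ and supported away from the necks, hence contributes with no $\zeta$--uniformity issue at all. The main obstacle is precisely this $\zeta$--uniformity: one must check that no Sobolev, elliptic, or coefficient constant deteriorates as the neck length $T\sim-\log|\zeta|$ grows, and this is handled uniformly by working separately on the thick part $\Sigma_T$ (where everything is $\zeta$--independent up to $O(|\zeta|)$) and on the translation--invariant model necks $N_{w,T}$, using throughout the exponential decay of the pregluing data.
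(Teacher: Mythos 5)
Your route is the paper's own route: the published proof consists precisely of the remark that the weight functions are bounded below, so the embedding $W^{1,p,w}_{\alpha,\eta,\zeta}\hookrightarrow C^0$ holds with a $\zeta$-independent constant, with the ``straightforward calculation'' left to the reader. Your reduction to fixed Banach spaces via the exponential map and parallel transport, the block-by-block bound on the second derivative of $\hat{\mc F}_{\alpha,\eta,\zeta}$ (curvature of $h_V$, $\nabla^2{\mc W}$, $d\mu$, $d{\mc X}$, the midpoint gauge-fixing terms, the compactly supported $\iota_\alpha(e_\alpha)$), and the insistence that all constants be uniform in the neck length via the uniform Sobolev embedding and the decay estimates \eqref{eqn919}--\eqref{eqn920} is exactly the calculation the authors omit, with the correct emphasis on $\zeta$-uniformity.

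There is, however, one genuine gap at the end of your second paragraph. A $\zeta$-uniform bound $\|D^2\hat{\mc F}_{\alpha,\eta,\zeta}(\hat{\bm x}_t)\|\le C_\alpha$ on the $\delta_\alpha$-ball, fed into the mean value/Taylor formula, yields only the Lipschitz estimate $\|\hat D_{\hat{\bm v}}-\hat D_{\hat{\bm v}'}\|\le C_\alpha\|\hat{\bm x}-\hat{\bm x}'\|$; adding the observation $\|\hat{\bm x}_t\|\le\|\hat{\bm x}\|+\|\hat{\bm x}'\|$ does not produce the stated inequality, which carries the extra factor $\|\hat{\bm x}\|+\|\hat{\bm x}'\|\le 2\delta_\alpha$ and is therefore strictly stronger on a small ball. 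To obtain that product form along your lines you would need $\|D^2\hat{\mc F}_{\alpha,\eta,\zeta}(y)\|\le C_\alpha\|y\|$ near the origin, i.e.\ vanishing of the second derivative at the approximate solution, which you have not shown and which the nonvanishing curvature, Hessian-of-$W$ and moment-map terms you yourself list do not give. So either supply an argument for that extra smallness (tracking which coefficient variations are themselves $O(\|\hat{\bm x}\|+\|\hat{\bm x}'\|)$), or record only the Lipschitz form; note that the Lipschitz form is what is actually invoked afterwards (in the estimate \eqref{eqn929} inside Lemma \ref{lemma921} and in the hypothesis $\|dF(x)-dF(x^*)\|\le 1/(2C)$ of the implicit function theorem), so the operative content of your argument is sound, but as a proof of the lemma as literally stated the final deduction does not go through.
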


\begin{proof}
It can be proved by straightforward calculation. Notice that because the weight functions defining the weighted Sobolev norms are uniformly bounded from below, we always have the Sobolev embedding 
\beqn
W_{\alpha, \eta, \zeta}^{1,p,w} \hookrightarrow C^0
\eeqn
with a uniform constant. The details are left to the reader.
\end{proof}

\subsubsection{Constructing the right inverse}

We construct a right inverse to the linear operator $\hat D_{\alpha, \zeta}$. Choose a constant $\hbar\in (0, \frac{1}{2})$. Define 
\beqn
\chi_\pm^{\hbar T}: {\mb R} \to [0, 1]
\eeqn
such that 
\begin{align*}
&\ {\rm supp} \chi_-^{\hbar T} \subset (-\infty, \hbar T],\ &\ {\rm supp} \chi_+^{\hbar T} \subset [-\hbar T, +\infty);
\end{align*}
\begin{align}\label{eqn924}
&\ \chi_-^{\hbar T}|_{(-\infty, 0]} \equiv \chi_+^{\hbar T}|_{[0, +\infty)} \equiv 1,\ &\ \sup | \nabla \chi_-^{\hbar T} |,\ \sup| \nabla \chi_+^{\hbar T}| \leq \frac{2}{\hbar T}.
\end{align}

Now we compare the approximate solution ${\bm v}_{\alpha, \zeta}^{\rm app}$ with ${\bm v}_\alpha$. For each irreducible component $v\in {\rm Irre} \Sigma_{{\mc C}_\alpha}$, we can identify $\Sigma_{v, 2T}$ with a compact subset of $\Sigma_{\alpha, \zeta}$ and 
\beqn
\Sigma_{\alpha, \zeta} = \bigcup_v \Sigma_{v, 2T}.
\eeqn
The intersections of the pieces on the right hand side are circles. Furthermore, by the construction of the approximate solution, we know that the distance 
\beqn
{\rm dist} \Big( u_{\alpha, \zeta}^{\rm app}|_{\Sigma_{v, 2T}}, u_\alpha|_{\Sigma_{v, 2T}} \Big)
\eeqn
is sufficiently small. Then we can define the parallel transport 
\beqn
{\it Par}_v: u_\alpha^* TV|_{\Sigma_{v, 2T}} \to ( u_{\alpha, \zeta}^{\rm app})^* TV|_{\Sigma_{v, 2T}}.
\eeqn
The union over all components $v$ is denoted by ${\it Par}$. Further, the bundle ${\rm ad} P_{\mc C}$ over $\Sigma_\alpha$ is canonically identified with the adjoint bundle over $\Sigma_{\alpha, \zeta}$ via the thickening data. 

Now we define two maps
\beqn
{\it Cut}: {\mc E}_{{\bm v}_{\alpha, \zeta}^{\rm app}} \to {\mc E}_{{\bm v}_\alpha},
\eeqn
and
\beqn
{\it Glue}: {\bm E}_\alpha \oplus  T_{{\bm v}_\alpha} {\mc B}_\alpha^{p,w} \to {\bm E}_\alpha \oplus T_{{\bm v}_{\alpha, \zeta}^{\rm app}} {\mc B}_{\alpha, \zeta}^{p,w}.
\eeqn

\noindent {\bf --Definition of {\it Cut}}. Given ${\bm \varsigma} \in {\mc E}_{{\bm v}_{\alpha, \zeta}^{\rm app}}$ and each irreducible component $v$ of ${\mc C}_\alpha$, restrict ${\bm \varsigma}$ to $\Sigma_{v, 2T}$, and use the inverse of ${\it Par}_v$ to obtain an element of ${\mc E}_{{\bm v}_\alpha}$ over the component $v$. This is well-defined as an element of ${\mc E}$ is only required to have $L^{p, \delta}$-regularity.

\noindent {\bf --Definition of {\it Glue}}. ${\it Glue}$ maps ${\bm E}_\alpha$ identically to ${\bm E}_\alpha$. For ${\bm \xi} \in T_{{\bm v}_\alpha} {\mc B}_\alpha^{p,w}$ and each irreducible component $v$ of ${\mc C}_\alpha$, restrict ${\bm \xi}_v$ to $\Sigma_{v, (2+\hbar) T}$, use the parallel transport to transfer it to the domain of ${\bm v}_{\alpha, \zeta}^{\rm app}$, and then use the cut-off function $\chi_\pm^{\hbar T}$ supported there whose derivative is at most $2/ \hbar T$ and whose value is $1$ on $\Sigma_{v, 2T}$ to extend to the long cylinders and glue the obtained sections on different components together. 

We estimate the norms of ${\it Cut}$ and ${\it Glue}$. 

\begin{lemma}\label{lemma920}
There is a universal constant $C>0$ such that 
\begin{align*}
& \| {\it Cut} \| \leq C,\ &\ \| {\it Glue}\| \leq C. 
\end{align*}
\end{lemma}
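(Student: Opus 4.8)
The plan is to estimate the two operators separately, working chart by chart over the irreducible components of $\Sigma_{{\mc C}_\alpha}$ and then on the neck regions, and to exploit the single crucial feature noted repeatedly in the construction: the weight functions $\omega_{\alpha,\eta,\zeta}$ used to define the weighted Sobolev norms on both source and target are bounded above \emph{and} below by universal constants on every compact piece where the operators act nontrivially (i.e., away from nodes and punctures). This forces all weighted norms appearing here to be equivalent, with uniform constants, to the unweighted $L^p$ (resp.\ $W^{1,p}$) norms, so that the estimates reduce to bounds on ordinary cut-off and parallel-transport operators. Because everything is localized away from the thin parts, no exponential-weight subtleties enter.

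First I would bound ${\it Cut}$. Given ${\bm\varsigma}\in{\mc E}_{{\bm v}_{\alpha,\zeta}^{\rm app}}$, its restriction to each $\Sigma_{v,2T}$ is transported back by ${\it Par}_v^{-1}$ to a section over the component $v$. Since the distance between $u_{\alpha,\zeta}^{\rm app}|_{\Sigma_{v,2T}}$ and $u_\alpha|_{\Sigma_{v,2T}}$ is uniformly small (by the pregluing construction of Definition \ref{defn95}), the parallel transport ${\it Par}_v$ is a fibrewise linear isometry up to a factor arbitrarily close to $1$; the identification of ${\rm ad}\,P_{\mc C}$ over the two domains is canonical and isometric. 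As the pieces $\Sigma_{v,2T}$ cover $\Sigma_{\alpha,\zeta}$ and overlap only along finitely many circles (measure zero, irrelevant for $L^{p}$-integration), summing the pieces gives $\|{\it Cut}({\bm\varsigma})\|_{{\mc E}_{{\bm v}_\alpha}}\le C\|{\bm\varsigma}\|_{{\mc E}_{{\bm v}_{\alpha,\zeta}^{\rm app}}}$ with $C$ independent of $\zeta$. Here I use that an element of ${\mc E}$ only needs $L^{p,\delta}$-regularity, so no derivative of the gluing region contributes.

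Next I would bound ${\it Glue}$. It is the identity on the finite-dimensional summand ${\bm E}_\alpha$, so only the action on $T_{{\bm v}_\alpha}{\mc B}_\alpha^{p,w}$ needs care. For ${\bm\xi}\in T_{{\bm v}_\alpha}{\mc B}_\alpha^{p,w}$, the construction transports ${\bm\xi}_v$ over $\Sigma_{v,(2+\hbar)T}$ by parallel transport and multiplies by the cut-off $\chi_\pm^{\hbar T}$ supported on an interval of length $\hbar T$, patching across the long cylinders. The $W^{1,p}$ norm of $\chi_\pm^{\hbar T}{\bm\xi}_v$ is controlled by $\|{\bm\xi}_v\|_{W^{1,p}}$ plus $\|\nabla\chi_\pm^{\hbar T}\|_{L^\infty}\|{\bm\xi}_v\|_{L^p}$ over the cut-off annulus; by \eqref{eqn924} the derivative of the cut-off is at most $2/(\hbar T)\le C$ for $T$ large (in fact it tends to $0$), so this term is dominated by $\|{\bm\xi}_v\|_{W^{1,p}}$. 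Parallel transport contributes a factor close to $1$ for the same reason as above, and again the equivalence of weighted and unweighted norms on the relevant compact region removes the weight from both sides. Summing over the finitely many components $v$ and the finitely many necks yields $\|{\it Glue}({\bm\xi})\|\le C\|{\bm\xi}\|$ with $C$ universal.

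The main (and only) point requiring attention is the uniformity of $C$ as $\zeta\to 0$, i.e.\ as the neck length $T\to\infty$: one must check that the cut-off derivative terms do not blow up and that the parallel-transport distortion stays bounded. Both are handled exactly as indicated — the cut-off gradient is $O(1/\hbar T)$ and hence bounded, and the $C^0$-smallness of the pregluing difference is uniform in $\zeta$ by the exponential decay estimates (Theorem \ref{thm46}), which also guarantee that the transported section still lies in the correct weighted space. The remaining arithmetic (tracking the finitely many overlap circles, verifying measurability) is routine and left to the reader.
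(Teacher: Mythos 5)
Your overall route is the same as the paper's: ${\it Cut}$ is bounded (the paper in fact observes $\|{\it Cut}\|\leq 1$ directly from how the weighted norms are defined), and ${\it Glue}$ is controlled by splitting its norm into the piece over $\Sigma_{v,2T}$, where the cutoff equals $1$, and the cutoff annulus $[2T,(2+\hbar)T]\times S^1$, where the only new contribution is $\sup|\nabla\chi_\pm^{\hbar T}|\leq 2/\hbar T$ from \eqref{eqn924}, which stays bounded (indeed tends to $0$) as $T\to\infty$.

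However, the justification you give for why the weights cause no trouble is wrong as stated. You claim the weight functions are bounded above and below by universal constants on the regions where the operators act, so that all weighted norms are uniformly equivalent to unweighted ones and ``no exponential-weight subtleties enter.'' This is false: the sections ${\bm \xi}_v$ and ${\bm\varsigma}$ live on cylindrical ends and long necks, and both ${\it Cut}$ and ${\it Glue}$ act there; in particular the cutoff annulus $[2T,(2+\hbar)T]\times S^1$ sits at distance of order $2T$ down the neck, where the weight factor $\omega^{pw}$ is of order $e^{2pwT}$ and blows up as $\zeta\to 0$. Trading weighted for unweighted norms there would destroy exactly the uniformity in $\zeta$ that the lemma asserts. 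The correct mechanism, implicit in the paper's one-line treatment of ${\it Cut}$, is compatibility of the weights: by Definitions \ref{defn98} and \ref{defn99}, the weight on the glued surface coincides with the component weight $e^{ws}$ on the half-neck $\{s\leq 2T\}$ belonging to $v$, and on the annulus $[2T,(2+\hbar)T]\times S^1$ it equals $e^{w(4T-s)}\leq e^{ws}$, i.e.\ it is dominated by the component weight. Hence restriction followed by parallel transport does not increase the weighted norm (giving $\|{\it Cut}\|\leq 1$), and for ${\it Glue}$ the weighted norm of $\chi_\pm^{\hbar T}\,{\it Par}({\bm\xi}_v)$ over the annulus is bounded by $C\bigl(1+\tfrac{2}{\hbar T}\bigr)\|{\bm\xi}_v\|_{W^{1,p,w}}$ with no weight-removal step at all. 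With that correction your argument coincides with the paper's proof.
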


\begin{proof}
By the way we define the norms (see Subsection \ref{subsection81} and Definition \ref{defn98}, Definition \ref{defn99}), we see that $\| {\it Cut}\| \leq 1$. To estimate the norm of ${\it Glue}$, it remains to estimate the norm of the first order derivatives. Given ${\bm \xi}_v \in T_{{\bm v}_\alpha} {\mc B}_\alpha^{p,w}$ supported on the component $v$, we see that 
\beqn
\| {\it Glue} (\xi_v) \|_{W^{1,p, \delta}} = \| {\bm \xi}_v\|_{W^{1, p,\delta}(\Sigma_{v, 2T})} + \| {\it Glue}({\bm \xi}_v)) \|_{W^{1,p,\delta}([2T, (2 + \hbar )T] \times S^1)}.
\eeqn
The last term can be controlled by
\beqn
C \Big( 1 + \frac{2}{\hbar T} \Big) \| {\bm \xi}_v \|_{W^{1,p,\delta}}. 
\eeqn
This gives a bound on the norm of ${\it Glue}$. 
\end{proof}

Now we can define the approximate right inverse. Recall that we have chosen a right inverse to the linear operator $\hat D_\alpha$:
\beqn
\hat Q_\alpha: {\mc E}_\alpha^{p,w}|_{{\bm v}_\alpha} \to {\bm E}_\alpha \oplus T_{{\bm v}_\alpha} {\mc B}_\alpha^{p,w}. 
\eeqn
Then define 
\beqn
\hat Q_{\alpha, \zeta}^{\rm app} = {\it Glue} \circ \hat Q_\alpha \circ {\it Cut}.
\eeqn
By Lemma \ref{lemma920}, there exist $C_\alpha>0$ and $\epsilon_\alpha>0$ such that 
\beq\label{eqn925}
\| \hat Q_{\alpha, \zeta}^{\rm app} \| \leq C_\alpha,\ \ \ \forall \zeta \in (0, \epsilon_\alpha).
\eeq

Next we show that the family of operators $\hat Q_{\alpha, \zeta}^{\rm app}$ are approximate right inverses to $\hat D_{\alpha, \zeta}$. 
\begin{lemma}\label{lemma921}
For $T$ sufficiently large, one has 
\beq\label{eqn926}
\Big\| \hat D_{\alpha, \zeta} \circ \hat Q_{\alpha, \zeta}^{\rm app} - {\rm Id} \Big\|\leq \frac{1}{2}.
\eeq
\end{lemma}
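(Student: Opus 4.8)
The statement is the standard assertion that $\hat Q_{\alpha,\zeta}^{\rm app}$ is an \emph{approximate} right inverse; I would prove it by applying $\hat D_{\alpha,\zeta}\circ\hat Q_{\alpha,\zeta}^{\rm app}$ to an arbitrary ${\bm \varsigma}\in{\mc E}_{{\bm v}_{\alpha,\zeta}^{\rm app}}$ and estimating the discrepancy from ${\bm \varsigma}$ piece by piece over the thick regions $\Sigma_{v,2T}$ and the necks. Write ${\bm \sigma}:=\hat Q_\alpha({\it Cut}\,{\bm \varsigma})\in{\bm E}_\alpha\oplus T_{{\bm v}_\alpha}{\mc B}_\alpha^{p,w}$, with components ${\bm \sigma}_v$ on the irreducible pieces $\Sigma_v$ of $\Sigma_{{\mc C}_\alpha}$, so that by construction $\hat D_\alpha{\bm \sigma}={\it Cut}\,{\bm \varsigma}$ and $\hat Q_{\alpha,\zeta}^{\rm app}{\bm \varsigma}={\it Glue}({\bm \sigma})$. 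By Lemma~\ref{lemma920} and \eqref{eqn925} all the operators involved are bounded uniformly in $\zeta$, so it is enough to produce, for every ${\bm \varsigma}$, a bound $\|\hat D_{\alpha,\zeta}({\it Glue}\,{\bm \sigma})-{\bm \varsigma}\|_{W^{0,p,w}_{\alpha,\eta,\zeta}}\le\tfrac14\|{\bm \varsigma}\|$ once $T=\min_w T_w$ is large.

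On each thick piece $\Sigma_{v,2T}\subset\Sigma_{\alpha,\eta,\zeta}$ the glued section equals ${\it Par}_v({\bm \sigma}_v|_{\Sigma_{v,2T}})$, and I would write $\hat D_{\alpha,\zeta}\circ{\it Par}_v={\it Par}_v\circ\hat D_\alpha+R_v$, where $R_v$ collects the commutator of the linearized operator with parallel transport together with the difference between linearizing at ${\bm v}_{\alpha,\zeta}^{\rm app}$ and at ${\bm v}_\alpha$. Both contributions are controlled by the $C^1$-distance between ${\bm v}_{\alpha,\zeta}^{\rm app}$ and ${\bm v}_\alpha$ on $\Sigma_{v,2T}$, which by the pregluing formulas of Definition~\ref{defn95} together with the decay estimates \eqref{eqn919}, \eqref{eqn920} (and the Lipschitz dependence of the linearization, as in Lemma~\ref{lemma919}) is $O(e^{-\tau T})$; hence $\|R_v{\bm \sigma}_v\|\le C_\alpha e^{-\tau T}\|{\bm \sigma}\|$. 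Since ${\it Par}_v(\hat D_\alpha{\bm \sigma}_v)|_{\Sigma_{v,2T}}={\it Par}_v({\it Cut}\,{\bm \varsigma}|_{\Sigma_{v,2T}})={\bm \varsigma}|_{\Sigma_{v,2T}}$ by the definitions of ${\it Cut}$ and ${\it Par}_v$, the error over $\Sigma_{v,2T}$ is exactly $R_v{\bm \sigma}_v$, which is exponentially small.

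Over the neck of a node $w$ the glued section is an interpolation built from ${\it Par}\,{\bm \sigma}_v$ and ${\it Par}\,{\bm \sigma}_{v'}$ via the cut-offs $\chi_\pm^{\hbar T}$, so $\hat D_{\alpha,\zeta}({\it Glue}\,{\bm \sigma})$ splits into a term $\chi_\pm^{\hbar T}\cdot{\it Par}(\hat D_{\alpha,\zeta}{\bm \sigma}_{(\cdot)})$ plus a commutator term $[\hat D_{\alpha,\zeta},\chi_\pm^{\hbar T}]\,{\it Par}\,{\bm \sigma}_{(\cdot)}$. The first reassembles ${\bm \varsigma}$ on the neck up to the same $O(e^{-\tau T})$ errors as above, using that the $\chi^{\hbar T}_\pm$ recombine to the identity there and that $\hat D_\alpha{\bm \sigma}={\it Cut}\,{\bm \varsigma}$. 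The commutator term is zeroth order, with coefficient bounded by $\|\nabla\chi_\pm^{\hbar T}\|_{C^0}\le 2/(\hbar T)$ from \eqref{eqn924}; I would estimate it in the weighted norm by noting that $\nabla\chi_\pm^{\hbar T}$ is supported in a band of width $O(\hbar T)$ placed deep in the neck, where ${\bm \sigma}_{(\cdot)}$ is exponentially damped because it lies in a positively weighted Sobolev space, while the weight $\omega_{\alpha,\eta,\zeta}$ of Definition~\ref{defn98} on that band is no larger than the reciprocal of that damping. Balancing the two factors, exactly as in the proof of Lemma~\ref{lemma918}, leaves a bound of order $1/(\hbar T)$ times $\|{\bm \sigma}\|$, hence of order $1/(\hbar T)$ times $\|{\bm \varsigma}\|$. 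Combining the three estimates gives $\|\hat D_{\alpha,\zeta}\hat Q_{\alpha,\zeta}^{\rm app}-{\rm Id}\|\le C_\alpha\bigl(e^{-\tau T}+(\hbar T)^{-1}\bigr)$, which is $<\tfrac12$ once $T$ is large.

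The main obstacle is the neck commutator term: one must check that the placement of the transition bands of $\chi_\pm^{\hbar T}$ relative to $\omega_{\alpha,\eta,\zeta}$ makes the loss from the weight cancel the exponential gain from the decay of ${\bm \sigma}$, so that only the harmless $1/(\hbar T)$ factor survives; this is exactly where the design of the weights in Definition~\ref{defn98} and the choice of $\hbar\in(0,\tfrac12)$ enter. Granting \eqref{eqn926}, a Neumann series then yields a genuine right inverse $\hat Q_{\alpha,\zeta}=\hat Q_{\alpha,\zeta}^{\rm app}(\hat D_{\alpha,\zeta}\hat Q_{\alpha,\zeta}^{\rm app})^{-1}$ with $\|\hat Q_{\alpha,\zeta}\|\le 2C_\alpha$ uniformly in $\zeta$, which is the input needed for the implicit function theorem that finishes the construction of the glued family \eqref{eqn914}.
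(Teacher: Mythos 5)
Your proposal is correct and follows essentially the same route as the paper: test $\hat D_{\alpha,\zeta}\circ\hat Q_{\alpha,\zeta}^{\rm app}-{\rm Id}$ on an arbitrary ${\bm\varsigma}$, split the error into the cut-off commutator term, bounded by $\sup|\nabla\chi^{\hbar T}|\le 2/(\hbar T)$ via \eqref{eqn924} and the uniform bound \eqref{eqn925} on $\hat Q_\alpha$, and the conjugation discrepancy $\hat D_{\alpha,\zeta}\,{\it Par}-{\it Par}\,\hat D_\alpha$, controlled by the closeness of ${\bm v}_{\alpha,\zeta}^{\rm app}$ to ${\bm v}_\alpha$ as in Lemma \ref{lemma919}. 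The only cosmetic differences are that the paper reduces to ${\bm\varsigma}$ supported on a single $\Sigma_{v,2T}$ (using the $L^p$ regularity) and records the second error as $O(\|\zeta\|)$ rather than $O(e^{-\tau T})$, which amounts to the same smallness for $T$ large.
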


\begin{proof}
We need to estimate 
\beqn
\Big\| \hat D_{\alpha, \zeta} \big( \hat Q_{\alpha, \zeta}^{\rm app} ( {\bm \varsigma} ) \big) - {\bm \varsigma} \Big\|_{L_{\alpha, \zeta}^{p, w}}
\eeqn
for any ${\bm \varsigma} \in {\mc E}_{\alpha, \zeta}|_{{\bm v}_{\alpha, \zeta}^{\rm app}}$. Since the regularity of ${\bm \varsigma}$ is only $L^p$ and $\Sigma_{\alpha, \zeta}$ is the union of $\Sigma_{v, 2T}$ for all irreducible components $v$ of $\Sigma_\alpha$, it suffices to consider the case that ${\bm \varsigma}_{\zeta}$ is supported in $\Sigma_{v, 2T}$ for one component $v$. Without loss of generality, assume that $v$ has only one cylindrical end and denote by the cut-off function by $\chi_v^{\hbar T}$. Consider the parallel transport
\beqn
{\bm \varsigma}_\alpha = {\it Par}^{-1} {\bm \varsigma} \in L^{p, w}( \Sigma_{\alpha, v}, u_\alpha^* T^{\rm vert} Y \oplus {\rm ad} P \oplus {\rm ad} P). 
\eeqn

Then by definition, over $\Sigma_{v, 2T}$ where $\chi_v^{\hbar T} \equiv 1$, one has
\beq\label{eqn927}
\hat D_{\alpha, \zeta} \big( \hat Q_{\alpha, \zeta}^{\rm app} ( {\bm \varsigma}) \big) - {\bm \varsigma} = \hat D_{\alpha, \zeta} \big( {\it Par} \big( \hat Q_\alpha ({\bm \varsigma}_\alpha) \big) \big) - {\it Par}  \big( \hat D_\alpha\big( \hat Q_\alpha ({\bm \varsigma}_\alpha) \big) \big). 
\eeq
Over the complement of $\Sigma_{v, 2T}$ where ${\bm \varsigma} \equiv 0$, one has 
\beqn	
\begin{split}
\hat D_{\alpha, \zeta} \big( \hat Q_{\alpha, \zeta}^{\rm app} ( \varsigma) \big) - \varsigma = &\ \hat D_{\alpha, \zeta} \big( {\it Par} \big( \chi_v^{\hbar T} \big( \hat Q_\alpha ( \varsigma_\alpha) \big) \big) \big) - {\it Par} {\mb D}_\alpha {\mb Q}_\alpha {\bm \xi}_\alpha \\
= &\ \big[ \hat D_{\alpha ,\zeta},  \chi_v^{\hbar T} \big] {\it Par} \big( \hat Q_\alpha (\varsigma_\alpha) \big) + \chi_v^{\hbar T} \big( \hat D_{\alpha, \zeta} {\it Par} - {\it Par} \hat D_\alpha \big) \big( \hat Q_\alpha ( \varsigma_\alpha) \big). 
\end{split}
\eeqn
Hence the last line coincides with \eqref{eqn927}. Since $\hat D_{\alpha, \zeta}$ is a first order operator, by \eqref{eqn924},
\beq\label{eqn928}
\Big\| \big[ \hat D_{\alpha, \zeta}, \chi_v^{\hbar T} \big] {\it Par} \big( \hat Q_\alpha ( \varsigma_\alpha) \big) \Big\|_{L_{\alpha, \zeta}^{p,w}} \leq \sup \big| \nabla \chi_v^{\hbar T} \big| \big\| \hat Q_\alpha ( \varsigma_\alpha) \big\|_{L_{\alpha, \zeta}^{p, w}} \leq \frac{ 2 C_\alpha}{\hbar T} \| \varsigma \|_{L_{\alpha, \zeta}^{p, w}}.
\eeq
On the other hand, to estimate
\beqn
\Big\| \chi_v^{\hbar T} \big( \hat D_{\alpha, \zeta} {\it Par} - {\it Par}  \hat D_\alpha \big) \big( \hat Q_\alpha( \varsigma_\alpha) \big) \Big\| 
\eeqn
we need to estimate the variation of the linear maps. Indeed, using the same method as proving Lemma \ref{lemma919}, one can show that when $\zeta$ is small, one has
\beq\label{eqn929}
\Big\| \chi_v^{\hbar T} \big( \hat D_{\alpha, \zeta} {\it Par} - {\it Par}  \hat D_\alpha \big) \big( \hat Q_\alpha( \varsigma_\alpha) \big) \Big\| \leq C_\alpha \| {\bm \varsigma}\|_{L_{\alpha, \zeta}^{p,w}} \| \zeta \|
\eeq
for some abusively used $C_\alpha>0$. Combining \eqref{eqn928} and \eqref{eqn929}, one obtains \eqref{eqn926}.
\end{proof}

It follows from Lemma \ref{lemma921} that when $|\zeta|$ is sufficiently small, one has the following exact right inverse to $\hat D_{\alpha, \zeta}$:
\beqn
\hat Q_{\alpha, \zeta}:= \hat Q_{\alpha, \zeta}^{\rm app} \circ \big(  \hat D_{\alpha, \zeta} \circ \hat Q_{\alpha, \zeta}^{\rm app} \big)^{-1}. 
\eeqn
\eqref{eqn925} and Lemma \ref{lemma921} further imply that the norm of $\hat Q_{\alpha, \zeta}$ is uniformly bounded for all $\zeta$ by a constant $C_\alpha>0$. 

Now we can apply the implicit function theorem. Let us recall the precise statement (see \cite[Proposition A.3.4]{McDuff_Salamon_2004}).

\begin{lemma} {\rm (Implicit function theorem)} \cite[Proposition A.3.4]{McDuff_Salamon_2004} Let $X$, $Y$ be Banach spaces, $U \subset X$ be an open subset, and $F: U \to Y$ be a $C^1$ map. Let $x^* \in {\mc  U}$ be such that the differential $dF(x^*): X \to Y$ is surjective and has a bounded linear right inverse $Q: Y \to X$. Choose positive $r>0$ and $C>0$ such that $\| Q \|\leq \leq C$, $B_r(x^*, X) \subset U$, and 
\beqn
\| x - x^* \| \leq r \Longrightarrow \| dF (x) - D \| \leq \frac{1}{2C}.
\eeqn
Suppose that $x' \in X$ satisfies 
\beqn
\| F(x') \| < \frac{ r}{4C},\ \| x' - x^* \| < \frac{ r}{8}.
\eeqn
Then there exists a unique $x \in X$ such that 
\beqn
F (x) = 0,\ \ \ x- x' \in {\rm Image}(Q),\ \ \ \| x - x^* \| \leq r.
\eeqn
Moreover, 
\beqn
 \| x - x' \| \leq 2C \| F(x') \|.
\eeqn
\end{lemma}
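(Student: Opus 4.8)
The plan is to prove this exactly as the standard quantitative implicit function theorem, by a Newton--Picard iteration; it is \cite[Proposition A.3.4]{McDuff_Salamon_2004}, so the argument is purely functional-analytic and uses only the stated hypotheses. Set $D := dF(x^*)$, so that $D\circ Q = \mathrm{Id}_Y$, and define the iteration $x_0 := x'$, $x_{k+1} := x_k - QF(x_k)$, which remains in the affine subspace $x' + {\rm Image}(Q)$. The identity that drives everything is obtained by integrating $dF$ along the segment from $x_k$ to $x_{k+1}$ and using $DQ = \mathrm{Id}$:
\begin{equation*}
F(x_{k+1}) \;=\; -\int_0^1 \Big( dF\big(x_k + t(x_{k+1}-x_k)\big) - D \Big)\big(QF(x_k)\big)\, dt .
\end{equation*}
Provided every point $x_k + t(x_{k+1}-x_k)$, $t\in[0,1]$, lies in the ball $B_r(x^*,X)$ on which $\|dF(x)-D\|\le \tfrac{1}{2C}$, this yields the contraction $\|F(x_{k+1})\| \le \tfrac12\|F(x_k)\|$, and hence $\|x_{k+1}-x_k\| = \|QF(x_k)\| \le C\,2^{-k}\|F(x')\|$.

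The second step is the induction confining the iterates to that ball. Summing the previous estimate gives $\|x_k - x'\| \le 2C\|F(x')\| < r/2$, using the hypothesis $\|F(x')\| < r/(4C)$; since $\|x'-x^*\| < r/8$ we get $\|x_k - x^*\| < 5r/8 < r$, and the same bound holds for the intermediate points on the segments, as these are convex combinations of $x_k$ and $x_{k+1}$. This closes the induction, the sequence $(x_k)$ is Cauchy, and its limit $x$ satisfies $F(x)=0$ (because $\|F(x_k)\|\le 2^{-k}\|F(x')\|\to0$), $\|x-x^*\|\le r$, and $\|x-x'\| \le 2C\|F(x')\|$. Finally $x - x' \in {\rm Image}(Q)$, since each $x_k - x'$ is a finite sum of vectors $-QF(x_j)$ and ${\rm Image}(Q)$ is closed, being the image of the bounded idempotent $QD$ (idempotent by $DQ=\mathrm{Id}$).

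For uniqueness, suppose $x,\tilde x \in B_r(x^*,X)$ both solve $F=0$ with $x-x',\ \tilde x - x' \in {\rm Image}(Q)$, so $x - \tilde x = QD(x-\tilde x)$. Integrating $dF$ from $\tilde x$ to $x$ gives $0 = D(x-\tilde x) + \int_0^1 \big(dF(\cdot)-D\big)(x-\tilde x)\,dt$, hence $\|D(x-\tilde x)\| \le \tfrac{1}{2C}\|x-\tilde x\|$, and therefore $\|x-\tilde x\| = \|QD(x-\tilde x)\| \le C\|D(x-\tilde x)\| \le \tfrac12\|x-\tilde x\|$, forcing $x = \tilde x$.

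I do not expect a genuine obstacle inside this lemma: the only delicate point is the numerology ensuring the Newton iterates never leave the ball on which the bound for $dF - D$ is available, which is precisely why the thresholds $r/(4C)$ and $r/8$ appear. In our situation the lemma will be applied with $F = \hat{\mc F}_{\alpha,\eta,\zeta}$, with $x'$ the pregluing approximate solution $\hat{\bm v}_{\alpha,{\bm\xi}}^{\rm app}$, with $x^*$ the central datum, and with $Q = \hat Q_{\alpha,\zeta}$; its three inputs — smallness of $\|F(x')\|$, a uniform bound on $\|Q\|$, and control on the variation of $dF$ — are exactly the content of Lemma \ref{lemma918}, the construction of $\hat Q_{\alpha,\zeta}$, and Lemma \ref{lemma919}. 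Assembling the resulting corrected maps into the $\Gammait_\alpha$-equivariant family and verifying the homeomorphism claim of Proposition \ref{prop916} is then a separate, routine matter of tracking equivariance and continuity in ${\bm\xi}$.
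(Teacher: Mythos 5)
Your Newton--Picard iteration argument is correct, and it is exactly the standard proof of this quantitative implicit function theorem; the paper itself gives no proof but simply cites \cite[Proposition A.3.4]{McDuff_Salamon_2004}, whose proof proceeds by the same iteration, the same contraction estimate from the bound $\|dF(x)-D\|\le \tfrac{1}{2C}$ on the ball, and the same numerology keeping the iterates inside it. Your closedness argument for ${\rm Image}(Q)$ via the idempotent $QD$ and the uniqueness argument are also sound, so nothing needs to be added.
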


Indeed, we identify $x^*$ with the central approximate solution $\hat {\bm v}_{\alpha, \zeta}^{\rm app}$, identity
\begin{align*}
&\ X = {\bm E}_\alpha \oplus T_{{\bm v}_{\alpha, \zeta}^{\rm app}} {\mc B}_{\alpha, \zeta}^{p,w},\ &\ Y = {\mc E}_{\alpha, \zeta}^{p,w}|_{{\bm v}_{\alpha, \zeta}^{\rm app}},
\end{align*}
Then one can apply the implicit function theorem. More precisely, for each approximate solution $\hat{\bm v}_{\alpha, {\bm \xi}}^{\rm app}$, we can write it uniquely as
\beqn
\hat{\bm v}_{\alpha, {\bm \xi}}^{\rm app} = \exp_{\hat{\bm v}_{\alpha, \zeta}^{\rm app}} \hat{\bm x}_{\alpha, {\bm \xi}}^{\rm app},\ {\rm where}\ \hat{\bm x}_{\alpha, {\bm \xi}}^{\rm app} \in {\bm E}_\alpha \oplus T_{{\bm v}_{\alpha, \zeta}^{\rm app}} {\mc B}_{\alpha, \zeta}^{p, w}.   
\eeqn
We summarize the result as the following proposition.
\begin{prop}
There exists $\epsilon_\alpha>0$, $\delta_\alpha>0$, $C_\alpha>0$ satisfying the following properties. For each ${\bm \xi} \in {\mc V}_{\alpha, {\rm map}} \times {\mc V}_{\alpha, {\rm def}} \times {\mc V}_{\alpha, {\rm res}}$ with $\| {\bm \xi}\|\leq \epsilon_\alpha$, there exists a unique 
\beqn
\hat{\bm x}_{\alpha, {\bm \xi}}^{\rm cor} \in {\bm E}_\alpha \oplus T_{{\bm v}_{\alpha, \zeta}^{\rm app}} {\mc B}_{\alpha, \zeta}^{p, w}
\eeqn
satisfying
\beqn
\hat{\mc F}_{\alpha, {\bm \xi}} \Big( \exp_{\hat{\bm v}_{\alpha, \zeta}^{\rm app}} \big( \hat{\bm x}_{\alpha, {\bm \xi}}^{\rm app} + \hat{\bm x}_{\alpha, {\bm \xi}}^{\rm cor} \big) \Big) = 0,\ \ \  \hat{\bm x}_{\alpha, {\bm \xi}}^{\rm cor} \in {\rm Image}(\hat Q_{\alpha, \zeta}),\ \ \ \Big\| \hat{\bm x}_{\alpha, {\bm \xi}}^{\rm app} \Big\| \leq \delta_\alpha.
\eeqn
Moreover, there holds
\beqn
\Big\| \hat{\bm x}_{\alpha, {\bm \xi}}^{\rm cor} \Big\| \leq C_\alpha \Big\| \hat{\mc F}_{\alpha, {\bm \xi}} ( \hat{\bm v}_{\alpha, {\bm \xi}}^{\rm app} ) \Big\|.
\eeqn
\end{prop}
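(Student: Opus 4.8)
The plan is to package the pieces already assembled in this section into a clean application of the implicit function theorem, uniformly in the parameter $\zeta$. First I would fix the Banach spaces and the reference point: set $X = {\bm E}_\alpha \oplus T_{{\bm v}_{\alpha, \zeta}^{\rm app}} {\mc B}_{\alpha, \zeta}^{p, w}$, $Y = {\mc E}_{\alpha, \zeta}^{p,w}|_{{\bm v}_{\alpha, \zeta}^{\rm app}}$, $F = \hat{\mc F}_{\alpha, \eta, \zeta}$ (viewed via the exponential chart based at $\hat{\bm v}_{\alpha, \zeta}^{\rm app}$ as a nonlinear operator between Banach spaces, with the bundle ${\mc E}_{\alpha, \zeta}^{p,w}$ trivialized by parallel transport near the approximate solution), and $x^* = 0$ corresponding to $\hat{\bm v}_{\alpha, \zeta}^{\rm app}$. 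The differential $dF(x^*)$ is $\hat D_{\alpha, \zeta}$, and the exact right inverse $\hat Q_{\alpha, \zeta} = \hat Q_{\alpha, \zeta}^{\rm app} \circ ( \hat D_{\alpha, \zeta} \circ \hat Q_{\alpha, \zeta}^{\rm app} )^{-1}$ was just constructed, with $\| \hat Q_{\alpha, \zeta} \| \leq C_\alpha$ uniformly for $|\zeta|$ small by \eqref{eqn925} and Lemma \ref{lemma921}. So the constant $C$ in the implicit function theorem can be taken to be $C_\alpha$, independent of $\zeta$.

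Next I would verify the two smallness hypotheses of the implicit function theorem. The quadratic estimate, Lemma \ref{lemma919}, gives a radius $r_\alpha>0$ (independent of $\zeta$ for $|\zeta|$ small) such that $\| x \| \leq r_\alpha$ implies $\| dF(x) - \hat D_{\alpha, \zeta} \| \leq \frac{1}{2C_\alpha}$; shrink $r_\alpha$ if necessary so that the $r_\alpha$-ball lies in the chart domain $\hat{\mc O}_{\alpha, \zeta}$. For the point $x' = \hat{\bm x}_{\alpha, {\bm \xi}}^{\rm app}$ (the element of $X$ representing the pregluing $\hat{\bm v}_{\alpha, {\bm \xi}}^{\rm app}$ in the chart based at $\hat{\bm v}_{\alpha, \zeta}^{\rm app}$), I need $\| x' \| < r_\alpha/8$ and $\| F(x') \| < r_\alpha/(4C_\alpha)$. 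The first follows because $\hat{\bm v}_{\alpha, {\bm \xi}}^{\rm app}$ differs from $\hat{\bm v}_{\alpha, \zeta}^{\rm app}$ only through the finite-dimensional parameters $\xi, \eta$ — which we take in the $\epsilon$-balls ${\mc V}_{\alpha,{\rm map}}^\epsilon \times {\mc V}_{\alpha,{\rm def}}^\epsilon$ — plus the gluing-induced perturbation, so $\| \hat{\bm x}_{\alpha, {\bm \xi}}^{\rm app} \| \leq C_\alpha(\| \xi \| + \| \eta \|)$ can be made $< r_\alpha/8$ by taking $\epsilon_\alpha$ small; the second follows from the error estimate Lemma \ref{lemma918}, $\| \hat{\mc F}_{\alpha,\eta,\zeta}(\hat{\bm v}_{\alpha,{\bm \xi}}^{\rm app}) \| \leq C_\alpha |\zeta|^{2-\tau-w}$, which is $< r_\alpha/(4C_\alpha)$ once $\epsilon_\alpha$ is small (here I use $2-\tau-w>0$, guaranteed by our choice of $p,w,\tau$). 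Applying the implicit function theorem then yields a unique $x = \hat{\bm x}_{\alpha,{\bm \xi}}^{\rm app} + \hat{\bm x}_{\alpha,{\bm \xi}}^{\rm cor}$ with $F(x)=0$, $\hat{\bm x}_{\alpha,{\bm \xi}}^{\rm cor} \in {\rm Image}(\hat Q_{\alpha,\zeta})$, $\| x \| \leq r_\alpha$ (so $\| \hat{\bm x}_{\alpha,{\bm \xi}}^{\rm app} \| \leq \delta_\alpha$ for an appropriate $\delta_\alpha$ — in fact one should read this bound as being on $x$ itself, or simply take $\delta_\alpha = r_\alpha$), and $\| \hat{\bm x}_{\alpha,{\bm \xi}}^{\rm cor} \| \leq 2C_\alpha \| \hat{\mc F}_{\alpha,{\bm \xi}}(\hat{\bm v}_{\alpha,{\bm \xi}}^{\rm app}) \|$. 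Absorbing the factor $2$ into $C_\alpha$ gives the claimed inequality.

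Two bookkeeping points remain. First, uniformity: all the constants $r_\alpha, C_\alpha, \delta_\alpha$ produced above depend on $\alpha$ but not on ${\bm \xi}$, provided $|\zeta| \leq \epsilon_\alpha$; this is exactly what Lemmas \ref{lemma918}, \ref{lemma919}, \ref{lemma920}, \ref{lemma921} and the bound \eqref{eqn925} deliver, so one just has to record that the final $\epsilon_\alpha$ is the minimum of the finitely many thresholds appearing in those lemmas. Second, the case $\zeta = 0$ (no neck) should be checked to be consistent with the construction in \eqref{eqn95}: there the pregluing is the identity, the error is zero, and the correction vanishes, so $\hat{\bm x}_{\alpha,{\bm \xi}}^{\rm cor}=0$ and $\hat{\bm v}_{\alpha,{\bm \xi}}$ reduces to $\hat{\bm v}_{\alpha,\xi,\eta}$, which is what item (1) of Proposition \ref{prop916} will require later. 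I do not expect a serious obstacle here — the analytic heavy lifting (error estimate, quadratic estimate, uniformly bounded right inverse) has already been done in the preceding lemmas — so the proof is essentially a careful invocation of the implicit function theorem with attention to the $\zeta$-uniformity of the constants; the only place warranting real care is confirming that the right inverse's norm and the quadratic-estimate radius genuinely do not degenerate as $\zeta \to 0$, which is precisely the content of Lemma \ref{lemma921}'s ``$T$ sufficiently large'' clause.
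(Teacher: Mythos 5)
Your proposal is correct and follows essentially the same route as the paper: the paper's proof is precisely an invocation of the quoted implicit function theorem with $x^* = \hat{\bm v}_{\alpha,\zeta}^{\rm app}$, $X = {\bm E}_\alpha \oplus T_{{\bm v}_{\alpha,\zeta}^{\rm app}}{\mc B}_{\alpha,\zeta}^{p,w}$, $Y = {\mc E}_{\alpha,\zeta}^{p,w}|_{{\bm v}_{\alpha,\zeta}^{\rm app}}$, feeding in the error estimate (Lemma \ref{lemma918}), the quadratic estimate (Lemma \ref{lemma919}), and the uniformly bounded exact right inverse $\hat Q_{\alpha,\zeta}$ obtained from \eqref{eqn925} and Lemma \ref{lemma921}. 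Your added bookkeeping on $\zeta$-uniformity of the constants and the reading of the $\delta_\alpha$-bound matches the intent of the paper's (terse) argument.
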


Now we denote 
\beqn
\hat{\bm v}_{\alpha, {\bm \xi}}:= \hat{\bm v}_{\alpha, \xi, \eta, \zeta}: = \exp_{\hat{\bm v}_{\alpha, \zeta}^{\rm app}} \big( \hat{\bm x}_{\alpha, {\bm \xi}} \big) :=  \exp_{\hat{\bm v}_{\alpha, \zeta}^{\rm app}} \big( \hat{\bm x}_{\alpha, {\bm \xi}}^{\rm app} + \hat{\bm x}_{\alpha, {\bm \xi}}^{\rm cor} \big)
\eeqn
and call it the {\it exact solution}. Then we are ready to prove Proposition \ref{prop916}. Indeed, Item (a), (b), (c) of Proposition \ref{prop916} all follow from the construction. The $\Gammait_\alpha$-equivariance of Item (d) also follows from the construction. Hence we only need to prove that the map \eqref{eqn915} is a homeomorphism onto its image. Indeed, since the domain is locally compact and the target is Hausdorff, one only needs to prove that it is injective and its image contains an open neighborhood of ${\bm p}_\alpha^+$ in ${\mc M}_\alpha^+$. 

\vspace{0.2cm}

\noindent {\bf --Injectivity} Given two ${\bm \xi}_1 = (\xi_1, \eta_1, \zeta_1), {\bm \xi}_2 = (\xi_2, \eta_2, \zeta_2)$. Suppose the corresponding exact solutions are isomorphic. Then by definition (see Definition \ref{defn96}), $\eta_1 = \eta_2$ and $\zeta_1 = \zeta_2$. Hence $\hat {\bm v}_{\alpha, {\bm \xi}_1}$ and $\hat {\bm v}_{\alpha, {\bm \xi}_2}$ are in the same Banach manifold ${\bm E}_\alpha \times {\mc V}_{\alpha, {\rm def}} \times {\mc B}_{\alpha, \zeta_1}^{p, w}$. Then $\hat{\bm v}_{\alpha, {\bm \xi}_1} = \hat{\bm v}_{\alpha, {\bm \xi}_2}$ follows from the implicit function theorem. 

\vspace{0.2cm}

\noindent {\bf --Surjectivity} We need to prove the following fact: for any sequence of points in ${\mc M}_\alpha^+$ represented by $\alpha$-thickened solutions (without the requirement at the markings ${\bf y}_\alpha$)
\beqn
\Big( {\mc C}_n, {\bm v}_n, {\bf y}_n, \wh\phi_n, e_n \Big)
\eeqn
that converge to ${\bm p}_\alpha^+$, for sufficiently large $n$, $\big( {\mc C}_n, {\bm v}_n, {\bf y}_n, \wh\phi_n, e_n)$ is isomorphic to a member of the family of exact solutions of \eqref{eqn914}. Indeed, by Lemma \ref{lemma911}, given any $\epsilon>0$, for $n$ sufficiently large, $({\mc C}_n, {\bm v}_n, {\bf y}_n, \wh\phi_n, e_n)$ is $\epsilon$-close to $\alpha$. Then we can identify $({\mc C}_n, {\bf y}_n)$ with the fibre $({\mc C}_{\alpha, \phi_n}, {\bf y}_{\phi_n})$ and regard ${\bm v}_n$ as a gauged map over $P_{\alpha, \phi_n} \to {\mc C}_{\alpha, \phi_n}$. Suppose ${\mc C}_{\alpha, \phi_n}$ corresponds to deformation parameter $\eta_n$ and gluing parameter $\zeta_n$. Then the $\epsilon$-closedness implies that
\beqn
{\rm dist}({\bm v}_n, {\bm v}_{\alpha, \zeta_n}^{\rm app}) \leq \epsilon.
\eeqn
Then ${\bm v}_n$ belonging to the family \eqref{eqn914} is a fact that follows from the implicit function theorem. This finishes the proof of Proposition \ref{prop916}.

\subsection{Last modification}

The construction of this section provides for each thickening datum $\alpha$ and a sufficiently small number $\epsilon_\alpha>0$ a virtual orbifold chart $C_\alpha$ of the moduli space $\ov{\mc M}{}_{\sf \Gamma}$ for any stable decorated dual graph. In particular, one can construct a chart of the top stratum $\ov{\mc M}{}_{g, n}^r(V, G, W, \mu; \ubar B)$. Further, one can shrink the set ${\mc V}_{\alpha, {\rm res}}^\epsilon$ of gluing parameters to obtain subcharts. Indeed, ${\mc V}_{\alpha,{\rm res}}$ has canonical coordinates corresponding to the nodes. Choose a vector $\vec\epsilon = (\epsilon_1, \cdots, \epsilon_m)$ with $0< \epsilon_i < \epsilon_\alpha$ corresponding to how much we can resolve the $i$-th node. We require that ${\mc V}_{\alpha, {\rm res}}^{\vec \epsilon}$ is $\Gammait_\alpha$-invariant. From now on, a thickening datum $\alpha$ also contains a small positive number $\epsilon_\alpha>0$ and such a vector $\vec\epsilon$. Each such thickening datum provides a chart $C_\alpha$ by restricting the previous construction.

\section{Constructing the Virtual Cycle. III. The Atlas}\label{section10}

In the previous section we have shown that we can construct for each stable decorated dual graph $\sf\Gamma$ and each point $p \in \ov{\mc M}_{\sf \Gamma}$, one can construct a local chart
\beqn
C_p = (U_p, E_p, S_p, \psi_p, F_p)
\eeqn
(See Corollary \ref{cor917}). The aim of this section is to construct a good coordinate system out of these charts on the moduli space, which allows one to define the virtual fundamental cycle and the correlation functions. As the first step, we prove the following proposition. 

\begin{prop}\label{prop101}
Let $\sf\Gamma$ be a stable decorated dual graph. Then there exist:
\begin{enumerate}

\item A finite collection of topological virtual orbifold charts on $\ov{\mc M}_{\sf\Gamma}$
\beqn
C_{p_i}^\bullet = (U_{p_i}^\bullet , E_{p_i}^\bullet, S_{p_i}^\bullet, \psi_{p_i}^\bullet, F_{p_i}^\bullet ),\ i = 1, \ldots, N.
\eeqn

\item Another collection of topological virtual orbifold charts
\beqn
C_I^\bullet = (U_I^\bullet, E_I^\bullet, S_I^\bullet, \psi_I^\bullet, F_I^\bullet ),\ I \in {\mc I} = 2^{\{1, \ldots, N\}} \setminus \{\emptyset\}.
\eeqn

\item Define the partial order on ${\mc I}$ by inclusion. Then for each pair $I \preq J$, a weak coordinate change 
\beqn
T_{JI}^\bullet: C_I^\bullet \to C_J^\bullet.
\eeqn
\end{enumerate}
They satisfy the following conditions. 
\begin{enumerate}

\item Each $C_{p_i}^\bullet$ is constructed by the gluing construction in the last section. In particular, for each $i$, there is a thickening datum $\alpha_i$ which contains an obstruction space ${\bm E}_{\alpha_i}$, and a $\Gammait_i$-invariant open neighborhood $\tilde U_{p_i}^\bullet$ of the $\alpha_i$-thickened moduli space ${\mc M}_{\alpha_i}$ which is a topological manifold such that 
\begin{align*}
&\ U_{p_i}^\bullet = \tilde U_{p_i}^\bullet/\Gammait_i,\ &\ E_{p_i}^\bullet = \big( \tilde U_{p_i}^\bullet \times {\bm E}_{p_i} \big)/ \Gammait_i.
\end{align*}
Moreover, $p_i$ is contained in $F_{p_i}^\bullet$. 

\item For each $I \in {\mc I}$ which is identified with the set of thickening data $\{ \alpha_i\ |\ i \in I\}$, there is a $\Gammait_I$-invariant open subset $\tilde U_I^\bullet \subset {\mc M}_I$ of the $I$-thickened moduli space which is a topological manifold, such that 
\begin{align*}
&\ U_I^\bullet = \tilde U_I^\bullet / \Gammait_I,\ &\ E_I^\bullet = \big( \tilde U_{p_i}^\bullet \times {\bm E}_I \big)/ \Gammait_I.
\end{align*}
Moreover, $F_I^\bullet = \bigcap_{i \in I} F_{p_i}^\bullet$.

\item For $I = \{i\}$, $C_{\{i\}}^\bullet = C_{p_i}^\bullet$. 

\item The charts $C_I^\bullet$ and the coordinate changes $T_{JI}^\bullet$ satisfy the {\bf (Covering Condition)} and the {\bf (Cocycle Condition)} of Definition \ref{defn_atlas}.

\item All charts and all coordinate changes are oriented. 
\end{enumerate}
\end{prop}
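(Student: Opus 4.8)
The plan is to assemble the charts produced by the gluing construction of Section \ref{section9} into a system indexed by the nonempty subsets of a finite set, following the ``multiple thickening'' pattern of \cite[Section 9]{Pardon_virtual} but carried out in the topological category. First I would construct the covering charts. For each $p\in\ov{\mc M}_{\sf\Gamma}$, Lemma \ref{lemma97} supplies a transverse thickening datum $\alpha_p$ centered at $p$, and Corollary \ref{cor917} supplies a topological virtual orbifold chart $C_{\alpha_p}=(U_{\alpha_p},E_{\alpha_p},S_{\alpha_p},\psi_{\alpha_p},F_{\alpha_p})$ whose footprint is an open neighborhood of $p$. Since $\ov{\mc M}_{\sf\Gamma}$ is compact and Hausdorff (see \cite[Theorem 5.14]{Tian_Xu_geometric}), finitely many such footprints cover it; I fix points $p_1,\ldots,p_N$ with thickening data $\alpha_1,\ldots,\alpha_N$, obstruction spaces ${\bm E}_{\alpha_i}$ and automorphism groups $\Gammait_i=\Gammait_{\alpha_i}$, and after the preliminary shrinking of the gluing-parameter domains described in the ``last modification'' of Section \ref{section9} obtain the charts $C_{p_i}^\bullet$, with $U_{p_i}^\bullet=\tilde U_{p_i}^\bullet/\Gammait_i$ and $E_{p_i}^\bullet=(\tilde U_{p_i}^\bullet\times{\bm E}_{\alpha_i})/\Gammait_i$.

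Next I would build the charts $C_I^\bullet$. For nonempty $I\subseteq\{1,\ldots,N\}$, identified with $\{\alpha_i : i\in I\}$, consider the $I$-thickened moduli space ${\mc M}_I$ of Definition \ref{defn96}, with combined obstruction space ${\bm E}_I=\bigoplus_{i\in I}{\bm E}_{\alpha_i}$ and group $\Gammait_I=\prod_{i\in I}\Gammait_i$. One repeats the Fredholm package and the gluing theorem Proposition \ref{prop916} verbatim, replacing the single datum $\alpha$ by the finite family $\{\alpha_i : i\in I\}$ and the single glued domain by the same family of glued domains: the linearized operator with obstruction ${\bm E}_I$ is surjective over the relevant region (it already is with a single ${\bm E}_{\alpha_i}$ over a neighborhood of $p_i$, transversality being an open condition), so the error estimate, the uniformly bounded approximate right inverse (Lemmas \ref{lemma918}--\ref{lemma921}) and the implicit function theorem all go through, producing a $\Gammait_I$-invariant open subset $\tilde U_I^\bullet\subseteq{\mc M}_I$ which is a topological manifold. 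Then $U_I^\bullet=\tilde U_I^\bullet/\Gammait_I$, $E_I^\bullet=(\tilde U_I^\bullet\times{\bm E}_I)/\Gammait_I$, with section $S_I^\bullet$ recording $(e_\alpha)_{\alpha\in I}$ and footprint map $\psi_I^\bullet$ as in \eqref{eqn98}. The identity $F_I^\bullet=\bigcap_{i\in I}F_{p_i}^\bullet$ follows from Lemmas \ref{lemma913}--\ref{lemma914}: a stable solution admits the stabilizing data $({\bf y}_{\alpha_i},\wh\phi_{\alpha_i})$ for every $i\in I$ exactly when it is $\epsilon_{\alpha_i}$-close to each $\alpha_i$, i.e.\ when it lies in every $F_{p_i}^\bullet$; and for $I=\{i\}$ this recovers $C_{p_i}^\bullet$. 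One chooses the closeness constants and gluing-domain shrinkings for all $2^N-1$ subsets simultaneously, which is possible because there are finitely many.

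Finally I would construct the coordinate changes and check the axioms. For $I\subseteq J$, let $U_{JI}^\bullet\subseteq U_I^\bullet$ be the open set of $I$-thickened solutions that are in addition $\epsilon_{\alpha_j}$-close to $\alpha_j$ for all $j\in J\setminus I$. By Lemma \ref{lemma913} such a solution canonically acquires the marking lists ${\bf y}_{\alpha_j}$ and inclusions $\wh\phi_{\alpha_j}$, unique up to $\prod_{j\in J\setminus I}\Gammait_j$; since a solution of the $I$-thickened equation automatically solves the $J$-thickened one with the new obstruction parameters zero, this yields an orbifold embedding $\phi_{JI}^\bullet\colon U_{JI}^\bullet\to U_J^\bullet$ with bundle map the inclusion ${\bm E}_I\hookrightarrow{\bm E}_J$. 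Taking $E_{I;J}$ to be the sub-orbibundle of $E_J^\bullet$ corresponding to ${\bm E}_I\subset{\bm E}_J$ (cf.\ Remark \ref{rem623}), one verifies the \textbf{(Tangent Bundle Condition)} of Definition \ref{defn611}: $(S_J^\bullet)^{-1}(E_{I;J})=\phi_{JI}^\bullet(U_{JI}^\bullet)$ because a $J$-thickened solution with vanishing extra obstruction is, after forgetting those data, an $I$-thickened solution, and transversality of $S_J^\bullet$ to $E_{I;J}$ follows from Fredholm surjectivity in the $J\setminus I$ obstruction directions; the closedness condition of Definition \ref{defn612} comes from the uniqueness in Lemma \ref{lemma914}. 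The \textbf{(Covering Condition)} holds by the finite cover, the \textbf{(Cocycle Condition)} holds because the recipe for the extra stabilizing data is associative, and all charts and coordinate changes are oriented via the complex structure on $V$, the complex orbibundle structures on the index data, and complex structures on the ${\bm E}_{\alpha_i}$ (respected by ${\bm E}_I\hookrightarrow{\bm E}_J$).

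The main obstacle is the second step: extending the entire gluing analysis of Section \ref{section9} from one thickening datum to an arbitrary finite family sharing a single glued domain, in particular producing an approximate right inverse for the $I$-thickened linearized operator with norm bounded uniformly in the gluing parameters. This is a routine but lengthy adaptation of Lemmas \ref{lemma918}--\ref{lemma921}. The more delicate bookkeeping point is the simultaneous, compatible choice of the closeness constants and gluing-domain shrinkings over all subsets $I$, so that $F_I^\bullet=\bigcap_{i\in I}F_{p_i}^\bullet$, the openness of each $U_{JI}^\bullet$, and the cocycle condition hold on the nose rather than after further shrinking; finiteness of the index set is exactly what makes this arrangement possible.
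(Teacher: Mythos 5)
Your overall architecture is the same as the paper's: basic charts from the gluing construction of Section \ref{section9}, sum charts realized as open subsets of the $I$-thickened moduli spaces ${\mc M}_I$, coordinate changes by forgetting the extra stabilizing data, the tangent bundle condition via the implicit function theorem, and orientations from the Cauchy--Riemann structure. The genuine gap is in the middle step, where you fix the $N$ basic charts once and for all by compactness and then assert that Proposition \ref{prop916} applies ``verbatim'' to each family $I$ and that $F_I^\bullet=\bigcap_{i\in I}F_{p_i}^\bullet$ ``follows from Lemmas \ref{lemma913}--\ref{lemma914}.'' Those lemmas give only the set-level (and uniqueness-up-to-$\Gammait$) correspondence between solutions that are $\epsilon$-close to each $\alpha_i$ and $I$-thickened solutions; they do not produce a $\Gammait_I$-invariant open subset $\tilde U_I^\bullet\subset{\mc M}_I$ that is a topological manifold \emph{and} whose zero-locus footprint is exactly the intersection. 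Proposition \ref{prop916} is tied to one thickening datum: its Fredholm package lives on one universal unfolding, one Banach manifold of one combinatorial stratum, and one right inverse centered at the single degenerate solution $p_\alpha$. The intersection $\bigcap_i F_{p_i}$ meets several strata, so one must run a gluing construction at centers lying in the deepest strata of that intersection --- which are in general none of the $p_i$ --- with the enlarged obstruction ${\bm E}_I$. The paper does exactly this: at stage $k+1$ of a stratum-by-stratum induction it builds $\tilde U_I^{\scalebox{0.5}{$\blacksquare$}}$ as a manifold neighborhood of $\tilde\psi_I^{-1}(\ov{F_I^k}\cap\ov{\mc M}_{k+1})$, a compact set contained in the \emph{single} stratum ${\mc M}_{{\sf\Gamma}_{k+1}}$, using the Banach manifold ${\mc B}_{{\sf\Gamma}_{k+1}}$ and the right inverse $\hat Q_{I,\alpha_i}$ obtained from one newly added datum $\alpha_i$ by enlarging ${\bm E}_{\alpha_i}\subset{\bm E}_I$; ``transversality is open'' does not by itself supply this setup at centers far from every $p_{\alpha_i}$.

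The second half of the gap is your claim that the footprint identity can be arranged ``on the nose'' simply because there are finitely many subsets. In the paper the identity is forced by \emph{shrinking the basic footprints} at each inductive stage so that the (new) intersection $F_I^{k+1}$ is squeezed inside the footprint of the sum chart anchored at the deep stratum, and this shrinking is only legitimate because the induction hypothesis \eqref{eqn103} requires the footprints to cover merely $\ov{\mc M}_{k+1}$ (the strata handled so far), with the uncovered higher-stratum points picked up by basic charts added at later stages; the anchoring condition \eqref{eqn104} guarantees every nonempty intersection has such a deep anchor to shrink toward. If, as in your proposal, the finite cover of all of $\ov{\mc M}_{\sf\Gamma}$ is fixed at the outset, removing high-stratum points of $\bigcap_i F_{p_i}$ that escape the anchored sum chart may destroy the covering condition, and there is no later stage to restore it. So either you must prove that ${\mc M}_I$ is a manifold along the preimage of the \emph{entire} closed intersection (a substantially stronger gluing statement than Proposition \ref{prop916}, requiring the Fredholm and uniform-right-inverse package at every stratum of the intersection), or you must reorganize the choice of basic charts stratum by stratum as the paper does. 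As written, the proposal hides precisely the content of the paper's \tt{INDUCTION HYPOTHESIS} and the shrinking scheme of Section \ref{section10}, which is where the proof actually lives; the remaining steps (coordinate changes via the forgetful maps as in Lemma \ref{lemma103}, cocycle condition, orientations) you describe correctly and essentially as in the paper.
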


The construction of the objects in Proposition \ref{prop101} is done in an inductive way. From now on we abbreviate $\ov{\mc M}:= \ov{\mc M}_{\sf \Gamma}$. Moreover, we know that 
\beqn
\ov{\mc M} = \bigsqcup_{\sf \Gamma' \preq \sf \Gamma} {\mc M}_{\sf \Gamma'}.
\eeqn
Then we list all strata $\sf \Gamma'$ indexing the above disjoint union as ${\sf\Gamma}_1, \ldots, {\sf \Gamma}_a$ 
such that
\beqn
{\sf\Gamma}_k \preq {\sf\Gamma}_l \Longrightarrow k \leq l.
\eeqn

\subsection{The inductive construction of charts}

We first state our induction hypothesis. 

\bigskip 

\noindent \tt{INDUCTION HYPOTHESIS}. For $k \in \{ 1, \ldots, a-1 \}$ we have constructed the following objects.
\begin{enumerate}

\item  A collection of virtual orbifold charts 
\beqn
C_{p_i}^k = (U_{p_i}^k, E_{p_i}^k, S_{p_i}^k, \psi_{p_i}^k, F_{p_i}^k),\ i = 1, \ldots, n_k.
\eeqn

\item  A map 
\beq\label{eqn101}
\rho_k: \{ p_1, \ldots, p_{n_k}\} \to \{ {\sf\Gamma}_1, \ldots, {\sf\Gamma}_k \}.
\eeq

\item Define ${\mc I}_k = 2^{\{1, \ldots, n_k\}} \setminus \{\emptyset\}$. For any $I \in {\mc I}_k$, a chart 
\beqn
C_I^k = (U_I^k, E_I^k, S_I^k, \psi_I^k, F_I^k).
\eeqn

\end{enumerate}
They satisfy the following condition. 

\begin{enumerate}

\item These charts satisfy Item (a), (b) and (c) of Proposition \ref{prop101} if we replace $C_{p_i}^\bullet$ by $C_{p_i}^k$ and $C_I^\bullet$ by $C_I^k$. In particular
\beq\label{eqn102}
F_I^k = \bigcap_{i \in I} F_{p_i}^k. 
\eeq

\item For each $l \leq k$, one has 
\beq\label{eqn103}
\ov{\mc M}_l:= \bigcup_{s \leq l} {\mc M}_{{\sf\Gamma}_s} \subset \bigcup_{s \leq l} \bigcup_{\rho_k(p_i) = {\sf \Gamma}_s} F_{p_i}^k.
\eeq

\item For all $l \leq k$ and $I \subset \{1, \ldots, n_l\}$, we have
\beq\label{eqn104}
F_I^k \neq \emptyset \Longleftrightarrow F_I^k \cap \ov{\mc M}_l \neq \emptyset
\eeq

\end{enumerate}
\hfill \tt{END OF THE INDUCTION HYPOTHESIS.} 
\bigskip

Suppose the \tt{INDUCTION HYPOTHESIS} holds for $k$. We aim at extending the objects stated in the induction hypothesis to $k+1$ and modify the charts and coordinate changes which have been already constructed. In the following argument, the base case $k=1$ can also be deduced. Now we start the induction. Notice that $\ov{\mc M}_k$ is compact. Denote
\beqn
Y_{k+1}:= \ov{\mc M}_{k+1} \setminus \bigcup_{1 \leq i \leq n_k} F_{p_i}^k
\eeqn
which is also compact. Then for each $p \in Y_{k+1}$, we choose a thickening datum $\alpha_p$ at $p$ which provides a chart $C_p^\bbox = (U_p^\bbox, E_p^\bbox, S_p^\bbox, \psi_p^\bbox, F_p^\bbox )$ of $\ov{\mc M}$ around $p$. Then because $Y_{k+1}$ is compact, we can then choose a finite collection of such charts around points $p_{n_k+1}, \ldots, p_{n_{k+1}}$ with footprints $F_{p_{n_k+1}}^\bbox, \ldots, F_{p_{n_{k+1}}}^\bbox$ such that 
\beq\label{eqn105}
Y_{k+1} \subset \bigcup_{i=n_k + 1}^{n_{k+1}} F_{p_i}^\bbox.
\eeq
Define 
\beqn
{\mc I}_{k+1} = 2^{\{1, \ldots, n_{k+1}\}} \setminus \{ \emptyset\}.
\eeqn
Extend the map $\rho_k$ of \eqref{eqn101} to a map 
\begin{align}\label{eqn106}
&\ \rho_{k+1}: \{1, \ldots, n_{k+1}\} \to \{ {\sf \Gamma}_1, \ldots, {\sf\Gamma}_{k+1}\},\ &\ \rho_{k+1}(i) = \left\{ \begin{array}{c} \rho_k(i),\ i \leq n_k; \\
                                          {\sf\Gamma}_{k+1},\ n_k < i \leq n_{k+1} \end{array} \right.
\end{align}

By \eqref{eqn104} of \tt{INDUCTION HYPOTHESIS}, by shrinking the range of the gluing parameters, one can choose shrinkings $F_{p_i}^k \sqsubset F_{p_i}^\bbox$ for all $i \in \{n_k+1, \ldots, n_{k+1}\}$ such that \eqref{eqn105} still holds with $F_{p_i}^\bbox$ replaced by $F_{p_i}^k$ and such that for all $I \in {\mc I}_{k+1}$, we have
\beqn
F_I^k:= \bigcap_{i \in I} F_{p_i}^k \neq \emptyset\Longleftrightarrow F_I^k \cap \ov{\mc M}_{k+1} \neq \emptyset. 
\eeqn
 
Now we are going to construct the charts
\beqn
C_I^{k+1} = (U_I^{k+1}, E_I^{k+1}, S_I^{k+1}, \psi_I^{k+1}, F_I^{k+1}),\ \forall I \in {\mc I}_{k+1}.
\eeqn
If $I \in {\mc I}_k$, then $C_I^{k+1}$ will be obtained later by shrinking $C_I^k$. We assume that $I \in {\mc I}_{k+1} \setminus {\mc I}_k$. If $F_I^k = \emptyset$, define $C_I^{k+1}$ to be the empty chart. Then assume $F_I^k \neq \emptyset$. Fix such an $I$. 

By abusing notations, we regard $I$ as a set of thickening data. Recall that one has define the moduli space of $I$-thickened solutions, denoted by ${\mc M}_I$. Every point of ${\mc M}_I$ is represented by a tuple 
\beqn
\big( {\mc C}, {\bm v}, \{ {\bf y}_\alpha\}, \{ \wh\phi_\alpha\}, \{ e_\alpha \} \big).
\eeqn
It has a Hausdorff and second countable topology, and has a continuous $\Gammait_I$-action. Moreover, there is a $\Gammait_I$-equivariant map $\tilde S_I: {\mc M}_I \to {\bm E}_I$, and a natural map 
\beqn
\tilde \psi_I: \tilde  S_I^{-1}(0) \to \ov{\mc M}. 
\eeqn

\begin{prop}
For $I \in {\mc I}_{k+1} \setminus {\mc I}_k$ with $F_I^k \neq \emptyset$, there is a $\Gammait_I$-invariant open neighborhood $\tilde U_I^\bbox \subset {\mc M}_I$ of $\tilde \psi_I^{-1}(\ov{F_I^k} \cap \ov{\mc M}_{k+1} ) \subset {\mc M}_I $ which is a topological manifold.
\end{prop}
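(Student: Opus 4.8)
The plan is to exhibit ${\mc M}_I$ as locally homeomorphic to one fixed Euclidean space at every point of the set $K_I := \tilde\psi_I^{-1}(\ov{F_I^k}\cap\ov{\mc M}_{k+1})$, and then to take for $\tilde U_I^\bbox$ the $\Gammait_I$-saturation of a finite union of such Euclidean charts. Compactness of $K_I$ is not an issue: $\ov{F_I^k}$ is compact by construction, $\ov{\mc M}_{k+1}$ is compact as a finite union of strata closures inside $\ov{\mc M}_{\sf\Gamma}$ (this is essentially \eqref{eqn103}), and over a compact subset of $\ov{\mc M}$ the map $\tilde\psi_I$ on $\tilde S_I^{-1}(0)$ is finite-to-one and proper --- for any stable solution $\epsilon$-close to $\alpha_i$ the pair $({\bf y}_{\alpha_i},\wh\phi_{\alpha_i})$ is unique up to $\Gammait_{\alpha_i}$ by Lemma \ref{lemma914}, and a standard compactness argument as in Section \ref{section5} controls the limiting markings and bundle inclusions --- so $K_I$ is a closed, hence compact, subset of an $I$-thickened moduli space.

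First I would build the local model near a fixed $q\in K_I$. Since $q\in\tilde S_I^{-1}(0)$, all its obstruction vectors vanish and its underlying data is an honest stable solution $({\mc C}_q,{\bm v}_q)$ lying in every footprint $F_{p_i}^k$, $i\in I$, together with stabilizing markings ${\bf y}_\alpha$ and bundle inclusions $\wh\phi_\alpha$ for all $\alpha\in I$, each identifying $({\mc C}_q,P_q)$ with a fibre of ${\mc U}_\alpha$. Fix one index $i_0\in I$ and enlarge the obstruction space of the thickening datum $\alpha_{i_0}$ from ${\bm E}_{\alpha_{i_0}}$ to ${\bm E}_I := \bigoplus_{\alpha\in I}{\bm E}_\alpha$, together with the $\Gammait_I$-equivariant extension of $\iota_{\alpha_{i_0}}$ supplied by the inclusions $\iota_\alpha$, $\alpha\in I$; this is legitimate since all $\iota_\alpha$ are supported away from the nodes and enlarging the obstruction space only strengthens the transversality demanded in Definition \ref{defn93}. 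With this enlarged datum I would run the gluing construction of Section \ref{section9} verbatim --- pregluing (Definition \ref{defn95}), the error estimate (Lemma \ref{lemma918}), the quadratic estimate (Lemma \ref{lemma919}), and the uniformly bounded approximate right inverse assembled from the ${\it Cut}$ and ${\it Glue}$ operators (Lemmas \ref{lemma920}, \ref{lemma921}) --- which produces, via the implicit function theorem, a $\Gammait_I$-equivariant family of exact solutions parametrized by $\ker D_I \times {\mc V}_{\alpha_{i_0},{\rm def}}^\epsilon \times {\mc V}_{\alpha_{i_0},{\rm res}}^\epsilon$, a topological manifold (the gluing parameter is admitted down to $\zeta=0$ by the estimates, and the dimension equals the virtual dimension of $\ov{\mc M}_{\sf\Gamma}$ plus ${\rm rank}\,{\bm E}_I$, depending only on $I$). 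The transversality input, namely surjectivity of
\beqn
D_I = D_{{\bm v}_q}\oplus\bigoplus_{\alpha\in I}\iota_\alpha(\cdot):\ T_{{\bm v}_q}^\circ {\mc B}_{{\mc C}_q,P_q}^{p,w}\oplus{\bm E}_I\ \longrightarrow\ {\mc E}_q,
\eeqn
holds because already $D_{{\bm v}_q}\oplus\iota_{\alpha_{i_0}}$ is surjective over $F_{p_{i_0}}^k$ (surjectivity is an open condition, and the footprint was shrunk to lie inside the transverse locus), and adjoining further direct summands cannot destroy surjectivity.

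Next I would recover, from each member of this family, the remaining data $({\bf y}_\alpha,\wh\phi_\alpha)$ for $\alpha\in I\setminus\{i_0\}$ by the implicit function theorem of Lemma \ref{lemma913}: for a gauged map sufficiently close to ${\bm v}_q$, the conditions that it be in Coulomb gauge relative to ${\bm v}_{\alpha,\eta,\zeta}^{\rm app}$ and that ${\bf y}_\alpha = u^{\rm thin}\pitchfork H_\alpha$ determine $\wh\phi_\alpha$ and ${\bf y}_\alpha$ uniquely and continuously, so the family is canonically a family of $I$-thickened solutions in the sense of Definition \ref{defn96}. As in Corollary \ref{cor917} one then checks --- using the uniqueness part of the implicit function theorem together with Lemmas \ref{lemma913} and \ref{lemma914} for injectivity, and the $\epsilon$-closeness statements of Lemmas \ref{lemma911}, \ref{lemma912} together with the implicit function theorem for local surjectivity --- that this family maps homeomorphically onto an open neighborhood of $q$ in ${\mc M}_I$; taking the union over a $\Gammait_I$-orbit of such charts makes it $\Gammait_I$-invariant. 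Finally, covering the compact set $K_I$ by finitely many of these neighborhoods and $\Gammait_I$-saturating the union yields $\tilde U_I^\bbox$: an open neighborhood of $K_I$ in the second countable Hausdorff space ${\mc M}_I$ that is locally homeomorphic to $\mathbb{R}^{d_I}$ everywhere, hence a topological manifold.

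I expect the main obstacle to be organizational rather than analytic. When several thickening data are active one must choose a single resolution datum for ${\mc C}_q$ carrying all the markings $\bigcup_{\alpha\in I}{\bf y}_\alpha$ and compatible with every ${\bm r}_{\alpha_i}$, so that the deformation and gluing parameters of the different $\alpha$'s are all read off from one nodal degeneration of the domain; and one must verify that the Coulomb gauge-fixing and marking-transversality conditions attached to the various $\alpha\in I$ can be imposed simultaneously. As indicated above this reduces to a single implicit function theorem in which the relevant derivatives decouple --- exactly the mechanism already exploited in the proof of Lemma \ref{lemma913} --- but keeping the bookkeeping straight, and checking that the shrunk footprints $F_{p_i}^k$ were chosen small enough for all the required transversality statements to hold over $K_I$ rather than merely at the centres $p_i$, is where most of the care will be needed.
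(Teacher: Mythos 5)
Your overall architecture---enlarge the obstruction space to ${\bm E}_I$, run an implicit function theorem within a fixed stratum followed by the gluing analysis of Section \ref{section9}, recover the remaining data $({\bf y}_\alpha, \wh\phi_\alpha)$ via Lemma \ref{lemma913}, use Lemma \ref{lemma914} for injectivity, and saturate by $\Gammait_I$---is the same as the paper's. But there is one missing idea, and it is exactly the point that makes the argument work with no new choices. You fix an \emph{arbitrary} $i_0\in I$ and center the gluing construction at each point $q$ itself; this forces you to choose, at every such $q$, a resolution datum for ${\mc C}_q$ carrying all the markings and compatible with every ${\bm r}_{\alpha_i}$, and to redo the Fredholm/gluing set-up there---precisely the ``organizational obstacle'' you flag at the end and leave unresolved. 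The paper's proof uses the hypothesis $I\in{\mc I}_{k+1}\setminus{\mc I}_k$ to kill this problem: choose $i\in I\cap\{n_k+1,\ldots,n_{k+1}\}$, which is nonempty. Since the chart around $p_i$ is centered at a point of ${\mc M}_{{\sf\Gamma}_{k+1}}$ while $\ov{\mc M}_{k+1}$ contains only strata of index at most $k+1$, one gets $\ov{F_I^k}\cap\ov{\mc M}_{k+1}\subset{\mc M}_{{\sf\Gamma}_{k+1}}$. Hence every point to be covered has the \emph{same} combinatorial type as ${\mc C}_{\alpha_i}$, and after forgetting the data for $\alpha\neq\alpha_i$ it already lies in the same-stratum family \eqref{eqn95} constructed for $\alpha_i$. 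One then merely replaces ${\bm E}_{\alpha_i}$ by ${\bm E}_I$ in the right inverse \eqref{eqn94}, applies the implicit function theorem within the stratum ${\sf\Gamma}_{k+1}$, and turns on the gluing parameters of the unfolding ${\mc U}_{\alpha_i}$ exactly as in Section \ref{section9}; no new resolution data, hypersurfaces, or transversality choices are required. Without this observation your plan is not wrong in outline, but its central step is deferred rather than proved.

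Two secondary points. First, your justification of transversality (``surjectivity is an open condition, and the footprint was shrunk to lie inside the transverse locus'') is not what the construction provides: the footprints were shrunk for covering and overlapping reasons. What is true, and what the paper uses, is that the relevant points lie in the family \eqref{eqn95} for the new index $i$, along which surjectivity of the ${\bm E}_{\alpha_i}$-thickened linearization is built in by the implicit function theorem with the fixed right inverse $\hat Q_{\alpha_i}$, and enlarging the obstruction space to ${\bm E}_I$ preserves it. Second, the compactness/properness discussion and the finite subcover of $K_I$ are unnecessary: the union over all points of $\tilde\psi_I^{-1}(\ov{F_I^k}\cap\ov{\mc M}_{k+1})$ of the $\Gammait_I$-orbits of the constructed charts is automatically an open, $\Gammait_I$-invariant neighborhood which is a topological manifold, since ${\mc M}_I$ is Hausdorff and second countable.
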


\begin{proof}
By our assumption, there exists $i \in \{ n_k + 1, \ldots, n_{k+1}\} \cap I$. Hence $\ov{F_I^k} \cap \ov{\mc M}_{k+1} \subset {\mc M}_{{\sf \Gamma}_{k+1} }$. Choose any point $p \in \ov{F_I^k} \cap \ov{\mc M}_{k+1}$ represented by an $I$-thickened solution 
\beqn
\hat{\bm v}_I:= \big( {\mc C}, {\bm v}, \{ {\bf y}_\alpha\}, \{ \wh\phi_\alpha \}, \{ {\it 0}_\alpha \} \big).
\eeqn
By forgetting the data for $\alpha \neq \alpha_i$, we obtain an $\alpha_i$-thickened solution 
\beqn
\hat{\bm v}_{\alpha_i}:= \big( {\mc C}_{\alpha_i}, {\bm v}_{\alpha_i}, {\bf y}_{\alpha_i}, \wh\phi_{\alpha_i}, {\it 0}_{\alpha_i} \big).
\eeqn
Then we can identify the curve ${\mc C}_{\alpha_i}$ and the bundle $P_{\alpha_i} \to {\mc C}_{\alpha_i}$ as a fibre over the unfolding ${\mc U}_{\alpha_i} \to {\mc V}_{\alpha_i}$ contained in the thickening datum $\alpha_i$. Then $\hat{\bm v}_{\alpha_i}$ belongs to the family $\hat{\bm v}_{\alpha_i, \xi_i, \eta_i}$ \eqref{eqn95} for $\alpha = \alpha_i$, where $\xi_i \in {\mc V}_{\alpha_i, {\rm map}}$ and $\eta_i \in {\mc V}_{\alpha_i, {\rm def}}$. 

Remember that we used the right inverse to the linearization of the $\alpha_i$-thickened gauged Witten equation
\beqn
\hat Q_{\alpha_i}: {\mc E}_{\alpha_i} \to {\mc V}_{\alpha_i, {\rm def}} \oplus  {\bm E}_{\alpha_i} \oplus T_{{\bm v}_{\alpha_i}} {\mc B}_{\alpha_i}
\eeqn
chosen in \eqref{eqn94} for $\alpha = \alpha_i$. By the inclusion ${\bm E}_{\alpha_i} \subset {\bm E}_I$, we obtained a right inverse 
\beqn
\hat Q_{I, \alpha_i}: {\mc E}_{\alpha_i} \to {\mc V}_{\alpha_i, {\rm def}} \oplus {\bm E}_I \oplus T_{{\bm v}_{\alpha_i}} {\mc B}_{\alpha_i}.
\eeqn

Notice that the stratum ${\sf\Gamma}_{k+1}$ also specifies a stratum of the $I$-thickened moduli space with a point ${\bm p}_I$ represented by the $I$-thickened solution $\hat{\bm v}_I$. This combinatorial type provides a Banach manifold of gauged maps ${\mc B}_{{\sf\Gamma}_{k+1}}$ that contains $\hat{\bm v}_I$. Then using the right inverse $\hat Q_{I, \alpha_i}$ and the implicit function theorem, we can construct a family of $I$-thickened solutions of combinatorial type ${\sf\Gamma}_{k+1}$ which are close to $\hat{\bm v}_I$ in the Banach manifold ${\mc B}_{{\sf\Gamma}_{k+1}}$, parametrized by a topological manifold acted by $\Gammait_I$. 

Then by turning on the gluing parameters, a family of approximate solutions can be constructed as $I$-thickened objects. The same gluing construction provides a collection of $I$-thickened solutions parametrized by a topological manifold of the expected dimension. The gluing construction is the same as that of Section \ref{section9} because, here we just enlarged the space of obstructions from ${\bm E}_{\alpha_i}$ to ${\bm E}_I$. 
\end{proof}

Then take such a $\Gammait_I$-invariant neighborhood $\tilde U_I^\bbox \subset {\mc M}_I$ of $\tilde \psi_I^{-1}( \ov{F_I^k} \cap \ov{\mc M}_{k+1})$, we obtain a chart
\beqn
C_I^\bbox = ( U_I^\bbox, E_I^\bbox, S_I^\bbox, \psi_I^\bbox, F_I^\bbox )
\eeqn
Notice that we have the inclusion
\beqn
\ov{F_I^k} \cap \ov{\mc M}_{k+1} \subset F_I^\bbox.
\eeqn

Let us summarize the charts we have obtained. For all $i \in \{1, \ldots, n_{k+1}\}$, we have charts $C_{p_i}^k$ with footprints $F_{p_i}^k$. For all $I \in {\mc I}_k$, we have charts $C_I^k$ with footprints $F_I^k$. For all $I \in {\mc I}_{k+1} \setminus {\mc I}_k$, we have charts $C_I^\bbox$ with footprints $F_I^\bbox$. We would like to shrink these charts so that their footprints satisfy \eqref{eqn102} of \tt{INDUCTION HYPOTHESIS} for $k+1$. 

Indeed, we can shrink all $C_{p_i}^k$ to a chart $C_{p_i}^{k+1}$ with footprints $F_{p_i}^{k+1}$ which satisfy the following conditions. Define
\beqn
F_I^{k+1} = \bigcap_{i \in I} F_{p_i}^{k+1}. 
\eeqn 
\begin{enumerate}

\item \eqref{eqn103} still holds. More precisely, for all $l \leq k+1$, we have
\beqn
\ov{\mc M}_l \subset \bigcup_{s \leq l} \bigcup_{\rho_{k+1} (p_i) = {\sf\Gamma}_s}  F_{p_i}^{k+1}
\eeqn

\item \eqref{eqn104} still holds. More precisely, for all $l \leq k+1$ and $I \in {\mc I}_{k+1}$, 
\beqn
F_I^{k+1} \neq \emptyset \Longleftrightarrow F_I^k \cap \ov{\mc M}_l \neq \emptyset.
\eeqn

\item For all $I \in {\mc I}_{k+1} \setminus {\mc I}_k$, we have
\beqn
F_I^{k+1} \subset F_I^\bbox.
\eeqn
\end{enumerate}

Then shrink all $C_I^k$ for $I \in {\mc I}_k$ (resp. $C_I^\bbox$ for $I \in {\mc I}_{k+1} \setminus {\mc I}_k$) to $C_I^{k+1}$ so that the shrunk footprint is  $F_I^{k+1}$. Together with the map $\rho_{k+1}$ in \eqref{eqn106} these charts satisfy conditions for charts listed in \tt{INDUCTION HYPOTHESIS} for $k+1$. Therefore the induction can be carried on. After the last step of the induction, denote the charts we constructed by 
\beqn
C_I^\bt = (U_I^\bt, E_I^\bt, S_I^\bt, \psi_I^\bt, F_I^\bt).
\eeqn
They satisfy Item (a), (b) and (c) of Proposition \ref{prop101} if we replace $C_I^\bullet$ by $C_I^\bt$. Moreover, the footprints of $C_I^\bt$ cover $\ov{\mc M}$. 

\subsection{Coordinate changes}

Now we start to define the coordinate changes. Since the moduli space $\ov{\mc M}$ is compact and Hausdorff, one can find precompact open subsets $F_{p_i}^\bullet \sqsubset F_{p_i}^\bt$ such that the union of $F_{p_i}^\bullet$ still cover $\ov{\mc M}$. Define 
\beq\label{eqn107}
F_I^\bullet:= \bigcap_{\alpha_i \in I} F_{p_i}^\bullet.
\eeq
Then 
\beqn
\ov{F_I^\bullet } \subset \bigcap_{\alpha_i} \ov{F_{p_i}^\bullet } \subset F_I^\bt.
\eeqn

Consider $I \preq J \in {\mc I}$ which corresponds to two sets of thickening data, which are still denoted by $I$ and $J$. Denote
\begin{align*}
&\ {\bm E}_{JI} = \bigoplus_{\alpha \in J \setminus I} {\bm E}_\alpha,\ & \Gammait_{JI} = \prod_{\alpha \in J \setminus I} \Gammait_\alpha,
\end{align*}
where ${\bm E}_\alpha$ are the vector spaces of obstructions which are acted by $\Gammait_\alpha$. Define 
\beqn
\tilde S_{JI}: {\mc M}_J \to {\bm E}_{JI}
\eeqn
to be the natural map. Then there is a natural map 
\beqn
\tilde \psi_{JI}: \tilde S_{JI}^{-1}(0) \to {\mc M}_I
\eeqn
by forgetting ${\bf y}_\alpha$, $\wh\phi_\alpha$, $e_\alpha$ for all $\alpha \in J \setminus I$. This is clearly equivariant with respect to the homomorphism $\Gammait_J \to \Gammait_I$ which annihilates $\Gammait_{JI}$, hence descends to a map 
\beqn
\psi_{JI}: \tilde S_{JI}^{-1}(0)/ \Gammait_{JI} \to {\mc M}_I.
\eeqn

\begin{lemma}\label{lemma103}
There is a $\Gammait_J$-invariant open neighborhood $\tilde N_{JI}^\bt \subset {\mc M}_J$ of $\tilde  S_J^{-1}(0)$ such that the map 
\beq\label{eqn108}
\psi_{JI}: ( \tilde N_{JI}^\bt \cap \tilde S_{JI}^{-1}(0))/ \Gammait_{JI} \to {\mc M}_I
\eeq
is a homeomorphism onto a $\Gammait_I$-invariant open neighborhood $\tilde U_{JI}^\bt$ of $ \tilde \psi_I^{-1}( \ov{F_J^\bullet} ) \subset {\mc M}_I$. 
\end{lemma}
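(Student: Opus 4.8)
The plan is to read off the coordinate change from the existence-and-uniqueness statements for thickening extensions proved in the previous section, Lemma~\ref{lemma913} and Lemma~\ref{lemma914}. First I would record a preliminary observation: although those two lemmas are phrased for genuine solutions of the gauged Witten equation, their proofs use the matter field $u$ only through the transversality of $u^{\rm thin}$ with the hypersurfaces $H_\alpha$ and the Coulomb gauge condition relative to the approximate solutions, never through the equation itself; hence both conclusions hold verbatim when $({\mc C},{\bm v})$ is the curve and matter field of an arbitrary $I'$-thickened solution that is $\epsilon_\alpha$-close to $\alpha$. Note also that on $\tilde S_{JI}^{-1}(0)$ one has $e_\alpha=0$, hence $\iota_\alpha(e_\alpha)=0$, for every $\alpha\in J\setminus I$, so the $J$-thickened gauged Witten equation there reduces to the $I$-thickened one; therefore forgetting the data labelled by $\alpha\in J\setminus I$ does land in ${\mc M}_I$ and $\psi_{JI}$ is well defined.

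Next I would choose $\tilde N_{JI}^\bt$. Since $\tilde\psi_J$ carries $\tilde S_J^{-1}(0)$ into $F_J^\bt=\bigcap_{\alpha\in J}F_{p_\alpha}^\bt$, and each $F_{p_\alpha}^\bt$ is contained in the open set $W_\alpha^{\epsilon_\alpha}$ of Lemma~\ref{lemma912}, every point of $\tilde S_J^{-1}(0)$ is $\epsilon_\alpha$-close to $\alpha$ for all $\alpha\in J$; because $\epsilon_\alpha$-closeness is an open condition (Lemma~\ref{lemma911}, Lemma~\ref{lemma912}) there is a $\Gammait_J$-invariant open neighborhood of $\tilde S_J^{-1}(0)$ in ${\mc M}_J$ all of whose points are $\epsilon_\alpha$-close to $\alpha$ for every $\alpha\in J$, and I take $\tilde N_{JI}^\bt$ inside it. Injectivity of $\psi_{JI}$ modulo $\Gammait_{JI}$ on $\tilde N_{JI}^\bt\cap\tilde S_{JI}^{-1}(0)$ is then Lemma~\ref{lemma914}: if two such $J$-thickened solutions have isomorphic underlying $I$-thickened solutions, apply the identifying isomorphism to reduce to the case of the same $({\mc C},{\bm v})$ and the same data for $\alpha\in I$; the two solutions then differ only in $({\bf y}_\alpha,\wh\phi_\alpha)$, $\alpha\in J\setminus I$, with $e_\alpha=0$ on both sides, and Lemma~\ref{lemma914} forces ${\bf y}_\alpha={\bf y}_\alpha'$ and $\wh\phi_\alpha'=\gamma_\alpha^{{\mc P}_\alpha}\circ\wh\phi_\alpha$ for unique $\gamma_\alpha\in\Gammait_\alpha$, so the two solutions lie on a single $\Gammait_{JI}$-orbit.

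For the image I would build a continuous section of $\psi_{JI}$ from Lemma~\ref{lemma913}. Let $\tilde{\mc O}_I\subset{\mc M}_I$ be the open $\Gammait_I$-invariant set of $I$-thickened solutions that are $\epsilon_\alpha$-close to $\alpha$ for every $\alpha\in J\setminus I$; it contains $\tilde\psi_I^{-1}(\ov{F_J^\bullet})$ since points of the latter are genuine solutions whose moduli class lies in $\ov{F_J^\bullet}\subset F_J^\bt\subset\bigcap_{\alpha\in J}W_\alpha^{\epsilon_\alpha}$. By Lemma~\ref{lemma913} applied to each $\alpha\in J\setminus I$, every $\hat{\bm v}_I\in\tilde{\mc O}_I$ admits data $({\bf y}_\alpha,\wh\phi_\alpha)$ with $e_\alpha=0$, canonically up to $\Gammait_\alpha$, depending continuously on $\hat{\bm v}_I$ by the uniqueness in the implicit function theorem used there; adjoining them defines a $\Gammait_I$-equivariant continuous section $\sigma_{JI}\colon\tilde{\mc O}_I\to\tilde S_{JI}^{-1}(0)/\Gammait_{JI}$ with $\psi_{JI}\circ\sigma_{JI}=\mathrm{id}$. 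By the injectivity just proved, $\sigma_{JI}\circ\psi_{JI}$ is also the identity wherever both are defined, so $\psi_{JI}$ restricted to $(\tilde N_{JI}^\bt\cap\tilde S_{JI}^{-1}(0))/\Gammait_{JI}$ is a homeomorphism onto the open $\Gammait_I$-invariant set $\tilde U_{JI}^\bt:=\sigma_{JI}^{-1}\big((\tilde N_{JI}^\bt\cap\tilde S_{JI}^{-1}(0))/\Gammait_{JI}\big)$, which contains $\tilde\psi_I^{-1}(\ov{F_J^\bullet})$ because $\sigma_{JI}$ sends that set into $\tilde S_J^{-1}(0)\subset\tilde N_{JI}^\bt$.

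The main obstacle is the bookkeeping that makes $\tilde N_{JI}^\bt$ at once a full $\Gammait_J$-invariant neighborhood of $\tilde S_J^{-1}(0)$ and thin enough in the $\alpha\in J\setminus I$ obstruction directions for the Lemma~\ref{lemma914} argument, while arranging that the homeomorphic image is exactly a neighborhood of the prescribed compact set $\tilde\psi_I^{-1}(\ov{F_J^\bullet})$; I would resolve this by first producing the section $\sigma_{JI}$ on the generous open set $\tilde{\mc O}_I$ and only afterwards fixing $\tilde U_{JI}^\bt$ and $\tilde N_{JI}^\bt$ by mutual restriction, enlarging $\tilde N_{JI}^\bt$ by the $\Gammait_J$-saturation of the preimage of $\sigma_{JI}(\tilde U_{JI}^\bt)$ when necessary. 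A secondary point, noted above, is the extension of Lemmas~\ref{lemma913}–\ref{lemma914} to perturbed matter fields, which is immediate from their proofs but must be recorded.
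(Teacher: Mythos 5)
Your proposal is correct in substance and rests on the same two pillars as the paper's own argument: Lemma~\ref{lemma913} to complete an $I$-thickened object (and, by the same method, any nearby point of ${\mc M}_I$) to a $\beta$-thickened object for each $\beta \in J \setminus I$, and Lemma~\ref{lemma914} to identify the fibres of $\psi_{JI}$ with $\Gammait_{JI}$-orbits; your preliminary remark that these lemmas extend to $I$-perturbed matter fields matches what the paper handles with the phrase ``by essentially the same method as in the proof of Lemma~\ref{lemma913}.'' Where you genuinely diverge is the last step. The paper proves surjectivity of $\psi_{JI}$ onto an open neighborhood of the compact set $\tilde\psi_I^{-1}(\ov{F_J^\bullet})$, takes $\tilde N_{JI}^\bt$ so that $\tilde N_{JI}^\bt \cap \tilde S_{JI}^{-1}(0)$ is exactly the preimage of that neighborhood, and then obtains the homeomorphism from injectivity together with the fact that a continuous bijection from a compact space to a Hausdorff space is a homeomorphism, applied after a precompact shrinking of $\tilde U_{JI}^\bt$ that still contains $\tilde\psi_I^{-1}(\ov{F_J^\bullet})$; continuity of the inverse is never addressed directly. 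You instead manufacture the inverse as a continuous section $\sigma_{JI}$ and declare its continuity to follow from ``uniqueness in the implicit function theorem.'' That is the one soft spot: uniqueness (really Lemma~\ref{lemma914}) only makes the $\Gammait_{JI}$-orbit of the completion well defined, and the IFT's continuous dependence only applies after you have chosen, locally continuously in the point of ${\mc M}_I$, the initial identification $\wh\phi_\beta$ with a fibre of ${\mc U}_\beta$ (and the parameters $(\eta,\zeta)$) that seed the iteration; $\epsilon_\beta$-closeness as in Lemma~\ref{lemma912} only asserts existence of such data pointwise. This local continuous selection is believable and repairable, but it is an extra verification of roughly the same delicacy as the statement itself, and it is precisely what the paper's compactness trick buys you the right to skip; if you keep your section-based route, you should either prove that selection or fall back on the compact-to-Hausdorff argument to conclude.
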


\begin{proof}
First we show the surjectivity. For any ${\bm p}_I \in \tilde \psi_I^{-1}( \ov{F_J^\bullet } ) \subset \tilde \psi_I^{-1}(\ov{F_I^\bullet })$, it is represented by an $I$-thickened solution 
\beqn
\big( {\mc C}, {\bm v}, \{ {\bf y}_\alpha\}_{\alpha \in I}, \{ \wh\phi_\alpha \}_{\alpha \in I}, \{ {\it 0}_\alpha\}_{\alpha \in I} \big).
\eeqn
By our construction, for any $\beta \in J \setminus I$, $({\mc C}, {\bm v})$ is $\epsilon_\beta$-close to $\beta$ (see Definition \ref{defn910}). Hence by Lemma \ref{lemma913}, there exist ${\bf y}_\beta$ and $\wh\phi_\beta$ such that 
\beqn
\big( {\mc C}, {\bm v}, {\bf y}_\beta, \wh\phi_\beta, {\it 0}_\beta  \big)
\eeqn
is a $\beta$-thickened solution. Moreover, by essentially the same method as in the proof of Lemma \ref{lemma913}, for any point ${\bm p}_I' \in {\mc M}_I$ that is sufficiently close to ${\bm p}_I$, for any representative that contains a gauged map $({\mc C}', {\bm v}')$, it is $\epsilon_\beta$-close to $\beta$ and can be completed to a $\beta$-thickened solution. Apply this for all $\beta \in J \setminus I$, it means that any ${\bm p}_I' \in {\mc M}_I$ sufficiently close to ${\bm p}_I$ is in the image of the map \eqref{eqn108}. Since $\tilde \psi_I^{-1}( \ov{F_J^\bullet })$ is compact, one can choose an open neighborhood $\tilde U_{JI}^\bt \subset {\mc M}_I$ of $\pi_I^{-1}( \ov{F_J})$ which is contained in the image of \eqref{eqn108}. Then we could find a $\Gammait_J$-invariant open subset $\tilde N_{JI}^\bt \subset {\mc M}_J$ such that 
\beq\label{eqn109}
\tilde N_{JI}^\bt \cap \tilde S_{JI}^{-1}(0) = \tilde V_{JI}^\bt:= \tilde \psi_{JI}^{-1}( \tilde U_{JI}^\bt ).
\eeq

On the other hand, Lemma \ref{lemma914} says that the map 
\beqn
\pi_{JI}:  \tilde V_{JI}^\bt / \Gammait_{JI} \to \tilde U_{JI}^\bt
\eeqn 
is bijective. Its continuity is obvious. Hence using the fact that a continuous bijection from a compact space to a Hausdorff space is a homeomorphism, and using the fact that the thickened moduli spaces are locally compact and Hausdorff, after a precompact shrinking of $\tilde U_{JI}^\bt$ which still contains $\tilde \psi_I^{-1}( \ov{F_J^\bullet})$, one proves this lemma. 
\end{proof}

Define 
\beqn
U_{JI}^\bt = \tilde U_{JI}^\bt / \Gammait_I
\eeqn
which is an open suborbifold of $U_I^\bt$. Lemma \ref{lemma103} implies that the inclusion $\tilde V_{JI}^\bt \hookrightarrow {\mc M}_J$ induces a map between orbifolds 
\beqn
\phi_{JI}^\bt: U_{JI}^\bt \to U_J^\bt. 
\eeqn
The natural inclusion ${\bm E}_I \to {\bm E}_J$ induces a bundle map 
\beqn
\wh\phi_{JI}^\bt: E_I^\bt |_{U_{JI}^\bt } \to E_J^\bt
\eeqn
which covers $\phi_{JI}^\bt$. 

\begin{prop}
For every $I$ there exists a shrinking $C_I^\bullet$ of $C_I^\bt$ and an open subset
\beqn
U_{JI}^\bullet \subset U_{JI}^\bt
\eeqn
satisfying the following conditions. 
\begin{enumerate}
\item The footprint of $C_I^\bullet$ is $F_I^\bullet$ (defined previously in \eqref{eqn107}).

\item $T_{JI}^\bullet:= (U_{JI}^\bullet, \phi_{JI}^\bullet, \wh\phi_{JI}^\bullet ):= (U_{JI}^\bullet, \phi_{JI}^\bt|_{U_{JI}^\bullet}, \wh\phi_{JI}^\bt|_{U_{JI}^\bullet })$ is a coordinate change from $C_I^\bullet$ to $C_J^\bullet $. 
\end{enumerate}
\end{prop}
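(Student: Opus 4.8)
The plan is to build the shrinking $C_I^\bullet$ and the domains $U_{JI}^\bullet$ by a finite downward induction on the partially ordered set $({\mc I}, \preq)$, arranging along the way that all three requirements of a coordinate change (Definition \ref{defn612}) and the {\bf (Cocycle Condition)} of Definition \ref{defn_atlas} hold simultaneously. First I would fix, for every $I$, the desired footprint $F_I^\bullet = \bigcap_{\alpha_i\in I} F_{p_i}^\bullet$ as in \eqref{eqn107}, and choose a shrinking $C_I^\bullet = C_I^\bt|_{U_I^\bullet}$ whose footprint is exactly $F_I^\bullet$; this is possible because $\psi_I$ is a homeomorphism and the thickened moduli spaces are locally compact, so one can pick $U_I^\bullet$ to be the intersection of $U_I^\bt$ with a sufficiently small $\Gammait_I$-invariant neighborhood of $(\psi_I)^{-1}(\ov{F_I^\bullet})$. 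The point of working with the precompact $F_I^\bullet\sqsubset F_I^\bt$ is that $\ov{F_I^\bullet}\subset F_I^\bt$ and, more importantly, for $I\preq J$ one has $\ov{F_J^\bullet}\subset F_I^\bt$ as well, so Lemma \ref{lemma103} applies with room to spare: it gives a $\Gammait_I$-invariant open $\tilde U_{JI}^\bt\subset {\mc M}_I$ containing $(\tilde\psi_I)^{-1}(\ov{F_J^\bullet})$ together with the embedding $\phi_{JI}^\bt\colon U_{JI}^\bt\to U_J^\bt$ and the bundle embedding $\wh\phi_{JI}^\bt$.

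Next I would define, for each pair $I\preq J$, the provisional domain to be $U_{JI}^\bullet := U_I^\bullet \cap U_{JI}^\bt \cap (\phi_{JI}^\bt)^{-1}(U_J^\bullet)$, and set $T_{JI}^\bullet := (U_{JI}^\bullet,\phi_{JI}^\bt|_{U_{JI}^\bullet},\wh\phi_{JI}^\bt|_{U_{JI}^\bullet})$. That the commuting-diagram condition (1) and the {\bf (Tangent Bundle Condition)} of Definition \ref{defn611} hold is inherited from the construction of $\phi_{JI}^\bt,\wh\phi_{JI}^\bt$ via the inclusion ${\bm E}_I\hookrightarrow {\bm E}_J$ and the standard tangent–obstruction transversality coming from the right inverses $\hat Q_\alpha$; this is exactly the situation described in Remark \ref{rem623}. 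For condition (1) of Definition \ref{defn612}, i.e. $\psi_I(U_{JI}^\bullet\cap S_I^{-1}(0)) = F_I^\bullet\cap F_J^\bullet$, I would check that on zero loci the forgetful map $\psi_{JI}$ of Lemma \ref{lemma103} realizes precisely the inclusion of $I$-thickened exact solutions that can be completed to $J$-thickened ones, which by Lemma \ref{lemma913} and Lemma \ref{lemma914} is exactly the preimage of $F_J^\bullet$; shrinking $U_I^\bullet$ a little more if necessary makes the identity $\psi_I(U_{JI}^\bullet\cap S_I^{-1}(0)) = F_I^\bullet\cap F_J^\bullet$ exact rather than merely an inclusion. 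Condition (2), the closedness/properness condition on convergent sequences $x_k\in U_{JI}^\bullet$, $\phi_{JI}^\bullet(x_k)\to y_\infty$, follows from the fact that Lemma \ref{lemma103} produces $\tilde U_{JI}^\bt$ as a genuine open set over which $\psi_{JI}$ is a homeomorphism onto its image, together with the compactness arguments already used in the proof of Lemma \ref{lemma914}; this is the only place where I would need to argue carefully, and the argument is the standard ``gluing parameter cannot escape'' estimate controlled by the annulus lemma (Lemma \ref{lemma46}).

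The main obstacle is the {\bf (Cocycle Condition)}: for $I\preq J\preq K$ one must have $\wh\phi_{KI}^\bullet = \wh\phi_{KJ}^\bullet\circ\wh\phi_{JI}^\bullet$ on $U_{KJI}^\bullet = U_{KI}^\bullet\cap(\phi_{JI}^\bullet)^{-1}(U_{KJ}^\bullet)$. This is not automatic from the construction of the individual $\phi_{JI}^\bt$, because each is defined via its own application of Lemma \ref{lemma103} (its own forgetful map and neighborhood), so a priori the composite forgetful map $\psi_{JI}\circ\psi_{KJ}$ and the direct forgetful map $\psi_{KI}$ agree only as set maps, not as maps of manifold charts with the chosen tangent/obstruction data. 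I would resolve this by observing that forgetting the data labelled by $\alpha\in K\setminus I$ is literally the composition of forgetting $\alpha\in K\setminus J$ then $\alpha\in J\setminus I$ at the level of the moduli spaces ${\mc M}_K\to{\mc M}_J\to{\mc M}_I$, so the underlying continuous maps already satisfy the cocycle identity; then, after a further finite (downward in $|I|$) shrinking of the $U_I^\bullet$ and the $U_{JI}^\bullet$ — choosing the domains for larger index sets first and intersecting smaller-index domains with preimages of the already-fixed larger ones — one forces the three neighborhoods appearing in the triple to be nested compatibly, so that the homeomorphisms of Lemma \ref{lemma103} compose on the nose. Finally, orientations of all charts and coordinate changes are inherited from the orientation data built in Section \ref{section8} (the Fredholm index computation of Theorem \ref{thm88}) together with the fact that the obstruction bundles ${\bm E}_\alpha$ are complex, hence canonically oriented, and the inclusions ${\bm E}_I\hookrightarrow{\bm E}_J$ are orientation-compatible; this gives the remaining clause of Proposition \ref{prop101} and completes the verification that $({C_I^\bullet},{T_{JI}^\bullet})$ is a virtual orbifold atlas.
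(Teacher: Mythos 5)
There is a genuine gap, and it sits exactly where you flag that you would "need to argue carefully": condition (b) of Definition \ref{defn612} and, closely related, the {\bf (Tangent Bundle Condition)}. The paper's proof does not get these from Lemma \ref{lemma103} alone. It first proves that the extra-obstruction map $\tilde S_{JI}\colon {\mc M}_J \to {\bm E}_{JI}$ is transverse along $\tilde V_{JI}^\bt$, by an implicit function theorem argument with the right inverse $\hat Q_I$: a $J$-thickened solution near $\tilde V_{JI}^\bt$ satisfies $\hat{\mc F}_I({\bm v},{\bm e}_I)=\mathit{error}$ with the error controlled by $\|{\bm e}_{JI}\|$, and correcting it produces a $\Gammait_J$-invariant tubular neighborhood $\tilde N_{JI}^\bt$ with projection onto $\tilde V_{JI}^\bt$ and with the crucial identity $\tilde N_{JI}^\bt\cap\tilde S_{JI}^{-1}(0)=\tilde V_{JI}^\bt$. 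This is what makes $\phi_{JI}^\bt$ an embedding satisfying the tangent bundle condition; asserting that it is "inherited from the construction" and citing Remark \ref{rem623} does not supply this step, since Lemma \ref{lemma103} only identifies $\tilde V_{JI}^\bt/\Gammait_{JI}$ with $\tilde U_{JI}^\bt$ as topological spaces.

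The second missing ingredient is the specific shrinking used to verify (b): the paper chooses $U_J^\bullet$ so that $\ov{U_J^\bullet}\subset \big(\bigcap_{I\preq J}\tilde N_{JI}^\bt\big)/\Gammait_J$. Then for $x_n\in U_{JI}^\bullet$, $x_n\to x_\infty\in U_I^\bullet$, $y_n=\phi_{JI}(x_n)\to y_\infty\in U_J^\bullet$, one argues purely topologically: $\tilde S_{JI}$ vanishes on $\tilde\psi_J^{-1}(y_n)$, hence by continuity on $\tilde\psi_J^{-1}(y_\infty)$, and since $y_\infty$ lies in the tubular neighborhood, $y_\infty\in\phi_{JI}^\bt(U_{JI}^\bt)$; injectivity and continuity of $\phi_{JI}^\bt$ then force $x_\infty\in U_{JI}^\bullet$ and $\phi_{JI}(x_\infty)=y_\infty$. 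Your proposed substitutes do not do this job: the compactness argument of Lemma \ref{lemma914} and any "gluing parameter cannot escape" energy estimate concern genuine solutions of the (unperturbed) gauged Witten equation, whereas (b) must hold for arbitrary points of the thickened charts, i.e. thickened solutions with $e_\alpha\neq 0$, for which no such a priori estimates are available; the relevant control is exactly the normality of the thickening ($S_{JI}^{-1}(0)\cap N_{JI}=\phi_{JI}(U_{JI})$), not an analytic decay statement. (Also, the annulus lemma you cite is not part of the compiled text.) Without the transversality/tubular-neighborhood step and the containment $\ov{U_J^\bullet}\subset\bigcap_{I\preq J} N_{JI}^\bt$, the limit point $y_\infty$ could in principle lie outside the image of $\phi_{JI}^\bt$, and your verification of (b) does not exclude this. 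The remainder of your outline (definition of $U_{JI}^\bullet$ as the induced domain, condition (a), orientations) matches the paper, and your cocycle discussion, while not wrong, addresses a condition that is not part of this proposition.
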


\begin{proof}
According to the definition of coordinate changes (Definition \ref{defn612}), we first show that the map 
\beqn
\phi_{JI}^\bt: U_{JI}^\bt \to U_J^\bt
\eeqn
is a topological embedding of orbifolds. Indeed, one only needs to show that the map $\tilde  S_{JI}: {\mc M}_J \to {\bm E}_{JI}$ is transverse over $\tilde V_{JI}^\bt$. Indeed, choose $i \in I$. For every ${\bm p}_J \in \tilde V_{JI}^\bt$ which descends to an isomorphism of $I$-thickened solutions ${\bm p}_I = \pi_{JI}({\bm p}_J)$, for any representative 
\beqn
\big( {\mc C}, {\bm v}, \{{\bf y}_\alpha\}, \{ \wh\phi_\alpha\}, \{e_\alpha\} \big)
\eeqn
of ${\bm p}_I$, since ${\bm p}_I$ is close to $\tilde \psi_I^{-1}(\ov{F_J^\bullet}) \subset \tilde \psi_I^{-1}(\ov{F_I^\bullet})$, the linearization of the $I$-thickened equation 
\beqn
\hat D_I: T_{{\bm v}} {\mc B} \oplus {\bm E}_I \to {\mc E}_{\bm v}
\eeqn
is surjective. Here ${\mc B}$ is the corresponding Banach manifold and ${\mc E} \to {\mc B}$ is the Banach vector bundle. Therefore we can construct a tubular neighborhood as follows. For any ${\bm p}_J$ near $\tilde V_{JI}^\bt$ which is represented by a $J$-thickened solution 
\beqn
\big( {\mc C}, {\bm v}, \{{\bf y}_\alpha\}, \{ \wh\phi_\alpha\}, \{e_\alpha\} \big),
\eeqn
the $J$-thickened solution implies that 
\beqn
\hat {\mc F}_I({\bm v}, {\bm e}_I) = {\it error}.
\eeqn
Here ${\bm e} = \{ e_\alpha\}_{\alpha \in J}$ is decomposed as $({\bm e}_I, {\bm e}_{JI})$ where ${\bm e}_I \in {\bm E}_I$ and ${\bm e}_{JI} \in {\bm E}_{JI}$ and the ${\it error}$ term can be controlled by the size of ${\bm e}_{JI}$. When the neighborhood of $\tilde V_{JI}^\bt$ is sufficiently small, the error term is sufficiently small. Then by the implicit function theorem, there is a unique pair $({\bm v}', {\bm e}_I')$ lying in the same Banach manifold such that 
\begin{align*}
&\ \hat{\mc F}_I({\bm v}', {\bm e}_I') = 0,\ &\ ({\bm v}', {\bm e}_I') - ({\bm v}, {\bm e}_I) \in {\rm Image}(\hat Q_I). 
\end{align*}
This constructs a $\Gammait_J$-invariant neighborhood $\tilde N_{JI}^\bt$ of $\tilde V_{JI}^\bt$, a $\Gammait_J$-equivariant projection $\tilde \nu_{JI}: \tilde N_{JI}^\bt \to \tilde V_{JI}^\bt$. The implicit function theorem also implies that $\tilde S_{JI}$ induces an equivalence of microbundles. Hence $\tilde S_{JI}$ is transverse along $\tilde V_{JI}^\bt$ and $\phi_{JI}^\bt$ is an embedding. 

It then follows that $\wh\phi_{JI}^\bt$ is a bundle embedding covering $\phi_{JI}^\bt$. It is obvious that the pair $(\phi_{JI}^\bt, \wh\phi_{JI}^\bt)$ satisfies the {\bf (Tangent Bundle Condition)} of Definition \ref{defn611}. It remains to shrink the charts and coordinate changes so that Item (a) and Item (b) of Definition \ref{defn612} are satisfied. We only show it for a pair $I \preq J$, from which one can obtain a way of shrinking all the charts and coordinate changes so that all $T_{JI}^\bullet$ becomes a coordinate changes. First, one can find open subsets $U_I^\bullet \sqsubset U_I^\bt$ such that 
\begin{align*}
&\ \psi_I^\bt \big( \ov{U_I^\bullet } \cap (S_I^\bt)^{-1}(0) \big) = \ov{F_I^\bullet},\ &\ \psi_I^\bt \big( U_I^\bullet \cap (S_I^\bt)^{-1}(0) \big) = F_I^\bullet.
\end{align*}
Moreover, we may take $U_I^\bullet$ such that for all $J \in {\mc I}$, 
\beq\label{eqn1010}
\ov{U_J^\bullet} \subset  \big( \bigcap_{I \preq J} \tilde N_{JI}^\bt \big)/ \Gammait_J
\eeq
where $\tilde N_{JI}^\bt$ is the one we chose by Lemma \ref{lemma103}. Then take 
\beqn
U_{JI}^\bullet = U_I^\bullet \cap ( \phi_{JI}^\bt )^{-1}(U_J^\bullet ) \subset U_{JI}^\bt.
\eeqn
That is, $U_{JI}^\bullet$ is the ``induced'' domain from the shrinkings $C_I^\bullet$ and $C_J^\bullet$. Define $T_{JI}^\bullet = (U_{JI}^\bullet, \phi_{JI}^\bullet, \wh\phi_{JI}^\bullet)$ where $(\phi_{JI}^\bullet, \wh\phi_{JI}^\bullet)$ is the restriction of $(\phi_{JI}^\bt, \wh\phi_{JI}^\bt)$ onto $U_{JI}^\bullet$. 

$T_{JI}^\bullet$ satisfies Item (a) of Definition \ref{defn612} automatically, because the footprint of $U_I^\bullet$, $U_J^\bullet$ and $U_{JI}^\bullet$ are precisely $F_I^\bullet$, $F_J^\bullet$ and $F_J^\bullet$. To verify Item (b), suppose $x_n$ is a sequence of points in $U_{JI}^\bullet$ which converges to $x_\infty \in U_I^\bullet$ and $y_n = \phi_{JI}^\bullet (x_n)$ converges to $y_\infty \in U_J^\bullet$. Notice that each $\tilde \psi_J^{-1}(y_n)$ is a sequence of $\Gammait_J$-orbits of $J$-thickened solutions with $\tilde S_{JI} (\pi_J^{-1}(y_n)) = 0$. Then $\tilde S_{JI} ( \pi_J^{-1}(y_\infty)) = 0$ and by \eqref{eqn109} and \eqref{eqn1010}, we have 
\beqn
\tilde \psi_J^{-1}(y_\infty) \subset \tilde N_{JI}^\bt \cap \tilde S_{JI}^{-1}(0) =  \tilde V_{JI}^\bt \Longrightarrow y_\infty \in \phi_{JI}^\bt ( U_{JI}^\bt).
\eeqn
Since $\phi_{JI}^\bt$ is injective, continuous, and $x_n \to x_\infty$, we must have $x^\infty \in U_{JI}^\bullet$ and $\phi_{JI}^\bullet (x^\infty) = \phi_{JI}^\bt ( x^\infty) = y^\infty$. This establishes Item (b) and finishes the proof.
\end{proof}

Therefore, we have constructed all objects needed for establishing Proposition \ref{prop101}. Two more items need to be verified. First, the {\bf (Cocycle Condition)} for all the coordinate changes $T_{JI}^\bullet$ follows immediately from the construction. Second, all the charts and coordinate changes have canonical orientations, as the linearized operator is of the type of a real Cauchy--Riemann operator over a Riemann surface with cylindrical ends without boundary, and the asymptotic constraints at infinities of the cylindrical ends are given by complex submanifolds or orbifolds. Therefore Proposition \ref{prop101} is proved.

\subsection{Constructing a good coordinate system}

To obtain a good coordinate system from the objects constructed by Proposition \ref{prop101}, one first needs to make them satisfy the {\bf (Overlapping Condition)} of \ref{defn_atlas}). This can be done by shrinking the charts. Our method is modified from the one used in the proof of \cite[Lemma 5.3.1]{MW_3}.

Indeed, for all $i \in {\mc I}$, choose precompact shrinkings $F_{p_i}^\circ \sqsubset F_{p_i}^\bullet$ such that there holds
\beqn
\ov{\mc M} = \bigcup_{i = 1}^N F_{p_i}^\circ.
\eeqn
We also choose an ordering of ${\mc I}$ as 
\beqn
I_1, \ldots, I_M
\eeqn
such that 
\beqn
I_k \preq I_l \Longrightarrow k \leq l.
\eeqn
Choose intermediate precompact shrinkings between $F_{p_i}^\circ$ and $F_{p_i}^\bullet$ as 
\beq\label{eqn1011}
F_{p_i}^\circ =: G_{p_i}^1 \sqsubset F_{p_i}^1 \sqsubset \cdots \sqsubset G_{p_i}^M \sqsubset F_{p_i}^M:= F_{p_i}^\bullet. 
\eeq
Then for all $I_k \in {\mc I}$, define 
\beqn
F_{I_k}^\square:= \Big( \bigcap_{i \in I_k} F_{p_i}^k \Big) \setminus \Big( \bigcup_{i \notin I_k} \ov{G_{p_i}^k} \Big).
\eeqn
These are open subsets of the moduli space $\ov{\mc M}$. 

\begin{lemma}\label{lemma105}
The footprints $F_I^\square$ satisfy the {\bf (Overlapping Condition)}, namely, 
\beqn
\ov{F_I^\square} \cap \ov{F_J^\square} \neq \emptyset \Longrightarrow I \preq J\ {\rm or}\ J \preq I.
\eeqn
\end{lemma}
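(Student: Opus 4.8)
The plan is to separate footprints attached to incomparable index sets by exploiting the nested chain of shrinkings \eqref{eqn1011}. Since both the hypothesis $\ov{F_I^\square} \cap \ov{F_J^\square} \ne \emptyset$ and the conclusion ``$I \preq J$ or $J \preq I$'' are symmetric in $I$ and $J$, I will first relabel so that $I = I_k$ and $J = I_l$ with $k \le l$, where $I_1, \ldots, I_M$ is the ordering of ${\mc I}$ fixed above. If $k = l$ then $I = J$ and there is nothing to prove, so the only case requiring work is $k < l$, and there I will prove the stronger statement $I \subseteq J$, which by the choice of partial order on ${\mc I}$ (Proposition \ref{prop101}) is exactly $I \preq J$.

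For the case $k < l$ I would argue by contradiction. Assuming $I \not\subseteq J$, pick an index $i_0 \in I \setminus I_l$ and show that $\ov{F_I^\square}$ and $\ov{F_J^\square}$ are already disjoint. On one side, $i_0 \in I = I_k$ gives $F_I^\square \subseteq F_{p_{i_0}}^k$, hence $\ov{F_I^\square} \subseteq \ov{F_{p_{i_0}}^k}$; and because the sets in \eqref{eqn1011} are indexed so that consecutive ones are precompactly nested, $k < l$ forces $\ov{F_{p_{i_0}}^k} \subseteq G_{p_{i_0}}^l$. On the other side, $i_0 \notin I_l = J$ means the term $\ov{G_{p_{i_0}}^l}$ occurs in the set removed in the definition of $F_J^\square$, so $F_J^\square \subseteq \ov{\mc M} \setminus \ov{G_{p_{i_0}}^l}$; since $G_{p_{i_0}}^l$ is open, the closure of this complement still misses $G_{p_{i_0}}^l$, i.e. $\ov{F_J^\square} \cap G_{p_{i_0}}^l = \emptyset$. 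Combining the two containments gives $\ov{F_I^\square} \cap \ov{F_J^\square} \subseteq G_{p_{i_0}}^l \cap (\ov{\mc M} \setminus G_{p_{i_0}}^l) = \emptyset$, contradicting the hypothesis; hence $I \subseteq J$.

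I do not expect a genuine obstacle here: once the relabelling reduces matters to $k < l$, the argument is elementary point-set topology, and the only thing to keep straight is the direction of the nesting in \eqref{eqn1011}, namely that $\ov{F_{p_i}^k} \subseteq G_{p_i}^l$ whenever $k < l$, which is precisely the property the chain was built to have. The only bookkeeping to record explicitly is that the relabelling to assume $k \le l$ is harmless by symmetry of the statement, and that ``$I \subseteq J \iff I \preq J$'' is just the definition of the order on ${\mc I}$.
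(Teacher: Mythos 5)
Your proof is correct and follows essentially the same argument as the paper: after ordering the index sets and reducing to $k<l$, both take $i \in I_k \setminus I_l$, use the nesting $\ov{F_{p_i}^k} \subset G_{p_i}^l$ from \eqref{eqn1011} on one side and the removal of $\ov{G_{p_i}^l}$ in the definition of $F_{I_l}^\square$ on the other, and derive the same contradiction.
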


\begin{proof}
Choose any pair $I_k, I_l \in {\mc I}$ with 
\beqn
\ov{F_{I_k}^\square} \cap \ov{F_{I_l}^\square} \neq \emptyset.
\eeqn
Without loss of generality, assume that $k < l$. We claim that $I_k \preq I_l$. Suppose it is not the case, then there exists $i \in I_k \setminus I_l$. Take $x$ in the intersection. Then by the definition of $F_I^\square$ and \eqref{eqn1011},
\beqn
x \in \ov{F_{I_k}^\square} \subset \ov{F_{p_i}^k} \subset G_{p_i}^l.
\eeqn
On the other hand, 
\beqn
x \in \ov{F_{I_l}^\square} \subset \ov{ \ov{\mc M} \setminus \ov{G_{p_i}^l} } \subset \ov{\mc M} \setminus G_{p_i}^l.
\eeqn
This is a contradiction. Hence $I_k \preq I_l$. 
\end{proof}

Therefore, we can shrink the charts $C_I^\bullet$ provided by Proposition \ref{prop101} to subcharts $C_I^\square$ whose footprints are $F_I^\square$. The shrinkings induce shrunk coordinate changes $T_{JI}^\square$ from $T_{JI}^\bullet$. If for any $I \in {\mc I}$, the shrunk $F_I^\square = \emptyset$, then we just delete $C_I^\bullet$ from the collection of charts and redefine the set ${\mc I}$. Then by Definition \ref{defn_atlas}, the collection 
\beqn
{\mc A}^\square:= \Big( \{ C_I^\square\ |\ I \in {\mc I}\},\ \{T_{JI}^\square\ |\ I \preq J \in {\mc I} \} \Big)
\eeqn
form a virtual orbifold atlas on $\ov{\mc M}$ in the sense of Definition \ref{defn_atlas}. 

Lastly, from ${\mc A}^\square$ one can obtain a good coordinate system. More precisely, choose precompact open subsets
\beqn
F_I \sqsubset F_I^\square,\ \forall I \in {\mc I}
\eeqn
which still cover $\ov{\mc M}$. Then by Theorem \ref{thm618}, there exists a precompact shrinking ${\mc A}$ of ${\mc A}^\square$ which is a good coordinate system (see Definition \ref{defn_gcs}), denoted by 
\beqn
{\mc A} = \Big( \{ C_I\ |\ I \in {\mc I}\},\ \{T_{JI}\ |\ I \preq J \in {\mc I}\} \Big).
\eeqn

\subsection{The virtual fundamental class}

We first notice that there is a strongly continuous map on the good coordinate system we constructed which extends the evaluation map and the forgetful map. Indeed, as in each chart $C_I$, points in $U_I$ are represented by genuine gauged maps (with extra data) over $r$-spin curves, there are continuous maps
\beqn
e_I: U_I \to \ov{\mc M}_\Gamma \times X^n.
\eeqn
These maps are compatible under coordinate changes, hence induce a strongly continuous map (see Definition \ref{defn_strong_continuous})
\beqn
\mf{e}: {\mc A} \to \ov{\mc M}_\Gamma \times X^n \subset \ov{\mc M}{}_{g,n}\times X^n.
\eeqn

\begin{lemma}
The map $\mf{e}$ is weakly submersive.
\end{lemma}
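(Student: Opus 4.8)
The plan is to verify Definition \ref{defn_strong_continuous} one chart at a time: since compatibility with the coordinate changes is automatic from $e_J\circ\phi_{JI}=e_I|_{U_{JI}}$, it suffices to show that each $e_I\colon U_I\to \ov{\mc M}_\Gamma\times X^n$ is a topological submersion (equivalently, transverse to every point of the smooth orbifold $\ov{\mc M}_\Gamma\times X^n$). By the gluing description underlying the construction of the atlas (Corollary \ref{cor917}, Proposition \ref{prop916}, and Proposition~\ref{prop101}), $U_I$ is a $\Gammait_I$-quotient of an open piece of the $I$-thickened moduli space which, via the gluing theorem, is modelled on an open subset of a product ${\mc V}_{{\rm map}}\times{\mc V}_{{\rm def}}\times{\mc V}_{{\rm res}}$ cut out by the transversal constraint ${\bf y}_\alpha = u^{\rm thin}\pitchfork H_\alpha$; and since for $I\preceq J$ the chart $C_J$ is built from $C_I$ by enlarging the obstruction space only, it is enough to treat a chart coming from a single thickening datum $\alpha$.

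First I would split $e_\alpha=({\rm st}_\alpha,{\rm ev}_\alpha)$ and analyze the two factors. The factor ${\rm st}_\alpha$ is, after forgetting the $r$-spin structure, contracting the soliton (unstable rational) components, and forgetting the unordered auxiliary markings ${\bf y}_\alpha$, the classifying map of the universal unfolding $\pi_\alpha\colon{\mc U}_\alpha\to{\mc V}_\alpha$ of the underlying coarse curve; by the Robbin--Salamon theory recalled in Section \ref{section2} this realizes an orbifold chart of $\ov{\mc M}_{g,n}$, and the induced map on the deformation and gluing parameters $({\mc V}_{\alpha,{\rm def}}\times{\mc V}_{\alpha,{\rm res}})\to\ov{\mc M}_\Gamma$ is a submersion onto a neighborhood of ${\rm st}(p_\alpha)$ — the gluing directions that resolve nodes invisible after stabilization (those involving soliton components) and the within-stratum deformation directions are all tangent to $\ov{\mc M}_\Gamma$, and together they span its tangent space. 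The factor ${\rm ev}_\alpha$ depends only on the asymptotic limits $x_a\in{\rm Crit}\,W\cap\mu^{-1}(0)$ of the glued solution at the $n$ punctures (which exist and vary continuously by Corollary \ref{cor45} and Theorem \ref{thm46}); in the Banach manifold of Subsection \ref{subsection81} these limits are exactly the $\bigoplus_a T_{x_a}({\rm Crit}\,W\cap\mu^{-1}(0))$ summand of $T_{{\bm v}_\alpha}{\mc B}_\alpha$, and after the $K$-quotient they give a local chart of $X^n$.

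The crux is then to show that these two families of directions can be varied independently, i.e.\ that the linearization of $e_\alpha$ at each point is surjective onto $T\ov{\mc M}_\Gamma\oplus TX^n$. For the ${\rm ev}_\alpha$-part this uses the specific choice of obstruction space: ${\bm E}_\alpha$ was required only to be transverse to the image of $D_\alpha$ restricted to $T^\circ_{{\bm v}_\alpha}{\mc B}_\alpha$, the infinitesimal deformations vanishing at all punctures and nodes (Definition \ref{defn93}(f) and the construction in Lemma \ref{lemma97}), and ${\bm E}_\alpha$ is supported away from the special points; consequently ${\mc V}_{\alpha,{\rm map}}=\ker D_\alpha$ of \eqref{eqn93} still surjects onto the asymptotic-evaluation quotient $\bigoplus_a T_{x_a}({\rm Crit}\,W\cap\mu^{-1}(0))$ — given a prescribed variation of the $x_a$, extend it to an element of $T_{{\bm v}_\alpha}{\mc B}_\alpha$, correct by an element of ${\bm E}_\alpha$ using surjectivity of $D_\alpha$ in \eqref{eqn93} and by an element of $T^\circ_{{\bm v}_\alpha}{\mc B}_\alpha$ using that transversality, obtaining a genuine element of $\ker D_\alpha$ with the prescribed boundary behaviour. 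The ${\rm st}_\alpha$-part is complementary and, in the gluing model, spanned by the $({\mc V}_{\alpha,{\rm def}},{\mc V}_{\alpha,{\rm res}})$-directions, which are linearly independent of the ${\mc V}_{\alpha,{\rm map}}$-directions. Imposing ${\bf y}_\alpha=u^{\rm thin}\pitchfork H_\alpha$ only reparametrizes the $l_\alpha$ auxiliary markings, which lie on soliton components contracted by ${\rm st}_\alpha$ and are invisible to ${\rm ev}_\alpha$; here the transversality of $H_\alpha$ to the span of $\partial_s u+{\mc X}_\phi$ and $\partial_t u+{\mc X}_\psi$ at ${\bf y}_\alpha$ (Definition \ref{defn93}(g)) guarantees the constraint is clean, so it does not disturb the submersion property. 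This last bookkeeping is where I expect the argument to be most delicate, and I would carry it out using the same implicit-function-theorem set-up as in Lemma \ref{lemma913}; surjectivity of the linearization together with the local product model produced by the gluing theorem then upgrades to the conclusion that each $e_I$ is a topological submersion, so $\mf{e}$ is weakly submersive.
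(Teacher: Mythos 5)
Your argument is correct and follows essentially the same route as the paper's (very brief) proof: the key point in both is that ${\bm E}_\alpha$ was chosen transverse to $D_\alpha(T^\circ_{{\bm v}_\alpha}{\mc B}_\alpha)$, so the kernel of the thickened linearization still surjects onto the asymptotic evaluation directions, and the implicit function theorem together with the gluing construction then gives transversality of each $e_I$ to every point. Your additional bookkeeping for the stabilization factor and the $H_\alpha$-constraint merely spells out what the paper leaves implicit in the phrase ``by the implicit function theorem and the gluing construction.''
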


\begin{proof}
We chose the obstruction spaces such that for each $I$, each point $p\in \psi_I^{-1}(0)$, and each representing solution ${\bm v}$ of $p$, the linearized operator ${\mc D}_{\bm v}$, when restricted to deformations that vanish at markings and nodes, is transverse to the obstruction space ${\bm E}_I$. Therefore, by the implicit function theorem and the gluing construction, the evaluation map on this chart $e_I: U_I \to \ov{\mc M}{}_{g,n} \times X^n$ is transverse to every point.
\end{proof}

Then by the discussion of Subsection \ref{subsection67}, one obtains well-defined virtual fundamental classes 
\beqn
[\ov{\mc M}_{\sf \Gamma}]^{\rm vir} \in H_*( \ov{\mc M}_\Gamma \times X^n; {\mb Q}).
\eeqn

Before verifying that this class satisfies the expected properties, we prove that the virtual cycle is independent of the various choices we made during the construction. The main reason for this independence is that the good coordinate systems coming from any two different systems of choices can be put together to a good coordinate system over the space $\ov{\mc M}_{\sf \Gamma} \times [0,1]$. 

First, from the general argument, if we take a shrinking of the good coordinate system (i.e., a ``reduction'' in the sense of McDuff--Wehrheim), then the resulting virtual cycle remains in the same homology class. 

Then we compare two different systems of choices made during the inductive construction of this section. One can stratify $\ov{\mc M}_{\sf \Gamma} \times [0,1]$ by 
\beq\label{eqn1012}
{\mc M}_{\sf \Pi} \times \{0\},\ \ \ \ {\mc M}_{\sf \Pi} \times \{1\},\ \ \ \ {\mc M}_{\sf\Pi} \times (0,1) 
\eeq
Suppose we have two systems of structure as stated in Proposition \ref{prop101}, say basic charts 
\begin{align*}
&\ C_{p_i}^{[0]},\ i = 1, \ldots, N_0,\ &\ C_{q_j}^{[1]},\ j=1, \ldots, N_1,
\end{align*}
sum charts 
\begin{align*}
&\ C_{I_0}^{[0]},\ I_0 \in {\mc I}_0: = 2^{\{p_1, \ldots, p_{N_0}\}} \setminus \{\emptyset\},\ &\ C_{I_1}^{[1]},\ I_1 \in {\mc I}_1:= 2^{\{q_1, \ldots, q_{N_1}\}}\setminus \{ \emptyset \}
\end{align*}
and weak coordinate changes. Then by multiplying them with $[0, \frac{1}{4})$ (resp. $(\frac{3}{4}, 1]$), one obtains a system of basic charts indexed by elements in $\{ p_1, \ldots, p_{N_0}\} \sqcup \{ q_1, \ldots, q_{N_1}\}$. They can be viewed as a system of basic charts that cover the first two types of strata of \eqref{eqn1012}. Then by repeating the previous inductive construction with $\ov{\mc M}_{\sf \Gamma}$ replaced by $\ov{\mc M}_{\sf \Gamma} \times [0,1]$. One can complete the induction process to construct a weak virtual orbifold atlas $\tilde {\mc A}$ over the product $\ov{\mc M}_{\sf \Gamma} \times [0,1]$, or more precisely, prove an analogue of Proposition \ref{prop101} for $\ov{\mc M}_{\sf\Gamma} \times [0,1]$. Then from the weak atlas $\tilde {\mc A}$ one can shrink it to make a good coordinate system, such that when restrict to the two boundary strata, gives shrunk good coordinate systems of $\ov{\mc M}_{\sf \Gamma}$. Then by the same argument as \cite[Appendix A.8]{Tian_Xu_2021}, the intersection numbers coming from perturbations on the two sides are equal. Hence the virtual fundamental classes defined from the two good coordinate systems are the same. There were other choices that we made for the construction, for example, the choice on a Riemannian metric in order to define the gauge fixing condition. Different such choices can all be compared by using the cobordism argument. We leave such comparison to the reader.

\section{Properties of the Virtual Cycles}\label{section11}

In the last section we verify the properties of the virtual fundamental cycles of the moduli spaces of gauged Witten equation listed in Theorem \ref{thm71}.

\subsection{The dimension property}

The {\bf (Dimension)} property of the virtual cycles follows directly from the construction and the index calculation given in Section \ref{section8}.

\subsection{Disconnected graphs}

It suffices to consider the case that $\sf \Gamma = {\sf \Gamma}_1 \sqcup {\sf \Gamma}_2$. On the other hand, this property is not as easy as it seems, as  there is no obvious notion of ``products'' of good coordinate systems or virtual orbifold atlases. Here we prove the following proposition about the product of two moduli spaces. Abbreviate
\beqn
P = X^{n_1}\times \ov{\mc M}_{\Gamma_1},\ Q = X^{n_2} \times \ov{\mc M}_{\Gamma_2},\ Z = P \times Q = X^{n_1+ n_2} \times \ov{\mc M}_{\Gamma_1\sqcup \Gamma_2}
\eeqn
and 
\beqn
{\mc X}_P = \ov{\mc M}_{{\sf \Gamma}_1},\ \  {\mc X}_Q = \ov{\mc M}_{{\sf \Gamma}_2},\ \  {\mc X}_Z =  {\mc X}_P \times {\mc X}_Q.
\eeqn

\begin{prop}\label{prop111}
There exist the following objects.
\begin{enumerate}

\item Good coordinate systems ${\mc A}_P$, ${\mc A}_Q$, and ${\mc A}_Z$ on ${\mc X}_P$, ${\mc X}_Q$, and ${\mc X}_Z$ respectively. Let the charts be indexed by ${\mc I}$, ${\mc J}$ and ${\mc K}$ respectively.

\item Strongly continuous maps
\beqn
\eev_P: {\mc A}_P \to P,\ \eev_Q: {\mc A}_Q \to Q,\ \eev_Z: {\mc A}_Z \to Z.
\eeqn

\item An injective map 
\beqn
\iota: {\mc K} \to {\mc I} \times {\mc J}.
\eeqn

\item For each $K \in {\mc K}$, a precompact open embedding 
\beqn
\eta_K: C_{Z, K}  \hookrightarrow C_{P, I} \times C_{Q, J},\ {\rm where}\ \iota(K) = (I, J) \in {\mc I} \times {\mc J}.
\eeqn

\end{enumerate}
They satisfy the following properties.

\begin{enumerate}

\item If $\iota(K) = (I, J)$, $\iota(K') = (I', J')$ with $K \preq K'$, then $I \preq I'$ and $J \preq J'$.

\item For each $K$ with $\iota(K) = (I, J)$, if we view $C_{Z, K}$ as a precompact shrinking of $C_{P, I} \times C_{Q, J}$ via the open embedding $\eta_K$ then for all pairs $K \preq K'$, the coordinate changes $T_{K' K}$ are the induced coordinate changes.

\item The evaluation maps are compatible. Namely, if $\iota(K) = (I, J)$, then
\beqn
\eev_{Z,K} = ( \eev_{P, I} \times \eev_{Q, J}) \circ \eta_K.
\eeqn
\end{enumerate}
\end{prop}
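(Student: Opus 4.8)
The plan is to build the three good coordinate systems hand in hand, following the inductive construction of Section \ref{section10} but carried out \emph{simultaneously} for ${\mc X}_P$, ${\mc X}_Q$ and ${\mc X}_Z = {\mc X}_P \times {\mc X}_Q$. The starting observation is that a thickening datum $\gamma$ centered at a point $(p,q) \in {\mc X}_Z$ can always be taken to be a ``product'' of a thickening datum $\alpha$ at $p$ and a thickening datum $\beta$ at $q$: the $r$-spin curve is the disjoint union ${\mc C}_\alpha \sqcup {\mc C}_\beta$, the solution is $({\bm v}_\alpha, {\bm v}_\beta)$, the stabilizing points, resolution data, obstruction space ${\bm E}_\alpha \oplus {\bm E}_\beta$, and auxiliary hypersurface are the disjoint unions, and $\Gammait_\gamma = \Gammait_\alpha \times \Gammait_\beta$. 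For such a product datum the $\gamma$-thickened moduli space ${\mc M}_\gamma$ is literally the product ${\mc M}_\alpha \times {\mc M}_\beta$ (a disconnected domain curve carries no matching conditions between the two pieces, and the perturbed gauged Witten equation splits), so the gluing construction of Proposition \ref{prop916} applied to $\gamma$ produces exactly the product of the charts produced by that proposition applied to $\alpha$ and to $\beta$. This gives the local building blocks together with the open embeddings $\eta_K$.

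Concretely I would first choose, for a finite set of points $p_i \in {\mc X}_P$ whose footprints cover ${\mc X}_P$, thickening data $\alpha_i$ and the associated charts $C_{P,p_i}$ via Corollary \ref{cor917}; likewise points $q_j \in {\mc X}_Q$ with data $\beta_j$ and charts $C_{Q,q_j}$; then for ${\mc X}_Z$ I use precisely the \emph{product} data $\gamma_{ij} := \alpha_i \times \beta_j$ at the points $(p_i, q_j)$, which by the previous paragraph yields $C_{Z,(i,j)} = C_{P,p_i} \times C_{Q,q_j}$ (up to the precompact shrinking that Corollary \ref{cor917} allows). This already forces the map $\iota$ on ``basic'' charts to be the inclusion $\{(i,j)\} \mapsto (\{i\},\{j\})$, compatible with the product structure. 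Then I run the inductive procedure of Subsection ``The inductive construction of charts'' three times in lockstep: at the step dealing with stratum ${\sf\Gamma}'$ of ${\mc X}_Z$, I note every stratum of ${\mc X}_Z$ is a product ${\sf\Gamma}'_1 \times {\sf\Gamma}'_2$ of strata of ${\mc X}_P$ and ${\mc X}_Q$, so the ``new'' thickening data I add over the uncovered compact set $Y_{k+1} \subset {\mc X}_Z$ can again be chosen to be product data over the (already-chosen or newly-chosen) data on the $P$- and $Q$-sides. The index set ${\mc K}$ is then a subset of ${\mc I} \times {\mc J}$ (only those pairs $(I,J)$ with $F_I^{\mc X_P} \cap \text{(image in }{\mc X}_P) $ nonempty actually occur), and $\iota$ is this inclusion; the partial order on ${\mc K}$ is the restriction of the product partial order, which gives property (a) for free. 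The sum charts $C_{Z,K}$ for $K = (I,J)$ are the $K$-thickened moduli spaces ${\mc M}_I \times {\mc M}_J$, again honest products, and the coordinate changes $T_{K'K}$ are the products of $T_{I'I}$ and $T_{J'J}$ — this is property (b). The strongly continuous evaluation maps factor as products because on a disconnected-domain solution the evaluation at markings of ${\mc C}_\alpha$ (resp. ${\mc C}_\beta$) and the stabilization of the corresponding curve component only see that component; this is property (c). Finally, passing from the virtual orbifold atlas to a good coordinate system via Theorem \ref{thm618} and performing compatible precompact shrinkings (choosing the shrunk footprints on the $Z$-side to be products of the shrunk footprints on the $P$- and $Q$-sides) preserves all of (a)--(c), since Theorem \ref{thm618} only shrinks domains.

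The main obstacle I anticipate is not any single deep estimate but the \emph{bookkeeping of the lockstep induction}: one must verify that at every stage the ``uncovered set'' $Y_{k+1} \subset {\mc X}_Z$ can be covered by footprints of product data, that the shrinkings $F^\bullet \sqsubset F^\bt$, $F^\circ \sqsubset F^\bullet$ and the intermediate ones in \eqref{eqn1011} can be chosen multiplicatively on the $Z$-side, and that the Overlapping-Condition shrinking (Lemma \ref{lemma105}) respects products. Each of these is elementary once one observes that a product of precompact shrinkings is a precompact shrinking of the product, and that the cover-by-footprints condition for $Z$ follows from tensoring the covers for $P$ and $Q$; but it must be stated carefully so that the ordering of ${\mc K}$ used in Lemma \ref{lemma105} is compatible with the orderings of ${\mc I}$ and ${\mc J}$. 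A secondary technical point is that Corollary \ref{cor917} and Proposition \ref{prop916} only assert the chart is a homeomorphism onto an \emph{open} neighborhood, so $C_{Z,(i,j)}$ is a priori only an open subchart of $C_{P,p_i} \times C_{Q,q_j}$; I resolve this by shrinking the $P$- and $Q$-charts slightly so that their product lands inside the chosen $Z$-chart, which is exactly what the ``precompact open embedding'' $\eta_K$ in the statement allows. Once Proposition \ref{prop111} is in place, the {\bf (Disconnected Graph)} clause of Theorem \ref{thm71} follows immediately: a transverse product perturbation on ${\mc A}_Z$ exists (take $t_{Z,K} = t_{P,I} \oplus t_{Q,J}$), its zero locus is the product of the zero loci, and pushing forward under $\eev_Z = \eev_P \times \eev_Q$ gives the K\"unneth tensor product of the two virtual classes.
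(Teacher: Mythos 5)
There is a genuine gap, and it sits exactly where you declare the matter ``elementary'': the {\bf (Overlapping Condition)} for the product system. Taking product charts $C_{Z,K}=C_{P,I}\times C_{Q,J}$ with product footprints is fine (the paper does the same; a disconnected domain carries no matching condition, so the thickened moduli space and the gluing do split), but if the $P$-side footprints and the $Q$-side footprints each satisfy the Overlapping Condition, their products need not: take $K_1=I\times J'$ and $K_2=I'\times J$ with $I\prec I'$ and $J\prec J'$, and suppose $\ov{F_{P,I}}\cap\ov{F_{P,I'}}\neq\emptyset$ and $\ov{F_{Q,J}}\cap\ov{F_{Q,J'}}\neq\emptyset$. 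Then $\ov{F_{K_1}}\cap\ov{F_{K_2}}\neq\emptyset$, yet $K_1$ and $K_2$ are incomparable in the product order — and property (a) of the proposition forces the order on ${\mc K}$ to be no stronger than the product order pulled back by $\iota$, so you cannot repair this by decreeing extra relations. Consequently no choice of ``multiplicative'' precompact shrinkings (products of shrinkings on the two factors, however carefully ordered) can produce a virtual orbifold atlas on ${\mc X}_Z$, and Theorem \ref{thm618} cannot even be invoked, since its input must already satisfy the Overlapping Condition.

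Resolving this is the actual content of the paper's proof, and it is where your symmetric ``lockstep'' induction diverges from what is needed. The paper explicitly warns that the roles of $P$ and $Q$ are not symmetric: it fixes an ordering $I_1,\ldots,I_m$ of ${\mc I}$ and chooses, for each level $k\in\{1,\ldots,m\}$ of the $P$-side shrinking, a separate nested family of shrinkings $G_{Q,j,l}^k\sqsubset F_{Q,j,l}^k$ on the $Q$-side, so that the $Q$-footprint attached to the product chart, $F_{Q,J_l}^{k,\bullet}$, depends on the $P$-index $k$. The nesting over $k$ is what rules out the mixed incomparable intersections (Lemma \ref{lemma114}): when $k_1<k_2$, the inclusions $\ov{F_{Q,j,l}^{k_1}}\subset G_{Q,j,l'}^{k_2}$ force $J_{l_1}\preq J_{l_2}$, so comparability in the product order is restored. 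A by-product is that the good coordinate system on ${\mc X}_Q$ is indexed not by ${\mc J}$ but by $\tilde{\mc J}=\{1,\ldots,m\}\times{\mc J}$ (Lemma \ref{lemma115}), and $\iota$ sends $I_k\times J$ to $(I_k,(k,J))$ — injective but not the naive inclusion your construction would give. Your proposal, as written, asserts that ordering compatibility plus products of shrinkings suffice; that assertion is false, and without a device of this kind (or an equivalent one) the atlas ${\mc A}_Z$ you build is not a good coordinate system, so the subsequent product-perturbation argument has nothing to stand on. The rest of your outline (product thickening data, $\eta_K$, compatibility of evaluation maps, and the deduction of the {\bf (Disconnected Graph)} property once Proposition \ref{prop111} is available) is consistent with the paper.
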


Then we can construct a perturbation on (a shrinking of) ${\mc A}_Z$ which is of ``product type.'' Indeed, suppose ${\mf t}_P$ and ${\mf t}_Q$ be transverse perturbations of ${\mc A}_P$ and ${\mc A}_Q$ respectively such that the perturbed zero loci are compact oriented weighted branched manifolds contained in the virtual neighborhoods $|{\mc A}_P |$ and $|{\mc A}_Q |$. The perturbations consist of chartwise multi-valued sections 
\begin{align*}
&\ t_{P, I}: U_{P, I} \overset{m}{\to} E_{P, I},\ &\ t_{Q, J}: U_{Q, J} \overset{m}{\to} E_{Q, J}.
\end{align*}
Via the embeddings $\eta_K$, one can restrict the product of ${\mf t}_P $ and ${\mf t}_Q $ to a perturbation ${\mf t}_Z$ defined by 
\beqn
t_{Z,K}: U_{Z,K} \overset{m}{\to} E_{Z, K},\ t_{Z, K} = t_{P, I} \boxtimes t_{Q, J} \circ \eta_K,\ {\rm where}\ \iota(K) = (I, J).
\eeqn
The collection obvious satisfies the compatibility condition with respect to coordinate changes, hence a valid perturbation of ${\mf t}_Z$ on ${\mc A}_Z$. We call it the {\it induced} perturbation, or the {\it product} perturbation. Moreover, if both ${\mf t}_P$ and ${\mf t}_Q$ are transverse, so is ${\mf t}_Z$. However, it is not obvious whether the perturbed zero locus is compact or not. 

\begin{prop}\label{prop112}
There exist transverse perturbations ${\mf t}_P$, ${\mf t}_Q$ of ${\mc A}_P$ and ${\mc A}_Q$ such that via the inclusion $|{\mc A}_Z| \hookrightarrow |{\mc A}_P | \times |{\mc A}_Q |$, one has (as topological spaces equipped with the quotient topology)
\beq\label{eqn111}
|\tilde {\mf s}_Z^{-1}(0)| = |\tilde {\mf s}_P^{-1}(0)| \times |\tilde {\mf s}_Q^{-1}(0)|.
\eeq
\end{prop}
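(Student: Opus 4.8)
The plan is to obtain $\mf t_P$ and $\mf t_Q$ from Theorem~\ref{thm624}, but with their chartwise $C^0$-sizes so tightly controlled that the induced product perturbation $\mf t_Z$ on $\mc A_Z$ is forced to vanish exactly on the product of the two zero loci. First I would fix all auxiliary data compatibly. By item~(b) of Proposition~\ref{prop111}, each chart $C_{Z,K}$ is, via $\eta_K$, a precompact shrinking of $C_{P,I}\times C_{Q,J}$ and the coordinate changes of $\mc A_Z$ are the ones induced from those of $\mc A_P$ and $\mc A_Q$; hence, starting from normal thickenings $\mc N_P$ of $\mc A_P$ and $\mc N_Q$ of $\mc A_Q$ (Definition~\ref{defn621}), the restrictions along the $\eta_K$ of the products of the neighbourhoods and subbundles comprising $\mc N_P$ and $\mc N_Q$ form a normal thickening $\mc N_Z$ of $\mc A_Z$, and any $\mc N_P$-normal $\mf t_P$ together with any $\mc N_Q$-normal $\mf t_Q$ yields, after restriction, an $\mc N_Z$-normal perturbation $\mf t_Z$. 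I would also fix precompact shrinkings $\mc A_P'\sqsubset\mc A_P$, $\mc A_Q'\sqsubset\mc A_Q$, so that the induced precompact shrinking of $\mc A_P\times\mc A_Q$ restricts to a precompact shrinking $\mc A_Z'\sqsubset\mc A_Z$.

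The geometric heart of the proof is a single neighbourhood estimate. The inclusion $|\mc A_Z|\hookrightarrow|\mc A_P|\times|\mc A_Q|$ is an open embedding: since $\mc A_P$, $\mc A_Q$ are good coordinate systems, the maps $U_{P,I}\to|\mc A_P|$ and $U_{Q,J}\to|\mc A_Q|$ are open embeddings, each $U_{Z,K}$ is open in $U_{P,I}\times U_{Q,J}$, and $|\mc A_Z|$ is the union of the images of the $U_{Z,K}$; write $O$ for this open image. Because the footprints $F_{Z,K}$ cover $\mc X_Z=\mc X_P\times\mc X_Q$, the compact set $\mc X_Z$ lies in $O$. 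I would then show, by a compactness argument, that there is $\epsilon>0$ with the property that for every transverse $\mc N_P$-normal $\mf t_P$ and $\mc N_Q$-normal $\mf t_Q$ whose chartwise $C^0$-norms are less than $\epsilon$ and whose perturbed zero loci $|\tilde{\mf s}_P^{-1}(0)|$, $|\tilde{\mf s}_Q^{-1}(0)|$ are compact, one has $|\tilde{\mf s}_P^{-1}(0)|\times|\tilde{\mf s}_Q^{-1}(0)|\subset O$. If this were false, there would be a sequence of such perturbations with $C^0$-norms tending to $0$ and points of the products of the zero loci lying in the closed set $|\mc A_P|\times|\mc A_Q|\setminus O$; applying the closeness estimate~\eqref{eqn64} of Theorem~\ref{thm624} chartwise over $\ov{U_{P,I}'}$ and $\ov{U_{Q,J}'}$, passing to a subsequence and extracting limits in the manner of the compactness arguments in \cite[Appendix A]{Tian_Xu_2021}, would force a limit point in $\mc X_Z$ to lie in that closed set, a contradiction.

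I would then choose $\mf t_P$ and $\mf t_Q$ by Theorem~\ref{thm624} with $C^0$-norms below this $\epsilon$ and with $|\tilde{\mf s}_P^{-1}(0)|$, $|\tilde{\mf s}_Q^{-1}(0)|$ compact, and form $\mf t_Z$. It is transverse because in each chart $\tilde s_{Z,K}$ equals, via $\eta_K$, the external direct sum of $\tilde s_{P,I}$ and $\tilde s_{Q,J}$, so a local branch of it is a product of microbundle-transverse maps to $\mb R^{k_P}$ and $\mb R^{k_Q}$, hence microbundle-transverse to the origin of $\mb R^{k_P+k_Q}$ (the normal microbundle of the origin splits as a product). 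For the identity~\eqref{eqn111}, the inclusion $\subseteq$ is immediate: $\eta_K$ carries a zero of $\tilde s_{Z,K}$ to a pair of zeros of $\tilde s_{P,I}$ and $\tilde s_{Q,J}$, and item~(b) of Proposition~\ref{prop111} shows that the equivalence relation $\curlyvee$ defining $|\mc A_Z|$ is the restriction of the product of the relations defining $|\mc A_P|$ and $|\mc A_Q|$, so one obtains a well-defined continuous injection $|\tilde{\mf s}_Z^{-1}(0)|\to|\tilde{\mf s}_P^{-1}(0)|\times|\tilde{\mf s}_Q^{-1}(0)|$; conversely, any pair $(|x_P|,|x_Q|)$ of zeros lies in $O$ by the neighbourhood estimate, hence is represented by some $x\in U_{Z,K}$, and the product form of the section forces $\tilde s_{Z,K}(x)=0$. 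The map is thus a continuous bijection; since $\mf t_Z$ is $\mc N_Z$-normal and small it satisfies the smallness condition on $\mc A_Z'$, so by the general theory of \cite[Appendix A]{Tian_Xu_2021} its zero locus $|\tilde{\mf s}_Z^{-1}(0)|$ with the quotient topology is compact and Hausdorff, and a continuous bijection between compact Hausdorff spaces is a homeomorphism. This gives \eqref{eqn111} both as an equality of subsets of $|\mc A_P|\times|\mc A_Q|$ and as an identification of topological spaces.

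The main obstacle is the neighbourhood estimate of the second paragraph: producing a single $\epsilon$ that works uniformly for all sufficiently small perturbations, and being careful about the difference between the quotient topology on $|\mc A_Z|$ and the subspace topology it carries as a subset of $|\mc A_P|\times|\mc A_Q|$. Everything else is bookkeeping with the product structure furnished by Proposition~\ref{prop111}.
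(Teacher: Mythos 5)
Your overall outline (small perturbations from Theorem \ref{thm624}, set-theoretic equality, then compactness plus the continuous-bijection trick) matches the paper's, but the step carrying all the weight is wrong. You reduce the containment $|\tilde{\mf s}_P^{-1}(0)|\times|\tilde{\mf s}_Q^{-1}(0)|\subseteq |\mc A_Z|$ to the claim that $|\mc A_Z|\hookrightarrow|\mc A_P|\times|\mc A_Q|$ is an \emph{open} embedding, and your uniform-$\epsilon$ ``neighbourhood estimate'' and the contradiction argument both need the complement of the image $O$ to be closed. That openness claim is false in general: the natural map $U_{P,I}\to|\mc A_P|$ is a homeomorphism onto its image, but its image is not open whenever there is a chart $U_{P,I'}$ with $I\preq I'$ of strictly larger dimension, since $\phi_{P,I'I}(U_{P,I'I})$ is a positive-codimension submanifold of $U_{P,I'}$ (codimension $=\on{rank}{\bm E}_{I'\setminus I}$). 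Consequently the image of $|\mc A_Z|$ in the product of virtual neighbourhoods is not open near points that are only representable in ``mixed'' charts, and the quotient topology on $|\mc A_Z|$ is strictly finer than the subspace topology — precisely the subtlety you flag at the end but do not resolve. A pair of perturbed zeros close to a point of ${\mc X}_Z$ can perfectly well have one factor represented only in a higher-dimensional chart $U_{P,I'}$, and nothing in your argument puts such a pair inside some $\eta_K(U_{Z,K})$.

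The missing ingredient is exactly where ${\mc N}$-normality must be used (you impose it but never exploit it): as in the paper's proof, one argues with chart representatives. One fixes the chart $K$ with $\iota(K)=(I,J)$ representing the limit point of ${\mc X}_Z$, and shows that for perturbations whose zero loci are squeezed into shrinking neighbourhoods (the paper uses sequences $U_{P,I}^k$, $U_{Q,J}^k$ with $\bigcap_k U_{P,I}^k=\ov{F_{P,I}'}$, which forces limits of perturbed zeros into ${\mc X}_P$, ${\mc X}_Q$), the nearby zeros $p_k$, $q_k$ admit representatives in $U_{P,I}$ and $U_{Q,J}$ themselves: if a representative lives in a higher chart $U_{P,I'}$, it must lie in the tubular neighbourhood $N_{P,I'I}$, and ${\mc N}_P$-normality of the perturbation forces it into $\phi_{P,I'I}(U_{P,I'I})$, hence it descends to $U_{P,I}$ (this is the ``Claim'' in the paper's proof, argued case by case according to whether $I\preq I'$ or $I'\preq I$ and the dimensions). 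Only after this does one use openness of $\eta_K(U_{Z,K})$ \emph{inside the product chart} $U_{P,I}\times U_{Q,J}$ — which is true — to conclude $(p_k,q_k)\in|\tilde{\mf s}_{Z}^{-1}(0)|$, and the same representative-level argument is what gives sequential compactness of $|\tilde{\mf s}_Z^{-1}(0)|$ in the quotient topology (it is not an off-the-shelf consequence of the general theory, whose compactness statement concerns the subspace topology). Your final homeomorphism step via compact Hausdorff spaces is fine once these points are repaired.
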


\begin{proof}
Take precompact shrinkings 
\begin{align*}
&\ F_{P, I}' \sqsubset F_{P, I},\ \forall I \in {\mc I},\ &\ F_{Q, J}' \sqsubset F_{Q, J},\ \forall J \in {\mc J}
\end{align*}
which still cover ${\mc X}_P$ and ${\mc X}_Q$ respectively. We can take two sequences of precompact shrinkings 
\beqn
\begin{split}
&\ F_{P, I} \sqsubset \cdots  \sqsubset U_{P, I}^{k+1} \sqsubset U_{P, I}^k \sqsubset \cdots \sqsubset F_{P, I},\ \forall I \in {\mc I},\\
&\ F_{Q, J} \sqsubset \cdots \sqsubset U_{Q, J}^{k+1} \sqsubset U_{Q, J}^k \sqsubset \cdots \sqsubset F_{Q, J},\ \forall J \in {\mc J},
\end{split}
\eeqn
such that
\begin{align}\label{eqn112}
&\ \bigcap_{k \geq 1} U_{P, I}^k = \ov{F_{P, I}'},\ &\ \bigcap_{k \geq 1} U_{Q, J}^k = \ov{F_{Q, J}'}.
\end{align}
Define 
\begin{align*}
&\ |{\mf U}_P^k|:= \bigsqcup_{I \in {\mc I}} U_{P, I}^k / \curlyvee,\ &\ | {\mf U}_Q^k|:= \bigsqcup_{J \in {\mc J}} U_{Q, J}^k/\curlyvee.
\end{align*}
Also choose a normal thickening ${\mc N}_P = \{ N_{P, I'I}\ |\ I \preq I'\}$ of ${\mc A}_P$ and a normal thickening ${\mc N}_Q = \{ N_{Q, J' J}\ |\ J \preq J' \}$ of ${\mc A}_Q$ (see Definition \ref{defn621}).\footnote{Since the bundles $E_{P, I}$ (resp. $E_{Q, J}$) naturally split as direct sums, a normal thickening only contains a collection of tubular neighborhoods).} By Theorem \ref{thm624}, there exist two sequence of perturbations $\tilde {\mf s}_{P, k}$ and $\tilde {\mf s}_{Q, k}$ satisfying 
\begin{align*}
&\ |\tilde {\mf s}_{P, k}^{-1}(0)| \subset |{\mf U}_P^k|,\ &\ |\tilde {\mf s}_{Q, k}^{-1}(0)| \subset |{\mf U}_Q^k|.
\end{align*}
Let the induced perturbation on ${\mc A}_Z$ be $\tilde {\mf s}_{Z, k}$. We claim that for $k$ sufficiently large, \eqref{eqn111} holds and $\tilde {\mf s}_{Z, k}^{-1}(0)$ is sequentially compact. 

First we show \eqref{eqn111} in the set-theoretic sense. Suppose it is not the case. Then there exist a subsequence (still indexed by $k$) and two sequences of points
\begin{align*}
&\ p_k \in |\tilde {\mf s}_{Q, k}^{-1}(0)|,\ &\ q_k \in |\tilde {\mf s}_{Q, k}^{-1}(0)|
\end{align*}
such that $(p_k, q_k ) \notin |\tilde {\mf s}_{Z, k}^{-1}(0)|$. By the precompactness of $|{\mf U}_P^k|$, $|{\mf U}_Q^k|$, and \eqref{eqn112}, there exist subsequences (still indexed by $k$) such that 
\begin{align*}
&\ \lim_{k \to \infty} p_k = p_\infty \in {\mc X}_P,\ &\ \lim_{k \to \infty} q_k = q_\infty \in {\mc X}_Q.
\end{align*}
Then there exist some $K\in {\mc K}$ with $\iota(K) = (I, J)$, and $u_{Z,\infty} \in U_{Z, K}$, $u_{P, \infty} \in U_{P, I}$, $u_{Q, \infty} \in U_{Q, J}$ such that
\beqn
(p_\infty, q_\infty) =  \psi_{Z, K} ( u_\infty) = ( \psi_{P, I}  (u_{P, \infty}), \psi_{Q, J} (u_{Q, \infty})),\ \ \eta_K( u_{Z,\infty} ) = (u_{P, \infty}, u_{Q, \infty}).
\eeqn

\begin{claim}
For $k$ sufficiently large, there exists $u_{P, k} \in U_{P, I}$ (resp. $u_{Q, k} \in U_{Q, J}$) which represents $p_k$ (resp. $q_k$) in the virtual neighborhood, namely
\begin{align*}
&\ \pi_P (u_{P,k}) = p_k,\ &\ \pi_Q ( u_{Q, k}) = q_k.
\end{align*}
\end{claim}

\begin{proof}[Proof of the claim] 
Since ${\mc I}$ and ${\mc J}$ are finite sets, by taking a subsequence, we may assume that there is $I' \in {\mc I}$ (resp. $J' \in {\mc J}$) such that sequence $p_k$ (resp. $q_k$) is represented by a sequence of points $u_{P, k}' \in U_{P, I'}^1$ (resp. $u_{Q, k}' \in U_{Q, J'}^1$), i.e.
\begin{align*}
&\ \pi_P ( u_{P, k}') = p_k,\ &\ \pi_Q (u_{Q , k}') = q_k.
\end{align*}
Since $U_{P, I'}^1$ and $U_{Q, J'}^1$ are precompact, by taking a further subsequence, we may assume that 
\begin{align*}
&\ u_{P, k}' \to u_{P, \infty}' \in \ov{F_{P, I'}'} \subset U_{P, I'},\ &\ u_{Q, k}' \to u_{Q, \infty}' \in \ov{F_{Q, J'}'} \subset U_{Q, J'}.
\end{align*}
Then $\psi_{P, I'}(u_{P, \infty}' ) = p_\infty$, $\psi_{Q, J'}(u_{Q, \infty}' ) = q_\infty$. Then by the {\bf (Overlapping Condition)} of the atlases ${\mc A}_P$ and ${\mc A}_Q$, we know $I \preq I'$ or $I' \preq I$ (resp. $J \preq J'$ or $J' \preq J$).

If $I' \preq I$, then $u_{P, \infty} \in U_{P, II'}$. Since $U_{P, II'} \subset U_{P, I'}$ is an open subset, the convergence $u_{P, k}' \to u_{P, \infty}'$ implies that for $k$ sufficiently large, $u_{P, k}' \in U_{P, II'}$. Hence 
\beqn
p_k = \pi_P (u_{P, k}') = \pi_P ( u_{P, k}),\ {\rm where}\ u_{P, k}:= \phi_{P, II'} (u_{P, k}').
\eeqn

On the other hand, if $I \preq I'$ and ${\rm dim} U_{P, I} = {\rm dim} U_{P, I'}$, the embedding $\phi_{P, I'I}$ is an open embedding hence is invertible. Then $u_{P, \infty} \in U_{P, I' I}$ and $u_{P, \infty}' = \phi_{P, I'I}(u_{P, \infty})$. Since $\phi_{P, I'I}^{-1}(U_{P, I'I})$ is an open subset of $U_{P, I'}$ and contains the element $u_{P, \infty}'$, for $k$ sufficiently large, there exists $u_{P, k} \in U_{P, I' I}$ with $u_{P, k}' = \phi_{P, I' I}(u_{P, k})$. Hence
\beqn
p_k  = \pi_P (u_{P, k}') = \pi_P (u_{P, k}).
\eeqn

Lastly, assume that $I  \preq I'$ but ${\rm dim} U_{P, I} < {\rm dim} U_{P, I'}$. Then the embedding $\phi_{P, I' I}$ has positive codimension. Let $N_{P, I' I} \subset U_{P, I'}$ be the tubular neighborhood given by ${\mc N}_P$. If $u_{P, k}' \notin N_{P, I' I}$, then the limit $u_{P, \infty}' \notin N_{P, I' I}$. This contradicts the fact that $p_\infty \in F_{P, I'} \cap F_{P, I}$. Hence $u_{P, k}' \in N_{P, I' I}$. We also know that the perturbations $\tilde {\mf s}_{P, k}$ are ${\mc N}_P$-normal. Hence $u_{P, k}' \in N_{P, I' I} \cap \tilde s_{P, I, k}^{-1}(0)$ implies that $u_{P, k}' \in {\rm Image}(\phi_{P, I' I})$. Hence there exists $u_{P, k} \in U_{P, I' I}$ with $u_{P, k}' = \phi_{P, I' I}(u_{P, k})$ and hence
\beqn
p_k = \pi_P (u_{P, k}). 
\eeqn
The situation for $u_{Q, k}'$ is completely the same. \end{proof}

Therefore, in the topology of the chart $U_{P, I}$ and $U_{Q, J}$, one has 
\beqn
(u_{P, k}, u_{Q, k}) \to (u_{P, \infty}, u_{Q, \infty})  \in \eta_K ( s_{Z, K, k}^{-1}(0)) \subset s_{P, I, k}^{-1}(0) \times s_{Q, I, k}^{-1}(0). 
\eeqn
Since $\eta_K (U_{Z, K})$ is an open subset of $U_{P, I} \times U_{Q, J}$, for $k$ sufficiently large, there exist unique $u_{Z, k} \in U_{Z, K}$ such that $\eta_K (u_{Z, k} ) = (u_{P, k}, u_{Q,k})$, and $u_{Z,k}$ converges to $u_{Z,\infty}$. Hence 
\beqn
\tilde s_{Z, K}(u_{Z, k}) = 0 \Longrightarrow (p_k, q_k) \in |\tilde {\mf s}_{Z, k}^{-1}(0)|.
\eeqn
This is a contradiction. Hence for $k$ sufficiently large, as sets,
\beqn
|\tilde {\mf s}_{Z, k}^{-1}(0) | = |\tilde {\mf s}_{P, k}^{-1}(0)| \times |\tilde {\mf s}_{Q, k}^{-1}(0)|.
\eeqn

Fix a large $k$ and abbreviate $\tilde {\mf s}_{P, k}= \tilde {\mf s}_P$, $\tilde {\mf s}_{Q, k} = \tilde {\mf s}_Q$, $\tilde {\mf s}_{Z, k} = \tilde {\mf s}_Z$. We prove that $|\tilde {\mf s}_Z^{-1}(0)|$ is sequentially compact in the quotient topology. Given any sequence $z_l \in |\tilde {\mf s}_Z^{-1}(0)|$ identified with a pair $(p_l, q_l) \in |\tilde {\mf s}_P^{-1}(0)| \times |\tilde {\mf s}_Q^{-1}(0)|$, by the compactness of $|\tilde {\mf s}_P^{-1}(0)| \times |\tilde {\mf s}_Q^{-1}(0)|$, we may assume a subsequence (still indexed by $l$) converges to $(p_{\infty}, q_{\infty}) \in |\tilde {\mf s}_P^{-1}(0)| \times |\tilde {\mf s}_Q^{-1}(0)| = |\tilde {\mf s}_Z^{-1}(0)|$. Then there exists $K \in {\mc K}$ with $\iota(K) = I \times J$ and $u_\infty \in U_K$, $u_{P, \infty} \in U_{P, I}$, $u_{Q, \infty} \in U_{Q, J}$ such that 
\beqn
p_\infty = \pi_{P, I}(u_{P, \infty}),\ q_\infty = \pi_{Q, J}(u_{Q, \infty}),\ \eta_K (u_\infty) = (u_{P, \infty}, u_{Q, \infty}). 
\eeqn
Similar to the proof of the previous claim, for $l$ sufficiently large, there exists $u_{P, l} \in U_{P, I}$ and $u_{Q, l} \in U_{Q, J}$ such that 
\beqn
p_l = \pi_{P, I}  (u_{P, l}),\ q_l = \pi_{Q, J}(u_{Q, l});\ u_{P, l} \to u_{P, \infty},\ u_{Q, l} \to u_{Q, \infty}. 
\eeqn
Since $\eta_K (U_{Z, K} ) \subset U_{P, I} \times U_{Q, J}$ is an open subset, for $l$ sufficiently large, $(u_{P, l}, u_{Q, l}) \in \eta_K (U_{Z, K})$. Denote $u_l = \eta_K^{-1}(u_{P, l}, u_{Q, l})$. Then $u_l \to u_\infty$. Since the map 
\beqn
s_{Z, K}^{-1}(0) \to |\tilde {\mf s}_Z^{-1}(0)|
\eeqn
is continuous we see 
\beqn
z_l = \pi_{Z, K}(u_l)  \to \pi_{Z, K}(u_\infty). 
\eeqn
Hence $|\tilde {\mf s}_Z^{-1}(0)|$ is sequentially compact. 

Lastly, when $\tilde {\mf s}_P$ and $\tilde {\mf s}_Q$ are transverse, so is $\tilde {\mf s}_Z$. Then $|\tilde {\mf s}_Z^{-1}(0)|$ is second countable. Hence sequential compactness is equivalent to compactness. Since the identity map 
\beqn
|\tilde {\mf s}_Z^{-1}(0)| \to \| \tilde {\mf s}_Z^{-1}(0)\|
\eeqn
is continuous and bijective, and the subspace topology is Hausdorff, this map is actually a homeomorphism. Hence $|\tilde {\mf s}_Z^{-1}(0)|$ is compact Hausdorff and second countable. Moreover, $|\tilde {\mf s}_P^{-1}(0)|$ and $|\tilde {\mf s}_Q^{-1}(0)|$ are also compact, Hausdorff and second countable. It is easy to see that the map $|\tilde {\mf s}_Z^{-1}(0)| \to |\tilde {\mf s}_P^{-1}(0)| \times |\tilde {\mf s}_Q^{-1}(0)|$ sends converging sequences to converging sequences, hence is continuous. Since the domain of the map is compact and the target is Hausdorff, this map is a homeomorphism.
\end{proof}

It remains to prove the equality between the virtual fundamental classes. By the K\"unneth theorem, it suffices to prove that for homology classes of $Z$ of the form $\beta_1 \otimes \beta_2 \in H_{k_1} ( P) \otimes H_{k_2}( Q)$, one has 
\beq\label{product}
[{\mc X}_Z]^{\rm vir} \cap (\beta_1 \otimes \beta_2) = ( [{\mc X}_P]^{\rm vir} \cap \beta_1) ([{\mc X}_Q]^{\rm vir} \cap \beta_2).
\eeq
Recall the definition using piecewise smooth collared cubical cycles. If $\beta_1$ and $\beta_2$ are represented by two such cycles, then their tensor product on the chain level is such a representative of $\beta_1 \otimes \beta_2$, which is a linear combination of products of a pair of cubes. The transversality condition implies that for each product of a pair of cubes, the intersection number is the product of the intersection numbers of the two factors. Therefore \eqref{product} follows and hence we proved the {\bf (Disconnected Graph)} property of Theorem \ref{thm71}.

\subsubsection{Proof of Proposition \ref{prop111}}

We first warn the reader that in the construction, the roles of $P$ and $Q$ are not symmetric. 

Recall the construction virtual orbifold atlases (see Proposition \ref{prop101}) and good coordinate systems on moduli spaces. Suppose we can construct collections of charts 
\beqn
\Big\{ C_{P, i} = (U_{P, i}, E_{P, i}, S_{P, i}, \psi_{P, i}, F_{P, i}\ |\ i = 1, \ldots, M_P \Big\},
\eeqn
\beqn
\Big\{ C_{Q, j} = (U_{Q, j}, E_{Q, j}, S_{Q, j}, \psi_{Q, j}, F_{Q, j}\ |\ j = 1, \ldots, M_Q \Big\}
\eeqn
such that all $F_{P, i}$ cover $P$ and all $F_{Q, j}$ cover $Q$. By the method of Section \ref{section10}, we have a collection of charts
\beqn
C_{P, I}^\square = ( U_{P, I}^\square, E_{P, I}^\square, S_{P, I}^\square, \psi_{P, I}^\square, F_{P, I}^\square), \ \ I \in {\mc I}: = 2^{\{1, \ldots, M_P \}} \setminus \emptyset
\eeqn
and
\beqn
C_{Q, J}^\square = ( U_{Q, J}^\square, E_{Q, J}^\square, S_{Q, J}^\square, \psi_{Q, J}^\square, F_{Q, J}^\square), \ \ J \in {\mc J}: = 2^{\{1, \ldots, M_Q \}} \setminus \emptyset.
\eeqn
Their footprints (which could be empty) are
\begin{align*}
&\ F_{P, I}^\square = \bigcap_{i \in I} F_{P, i},\ &\ F_{Q, J}^\square = \bigcap_{j \in J} F_{Q, j}.
\end{align*}
Let $T_{Q, I_2 I_1}^\square$ (resp. $T_{Q, J_2 J_1}^\square$) be the coordinate change from $C_{P, I_1}^\square$ to $C_{P, I_2}^\square$ (resp. from $C_{Q, J_1}^\square$ to $C_{Q, J_2}^\square$). We denote the two collections of data ${\mc A}_P^\square$ and ${\mc A}_Q^\square$, but keep in mind that, although the coordinate changes satisfy the {\bf (Cocycle Condition)}, they are not virtual orbifold atlases since they may not satisfy the {\bf (Overlapping Condition)}. 

Order elements of ${\mc I}$ and ${\mc J}$ as 
\begin{align*}
&\ I_1, \ldots, I_m;\ &\ J_1, \ldots, J_n
\end{align*}
such that 
\begin{align*}
&\  I_k \preq I_l \Longrightarrow k \leq l;\ &\ J_k \preq J_l' \Longrightarrow k \leq l.
\end{align*}
Choose precompact shrinkings of all $F_{P, i}$ as 
\beqn
G_{P, i, 1} \sqsubset F_{P, i, 1} \sqsubset \cdots \sqsubset G_{P, i, m} \sqsubset F_{P, i, m} = F_{P, i};
\eeqn
and 
\beqn
G_{Q, j}^1 \sqsubset F_{Q, j}^1 \sqsubset \cdots \sqsubset G_{Q, j}^m \sqsubset F_{Q, j}^m = F_{Q, j}.
\eeqn
such that 
\begin{align*}
&\ {\mc X}_P = \bigcup_{i= 1}^{M_P} G_{P, i, 1},\ &\ {\mc X}_Q = \bigcup_{j  = 1}^{M_Q} G_{Q, j}^1.
\end{align*}
Furthermore, for all $k = 1, \ldots, m$, choose precompact shrinkings as 
\beqn
G_{Q, j}^k=: G_{Q, j, 1}^k \sqsubset F_{Q, j, 1}^k \sqsubset \cdots \sqsubset G_{Q, j, n}^k  \sqsubset F_{Q, j, n}^k:= F_{Q, j}^k.
\eeqn
Then for $I = I_k \in {\mc I}$, define
\beqn
F_{P, I_k}^\bullet:= \Big( \bigcap_{i\in I_k} F_{P, i, k} \Big) \setminus \Big( \bigcup_{i \notin I_k} \ov{G_{P, i, k}} \Big);
\eeqn
for $1 \leq k \leq m$ and $J = J_l \in {\mc J}$, define
\beq\label{eqn113}
F_{Q, J_l}^{k, \bullet}:= \Big( \bigcap_{j \in J_l}  F_{Q, j, l}^k \Big) \setminus \Big( \bigcup_{j \notin J_l} \ov{G_{Q, j, l}^k} \Big).
\eeq

One has the following covering properties of the above open sets.

\begin{lemma}\label{lemma113}
One has 
\beqn
{\mc X}_P  = \bigcup_{I \in {\mc I}} F_{P, I}^\bullet;
\eeqn
and for all $k \in \{1, \ldots, m\}$, 
\beqn
{\mc X}_Q = \bigcup_{J \in {\mc J}} F_{Q, J}^{k, \bullet}.
\eeqn
\end{lemma}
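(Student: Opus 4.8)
The plan is to deduce both covering statements from one elementary combinatorial argument, the standard ``a reduction still covers its space'' computation, entirely parallel to the proof of Lemma~\ref{lemma105} (and to \cite[Lemma 5.3.1]{MW_3}). There is no analytic content here; it is bookkeeping with the nested precompact shrinkings.

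For the first assertion, fix $x\in{\mc X}_P$ and, for $k=1,\dots,m$, set
\[
A_k(x)=\{\, i \mid x\in\ov{G_{P,i,k}} \,\},\qquad B_k(x)=\{\, i \mid x\in F_{P,i,k} \,\}.
\]
The chain of precompact inclusions $G_{P,i,1}\sqsubset F_{P,i,1}\sqsubset\cdots\sqsubset G_{P,i,m}\sqsubset F_{P,i,m}=F_{P,i}$ gives $\ov{G_{P,i,k}}\subset F_{P,i,k}$ and $\ov{F_{P,i,k}}\subset G_{P,i,k+1}\subset\ov{G_{P,i,k+1}}$, hence the monotone sandwich
\[
A_1(x)\subseteq B_1(x)\subseteq A_2(x)\subseteq B_2(x)\subseteq\cdots\subseteq A_m(x)\subseteq B_m(x)
\]
of subsets of $\{1,\dots,M_P\}$. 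Since $\bigcup_i G_{P,i,1}={\mc X}_P$, the set $A_1(x)$ is nonempty, hence every $A_k(x)$ is nonempty and lies in ${\mc I}$.

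Now I would use the fixed ordering $I_1,\dots,I_m$ of ${\mc I}$, which satisfies $I_k\preq I_l\Rightarrow k\le l$; let $\pi\colon{\mc I}\to\{1,\dots,m\}$ be the inverse bijection, so $S\subseteq S'\Rightarrow\pi(S)\le\pi(S')$. Consider $g(k):=\pi(A_k(x))-k$. As $k\mapsto A_k(x)$ is nondecreasing, $\pi(A_k(x))$ is nondecreasing, so $g(k+1)\ge g(k)-1$; moreover $g(1)=\pi(A_1(x))-1\ge 0$ and $g(m)=\pi(A_m(x))-m\le 0$. Since $g$ drops by at most $1$ at each step and changes sign, $g(k)=0$ for some $k$, i.e.\ $I_k=A_k(x)$. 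For this $k$ one has $A_k(x)=I_k\subseteq B_k(x)$, so $x\in F_{P,i,k}$ for every $i\in I_k$ and $x\notin\ov{G_{P,i,k}}$ for every $i\notin I_k=A_k(x)$; by the definition of $F^\bullet_{P,I_k}$ this says exactly $x\in F^\bullet_{P,I_k}$. As $x$ was arbitrary, ${\mc X}_P=\bigcup_{I\in{\mc I}}F^\bullet_{P,I}$.

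For the second assertion I would fix $k\in\{1,\dots,m\}$ and run the identical argument for the $n$-step chain $G^k_{Q,j,1}\sqsubset F^k_{Q,j,1}\sqsubset\cdots\sqsubset G^k_{Q,j,n}\sqsubset F^k_{Q,j,n}$, the ordering $J_1,\dots,J_n$ of ${\mc J}$, and the definition \eqref{eqn113} of $F^{k,\bullet}_{Q,J_l}$. The only extra point is that the level-one sets still cover: since $G^k_{Q,j,1}=G^k_{Q,j}\supseteq G^1_{Q,j}$ (from $G^1_{Q,j}\sqsubset\cdots\sqsubset G^k_{Q,j}$) and $\bigcup_j G^1_{Q,j}={\mc X}_Q$, we get $\bigcup_j G^k_{Q,j,1}={\mc X}_Q$, and the argument applies verbatim to give ${\mc X}_Q=\bigcup_{J\in{\mc J}}F^{k,\bullet}_{Q,J}$ for each $k$. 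The only step requiring any care — the ``hardest'' part, such as it is — is organizing the monotone sandwich $A_k\subseteq B_k\subseteq A_{k+1}$ from the nested shrinkings and the observation that the integer-valued function $g$ cannot pass from nonnegative to nonpositive without hitting $0$; everything else is immediate.
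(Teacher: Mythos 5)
Your proof is correct and complete, and it is worth noting that it supplies something the paper itself never writes out: the paper's proof of Lemma \ref{lemma113} is a pointer to Lemma \ref{lemma105}, but the written proof of Lemma \ref{lemma105} establishes the {\bf (Overlapping Condition)} by contradiction, not the covering property, which is left implicit there (as it also is for the footprints $F_I^\square$ in Section \ref{section10}, where it is attributed to the McDuff--Wehrheim reduction construction). Your argument --- the monotone sandwich $A_1(x)\subseteq B_1(x)\subseteq A_2(x)\subseteq\cdots\subseteq B_m(x)$ extracted from the nested precompact shrinkings, followed by the discrete intermediate-value observation for $g(k)=\pi(A_k(x))-k$, which yields a level $k$ with $A_k(x)=I_k$ and hence $x\in F^\bullet_{P,I_k}$ --- is exactly the standard covering argument for such reductions, correctly adapted to the two features specific to this paper: the shrinkings are indexed by the position of $I$ in the chosen total order refining inclusion (so the fixed-point quantity is the position $\pi(A_k(x))$ rather than the cardinality used in the cardinality-indexed version of the construction), and on the ${\mc X}_Q$ side the shrinkings carry the extra superscript $k$, for which you rightly verify that $\bigcup_j G_{Q,j,1}^k=\bigcup_j G_{Q,j}^k\supseteq\bigcup_j G_{Q,j}^1={\mc X}_Q$ before rerunning the argument with the ordering $J_1,\dots,J_n$. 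Every hypothesis you invoke --- $\ov{G_{P,i,k}}\subset F_{P,i,k}$, $\ov{F_{P,i,k}}\subset G_{P,i,k+1}$, the covering by the level-one sets, and $I_k\preq I_l\Rightarrow k\le l$ --- is explicitly arranged in the construction preceding the lemma, so there is no gap; your write-up is in fact more informative than the paper's cross-reference.
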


\begin{proof}
The same as the proof of Lemma \ref{lemma105}.
\end{proof}

The product set ${\mc K} = {\mc I} \times {\mc J}$ is equipped with the natural product partial order. For $K = (I_k, J_l)$, define 
\beqn
F_K^\bullet = F_{I_k \times J_l}^\bullet:= F_{P, I_k}^\bullet \times F_{Q, J_l}^k \subset {\mc X}_P \times {\mc X}_Q. 
\eeqn
Lemma \ref{lemma113} implies that the collection of $F_K^\bullet$ for all $K \in {\mc K}$ is an open cover of ${\mc X}_P \times {\mc X}_Q$. 

\begin{lemma}\label{lemma114} The collection $F_K^\bullet$ satisfy the {\bf (Overlapping Condition)} of Definition \ref{defn_atlas}. Namely, for each pair $K_1, K_2 \in {\mc K}$, one has
\beqn
\ov{F_{K_1}^\bullet} \cap \ov{F_{K_2}^\bullet} \neq \emptyset \Longrightarrow K_1 \preq K_2\ {\rm or}\ K_2 \preq K_1.
\eeqn
\end{lemma}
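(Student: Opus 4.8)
The plan is to deduce the \textbf{(Overlapping Condition)} for the product footprints from the corresponding conditions for the two factors, using that in ${\mc X}_P \times {\mc X}_Q$ the closure of a finite product is the product of the closures. Recall $F_K^\bullet = F_{P, I_k}^\bullet \times F_{Q, J_l}^{k, \bullet}$ where $F_{Q,J_l}^{k,\bullet}$ is the set of \eqref{eqn113}. Write $K_1 = (I_{k_1}, J_{l_1})$, $K_2 = (I_{k_2}, J_{l_2})$, and suppose $\ov{F_{K_1}^\bullet} \cap \ov{F_{K_2}^\bullet} \neq \emptyset$. Since this set equals $\big(\ov{F_{P,I_{k_1}}^\bullet} \cap \ov{F_{P,I_{k_2}}^\bullet}\big) \times \big(\ov{F_{Q,J_{l_1}}^{k_1,\bullet}} \cap \ov{F_{Q,J_{l_2}}^{k_2,\bullet}}\big)$, both factors are nonempty, and we pick $x$ in the second factor.

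\textbf{Key steps.} First, the collection $\{F_{P,I}^\bullet\}_{I \in {\mc I}}$ is built from the nested shrinkings $G_{P,i,1} \sqsubset F_{P,i,1} \sqsubset \cdots \sqsubset F_{P,i,m} = F_{P,i}$ in exactly the same pattern as the $F_I^\square$ of Lemma \ref{lemma105}, so its proof applies verbatim and gives that $I_{k_1}$ and $I_{k_2}$ are comparable; relabel so $I_{k_1} \preq I_{k_2}$, whence $k_1 \leq k_2$. Next I would show $J_{l_1} \preq J_{l_2}$, splitting into two cases. If $k_1 = k_2$ (so $I_{k_1} = I_{k_2}$), the two $Q$-footprints sit at the same level $k := k_1$, the collection $\{F_{Q,J}^{k,\bullet}\}_{J \in {\mc J}}$ again has the pattern of Lemma \ref{lemma105}, and that lemma forces $J_{l_1}, J_{l_2}$ comparable; together with $I_{k_1} = I_{k_2}$ this gives $K_1 \preq K_2$ or $K_2 \preq K_1$. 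If $k_1 < k_2$, suppose toward a contradiction that $J_{l_1} \not\subseteq J_{l_2}$ and pick $j \in J_{l_1}\setminus J_{l_2}$. Then $x \in \ov{F_{Q,J_{l_1}}^{k_1,\bullet}} \subseteq \ov{F_{Q,j,l_1}^{k_1}} \subseteq \ov{F_{Q,j}^{k_1}}$, while $x \in \ov{F_{Q,J_{l_2}}^{k_2,\bullet}}$ is disjoint from $G_{Q,j,l_2}^{k_2}$ because $j \notin J_{l_2}$. But the chain of precompact shrinkings gives $\ov{F_{Q,j}^{k_1}} \subseteq G_{Q,j}^{k_1+1} \subseteq \cdots \subseteq G_{Q,j}^{k_2} = G_{Q,j,1}^{k_2} \subseteq G_{Q,j,l_2}^{k_2}$, contradicting the two displayed memberships. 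Hence $J_{l_1} \subseteq J_{l_2}$, i.e. $J_{l_1} \preq J_{l_2}$, and so $K_1 \preq K_2$.

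\textbf{Main obstacle.} I expect no conceptual difficulty; the only delicate point is the bookkeeping of the two-layered shrinkings, namely that the $Q$-shrinkings are refined once per level $k$ of the $P$-shrinkings. Tracking this correctly is precisely what makes the case $k_1 < k_2$ force $J_{l_1} \preq J_{l_2}$ (rather than merely "comparable"), so that the mixed possibility $I_{k_1} \preq I_{k_2}$ with $J_{l_2} \preq J_{l_1}$ cannot occur and needs no separate treatment. This is exactly the asymmetry between $P$ and $Q$ warned about before the lemma, and I would make sure to flag it explicitly in the write-up.
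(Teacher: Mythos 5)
Your proof is correct and follows essentially the same route as the paper: project to the two factors, use the Lemma \ref{lemma105}-type argument for the $P$-collection (and for the $Q$-collection when $k_1=k_2$), and in the case $k_1<k_2$ derive a contradiction from $\ov{F_{Q,j^*}^{k_1}}\subset G_{Q,j^*}^{k_1+1}\subset\cdots\subset G_{Q,j^*,l_2}^{k_2}$ against membership in $\ov{F_{Q,J_{l_2}}^{k_2,\bullet}}$, forcing $J_{l_1}\preq J_{l_2}$. Your write-up even makes explicit the inclusion chain that the paper only sketches.
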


\begin{proof}
Suppose
\begin{align*}
&\ K_1 = I_{k_1} \times J_{l_1},\ &\ K_2 = I_{k_2} \times J_{l_2}.
\end{align*}
Suppose $(p, q) \in \ov{F_{K_1}^\bullet} \cap \ov{F_{K_2}^\bullet}$. Then $p \in \ov{F_{P, I_{k_1}}^\bullet} \cap \ov{F_{P, I_{k_2}}^\bullet}$. Then by the {\bf (Overlapping Condition)} of the collection $F_{P, I}^\bullet$, either $I_{k_1} \preq I_{k_2}$ or $I_{k_2} \preq I_{k_1}$. Without loss of generality, assume the former is true. Then $k_1 \leq k_2$. We then need to prove that either $J_{l_1} \preq J_{l_2}$, or $k_1 = k_2$, and in the latter case, either $J_{l_1} \preq J_{l_2}$ or $J_{l_2} \preq J_{l_1}$. 

If $k_1 = k_2$, then one has 
\beqn
q \in \ov{F_{Q, J_{l_1}}^{k_1, \bullet}} \cap \ov{F_{Q, J_{l_2}}^{k_1, \bullet}}.
\eeqn
Then it is similar to the proof of Lemma \ref{lemma105}, that either $J_{l_1} \preq J_{l_2}$ or $J_{l_2} \preq J_{l_1}$. It remains to consider the case that $k_1 < k_2$. Suppose in this case $J_{l_1} \preq J_{l_2}$ does not hold. Then there exists $j^* \in J_{l_1} \setminus J_{l_2}$. One one hand, one has 
\beqn
q \in \ov{ \bigcap_{j  \in  J_{l_1} } F_{Q, j, l_1}^{k_1}} \subset \ov{ F_{Q, j^*, l_1}^{k_1}} \subset G_{Q, j^*, l_2}^{k_2}.
\eeqn
The last inclusion follows from the property of $G_{Q, j, l}^k$ and $F_{Q, j, l}^k$ and the fact that $k_1 < k_2$. On the other hand, one has
\beqn
q  \in \ov{ {\mc X}_Q \setminus \bigcup_{j \notin J_{l_2}}  \ov{G_{Q, j, l_2}^{k_2}}} \subset {\mc X}_Q \setminus G_{Q, j^*, l_2}^{k_2}.
\eeqn
This is a contradiction. Hence $J_{l_1}  \preq J_{l_2}$. 
\end{proof}

Now we define 
\beqn
\tilde {\mc J}:= \Big\{ (k, J)\ |\ k \in \{1, \ldots, m\},\ J \in {\mc J} \Big\}.
\eeqn
Define the partial order on $\tilde {\mc J}$ by $(k, J) \preq (l, J') \Longleftrightarrow k \leq l,\ J \preq J'$. The above construction also provides a covering of ${\mc X}_Q$ by 
\beqn
F_{Q, (k, J)}^\bullet:= F_{Q, J}^{k, \bullet},\ (k, J) \in \tilde J.
\eeqn
Recall the latter is defined by \eqref{eqn113}. Clearly these open sets cover $Q$. Moreover, this open cover satisfies the {\bf (Overlapping Condition)}.

\begin{lemma}\label{lemma115}
The open cover $\{ F_{Q, (k,J)}^\bullet\ |\ (k,J) \in {\mc J}\}$ of ${\mc X}_Q$ satisfies {\bf (Overlapping Condition)} of Definition \ref{defn_atlas}.
\end{lemma}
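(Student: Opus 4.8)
The plan is to mimic the proof of Lemma \ref{lemma114} essentially line for line, since $\tilde{\mc J}$ with its partial order is exactly the ``$Q$-factor'' of ${\mc K} = {\mc I}\times{\mc J}$ and the sets $F_{Q,(k,J)}^\bullet = F_{Q,J}^{k,\bullet}$ are precisely the second coordinates of the $F_K^\bullet$ appearing there. Concretely, I would take two indices $(k_1,J_{l_1}),(k_2,J_{l_2})\in\tilde{\mc J}$ with $\ov{F_{Q,J_{l_1}}^{k_1,\bullet}}\cap\ov{F_{Q,J_{l_2}}^{k_2,\bullet}}\neq\emptyset$, fix a point $q$ in the intersection, and, relabelling if necessary, assume $k_1\le k_2$. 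The goal is then to conclude that $(k_1,J_{l_1})$ and $(k_2,J_{l_2})$ are comparable in $\tilde{\mc J}$; by definition of the order this means either $k_1<k_2$ together with $J_{l_1}\preq J_{l_2}$, or $k_1=k_2$ with $J_{l_1},J_{l_2}$ comparable in ${\mc J}$.

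First I would dispose of the case $k_1=k_2=:k$. Here $q$ lies in $\ov{F_{Q,J_{l_1}}^{k,\bullet}}\cap\ov{F_{Q,J_{l_2}}^{k,\bullet}}$, and with the superscript $k$ held fixed the sets $F_{Q,J}^{k,\bullet}$ of \eqref{eqn113} have exactly the shape of the footprints in Lemma \ref{lemma105} (with the index $l$ playing the role of the ordering index there). Running that argument verbatim --- if $j^*\in J_{l_1}\setminus J_{l_2}$ then $q\in\ov{F_{Q,j^*,l_1}^{k}}\subset G_{Q,j^*,l_2}^{k}$ while simultaneously $q\in{\mc X}_Q\setminus G_{Q,j^*,l_2}^{k}$, a contradiction, and symmetrically with $l_1,l_2$ swapped --- yields $J_{l_1}\preq J_{l_2}$ or $J_{l_2}\preq J_{l_1}$, hence comparability in $\tilde{\mc J}$.

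The main case is $k_1<k_2$, where I claim $J_{l_1}\preq J_{l_2}$ directly. Suppose not and pick $j^*\in J_{l_1}\setminus J_{l_2}$. From $q\in\ov{F_{Q,J_{l_1}}^{k_1,\bullet}}$ one gets $q\in\ov{F_{Q,j^*,l_1}^{k_1}}$, and the two nested systems of precompact shrinkings fixed in the construction give
\[
\ov{F_{Q,j^*,l_1}^{k_1}}\subset F_{Q,j^*}^{k_1}\subset G_{Q,j^*}^{k_1+1}\subset\cdots\subset G_{Q,j^*}^{k_2}=G_{Q,j^*,1}^{k_2}\subset G_{Q,j^*,l_2}^{k_2},
\]
where $k_1+1\le k_2$ is what makes the chain meaningful; hence $q\in G_{Q,j^*,l_2}^{k_2}$. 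On the other hand, $j^*\notin J_{l_2}$ forces $F_{Q,J_{l_2}}^{k_2,\bullet}\subset{\mc X}_Q\setminus\ov{G_{Q,j^*,l_2}^{k_2}}$, so $q\in\ov{F_{Q,J_{l_2}}^{k_2,\bullet}}\subset{\mc X}_Q\setminus G_{Q,j^*,l_2}^{k_2}$. The two conclusions contradict each other, so $J_{l_1}\preq J_{l_2}$, and therefore $(k_1,J_{l_1})\preq(k_2,J_{l_2})$, finishing the argument.

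I do not expect a genuine obstacle: the one thing to watch is the bookkeeping of the two interleaved families of shrinkings (the $G_{Q,j}^k\sqsubset F_{Q,j}^k$ across the levels $k=1,\dots,m$, and the $G_{Q,j,l}^k\sqsubset F_{Q,j,l}^k$ within a fixed level $k$) so that the ``sandwich'' inclusion $\ov{F_{Q,j^*,l_1}^{k_1}}\subset G_{Q,j^*,l_2}^{k_2}$ is genuinely valid whenever $k_1<k_2$, together with the routine care in passing between the open sets and their closures. In fact this lemma is really an immediate byproduct of the computation already carried out in the proof of Lemma \ref{lemma114}.
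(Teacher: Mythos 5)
Your proof is correct and is exactly the argument the paper intends, since the paper itself only says the proof is "similar to Lemma \ref{lemma114}" and leaves the details to the reader; your case split ($k_1=k_2$ via the Lemma \ref{lemma105} mechanism, $k_1<k_2$ via the chain of shrinkings $\ov{F_{Q,j^*}^{k_1}}\subset G_{Q,j^*}^{k_1+1}\subset\cdots\subset G_{Q,j^*,l_2}^{k_2}$) is the same computation already used there. (Only cosmetic point: for $l_1=n$ write $\ov{F_{Q,j^*,l_1}^{k_1}}\subset\ov{F_{Q,j^*}^{k_1}}\subset G_{Q,j^*}^{k_1+1}$ rather than $\ov{F_{Q,j^*,l_1}^{k_1}}\subset F_{Q,j^*}^{k_1}$; the precompact inclusion $F_{Q,j}^{k_1}\sqsubset G_{Q,j}^{k_1+1}$ absorbs the closure, so nothing is lost.)
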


\begin{proof}
It is similar to the proof of Lemma \ref{lemma114}. The details are left to the reader. 
\end{proof}

Now we start to construct good coordinate systems ${\mc A}_P$, ${\mc A}_Q$ and ${\mc A}_Z$. The collection $\{ F_{P, I}^\bullet\}$ satisfies the {\bf (Overlapping Condition)} property of Definition \ref{defn_atlas}. Therefore, by the same method as in Section \ref{section10}, one can construct a good coordinate system 
\beqn
{\mc A}_P = \Big( \{ C_{P, I}\ |\ I \in {\mc I} \},\ \{ T_{I I'}\ |\ I'\preq I \in {\mc I} \} \Big)
\eeqn
of ${\mc X}_P$ whose collection of footprints are arbitrary precompact shrinkings $F_{P, I} \sqsubset F_{P, I}^\bullet$ the union of which still equals ${\mc X}_P$. We take the shrinkings $F_{P, I}$ so close to $F_{P, I}^\bullet$ such that 
\beqn
{\mc X}_Z = {\mc X}_P \times {\mc X}_Q = \bigcup_{I_k \in {\mc I},\ J \in {\mc J}} F_{Q, I_k} \times F_{Q, J}^{k, \bullet},\ \forall k.
\eeqn
On the other hand, for an arbitrary collection of precompact shrinkings
\beqn
F_{Q, \tilde J} \sqsubset F_{Q, \tilde J}^\bullet,\ \forall \tilde J = (k,J) \in \tilde {\mc J},
\eeqn
one can construct a good coordinate system on ${\mc X}_Q$
\beqn
{\mc A}_Q:= \Big( \big\{ C_{Q, \tilde J } = (U_{Q, \tilde J}, E_{Q, \tilde J}, S_{Q, \tilde J}, \psi_{Q, \tilde J}, F_{Q, \tilde J})\ |\ \tilde J \in \tilde {\mc J} \big\},\ \big\{ T_{Q, \tilde J' \tilde J}\ |\ \tilde J \preq \tilde J' \big\} \Big)
\eeqn
Notice that one has the inclusion 
\beqn
U_{Q, \tilde J} \sqsubset U_{Q, \tilde J}^\square.
\eeqn
Moreover, one can take the shrinkings $F_{Q, \tilde J}$ so close to $F_{Q, \tilde J}^\bullet$ such that the collection 
\beqn
F_{Z, K}^\vee:= F_{P, I_k} \times F_{Q, \tilde J},\ {\rm where}\ K = I_k \times J \in {\mc K} = {\mc I} \times {\mc J}\ {\rm and}\ \tilde J = (k, J)
\eeqn
still cover the product ${\mc X}_Z$. The {\bf (Overlapping Condition)} remains true for $F_{Z, K}^\vee$. Then choose precompact shrinkings $F_{Z, K} \sqsubset F_{Z, K}^\vee$ for all $K \in {\mc K}$ that still cover ${\mc X}_Z$, one can construct a good coordinate system ${\mc A}_Z$ on ${\mc X}_Z$ by shrinking the product charts 
\beqn
C_{Z, K}^\square = C_{P, I}^\square \times C_{Q, \tilde J}^\square,\ {\rm where}\ K = I \times J \in {\mc K}
\eeqn
to a chart 
\beqn
C_{Z, K}  = (U_{Z, K}, E_{Z, K}, S_{Z, K}, \psi_{Z, K}, F_{Z, K}).
\eeqn
Moreover, we can make the shrinking so small that for all $K$, 
\beqn
U_{Z, K} \sqsubset U_{P, I} \times U_{Q, J}\subset U_{P, I}^\square \times U_{Q, J}^\square,\ {\rm where}\ K = I\times J.
\eeqn
Then for the three good coordinate systems ${\mc A}_P$, ${\mc A}_Q$ and ${\mc A}_Z$, we see the map 
\beqn
\iota: {\mc K} \to {\mc I} \times \tilde {\mc J}
\eeqn
and the precompact open embeddings 
\beqn
\eta_K: C_{Z, K} \to C_{P, I}\times C_{Q, \tilde J},\ {\rm where}\ \iota(K) = I \times \tilde J
\eeqn
are obviously defined. The properties of $\iota$, $\eta_K$ and the evaluation maps are very easy to check. This finishes the proof of Proposition \ref{prop111}.

\subsection{Cutting edge}

In the situation of Theorem \ref{thm71} \eqref{thm71c}, $\ov{\mc M}_\Pi \subset \ov{\mc M}_\Gamma$ is an oriented smooth closed suborbifold. Then the {\bf (Cutting Edge)} property follows by applying Proposition \ref{prop626}.

\subsection{Composition}

In the situation of Theorem \ref{thm71} \eqref{thm71d}, consider the oriented smooth submanifold $\Delta_X \subset X \times X$ hence induces an embedding
\beqn
\iota: \Delta_X \times X^n \times \ov{\mc M}{}_{\tilde\Pi}\to X^{n+2} \times \ov{\mc M}{}_{\tilde\Pi}.
\eeqn
By Proposition \ref{prop626}, one has 
\beqn
[{\mc X}_{\sf \Pi}]^{\rm vir} = [{\mc X}_{\tilde{\sf\Pi}}]^{\rm vir} \cap \iota_* [\Delta_X \times \ov{\mc M}_{\tilde \Pi}] = [{\mc X}_{\tilde{\sf\Pi}}]^{\rm vir} \setminus {\rm PD}(\Delta_X).
\eeqn

\bibliography{mathref}

\bibliographystyle{amsalpha}

\end{document}